\documentclass[a4paper,twoside,BCOR=2cm,DIV=13,fontsize=11pt, toc=bibliography, chapterprefix=true, captions=tableheading, numbers=noenddot]{scrbook}
\linespread{1.23}
\voffset8mm
\usepackage[whole,autotilde]{bxcjkjatype}
\usepackage{hyperref}

\usepackage{color}

\usepackage[headsepline,automark]{scrlayer-scrpage}

\usepackage{amsmath, amssymb, amsthm}
\usepackage{mathtools}

\usepackage[looser,theoremfont,largesc]{newpxtext}
\usepackage[varg,vvarbb,bigdelims]{newpxmath}

\theoremstyle{definition}
\newtheorem{theorem}{Theorem}[section]
\newtheorem{definition}[theorem]{Definition}
\newtheorem{lemma}[theorem]{Lemma}
\newtheorem{proposition}[theorem]{Proposition}
\newtheorem{corollary}[theorem]{Corollary}
\newtheorem{remark}[theorem]{Remark}
\newtheorem{implication}[theorem]{Implication}

\newtheorem{example}[theorem]{Example}

\DeclareMathOperator{\rank}{rank}
\DeclareMathOperator{\tr}{Tr}

\DeclareMathOperator{\spn}{span}
\DeclareMathOperator{\supp}{supp}
\DeclareMathOperator{\id}{id}
\DeclareMathOperator*{\argmax}{argmax}
\allowdisplaybreaks[4]

\usepackage{braket}
\usepackage{graphicx}
\usepackage{enumerate}
\usepackage{booktabs}

\usepackage[sort&compress,numbers]{natbib}
\bibliographystyle{apsrev4}

\begin{document}

\begin{titlepage}
  \centering

  \subject{Doctorate Dissertation\\
    \vspace{0.5cm}
  博士論文}
  \title{Entanglement theory in distributed quantum information processing}
  \subtitle{（分散型量子情報処理のエンタングルメント理論）}

  \date{A Dissertation Submitted for\\ Degree of Doctor of Science\\
    \vspace{0.5cm}
    December 2018\\
    \vspace{0.5cm}
    平成30年12月博士（理学）申請\\
    \vspace{0.5cm}
  }

  \publishers{Department of Physics, Graduate School of Science,\\
    \vspace{0.5cm}
    The University of Tokyo\\
    \vspace{0.5cm}
    東京大学大学院理学系研究科物理学専攻\\
    \vspace{1.5cm}
    Hayata Yamasaki\\
    \vspace{0.5cm}
    山崎隼汰
  }

  \maketitle
\end{titlepage}

\frontmatter

\chapter{Abstract}

Distributed quantum information processing is a promising platform for scaling up quantum information processing, where small- and intermediate-scale quantum devices are connected by a network of quantum channels for communicating quantum information, so as to cooperate in achieving larger-scale information processing.
In such distributed settings, entangled states shared among the multiple devices serve as a resource for achieving nonlocal information processing tasks by local operations and classical communication (LOCC), where transformations of multipartite entangled states play central roles.
This thesis analyzes properties of quantum entanglement in these small- and intermediate-scale settings and multipartite settings.

The first part of this thesis investigates a communication task, \textit{quantum state merging}, on the small and intermediate scales.
Aiming at transferring quantum information from a sender $A$ to a receiver $B$ on these scales, this thesis analyzes entanglement cost required for one-shot quantum state merging.
Achievability bounds of entanglement cost and protocols are presented, so as to achieve one-shot state merging on the small and intermediate scales.
Improved converse bounds of the entanglement cost are also derived.
Moreover, it is proven that there is a case where $B$'s preprocessing and backward classical communication from $B$ to $A$ can be indispensable for minimizing entanglement cost in one-shot state merging from $A$ to $B$.

The second part of this thesis analyzes multipartite entanglement in distributed quantum information processing.
To quantitatively characterize nonlocal properties of multipartite state transformations for encoding and decoding quantum information in a multipartite quantum system, entanglement costs in such encoding and decoding are analyzed, where the multipartite system is distributed among spatially separated parties connected by a network.
In addition, advantage of using multipartite entanglement over bipartite entanglement is investigated, and
it is shown that when there exists a limitation on the local system size for each party, multipartite entanglement is an indispensable resource without which certain processes cannot be accomplished.

These analyses clarify fundamental limitations and potential applications of distributed quantum information processing to characterize properties of quantum entanglement in the small- and intermediate-scale settings and multipartite settings, providing a paradigm for investigating multipartite entanglement in distributed quantum information processing over networks beyond the state convertibility under LOCC\@.

\chapter{Acknowledgments}

I express my sincere thanks to my supervisor Mio Murao for all constructive suggestions and considerable supports. I thank Akihito Soeda for extensive discussions and enormous helps. I am grateful to Barbara Kraus for accepting my visit and broadening my view on multipartite quantum entanglement, and to Alexander Pirker and Wolfgang D\"{u}r for discussions and the collaboration. I thank members in the group of Mio Murao, Shojun Nakayama, Eyuri Wakakuwa, Seiseki Akibue, Jisho Miyazaki, Kohtaro Kato, Atsushi Shimbo, Yuki Mori, Ryosuke Sakai, Qingxiuxiong Dong, Marco T\'{u}lio Quintino, Paula Belzig, and Wataru Yokojima, and members in the group of Barbara Kraus, Katharina Schwaiger, David Sauerwein, Martin Hebenstreit, Yaiza Aragones Soria, Raphael Brieger, Czarnetzki Leonhard, Farid Shahandeh, and Matthias Englbrecht, for insightful discussions.

\tableofcontents
\listoffigures
\listoftables

\chapter*{List of Publications}

This thesis is based on the following papers.

\begin{description}
  \item[\cite{Y14}] H.\ Yamasaki, A.\ Pirker, M.\ Murao, W.\ D\"{u}r, and B.\ Kraus, Phys.\ Rev.\ A \textbf{98}, 052313 (2018).
  \item[\cite{Y12}] H.\ Yamasaki and M.\ Murao, \textit{Quantum state merging for arbitrarily small-dimensional systems}, (2018), arXiv:1806.07875.
  \item[\cite{Y13}] H.\ Yamasaki and M.\ Murao, \textit{Distributed Encoding and Decoding of Quantum Information over Networks}, (2018), arXiv:1807.11483.
  \item[\cite{Y19}] H.\ Yamasaki and M.\ Murao, \textit{Quantum-side-information preprocessing and backward classical communication in one-shot quantum state merging}, Unpublished.
\end{description}

This thesis also uses the results in the following previous works of mine.

\begin{description}
  \item[\cite{Y6}] H.\ Yamasaki, A.\ Soeda, and M.\ Murao, Phys.\ Rev.\ A \textbf{96}, 032330 (2017).
  \item[\cite{Y18}] H.\ Yamasaki, \textit{Distributed Construction of Multipartite Entangled States over Quantum Networks}, Master's thesis, The University of Tokyo (2016).
\end{description}

\mainmatter%

\part{Introduction and preliminaries}

\chapter{Introduction}

This chapter provides introduction and organization of the whole of this thesis.

\section{Introduction to entanglement theory in distributed quantum information processing}

Physics provides a way of understanding the world based on fundamental laws, and phenomena possibly exhibited in the world ensure consistency of such fundamental laws of theories in physics.
Such theories in physics model complex phenomena in the world as consequences of the simplified laws.
Successive efforts have led to several theories in physics, such as classical mechanics on macroscopic scales, theory of relativity on high energy scales, and quantum mechanics on microscopic scales, which have been verified on respective scales.
While identification of fundamental laws is a starting point of this type of theories, it is also fundamental to ask the following question as a next step: \textit{What kind of phenomena can be exhibited within the laws of such a theory?}
Studies of this type of question establish the base of the consistency of the laws, facilitating better understanding of the world and prediction of novel phenomena in the world described by the theory.

Quantum information theory~\cite{N1,W5,W11} studies consequences of the laws of quantum mechanics from an operational approach, answering what kind of information processing is possible and what is impossible using operations allowed in quantum mechanics.
Traditionally, such an operational approach to physics is also taken in thermodynamics, a phenomenological theory corresponding to classical mechanics. (See Reference~\cite{L} and the references therein.)
Thermodynamics answers what kind of physical processes are possible and what are impossible, when we perform operations for \textit{actively} processing states of physical systems described by classical mechanics, such as steam engines, rather than passively observing the systems.
This approach to physics in thermodynamics is operational in the sense that it abstracts Hamiltonian dynamics of the systems and introduces a class of idealized operations on the systems, such as adiabatic and isothermal processes.
Abstract properties of the physical systems, such as energy and entropy, are characterized by analyzing the fundamental limitations on appropriate tasks, such as work extraction from the Carnot cycle, performed by these idealized operations in thermodynamics.

While thermodynamics assumes that physical systems and operations are on \textit{macroscopic} scales,
suppose that operations on \textit{microscopic} scales are at hand.
On the microscopic scales, operations on the physical systems may be described by not classical but \textit{quantum} mechanics.
Making the most of such microscopic operations on quantum systems is known to have potential applications to information processing,
and this way of information processing exploiting advantage of quantum mechanics is called \textit{quantum information processing}.
In quantum information processing, instead of a classical binary bit taking $0$ or $1$ as a state, a two-level quantum system, or a \textit{qubit}, can be used as a basic unit for carrying quantum information.
Quantum information processing is performed by transforming quantum information represented by a quantum state of qubits, where in contrast with classical bits, arbitrary \textit{superposition} of two distinguishable quantum states $\Ket{0}$ and $\Ket{1}$ may be taken as a quantum state of each qubit.
In this sense, while states $0$ and $1$ of a classical bit represent classical information, a quantum state $\alpha_0\Ket{0}+\alpha_1\Ket{1}$ of a qubit can represent quantum information in addition to $\Ket{0}$ and $\Ket{1}$ of the qubit representing classical information.
Classical information can be obtained from a quantum state by a probabilistic process, a quantum measurement.
More generally than qubits, a $D$-dimensional quantum system is called a \textit{qudit} and is used for quantum information processing.
Similarly to physical processes in thermodynamics achieved by the idealized operations,
quantum information processing can also be regarded as a physical process of transforming quantum states by operations with the laws of quantum mechanics, in the sense that an initially given input of quantum or classical information to this process is represented as a quantum state and is transformed into the final output of quantum or classical information.

Quantum information theory analyzes what kind of information processing can be achieved and what cannot within the laws of quantum mechanics, providing a quantitative understanding and an operational meaning of abstract properties even characteristic of quantum mechanics.
Exploiting the quantum mechanical property of \textit{interference}, quantum information processing is potentially faster than classical one for performing some classical computational tasks, such as solving an arithmetic problem of prime factoring~\cite{P6}, simulating quantum systems~\cite{F2,L8}, and sampling from the solution of a linear system~\cite{H14}.
In other words, large-scale quantum information processing potentially provides excessive computational power that is not simulatable by any classical algorithms in an efficient way.
Moreover, spatially separated quantum systems in a superposition state may exhibit \textit{quantum entanglement}, a type of correlation characteristic of quantum mechanics, which is incompatible with any hidden-variable theory based on the paradigm of classical mechanics~\cite{B29,B27,B28}.
If a quantum state has entanglement, the state is called an \textit{entangled state}, and otherwise called a \textit{separable state}.
Entanglement appearing in a bipartite system, \textit{i.e.}, that consisting of two subsystems, is called \textit{bipartite entanglement}, and entanglement appearing in a multipartite system, \textit{i.e.}, that consisting of more than two, is called \textit{multipartite entanglement}.
In quantum information theory, information processing tasks exploiting such quantum mechanical properties are analyzed, so as to quantitatively characterize consequences of these quantum mechanical properties beyond classical mechanics.

\begin{figure}[t!]
  \centering
  \includegraphics[width=4in]{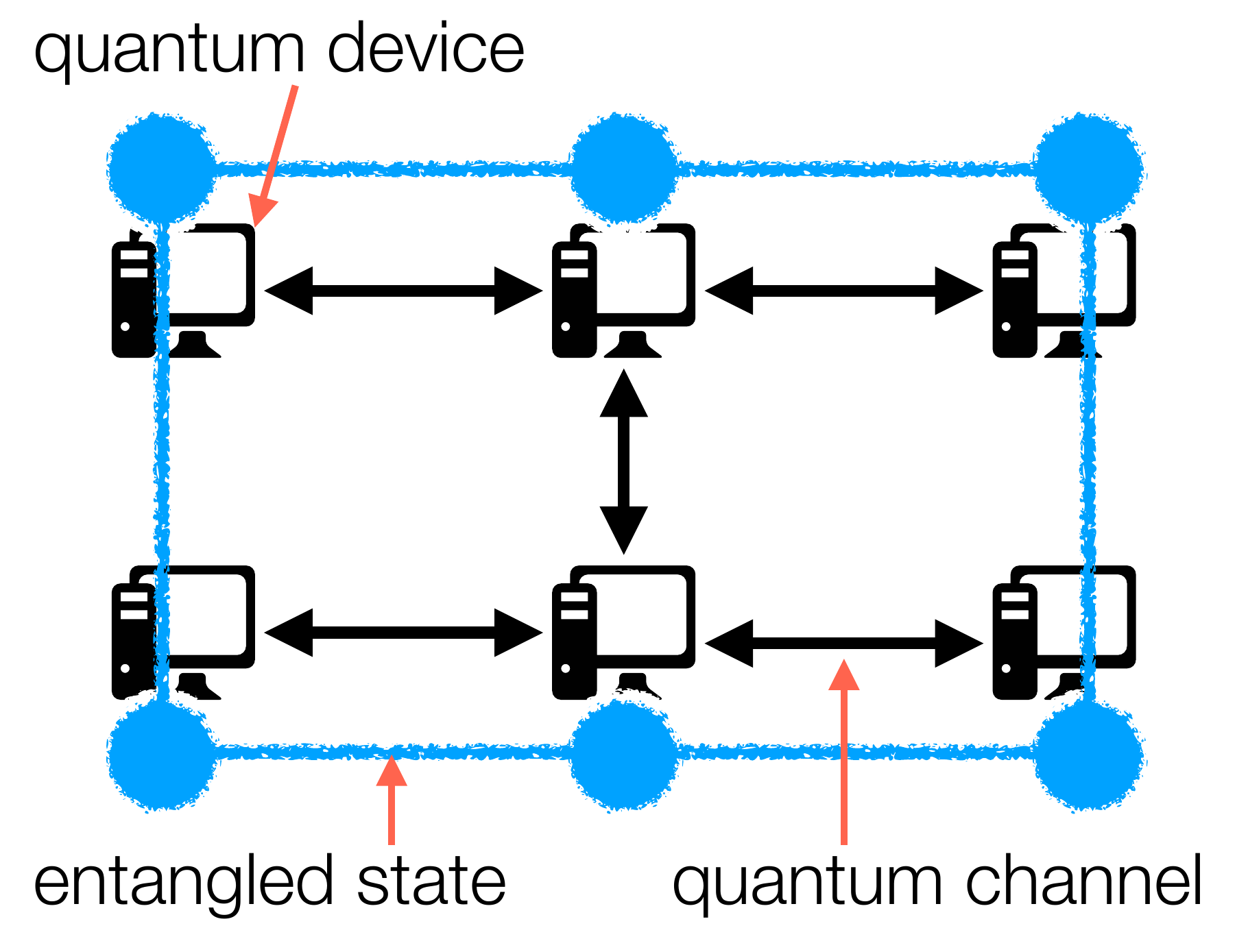}
  \caption[Distributed quantum information processing using multiple quantum devices connected by a network.]{\label{fig:distributed}Distributed quantum information processing using multiple quantum devices illustrated as each computer. For communicating quantum information between the quantum devices, these quantum devices are connected by a network of quantum channels represented by arrows. A quantum state shared among multiple quantum devices may be entangled, denoted by the blue circles connected by blue lines.}
\end{figure}

Recent advances in quantum technology facilitate quantum information processing using quantum devices capable of coherently keeping quantum states of a quantum system inside and of performing low-noise operations for transforming these quantum states.
There exists, however, technical difficulty in increasing the number of low-noise qubits built in one quantum device~\cite{P4}, and hence, the quantum system size of such a quantum device may be limited on small and intermediate scales of up to several dozens of qubits at least in the near future.
To achieve large-scale quantum information processing, larger quantum system sizes than those in such small- and intermediate-scale quantum devices are required.
For scaling up quantum information processing, \textit{distributed quantum information processing}~\cite{V3,C9,W2} is considered to be a promising platform, where larger-scale information processing is achieved using multiple quantum devices connected by a network of quantum channels for communicating quantum information, that is, quantum communication, as illustrated in Figure~\ref{fig:distributed}.
In contrast to quantum information processing performed by arbitrarily transforming quantum states of a quantum system,
the quantum devices in distributed quantum information processing share a composite quantum system whose subsystems are located in each device, and each device is allowed to perform state transformations only on the subsystem in the device.
Nonlocal state transformations over different quantum devices are performed by combining these local state transformations in each device with quantum communication.

Given multiple quantum devices in such a distributed setting,
quantum entanglement shared among the devices is considered as a correlation which cannot be generated using a class of operations consisting of arbitrary local state transformations inside each quantum device within quantum mechanics as well as arbitrary inter-device communication of classical information represented by bits.
This class of operations is called local operations and classical communication (LOCC)~\cite{D11,C19,C7}.
To generate arbitrary entangled states shared among multiple devices from a separable state,
quantum communication for transferring quantum states between the devices is necessary in addition to LOCC, and sufficiently much use of quantum communication allows the quantum devices to perform arbitrary nonlocal transformations of quantum states of the shared composite system.
Conversely, if two devices initially share a particular type of bipartite entangled state, quantum communication between these devices can be simulated by a protocol, \textit{quantum teleportation}~\cite{B5}, using LOCC assisted by the shared bipartite entanglement.
Hence, nonlocal state transformations over different quantum devices can be achieved by combining LOCC with shared entanglement.

Distributed quantum information processing can be used as a framework for operational analysis of the quantum entanglement shared among the quantum devices.
Entanglement serves as a \textit{resource} assisting LOCC in distributed quantum information processing, for achieving nonlocal state transformations over spatially separated quantum systems.
While classical communication can be reliably performed using current technology, quantum communication for sharing entanglement is more challenging and costly.
In this regard, it is natural to investigate efficient use of entanglement when cost of LOCC is negligible.
This approach of regarding entanglement as resources is a fundamental starting point of \textit{entanglement theory}~\cite{H2,P1,E5}, which has been successful in establishing operational understanding of bipartite entanglement.
Among bipartite entangled states, convertibility between these entangled states under LOCC establishes partial order of the states in terms of their usability as a resource.
This partial order yields quantifications of entanglement in terms of its value as a resource, where it is required that these quantifications are monotonically nonincreasing under LOCC\@.
Such a quantification of entanglement is called an entanglement measure.
This way of characterizing entanglement may also generalize to a more general and abstract formulation called \textit{quantum resource theory}, so that the resource-theoretic approach is applicable to investigating properties characteristic of quantum mechanics other than entanglement, such as coherence and purity~\cite{C13}.

Multipartite entanglement also serves as a resource for multiparty tasks relevant to distributed quantum information processing, such as measurement-based quantum computation~\cite{R5,R6,R7}, distributed sensing~\cite{K1,E4}, and quantum networking~\cite{P2}.
Multipartite entanglement ubiquitously appears in many-body quantum systems in condensed matter physics~\cite{A12} and quantum gravity~\cite{R4}.
However, straightforward applications of the bipartite resource-theoretic analysis are not sufficient to characterize properties of multipartite entanglement on more than two systems, because mathematical structures of multipartite entangled states are not as simple as bipartite entanglement~\cite{E2,W3,B26}.
Especially, in case of multi-qudit systems whose subsystems are of equal dimension, almost no LOCC transformation among quantum states of the system is possible~\cite{G1,S3},
and hence, the paradigm based on the partial order of bipartite entanglement under LOCC does not generalize to multipartite entanglement.

This thesis aims to characterize properties of multipartite entanglement through an operational approach, not only using the framework of LOCC, but using quantum communication networks in addition to LOCC, motivated by the settings for distributed quantum information processing.
In distributed quantum information processing, a nonlocal transformation of a quantum state shared between two quantum devices can be performed by first transferring one device's part of the state to the other device, and then performing the transformation locally on the latter device, followed by transferring the state back.
While this strategy for performing  a nonlocal state transformation is not always the most efficient in terms of a communication cost, this strategy exactly and deterministically achieves the transformation.
Given two quantum devices sharing a quantum state, the communication task of transferring one device's part of this shared state to the other is called \textit{quantum state merging}~\cite{H3,H4}.
Part~\ref{part:1} of this thesis aims to reduce the cost of achieving quantum state merging performed in this two-party LOCC setting of distributed quantum information processing.
While the original formulation of quantum state merging in References~\cite{H3,H4} and their successive works are mainly targeted at quantum communication on large scales, protocols aimed at efficient distributed quantum information processing over a network should be designed to be suitable for arbitrarily small-dimensional quantum systems, especially, on the small and intermediate scales relevant to distributed quantum information processing.
Part~\ref{part:1} of this thesis analyzes quantum state merging on the small and intermediate scales, different from the existing studies on the large scales.

Moreover, distributed quantum information processing may involve more than two quantum devices, where transformations of multipartite entangled states play central roles.
Part~\ref{part:2} of this thesis quantitatively analyzes requirements of quantum communication and quantum system sizes required for transforming multipartite entanglement in distributed quantum information processing.
The results established in Part~\ref{part:1} on quantum state merging are used for evaluating the requirements of quantum communication.
Part~\ref{part:2} also introduces and analyzes tasks of multipartite entanglement transformations in a setting where local quantum system sizes in LOCC are limited, motivated by distributed quantum information processing on the small and intermediate scales.

These analyses clarify fundamental limitations and potential applications of distributed quantum information processing to characterize properties of quantum entanglement in the small- and intermediate-scale settings and multipartite settings relevant to distributed quantum information processing, providing a paradigm for investigating multipartite entanglement in distributed quantum information processing over networks beyond the state convertibility introducing the partial order of entanglement under LOCC\@.
More detailed backgrounds and settings are given after the preliminaries in Chapter~\ref{sec:preliminaries_all}, at the beginning of Parts~\ref{part:1} and~\ref{part:2}.

\section{Technologies for distributed quantum information processing}

This section summarizes experimental technologies relevant to distributed quantum information processing, to which theoretical results in this thesis are potentially applicable.
Quantum technologies cover wide applications such as quantum computation, quantum communication, quantum simulation, and quantum sensing~\cite{A18}.
Distributed quantum information processing can be realized by combining technologies for quantum computation and quantum communication.

Ongoing experimental approaches for realizing quantum computation include superconducting circuits~\cite{W13}, ion traps~\cite{H17,C22}, photonic systems~\cite{K9}, and nitrogen-vacancy (NV) centers~\cite{D13}.
Superconducting circuits achieves control of $9$ qubits in 2015~\cite{K10}, and ion traps $5$ qubits in 2016~\cite{D14}.
A major challenge in realizing quantum computation stems from noise,
and one way to reducing effects of noise is quantum error correction~\cite{G,D,T10,B},
where quantum information is represented as a superposition of predetermined multipartite entangled states of a quantum error correcting code, so that local noise can be detected and corrected.
If noise of each quantum operation on qubits is below a given threshold, errors during quantum computation can be arbitrarily suppressed by quantum error correction.
However, it is not straightforward to increase the number of controllable qubits required for quantum error correction while keeping low noise; that is, there may exists a trade-off relation between quantity and quality of qubits.

Distributed quantum information processing is considered to be a candidate for scaling up quantum computation if a limited number of low-noise qubits are available.
This situation contrasts with that considered in theoretical research on noisy intermediate-scale quantum (NISQ) technology~\cite{P4}, which aims to find advantages and applications of intermediate-scale quantum devices from several dozens to a few hundreds of qubits that compromise on reducing noise.
In distributed quantum information processing, quantum information may be represented using a quantum error correcting code for fault tolerance, and hence, analysis of communication tasks for a state in a superposition of fixed entangled states plays essential roles.

Using such low-noise local operations, noisy entanglement at a distance generated by lossy quantum communication may be purified by means of entanglement distillation~\cite{B3}.
As for quantum communication, a quantum cryptographic task, quantum key distribution, is demonstrated using photonic systems and optical fibers over $307$ km in 2015~\cite{K11}.
However, distribution of quantum entanglement is currently more difficult due to lack of low-noise local quantum systems, as well as inefficiency in conversion between matter-based qubits and photons.
Entanglement at a distance is detected between ion traps in 2007~\cite{M7}, between NV centers in 2013~\cite{B30}, and between electron spins separated by 1.3 km in 2015~\cite{H18}.

While fault-tolerant networks required for distributed quantum information processing pose technological challenges~\cite{W2}, theoretical analysis of minimal quantum communication for achieving distributed quantum information processing is beneficial to clarifying a technological target in future experiments.

\section{Organization of this thesis}

The rest of this thesis is organized as follows.
After providing preliminaries to the rest of this thesis in Chapter~\ref{sec:preliminaries_all}, Part~\ref{part:1} analyzes a communication task, quantum state merging, between two spatially separated quantum parties having arbitrarily small-dimensional systems, so that the results are applicable to any two small- and intermediate-scale quantum devices on a network used in distributed quantum information processing.
Using the results established in Part~\ref{part:1}, Part~\ref{part:2} analyzes properties of multipartite entanglement using the framework of distributed quantum information processing over networks.
The results in Part~\ref{part:1} and Part~\ref{part:2} are summarized as follows, and the conclusion of these results is given in Part~\ref{part:conclusion}.
The structure of chapters in each part is illustrated in Figure~\ref{fig:organization}.

\begin{itemize}
  \item Part~\ref{part:1} analyzes a communication task of quantum state merging~\cite{H3,H4} on small and intermediate scales. In distributed quantum information processing, two quantum devices, namely, $A$ and $B$, may share a correlated state, and state merging is a fundamental communication task aiming at transferring $A$'s part of this shared state to $B$, where $B$'s part is called quantum side information and may be used for reducing required communication costs in state merging. Aiming at transferring quantum information on small and intermediate scales relevant to distributed quantum information processing, Part~\ref{part:1} considers a \textit{one-shot} scenario of state merging, where only a single copy of the shared state is given. While existing protocols achieving one-shot quantum state merging are costly on the small and intermediate scales, Chapter~\ref{sec:merge} in Part~\ref{part:1} establishes a protocol applicable even on the small and intermediate scales, as well as analyzing lower bounds for minimal costs in the one-shot scenario of state merging. Also, aiming at making the most of quantum side information in a one-shot scenario, Chapter~\ref{sec:two_way} in Part~\ref{part:1} proves that $B$'s preprocessing of quantum side information and backward classical communication from $B$ to $A$ can be indispensable for minimizing the cost in one-shot state merging from $A$ to $B$. These results complement existing protocols achieving nearly optimal one-shot state merging on a large scale, opening the way to another direction for future research on transferring quantum information on small and intermediate scales.
  \item Part~\ref{part:2} analyzes properties of multipartite entanglement in distributed quantum information processing, from the viewpoints of quantum communication costs over networks and the sizes of local quantum systems. Using the protocols established in Part~\ref{part:1}, Chapter~\ref{sec:distributed_encoding_decoding} in Part~\ref{part:2} evaluates costs of implementing multipartite nonlocal quantum state transformations for encoding and decoding quantum information in a multipartite quantum system, progressing beyond quantifications of bipartite and multipartite entanglement based on quantum communication costs. These encoding and decoding of quantum information are fundamental building blocks in quantum information processing, and difference between encoding and decoding is quantitatively characterized in terms of their implementation costs in distributed quantum information processing on a given tree-topology network for quantum communication. In Chapter~\ref{sec:multipartite} in Part~\ref{part:2}, advantage of the use of multipartite entanglement over bipartite entanglement is analyzed in terms of local quantum system sizes in distributed quantum information processing. Concrete examples are given to prove that multipartite entanglement outperforms bipartite entanglement when limitations on the local system sizes exist. These results facilitate operational understanding and efficient use of multipartite entanglement, from the viewpoints motivated by distributed quantum information processing over the networks beyond the state convertibility under LOCC\@.
\end{itemize}

\begin{figure}[t!]
  \centering
  \includegraphics[width=5.7in]{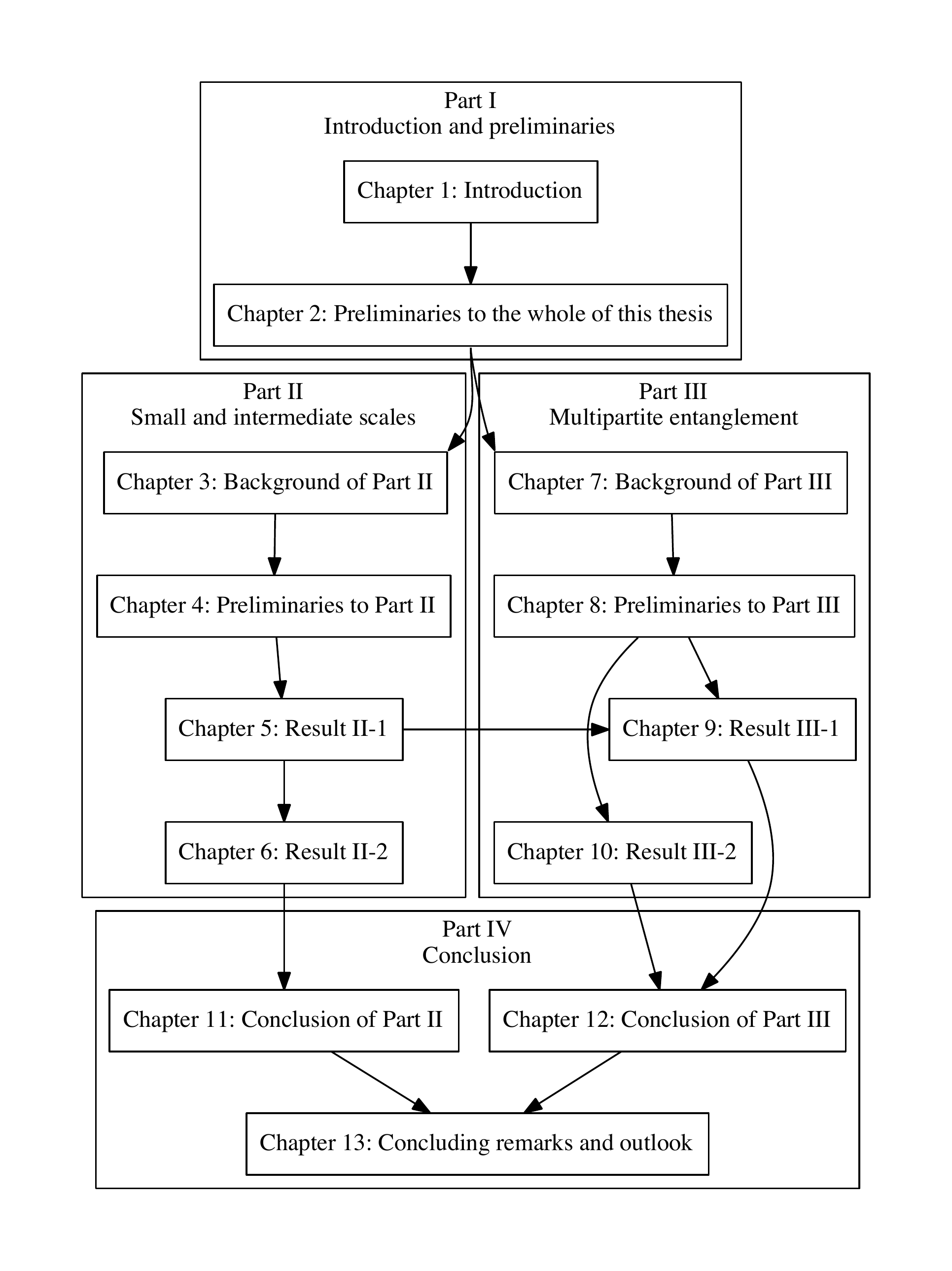}
  \caption[Organization of this thesis.]{\label{fig:organization}Organization of chapters in each part of this thesis. Part~\ref{part:1} analyzes a communication task, quantum state merging, so that the results are applicable to any two small- and intermediate-scale quantum devices on a network used in distributed quantum information processing. Using these results, Part~\ref{part:2} analyzes properties of multipartite entanglement from the viewpoint of distributed quantum information processing. The conclusion is given in Part~\ref{part:conclusion}.}
\end{figure}

\chapter{\label{sec:preliminaries_all}Preliminaries}

This chapter summarizes concepts in quantum information theory~\cite{N1,W5,W11} relevant to the rest of this thesis.
Section~\ref{sec:notations} summarizes formulation of general quantum mechanical operations used in quantum information processing.
Then, Section~\ref{sec:locc} defines local operations and classical communication (LOCC), which is a class of operations playing central roles in analyzing quantum entanglement.
After summarizing decomposition theorems used for operational analysis of entanglement in Section~\ref{sec:decomposition}, basic results of entanglement theory are summarized in Section~\ref{sec:entanglement}.

\section{\label{sec:notations}Operations in quantum mechanics}

In quantum mechanics, a physical system is represented as a complex Hilbert space.
Systems that can be represented as finite-dimensional Hilbert spaces are considered here for simplicity.
A composite system consisting of different subsystems is represented by the tensor product of Hilbert spaces representing each subsystem.
A composite system consisting of two subsystems is said to be \textit{bipartite}, and that of more than two subsystems \textit{multipartite}.
For a system labeled $A$, let $\mathcal{H}^{A}$ denote a Hilbert space representing the system, where the dimension of the Hilbert space may be written as
\begin{equation}
  D^A\coloneq\dim\mathcal{H}^A.
\end{equation}
If $D^A\geqq 2$, then the system $\mathcal{H}^{A}$ is called a \textit{qudit}, and especially if $D^A=2$, called a \textit{qubit}.
Let $\mathbb{C}$ denote the set of complex numbers, $\mathbb{Q}$ rational, and $\mathbb{R}$ real.
Then, $\mathcal{H}^A$ is isomorphic to $\mathbb{C}^{D^A}$, which is denoted by
\begin{equation}
  \mathcal{H}^A=\mathbb{C}^{D^A}.
\end{equation}
A Hilbert space $\mathcal{H}$ representing a composite system consisting of $N$ subsystems labeled $A_1\,,A_2\,,\ldots,A_N$ satisfies
\begin{equation}
  \mathcal{H}=\bigotimes_{k=1}^N\mathcal{H}^{A_k}=\mathcal{H}^{A_1}\otimes\cdots\otimes\mathcal{H}^{A_N}.
\end{equation}

For any system labeled $A$ and represented by a $D^A$-dimensional Hilbert space $\mathcal{H}^{A}$,
fix an arbitrary set of $D^A$ mutually orthogonal normalized vectors as preferred, and write this set as
\begin{equation}
  \left\{\Ket{l}^A \in\mathcal{H}^A:l\in\left\{0,\ldots,D^A-1\right\}\right\},
\end{equation}
which is called the \textit{computational basis} of $\mathcal{H}^A$.
The identity operator on $\mathcal{H}^A$ is denoted by
\begin{equation}
  \mathbb{1}^A\coloneq\sum_{l=0}^{D^A-1}\Ket{l}\Bra{l}^A,
\end{equation}
which may also be written as $\mathbb{1}_{D^A}^A$ for clarity of dimension.
The identity operators in formulas may be omitted for brevity.
A Hilbert space spanned by vectors $\Ket{\psi_0},\Ket{\psi_1},\ldots$ is denoted by
\begin{equation}
  \spn\left\{\Ket{\psi_0},\Ket{\psi_1},\ldots\right\}.
\end{equation}

A pure quantum state of a system $\mathcal{H}$ is represented by a normalized vector, denoted by a ket
\begin{align}
    &\Ket{\psi}\in\mathcal{H},\\
    &\left\|\Ket{\psi}\right\|=1,
\end{align}
where $\left\|\cdots\right\|$ represents the Euclidean norm, and $e^{\textup{i}\theta}\Ket{\psi}$ is identified with $\Ket{\psi}$ for any phase $\theta$.
In the following, a ket is normalized unless explicitly noted that it is unnormalized.
More generally, a mixed state of the system $\mathcal{H}$ is represented by a positive semidefinite operator of unit trace, which is called a density operator, and these conditions of a density operator $\psi$ are denoted by
\begin{align}
  \psi&\geqq 0, \left(\Leftrightarrow \Braket{\phi|\psi|\phi}\geqq 0,\forall\Ket{\phi}\right)\\
  \tr\psi&\coloneq\sum_{l=0}^{D-1}\Braket{l|\psi|l}=1,
\end{align}
where ${\left\{\Ket{l}\right\}}_{l=0,\ldots,D-1}$ is the computational basis of $\mathcal{H}$,
while $\tr\psi$ does not depend on the choice of the basis.
Given a system labeled $A$,
the set of density operators on $\mathcal{H}^A$ is denoted by $\mathcal{D}\left(\mathcal{H}^A\right)$,
and the set of bounded operators on $\mathcal{H}^A$ is denoted by $\mathcal{B}\left(\mathcal{H}^A\right)$.
A state of a bipartite system is called a \textit{bipartite state}, and a state of a multipartite system a \textit{multipartite state}.
Superscripts of an operator or a vector represent the labels of the corresponding Hilbert spaces, \textit{e.g.}, for a mixed state
\begin{equation}
  \psi^{A_1 A_2 A_3} \in \mathcal{D}\left(\mathcal{H}^{A_1}\otimes\mathcal{H}^{A_2}\otimes\mathcal{H}^{A_3}\right),
\end{equation}
and for a pure state
\begin{equation}
  \Ket{\psi}^{A_1 A_2 A_3} \in\mathcal{H}^{A_1}\otimes\mathcal{H}^{A_2}\otimes\mathcal{H}^{A_3}.
\end{equation}
A density operator corresponding to a pure state may be written as
\begin{equation}
  \psi^{A_1 A_2 A_3}\coloneq\Ket{\psi}\Bra{\psi}^{A_1 A_2 A_3}.
\end{equation}
Superscripts may be omitted if obvious from the context.

Operations on quantum systems can be considered to consist of unitary transformations and measurements assisted by adding and discarding auxiliary systems.
Time evolution of a quantum state may be described using an exponential function of Hamiltonians in quantum mechanics, such as that derived from the Schr\"{o}dinger equation in cases of non-relativistic closed systems.
This description using Hamiltonian is especially suited for situations where the Hamiltonian $H$ is time-independent and the time evolution in time $t_1-t_0$ is represented as $e^{-\textup{i}H\left(t_1-t_0\right)}$, or $H$ is, if time-dependent, changed according to a few parameters.
However, quantum information processing considers different situations where such Hamiltonian is actively engineered and arbitrarily controlled in a time-dependent way.
To analyze what kind of information processing is possible within quantum mechanics in such situations, the description explicitly using Hamiltonian is replaced by a unitary transformation of states,
which is represented for a system $\mathcal{H}^A$ as a unitary operator $U^A$.

Measurements are operations probabilistically obtaining classical information of measurement outcomes from a quantum state, where it is assumed that the number of the measurement outcomes is finite.
A measurement on a system $\mathcal{H}^A$ can be represented by a family of \textit{measurement operators}
\begin{equation}
  \left\{M_m^A: m=\left\{0,\ldots,n-1\right\}\right\}
\end{equation}
satisfying the completeness condition
\begin{equation}
  \sum_{m=0}^{n-1}{M_m^A}^\dag M_m^A =\mathbb{1},
\end{equation}
where $m$ is a label representing an outcome, and $n$ denotes the number of possible outcomes.
The number of the outcomes may not be explicitly written if not of interest, and the labels for the outcomes may be written using subscripts, such as ${\left\{M_m^A\right\}}_m$ in the above case.
Given any state $\psi^A$,
after performing a measurement represented by ${\left\{M_m^A\right\}}_m\,$,
the post-measurement state ${\psi_m^{\prime}}^A$ corresponding to each measurement outcome $m$ is given by
\begin{align}
  {\psi_m^{\prime}}^A&\coloneq\frac{1}{p\left(m\right)}{M_m^A \psi {M_m^A}^\dag},\\
  p\left(m\right)&\coloneq\tr{M_m^A \psi {M_m^A}^\dag},
\end{align}
where $p\left(m\right)$ is a probability distribution representing the probability of obtaining each measurement outcome.
For example, a projective measurement in the computational basis ${\left\{\Ket{l}\right\}}_{l}$ can be represented as a family of projectors
\begin{equation}
  {\left\{\Pi_l\coloneq\Ket{l}\Bra{l}\right\}}_l.
\end{equation}
By setting $n=1$, any unitary transformation $U$ can also be included in this formulation of measurement operators, where the corresponding family of measurement operators reduces to $\left\{U\right\}$.
If post-measurement states of a measurement represented by measurement operators ${\left\{M_m^A\right\}}_{m=0,\ldots,n-1}$ are not of interest,
it is also possible to consider a family of positive semidefinite operators called a \textit{positive operator-valued measure} (POVM)
\begin{equation}
  \left\{\Lambda_m^A\coloneq {M_m^A}^\dag M_m^A\geqq 0: m=\left\{0,\ldots,n-1\right\}\right\}
\end{equation}
satisfying the completeness condition
\begin{equation}
  \sum_{m=0}^{n-1}\Lambda_m =\mathbb{1}.
\end{equation}
Given any state $\psi^A$,
consider performing a measurement of $\psi^A$ represented by a POVM ${\left\{\Lambda_m^A={M_m^A}^\dag M_m^A\right\}}_m\,$,
and the probability of obtaining each measurement outcome $m$ is
\begin{equation}
  p\left(m\right)=\tr{\Lambda_m^A \psi}=\tr{M_m^A \psi {M_m^A}^\dag}.
\end{equation}
Note that for any state $\psi^A$ and any measurement,
the positivity and the completeness condition guarantee the axiom of probability:
\begin{align}
  p\left(m\right)&\geqq 0,\quad \forall m;\\
  \sum_m p\left(m\right)&=1.
\end{align}

A more general formulation of measurements may include situations where classical post-processing of the measurement outcomes is allowed, and some of the outcomes can be coarse-grained by this classical post-processing.
This situation is formulated using \textit{quantum instruments}.
The quantum instrument on $\mathcal{H}^A$ is represented using linear maps
\begin{equation}
  \mathcal{E}_m^A:\mathcal{B}\left(\mathcal{H}^A\right)\to \mathcal{B}\left(\mathcal{H}^A\right)
\end{equation}
in the form of
\begin{equation}
  \label{eq:instrument}
  \mathcal{E}_m^A\left(\psi^A\right)\coloneq\sum_{j=0}^{J_m -1} \left(M_{m,j}^A\right)\psi^A{\left(M_{m,j}^A\right)}^\dag,
\end{equation}
where $m\in\left\{0,\ldots,n-1\right\}$ is a label representing an outcome, $n$ denotes the number of possible outcomes, ${\left\{M_{m,j}^A:m\in\left\{0,\ldots,n-1\right\},j\in\left\{0,\ldots,J_m-1\right\}\right\}}$ is a family of measurement operators whose outcomes are labeled by $m$ for the quantum instrument and by $j$ to be erased by the classical post-processing, and $J_m$ may depend on $m$.
This post-processing for erasing some of the outcomes is called \textit{coarse-graining}.
More precisely, given a family of measurement operators
\begin{equation}
  \left\{M_0\,,\ldots,M_{n-1}\right\},
\end{equation}
coarse-graining divides these measurement operators into $n^\prime$ subgroups of $J_0\,,\ldots,J_{n^\prime-1}$ elements, respectively, satisfying $J_0+\cdots+J_{n^\prime-1}=n$, that is
\begin{equation}
  \left\{M_{0,0}\,,\ldots,M_{0,J_0-1}\,,M_{1,0}\,,\ldots,M_{1,J_1-1}\,,\ldots,M_{n^\prime-1,0}\,,\ldots,M_{n^\prime-1,J_{n^\prime-1}-1}\right\}.
\end{equation}
Using this coarse-graining, a quantum instrument is defined as a family of linear maps
\begin{equation}
  \left\{\mathcal{E}_m^A:m=0,\ldots,n^\prime-1\right\},
\end{equation}
where each $\mathcal{E}_m^A$ is in the form of Equation~\eqref{eq:instrument}.
Given any state $\psi^A$,
after performing a measurement represented by ${\left\{\mathcal{E}_m^A\right\}}_m\,$,
the post-measurement state for each outcome $m$ is given by
\begin{equation}
  \frac{\mathcal{E}_m^A\left(\psi^A\right)}{\tr\mathcal{E}_m^A\left(\psi^A\right)},
\end{equation}
which is obtained with probability
\begin{equation}
  {p\left(m\right)}\coloneq\tr\mathcal{E}_m^A\left(\psi^A\right).
\end{equation}
The single-outcome case of $n=1$ corresponds to the situation of performing the measurement ${\left\{M_{m,j}^A\right\}}_{m,j}$ followed by erasing all the outcomes, which can be performed deterministically.

An auxiliary system is considered to be an additionally prepared system different from an initially given system, so that operations can be performed on the composite system consisting of these systems.
When added, this auxiliary system is initialized as a fixed state, where in the following the fixed initial state is chosen as $\Ket{0}$ for simplicity.
Given any state $\psi^A$, adding an auxiliary system $\mathcal{H}^{A^\prime}$ to the given system $\mathcal{H}^A$ followed by an unitary transformation $U^{AA^\prime}$ on $\mathcal{H}^A\otimes\mathcal{H}^{A^\prime}$ is represented by an isometry transformation
\begin{equation}
  U^{A\to AA^\prime}\coloneq U^{AA^\prime}\left(\mathbb{1}^A\otimes\Ket{0}^{A^\prime}\right)
\end{equation}
satisfying
\begin{equation}
  \left(U^{A\to AA^\prime}\right)\psi^A {\left(U^{A\to AA^\prime}\right)}^\dag=U^{AA^\prime}\left(\psi^A\otimes\Ket{0}\Bra{0}^{A^\prime}\right){U^{AA^\prime}}^\dag.
\end{equation}
In other words, isometry transformations are invertible transformations from states of a smaller-dimensional quantum system to those of a larger-dimensional quantum system.
Note that such a unitary transformation $U^{AA^\prime}$ corresponding to the isometry transformation $U^{A\to AA^\prime}$ is not unique in general.
Superscripts of an operator such as $U^{A\to AA^\prime}$ represent the input system $\mathcal{H}^A$ and the output system $\mathcal{H}^A\otimes\mathcal{H}^{A^\prime}$ for the state transformation represented by the operator.

Conversely, consider a situation of discarding a part of the subsystems comprising a composite system.
Given a state of such a composite system, a state obtained by discarding a part of the subsystems is called a reduced state of the rest of subsystems.
Such a reduced state is represented as a state obtained by performing partial trace on Hilbert spaces representing the discarded subsystems.
Partial trace on a Hilbert space representing a discarded subsystem is a linear transformation such that any deterministic operation on the discarded system before the partial trace does not change the reduced state after the partial trace.
A reduced state may be represented by superscripts if obvious.
For example, given a system represented by $\mathcal{H}^{X_1}\otimes\mathcal{H}^{X_2}\otimes\mathcal{H}^{X_3}$ and a state $\psi^{X_1 X_2 X_3} \in \mathcal{D}\left(\mathcal{H}^{X_1}\otimes\mathcal{H}^{X_2}\otimes\mathcal{H}^{X_3}\right)$,
the partial trace on $\mathcal{H}^{X_2}\otimes\mathcal{H}^{X_3}$ is denoted by $\tr_{X_2 X_3}\,$,
and the reduced state of $\mathcal{H}^{X_1}$ is represented as
\begin{equation}
  \begin{split}
    \psi^{X_1}&\coloneq\tr_{X_2 X_3}\psi^{X_1 X_2 X_3}\\
              &=\sum_{l=0}^{D^{X_2}-1}\sum_{l^\prime =0}^{D^{X_3}-1}\left(\mathbb{1}^{X_1}\otimes\Bra{l}^{X_2}\otimes\Bra{l^\prime}^{X_3}\right)\psi^{X_1 X_2 X_3}\left(\mathbb{1}^{X_1}\otimes\Ket{l}^{X_2}\otimes\Ket{l^\prime}^{X_3}\right),
  \end{split}
\end{equation}
where $D^{X_2}=\dim\mathcal{H}^{X_2}$ and $D^{X_3}=\dim\mathcal{H}^{X_3}$.

Combining the above operations is sufficient for achieving the class of any operations consistent with quantum mechanics.
A transformation of quantum states is represented by a linear map, and in the following, a linear map may be simply referred to as a map.
For example, the identity map on a system $\mathcal{H}^{A}$ is denoted by $\id^A$ and due to linearity,
\begin{equation}
  \id^A\left(\psi^A\right)=\psi^A
\end{equation}
is satisfied for any $\psi^A\in\mathcal{B}\left(\psi^A\right)$.
Since adding and discarding auxiliary systems are allowed, a map representing a state transformation may have different input and output systems represented by different Hilbert spaces.
In the following, superscripts of linear maps represent the labels of input and output systems; \textit{e.g.},
for cases where input and output systems are the same, a ma may be written as
\begin{equation}
  \mathcal{N}^{A}:\mathcal{B}\left(\mathcal{H}^{A}\right)\to\mathcal{B}\left(\mathcal{H}^{A}\right),
\end{equation}
and otherwise, a map for a state transformation from a system $\mathcal{H}^{A_\textup{in}}$ to a system $\mathcal{H}^{A_\textup{out}}$ may be written as
\begin{equation}
  \mathcal{N}^{A_\textup{in}\to A_\textup{out}}:\mathcal{B}\left(\mathcal{H}^{A_\textup{in}}\right)\to\mathcal{B}\left(\mathcal{H}^{A_\textup{out}}\right).
\end{equation}
Any linear map $\mathcal{N}^{A_\textup{in}\to A_\textup{out}}$ representing a deterministic state transformation has to satisfy the following two properties for being consistent with the axiom of probability:
\begin{description}
  \item[Completely positive property] Given any auxiliary system $\mathcal{H}^R$ and any operator $\psi^{RA_\textup{in}}\in\mathcal{B}\left(\mathcal{H}^R\otimes\mathcal{H}^{A_\textup{in}}\right)$, if the operator is a positive semidefinite operator, that is,
    \begin{equation}
      \psi^{RA_\textup{in}}\geqq 0,
    \end{equation}
    then the operator obtained by performing $\mathcal{N}^{A_\textup{in}\to A_\textup{out}}$ on $\mathcal{H}^{A_\textup{in}}$ is also mapped into a positive semidefinite operator, that is,
    \begin{equation}
      \left(\id^R\otimes\mathcal{N}^{A_\textup{in}\to A_\textup{out}}\right)\left(\psi^{RA_\textup{in}}\right)\geqq 0;
    \end{equation}
  \item[Trace-preserving property] Given any operator $\psi^{A_\textup{in}}\in\mathcal{B}\left(\mathcal{H}^{A_\textup{in}}\right)$, $\mathcal{N}^{A_\textup{in}\to A_\textup{out}}$ preserves the trace of the operator, that is,
    \begin{equation}
      \tr\mathcal{N}^{A_\textup{in}\to A_\textup{out}}\left(\psi^{A_\textup{in}}\right)=\tr\psi^{A_\textup{in}}.
    \end{equation}
\end{description}
A map satisfying these properties is called a completely positive and trace-preserving (CPTP) map.
Given any CPTP map $\mathcal{N}^{A_\textup{in}\to A_\textup{out}}$, there exists a Hilbert space $\mathcal{H}^E$ representing an auxiliary system and an isometry transformation $U_\mathcal{N}^{A_\textup{in}\to A_\textup{out}E}$ from $\mathcal{H}^{A_\textup{in}}$ to $\mathcal{H}^{A_\textup{out}}\otimes\mathcal{H}^{E}$ such that for any input state $\psi^{A_\textup{in}}$
\begin{equation}
  \mathcal{N}^{A_\textup{in}\to A_\textup{out}}\left(\psi^{A_\textup{in}}\right)=\tr_E \left(U_\mathcal{N}^{A_\textup{in}\to A_\textup{out}E}\right)\psi^{A_\textup{in}}{\left(U_\mathcal{N}^{A_\textup{in}\to A_\textup{out}E}\right)}^\dag.
\end{equation}
This representation of a CPTP map is called the \textit{Stinespring dilation} of the CPTP map, implying that state transformations represented by any CPTP maps can be achieved by adding an auxiliary system, performing a unitary transformation, and discarding a part of subsystems.
The CPTP maps are also referred to as channels.
To investigate properties of a CPTP map $\mathcal{N}^{A_\textup{in}\to A_\textup{out}}$, it is useful to consider the Choi operator $J\left(\mathcal{N}^{A_\textup{in}\to A_\textup{out}}\right)\in\mathcal{B}\left(\mathcal{H}^R\otimes\mathcal{H}^{A_\textup{out}}\right)$ of $\mathcal{N}^{A_\textup{in}\to A_\textup{out}}$ defined as
\begin{equation}
  \label{eq:choi_operator}
  J\left(\mathcal{N}^{A_\textup{in}\to A_\textup{out}}\right)\coloneq\left(\id^R\otimes\mathcal{N}^{A_\textup{in}\to A_\textup{out}}\right)\left(\left(\sum_{l=0}^{D^{A_\textup{in}}-1}\Ket{l}^R\otimes\Ket{l}^{A_\textup{in}}\right)\left(\sum_{l=0}^{D^{A_\textup{in}}-1}\Bra{l}^R\otimes\Bra{l}^{A_\textup{in}}\right)\right),
\end{equation}
where this Choi operator is not normalized, $\mathcal{H}^R$ is an additional Hilbert space for introducing this Choi operator, and
\begin{equation}
  D^{A_\textup{in}}\coloneq\dim\mathcal{H}^{A_\textup{in}}.
\end{equation}
There exists one-to-one correspondence between a CPTP map and the Choi operator of the CPTP map.

In terms of these properties, the quantum instrument ${\left\{\mathcal{E}_m\right\}}_m$ can also equivalently be considered as a family of completely positive (CP) maps whose sum $\sum_m \mathcal{E}_m$ is a trace-preserving map.
In the same way as CPTP maps, different input and output systems represented by different Hilbert spaces can be considered for quantum instruments.
Given any quantum instrument ${\left\{\mathcal{E}_m^{A_\textup{in}\to A_\textup{out}}\right\}}_m$ and any state $\psi^{A_\textup{in}}$,
introduce an auxiliary system $\mathcal{H}^X$ for storing the measurement outcome of this quantum instrument,
and performing the measurement represented by ${\left\{\mathcal{E}_m^{A_\textup{in}\to A_\textup{out}}\right\}}_m$ is equivalent to performing a CPTP map $\mathcal{E}^{A_\textup{in}\to A_\textup{out}X}$ acting as
\begin{equation}
  \begin{split}
    \mathcal{E}^{A_\textup{in}\to A_\textup{out}X}\left(\psi^{A_\textup{in}}\right)&\coloneq\sum_m\mathcal{E}_m^{A_\textup{in}\to A_\textup{out}}\left(\psi^{A_\textup{in}}\right)\otimes\Ket{m}\Bra{m}^X\\
                                                                                   &=\sum_m p\left(m\right)\frac{\mathcal{E}_m^{A_\textup{in}\to A_\textup{out}}\left(\psi^{A_\textup{in}}\right)}{\tr\mathcal{E}_m^{A_\textup{in}\to A_\textup{out}}\left(\psi^{A_\textup{in}}\right)}\otimes\Ket{m}\Bra{m}^X
  \end{split}
\end{equation}
where the measurement outcomes are represented as orthogonal pure states for the computational basis of $\mathcal{H}^X$.
Note that while quantum instruments are discussed above, the same argument for different input and output systems holds for measurement operators as special cases of quantum instruments.
Any CPTP map is equivalent to a single-outcome quantum instrument.

In the rest of this thesis, the most general forms for representing operations, that is, CPTP maps and quantum instruments, and more specific forms, such as isometries and measurement operators, are suitably used for describing maps.

\section{\label{sec:locc}Local operations and classical communication}

In distributed quantum information processing, multiple quantum devices capable of coherently keeping the quantum states of a quantum system inside and of performing operations for transforming these quantum states cooperate in achieving quantum information processing.
The local quantum system in each quantum device can be regarded as a subsystem comprising a multipartite quantum system distributed among the devices.
Operations performed in each device is restricted to local operations on the subsystem held in the device.
To perform arbitrary nonlocal operations on the distributed multipartite system,
classical communication of measurement outcomes between the devices is not sufficient, while quantum communication for transferring quantum information of quantum states can be used for achieving such nonlocal operations.
However, while classical communication can be reliably performed using current technologies, quantum communication between spatially separated quantum devices is more challenging and costly.
In this regard, it is natural to ask what can be achieved only using local operations and classical communication (LOCC)~\cite{D11,C19,C7}.
This section provides definition of LOCC, and the notion of entanglement is also introduced in terms of LOCC\@.

To introduce LOCC, each of such quantum devices in performing LOCC is called a party being able to perform arbitrary local operations on the party's local quantum system.
Let $N$ be the number of the parties, and the parties are denoted by $v_1\,,\ldots,v_N$.
The set of the parties is denoted by
\begin{equation}
  V\coloneq\left\{v_1\,,\ldots,v_N\right\}.
\end{equation}
For each $v_k\in V$, let $\mathcal{H}^{v_k}$ represent the system held by the party $v_k\,$, and the whole multipartite system distributed among the parties is denoted by
\begin{equation}
  \mathcal{H}\coloneq\bigotimes_{v_k\in V}\mathcal{H}^{v_k}.
\end{equation}
LOCC consists of measurements by each party and classical communication for sending the measurement outcomes to the other parties, where each measurement can be conditioned by the former measurement outcomes obtained by other parties.
When classical communication can be freely performed, it is sufficient to consider that the measurement outcomes for each measurement are sent to all the parties.
Classical communication introduces sequential order of measurements, and the number of classical communication is referred to as the round of classical communication.

In the following, LOCC is introduced in terms of measurement operators for simplicity, while a more formal definition in terms of quantum instruments also follows from combining the following argument with classical post-processing of coarse-graining, as discussed later.
A family of measurement operators ${\left\{M_{m_1}\right\}}_{m_1}$ on $\mathcal{H}$ is called \textit{non-correcting one-way local} from a party $v_k\in V$ if each measurement operator is in the form of
\begin{equation}
  M_{m_1}\coloneq M_{m_1}^{v_k}\otimes\bigotimes_{v\in V \setminus\left\{v_k\right\}}\mathbb{1}^v.
\end{equation}
where ${\left\{M_{m_1}^{v_k}\right\}}_{m_1}$ is $v_k$'s measurement on $\mathcal{H}^{v_k}$ with outcome $m_1\,$, and for each $v\in V \setminus\left\{v_k\right\}$, $\mathbb{1}^v$ is the identity operator on $\mathcal{H}^v$.
This measurement is called one-way in the sense that classical communication is performed in the one-way direction from $v_k$ and the others, and is called non-correcting in the sense that the other parties than $v_k$ do not perform any operation for correction conditioned by $v_k$'s measurement outcome.
To describe $r$ rounds of classical communication,
write a tuple of labels for representing measurement outcomes as
\begin{equation}
  \boldsymbol{m}_r\coloneq\left(m_1\,,\ldots,m_r\right).
\end{equation}
where $r=1,2,\ldots$ and $\boldsymbol{m}_1$ is identified with $m_1$.
A family of measurement operators ${\left\{M_{\boldsymbol{m}_r}\right\}}_{\boldsymbol{m}_r}$ on $\mathcal{H}$ is said to be \textit{LOCC linked} to a family of measurement operators ${\left\{M_{\boldsymbol{m}_{r-1}}\right\}}_{\boldsymbol{m}_{r-1}}$ on $\mathcal{H}$ if there exists a party $v_k\in V$ and a non-correcting one-way local measurement ${\left\{M_{m_r}\right\}}_{m_r}$ from $v_k$ such that for each $\boldsymbol{m}_r\,$, $M_{\boldsymbol{m}_r}$ is the composition of $M_{\boldsymbol{m}_{r-1}}$ and $M_{m_r}\,$, that is,
\begin{equation}
  M_{\boldsymbol{m}_r}=M_{m_r}M_{\boldsymbol{m}_{r-1}},
\end{equation}
where each measurement operator in ${\left\{M_{m_r}\right\}}_{m_r}$ is in the form of
\begin{equation}
  M_{m_r}\coloneq M_{m_r|\boldsymbol{m}_{r-1}}^{v_k}\otimes\bigotimes_{v\in V \setminus\left\{v_k\right\}}\mathbb{1}^v,
\end{equation}
and ${\left\{M_{m_r|\boldsymbol{m}_{r-1}}^{v_k}\right\}}_{m_r}$ is $v_k$'s measurement operator on $\mathcal{H}^{v_k}$ with outcome $m_r$ possibly conditioned by all the preceding outcomes $\boldsymbol{m}_{r-1}$ having been broadcast to all the parties by classical communication.

Using these notions, LOCC is defined as follows.
Let \textit{non-correcting one-round LOCC} refer to a non-correcting one-way local measurement from some party,
and define \textit{non-correcting $r$-round LOCC} for any $r\geqq 2$ as operations represented by a family of measurement operators LOCC linked to that representing non-correcting $(r-1)$-round LOCC\@.
For any $r\geqq 1$, \textit{$r$-round LOCC} is defined as operations represented as a family of measurement operators achieved by a non-correcting $r$-round LOCC with outcome $\boldsymbol{m}_r$ followed by each party $v$'s local measurement ${\left\{M_{m^v|\boldsymbol{m}_r}^v\right\}}_{m^v}$ with measurement outcome $m^v$ conditioned by $\boldsymbol{m}_r$.
Finite-round LOCC refers to operations that are $r$-round LOCC for some finite $r$.
Considering a sequence of $r$-round LOCC for $r=1,2,\ldots$, where non-correcting $r$-round LOCC for each $r$-round LOCC is LOCC-linked to that for $(r-1)$-round LOCC,
and \textit{LOCC} is defined as operations that can be represented as a limit of this type of LOCC-linked sequence of $r$-round LOCC as $r\to\infty$.
A CPTP map achieved by LOCC is called an LOCC map.

To define LOCC in terms of quantum instruments, it suffices to modify the above definition so that whenever a measurement outcome $m_r$ is obtained for each $r=1,2,\ldots$, classical post-processing of coarse-graining of all the measurement outcomes $\boldsymbol{m}m_r$ is performed, in the same way as Reference~\cite{C7}.
Note that there exists subtle difference between the above definition of $r$-round LOCC and that in Reference~\cite{C7} in that the above definition allows each party's local measurement after non-correcting $r$-round LOCC, while Reference~\cite{C7} allows only each party's CPTP map after that.
This difference matters when separation between one-way LOCC and two-way LOCC in a task of local state discrimination is discussed in Chapter~\ref{sec:two_way}.
For any $r=1,2,\ldots$, the set of quantum instruments representing $r$-round LOCC is strictly included by that of $(r+1)$-round LOCC, and that of finite-round LOCC is strictly included in LOCC\@.
The set of CPTP maps achievable by $r$-round LOCC is known to be compact if $r$ is finite, but that representing LOCC is not, since LOCC possibly includes infinitely many rounds~\cite{C7}.

In cases of two parties denoted by $A$ and $B$ by convention, \textit{one-way LOCC} refers to one-round LOCC, and \textit{two-way LOCC} refers to LOCC other than one-way LOCC\@.
One-way LOCC from $A$ to $B$ refers to one-round LOCC defined using non-correcting one-way local measurements only from $A$ but not from $B$.

This class of operations, LOCC, naturally defines a class of states exhibiting \textit{quantum entanglement}.
Consider situations where the cost of performing LOCC is negligible compared to quantum communication.
In such situations, it is natural to assume that the parties can freely perform LOCC\@.
Then, a state $\psi\in\mathcal{D}\left(\mathcal{H}\right)$ is called a separable state if for any $\phi\in\mathcal{D}\left(\mathcal{H}\right)$, there exists an LOCC map $\mathcal{E}$ such that
\begin{equation}
  \psi=\mathcal{E}\left(\phi\right).
\end{equation}
In other words, separable states are the states that can be obtained from scratch by LOCC, in the sense that these states can be obtained from any state by LOCC\@.
Also equivalently, separable pure states are product states of local states, that is,
\begin{equation}
  \bigotimes_{v_k\in V}\Ket{\psi_k}^{v_k},
\end{equation}
where $\Ket{\psi_k}^{v_k}\in\mathcal{H}^{v_k}$ for each $v_k\,$,
and mixed separable states are convex combinations of product states.
An entangled state is defined as a state which is not separable.
\textit{Bipartite entanglement} refers to entanglement of bipartite entangled states shared between two parties, and \textit{multipartite entanglement} refers to that of multipartite entangled states shared among more than two parties.
Note that in terms of quantum resource theory~\cite{C13}, LOCC is regarded as free operations, and separable states are free states obtained from this free operations.
Entangled states are resources that cannot be obtained by LOCC from a separable state, and LOCC assisted by an initially given entangled state shared among the parties can be advantageous in distributed quantum information processing compared to performing LOCC without such assistance, as discussed in Section~\ref{sec:entanglement}.

\section{\label{sec:decomposition}Decomposition theorems for analysis of quantum state transformation}

For simplifying analysis of properties of entangled states under LOCC, mathematical decomposition theorems for operators representing quantum states can be exploited.
This section provides such decomposition theorems for later use.

\paragraph{Spectral decomposition}
Given a system $\mathcal{H}$ and a state $\psi\in\mathcal{D}\left(\mathcal{H}\right)$,
the \textit{spectral decomposition} of $\psi$ yields
\begin{equation}
  \psi=\sum_{l=0}^{R-1}\lambda_l\Ket{\psi_l}\Bra{\psi_l},
\end{equation}
where $R$ is the rank of $\psi$, each $\lambda_l>0$ is a nonzero eigenvalue, and ${\left\{\Ket{\psi_l}\right\}}_l$ is a set of normalized pure states representing eigenvectors corresponding to nonzero eigenvalues, which are orthogonal with each other.
It is assumed that the eigenvalues are sorted in descending order, that is,
\begin{equation}
  \label{eq:eigen}
  \lambda_0\geqq\lambda_1\geqq\cdots\geqq\lambda_{R-1}.
\end{equation}
More generally, if $\psi$ is a Hermitian operator, spectral decomposition of $\psi$ is in the same form as the above while each nonzero eigenvalue $\lambda_l$ can be a negative real number.
Given any function $f:\mathbb{R}\to\mathbb{R}$ and any Hermitian operator $\psi$,
define
\begin{equation}
  f\left(\psi\right)\coloneq\sum_{l=0}^{R-1}f\left(\lambda_l\right)\Ket{\psi_l}\Bra{\psi_l},
\end{equation}
where the spectral decomposition of $\psi$ is used on the right-hand side.
Note that this definition is equivalent to considering Tailor expansion of $f\left(x\right)$ and substituting $x$ with $\psi$ in this Tailor expansion to define $f\left(\psi\right)$.
For example,
\begin{equation}
  \sqrt{\psi}\coloneq\sum_{l=0}^{R-1}\sqrt{\lambda_l}\Ket{\psi_l}\Bra{\psi_l}.
\end{equation}

\paragraph{Singular value decomposition and Schmidt decomposition}
Similarly to the spectral decomposition, given any Hilbert spaces $\mathcal{H}^A$ and $\mathcal{H}^B$, and any bounded operator from $\mathcal{H}^B$ to $\mathcal{H}^A$
\begin{equation}
  \psi^{B\to A}=\sum_{l,l^\prime}\psi_{l,l^\prime}\Ket{l}^A\Bra{l^\prime}^B,
\end{equation}
where ${\left\{\Ket{l}^A\right\}}_l$ and ${\left\{\Ket{l^\prime}^A\right\}}_{l^\prime}$ are the computational bases of $\mathcal{H}^A$ and $\mathcal{H}^B$, respectively,
the singular value decomposition of $\psi$ is a decomposition in the form of
\begin{equation}
  \psi^{B\to A}=\sum_{l=0}^{R-1}\lambda_l\Ket{\psi_l}^A\Bra{\psi_l}^B,
\end{equation}
where $R$ is the rank of $\psi$, each $\lambda_l>0$ is a nonzero singular value, and ${\left\{\Ket{\psi_l}^A\right\}}_l$ and ${\left\{\Ket{\psi_l}^B\right\}}_l$ are sets of normalized pure states of $\mathcal{H}^A$ and $\mathcal{H}^B$, respectively, representing singular vectors corresponding to nonzero singular values, which are orthogonal with each other.

Analogously to singular value decomposition, given any Hilbert spaces $\mathcal{H}^A$ and $\mathcal{H}^B$, and any bipartite pure state of $\mathcal{H}^A\otimes\mathcal{H}^B$
\begin{equation}
  \Ket{\psi}^{AB}=\sum_{l,l^\prime}\psi_{l,l^\prime}\Ket{l}^A\otimes\Ket{l^\prime}^B,
\end{equation}
where ${\left\{\Ket{l}^A\right\}}_l$ and ${\left\{\Ket{l^\prime}^A\right\}}_{l^\prime}$ are the computational bases of $\mathcal{H}^A$ and $\mathcal{H}^B$, respectively,
the Schmidt decomposition of $\Ket{\psi}^{AB}$ is a decomposition in the form of
\begin{equation}
  \Ket{\psi}^{AB}=\sum_{l=0}^{R-1}\sqrt{\lambda_l}\Ket{\psi_l}^A\otimes\Ket{\psi_l}^B,
\end{equation}
where $R$ is called the Schmidt rank of $\Ket{\psi}^{AB}$, each $\lambda_l>0$ is called a nonzero Schmidt coefficient, and ${\left\{\Ket{\psi_l}^A\right\}}_l$ and ${\left\{\Ket{\psi_l}^B\right\}}_l$ are sets of normalized pure states of $\mathcal{H}^A$ and $\mathcal{H}^B$, respectively, which are orthogonal with each other.
While $\mathcal{H}^A$ and $\mathcal{H}^B$ are spanned by bases consisting of $\dim\mathcal{H}^A$ vectors and $\dim\mathcal{H}^B$ vectors, respectively,
${\left\{\Ket{\psi_l}^A\right\}}_l$ and ${\left\{\Ket{\psi_l}^B\right\}}_l\,$, consisting of $R$ vectors corresponding to nonzero Schmidt coefficients, can be used as a part of such bases, and this type of basis is called a Schmidt basis.
Schmidt-basis states may refer to states in a Schmidt basis corresponding to nonzero Schmidt coefficients.
Given the above Schmidt decomposition of $\Ket{\psi}^{AB}$, tracing out $\mathcal{H}^B$ yields the spectral decomposition of $\psi^A$
\begin{equation}
  \psi^A=\sum_{l=0}^{R-1}\lambda_l\Ket{\psi_l}\Bra{\psi_l}^A,
\end{equation}
and assume that the Schmidt coefficients are sorted in descending order in the same way Equation~\eqref{eq:eigen} for the eigenvalues.

Given any state $\psi^A$,
a bipartite pure state $\Ket{\psi}^{AB}$ is called a \textit{purification} of $\psi^A$ if
\begin{equation}
  \psi^A=\tr_B\Ket{\psi}\Bra{\psi}^{AB},
\end{equation}
where $\mathcal{H}^B$ is an auxiliary system for this purification.
A purification of $\psi^A$ may not be unique, but different purifications are related by isometries.
More precisely,
consider any state $\psi^A$ given in the spectral-decomposition form as
\begin{equation}
  \psi^A=\sum_{l=0}^{R-1}\lambda_l\Ket{\psi_l}\Bra{\psi_l}^A,
\end{equation}
and any two purifications of $\psi^A$
\begin{align}
  \Ket{\psi}^{AB}&\in\mathcal{H}^A\otimes\mathcal{H}^B,\\
  \Ket{\psi^\prime}^{AB^\prime}&\in\mathcal{H}^A\otimes\mathcal{H}^{B^\prime}.
\end{align}
Then, the corresponding Schmidt-decomposition forms are given by
\begin{align}
  \Ket{\psi}^{AB}&=\sum_{l=0}^{R-1}\sqrt{\lambda_l}\Ket{\psi_l}^A\otimes\Ket{\psi_l}^B,\\
  \Ket{\psi^\prime}^{AB^\prime}&=\sum_{l=0}^{R-1}\sqrt{\lambda_l}\Ket{\psi_l}^A\otimes\Ket{\psi_l^\prime}^{B^\prime}.
\end{align}
Hence, there exists an isometry $U^{B\to B^\prime}$
such that for each $l$
\begin{equation}
  U^{B\to B^\prime}\Ket{\psi_l}^{B}=\Ket{\psi_l^\prime}^{B^\prime},
\end{equation}
that is,
\begin{equation}
  \left(\mathbb{1}^A\otimes U^{B\to B^\prime}\right)\Ket{\psi}^{AB}=\Ket{\psi^\prime}^{AB^\prime}.
\end{equation}

\paragraph{Norms and distances}
Using the spectral decomposition,
the $1$-norm, also known as the trace norm, of any Hermitian operator $\psi$ is defined as
\begin{equation}
  {\left\|\psi\right\|}_1\coloneq\sum_{l=0}^{R-1}\left|\lambda_l\right|,
\end{equation}
where nonzero eigenvalues of $\psi$ are used on the right-hand side.
As for another norm, the $\infty$-norm, also known as the operator norm, of $\psi$ is defined as
\begin{equation}
  {\left\|\psi\right\|}_\infty\coloneq\max_l\left\{\lambda_l\right\}=\lambda_0\,,
\end{equation}
where nonzero eigenvalues of $\psi$ are used on the right-hand side.

This type of norms of Hermitian operators can be used for quantifying distance between two quantum states.
Given two states $\psi\in\mathcal{D}\left(\mathcal{H}\right)$ and $\phi\in\mathcal{D}\left(\mathcal{H}\right)$,
trace distance between these two states is defined as
\begin{equation}
  {\left\|\psi-\phi\right\|}_1\,,
\end{equation}
which characterizes the optimal success probability of discriminating these two states by a quantum measurement~\cite{H16,W11}.
There exists another commonly used quantity in quantum information theory representing closeness of two quantum states,
the (square root) fidelity between $\psi$ and $\phi$, defined as
\begin{equation}
  F\left(\psi,\phi\right)\coloneq{\left\|\sqrt{\psi}\sqrt{\phi}\right\|}_1.
\end{equation}
As for other quantities, refer to Reference~\cite{A14}.
The trace distance and the fidelity are related by Fuchs-van de Graaf inequalities
\begin{equation}
  1-\frac{1}{2}{\left\|\psi-\phi\right\|}_1\leqq F\left(\psi,\phi\right)\leqq\sqrt{1-\frac{1}{4}{\left({\left\|\psi-\phi\right\|}_1\right)}^2}.
\end{equation}
The squared form of this fidelity is written as
\begin{equation}
  F^2\left(\psi,\phi\right)\coloneq{\left({\left\|\sqrt{\psi}\sqrt{\phi}\right\|}_1\right)}^2,
\end{equation}
and is extensively used in this thesis.
This fidelity satisfies the following properties:
\begin{enumerate}
  \item $0\leqq F^2\left(\psi,\phi\right)\leqq 1$;
  \item $F^2\left(\psi,\phi\right)=1\Leftrightarrow \psi=\phi$;
  \item $F^2\left(\psi,\phi\right)=F^2\left(\phi,\psi\right)$ (symmetric);
  \item $F^2\left(\psi\otimes{\psi^\prime},\phi\otimes{\phi^\prime}\right)=F^2\left(\phi,\psi\right)F^2\left(\phi^\prime,\psi^\prime\right)$ (multiplicative);
\end{enumerate}
where $\psi$, $\phi$, ${\psi^\prime}$, and ${\phi^\prime}$ are arbitrary states.
If $\psi$ is pure, \textit{i.e.}, $\psi=\Ket{\psi}\Bra{\psi}$,
it holds that
\begin{equation}
  F^2\left(\psi,\phi\right)=\Bra{\psi}\phi\Ket{\psi}.
\end{equation}

Using the fidelity, the purified distance~\cite{T12,G11} between two normalized states $\psi$ and $\phi$ is defined as
\begin{equation}
  P\left(\psi,\phi\right)\coloneq\sqrt{1-F^2\left(\psi,\phi\right)}.
\end{equation}
The purified distance between two normalized states $\psi$ and $\phi$ can also be represented as the minimum trace distance between purifications of $\psi$ and $\phi$~\cite{T5,T11}.
Note that this thesis uses the purified distance only between normalized operators, while there also exists a generalized definition of the purified distance between sub-normalized operators~\cite{T5,T11}
\begin{align}
  \label{eq:purified_distance_subnormalized}
  &P\left(\psi,\phi\right)\coloneq\sqrt{1-F_{\ast}^2\left(\psi,\phi\right)},\\
  &F_{\ast}^2\left(\psi,\phi\right)\coloneq{\left({\left\|\sqrt{\psi}\sqrt{\phi}\right\|}_1-\sqrt{\left(1-\tr\psi\right)\left(1-\tr\phi\right)}\right)}^2,\\
  &\psi\geqq 0,\,\tr\psi\leqq 1,\,\phi\geqq 0,\,\tr\phi\leqq 1.
\end{align}
The purified distance satisfies the following properties:
\begin{enumerate}
  \item $0\leqq P\left(\psi,\phi\right)\leqq 1$;
  \item $P\left(\psi,\phi\right)=0\Leftrightarrow \psi=\phi$;
  \item $P\left(\psi,\phi\right)=P\left(\phi,\psi\right)$ (symmetric);
  \item $P\left(\psi,\phi\right)\leqq P\left(\psi,\omega\right)+P\left(\omega,\phi\right)$ (triangle inequality);
  \item $P\left(\mathcal{E}\left(\psi\right),\mathcal{E}\left(\phi\right)\right)\leqq P\left(\psi,\phi\right)$ (monotonicity);
\end{enumerate}
where $\psi$, $\phi$, and $\omega$ are arbitrary states, and $\mathcal{E}$ is any CPTP map~\cite{T5}.
Moreover, for any state $\psi$, $\phi$, and $\omega$,
\begin{equation}
  P\left(\psi\otimes\omega,\phi\otimes\omega\right)=P\left(\psi,\phi\right),
\end{equation}
due to the multiplicativity of the fidelity.
For any $\epsilon\geqq 0$,
two states $\psi$ and $\phi$ are said to be $\epsilon$-close in terms of the fidelity or the purified distance if
\begin{equation}
  \begin{split}
    &F^2\left(\psi,\phi\right)\geqq 1-\epsilon^2\\
    &\Leftrightarrow P\left(\psi,\phi\right)\leqq \epsilon.
  \end{split}
\end{equation}

\paragraph{Koashi-Imoto decomposition}
As for another decomposition, the Koashi-Imoto decomposition~\cite{K3,H6,K5,W4} is introduced in the following.
The Koashi-Imoto decomposition is first introduced in Reference~\cite{K3} to characterize a CPTP map $\mathcal{T}$ leaving any state in a given set $\left\{\psi_i^A\in\mathcal{D}\left(\mathcal{H}^A\right):i\in I\right\}$ invariant.
Note that the index set $I$ can be an infinite set.
The Koashi-Imoto decomposition of a set of states is presented in the following lemma, of which an algorithmic proof is given in Reference~\cite{K3}, and alternative proofs are given in References~\cite{H6,K5} through an operator-algebraic approach.
Note that due to the second condition in the following lemma, the Koashi-Imoto decomposition is \textit{uniquely} determined, corresponding to the decomposition said to be maximal in Reference~\cite{K3}.

\begin{lemma}
\label{lem:koashi_imoto_decomposition_set}
(Theorem~3 in Reference~\cite{K3}, Theorem~9 in Reference~\cite{H6}, and Lemma~6 in Reference~\cite{K5})
    \textit{Koashi-Imoto decomposition of a set of states.}
    Given any set
    \begin{equation}
      {\left\{\psi_i^A\in\mathcal{D}\left(\mathcal{H}^A\right): i\in I\right\}},
    \end{equation}
    there exists a \textit{unique} decomposition of $\mathcal{H}^A$
    \begin{equation}
        \mathcal{H}^A=\bigoplus_{j=0}^{J-1}\mathcal{H}^{a_j^\textup{L}}\otimes\mathcal{H}^{a_j^\textup{R}}
    \end{equation}
    such that
    \begin{enumerate}
      \item For each $i\in I$, $\psi_i^A$ is decomposed into
        \begin{equation}
          \psi_i^A=\bigoplus_{j=0}^{J-1} p\left(j\right) \omega_j^{a_j^\textup{L}}\otimes\phi_{i,j}^{a_j^\textup{R}}\, ,
        \end{equation}
        where $p\left(j\right)$ is a probability distribution and for each $j\in\{0,\ldots,J-1\}$, $\omega_j^{a_j^\textup{L}}\in\mathcal{D}\left(\mathcal{H}^{a_j^\textup{L}}\right)$ is independent of $i$, and $\phi_{i,j}^{a_j^\textup{R}}\in\mathcal{D}\left(\mathcal{H}^{a_j^\textup{R}}\right)$ depends on $i$.
      \item For any CPTP map
        \begin{equation}
          \mathcal{T}:\mathcal{B}\left(\mathcal{H}^A\right)\to\mathcal{B}\left(\mathcal{H}^A\right),
        \end{equation}
        if $\mathcal{T}$ leaves $\psi_i^A$ invariant for each $i\in I$, that is,
        \begin{equation}
          \mathcal{T}\left(\psi_i^A\right)=\psi_i^A,
        \end{equation}
        then the isometry $U_\mathcal{T}$ for the Stinespring dilation of $\mathcal{T}$ is decomposed into
        \begin{equation}
          U_\mathcal{T}=\bigoplus_{j=0}^{J-1} U_j^{a_j^\textup{L}}\otimes\mathbb{1}^{a_j^\textup{R}},
        \end{equation}
        where, for each $j\in\{0,\ldots,J-1\}$, $U_j^{a_j^\textup{L}}$ is an isometry from $\mathcal{H}^{a_j^\textup{L}}$ to $\mathcal{H}^{a_j^\textup{L}}\otimes\mathcal{H}^{A^\prime }$ satisfying
        \begin{equation}
          \tr_{A^\prime }U_\mathcal{T} \omega_j^{a_j^\textup{L}} U_\mathcal{T}^\dag = \omega_j^{a_j^\textup{L}}.
        \end{equation}
    \end{enumerate}
\end{lemma}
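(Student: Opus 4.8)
The plan is to deduce the statement from the structure theory of finite-dimensional $*$-algebras, so that the direct-sum-of-tensor-products shape of the decomposition is supplied for free and only its defining data $\left(p(j),\omega_j,\phi_{i,j}\right)$, together with uniqueness, must be produced by hand. Recall that, by the Artin--Wedderburn theorem, any unital $*$-subalgebra $\mathcal{A}\subseteq\mathcal{B}\left(\mathcal{H}^A\right)$ induces an orthogonal decomposition $\mathcal{H}^A=\bigoplus_{j}\mathcal{H}^{a_j^{\textup{L}}}\otimes\mathcal{H}^{a_j^{\textup{R}}}$ for which $\mathcal{A}=\bigoplus_j\mathcal{B}\left(\mathcal{H}^{a_j^{\textup{L}}}\right)\otimes\mathbb{1}^{a_j^{\textup{R}}}$ and commutant $\mathcal{A}^\prime=\bigoplus_j\mathbb{1}^{a_j^{\textup{L}}}\otimes\mathcal{B}\left(\mathcal{H}^{a_j^{\textup{R}}}\right)$, unique up to reordering blocks and internal unitaries. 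The whole problem thus reduces to pinning down the correct algebra $\mathcal{A}$. Two preliminary reductions come first: since $\mathcal{B}\left(\mathcal{H}^A\right)$ is finite-dimensional, the real span of $\left\{\psi_i^A:i\in I\right\}$ is finite-dimensional, so I would fix a finite subfamily $\psi_{i_1},\ldots,\psi_{i_n}$ with the same span, and any factorized form established for it propagates to every $\psi_i$ because the conditional right-factor components are recovered from the fixed block data by partial trace and are automatically density operators; and, restricting $\mathcal{H}^A$ to the support of $\bar\psi\coloneq\tfrac1n\sum_k\psi_{i_k}$, I may assume $\bar\psi$ is faithful.

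The construction I would use is the modular one: form the relativized operators $D_i\coloneq\bar\psi^{-1/2}\psi_i\bar\psi^{-1/2}$, let $\mathcal{N}$ be the unital $*$-algebra they generate, and set $\mathcal{A}\coloneq\mathcal{N}^\prime$. Applying Artin--Wedderburn to $\mathcal{N}=\mathcal{A}^\prime$ fixes the block decomposition. To verify Condition~1 I would show that all $\psi_i$ agree on $\mathcal{A}$, i.e.\ $\tr\left(\psi_i a\right)$ is independent of $i$ for $a\in\mathcal{A}$; reading $\bar\psi$ against the block structure then exhibits the common weights $p(j)$ and common left marginals $\omega_j$, while the entire $i$-dependence is confined to the right factors and is named $\phi_{i,j}$. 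Uniqueness follows from the uniqueness of the Wedderburn decomposition together with maximality: by construction the $\{D_i\}$ generate $\mathcal{N}=\bigoplus_j\mathbb{1}^{a_j^{\textup{L}}}\otimes\mathcal{B}\left(\mathcal{H}^{a_j^{\textup{R}}}\right)$, so the conditional states on each $\mathcal{H}^{a_j^{\textup{R}}}$ are rich enough that no block can be split further.

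For Condition~2 I would exploit that a CPTP map $\mathcal{T}$ fixing all $\psi_i$ in particular fixes the faithful state $\bar\psi$, and that the structure theory of channels with a faithful invariant state forces $\mathcal{T}$ to act as the identity on the information carried by the right factors while only reshuffling the left factors. Translating this to the Stinespring isometry $U_{\mathcal{T}}$ of the Stinespring dilation, invariance pins $U_{\mathcal{T}}$ into the commutant-compatible form $\bigoplus_j U_j^{a_j^{\textup{L}}}\otimes\mathbb{1}^{a_j^{\textup{R}}}$, and tracing out the dilating system $\mathcal{H}^{A^\prime}$ recovers the stated $\omega_j$-preservation $\tr_{A^\prime}U_{\mathcal{T}}\,\omega_j^{a_j^{\textup{L}}}\,U_{\mathcal{T}}^\dag=\omega_j^{a_j^{\textup{L}}}$.

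I expect the main obstacle to be the bookkeeping in establishing Condition~1: showing that conjugation by the non-central operator $\bar\psi^{1/2}$ is compatible with the block structure of $\mathcal{N}$ closely enough that $\bar\psi$ itself---and hence each $\psi_i$---splits with a \emph{single} common $\omega_j$ and a \emph{single} common weight $p(j)$ on every block, rather than with $i$- or $j$-entangled data. This is precisely where the algorithmic argument of Reference~\cite{K3} and the operator-algebraic arguments of References~\cite{H6,K5} do their real work, and where the claimed maximality and uniqueness of the decomposition are genuinely used.
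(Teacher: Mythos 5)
A preliminary remark on the comparison you ask for: the thesis does not prove this lemma at all---it is imported from References~\cite{K3,H6,K5}, and the only related material is Appendix~\ref{sec:koashi_imoto}, which reproduces the iterative $\textup{L}$-decomposing/$\textup{R}$-combining algorithm of Reference~\cite{K3} rather than a proof. Your operator-algebraic route is therefore the one taken in References~\cite{H6,K5}, and the overall architecture (a distinguished $*$-subalgebra plus Artin--Wedderburn) is the right skeleton.

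The gap is in the choice of subalgebra, and the difficulty you defer as ``bookkeeping''---compatibility of conjugation by $\bar\psi^{1/2}$ with the block structure of $\mathcal{N}$---is not a technicality but the point at which the construction is false. Concretely, on $\mathcal{H}^A=\mathbb{C}^2\otimes\mathbb{C}^2$ let $\bar\psi\coloneq(1-\delta)\frac{\mathbb{1}}{4}+\delta\Ket{\Phi_2^+}\Bra{\Phi_2^+}$ and
\begin{equation}
  \psi_{1,2}\coloneq\bar\psi^{1/2}\left(\left(\mathbb{1}\pm\epsilon Z\right)\otimes\mathbb{1}\right)\bar\psi^{1/2}
\end{equation}
for small $\epsilon,\delta>0$; these are two states averaging to $\bar\psi$ (normalization holds because the first marginal of $\bar\psi$ is $\mathbb{1}/2$), with $D_{1,2}=\left(\mathbb{1}\pm\epsilon Z\right)\otimes\mathbb{1}$. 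Then $\mathcal{N}$ is the two-dimensional abelian algebra spanned by $\mathbb{1}$ and $Z\otimes\mathbb{1}$, whose Wedderburn decomposition consists of the two blocks $P_\pm\mathcal{H}^A$ with $P_\pm=\frac{\mathbb{1}\pm Z}{2}\otimes\mathbb{1}$ and one-dimensional right factors. Condition~1 for that decomposition would force each $\psi_i$ to be block diagonal with respect to $P_\pm$, yet $\bar\psi$, hence each $\psi_i$, has a nonvanishing $\Ket{00}\Bra{11}$ matrix element straddling the two blocks, and one also checks $\tr\left(\psi_1 P_+\right)\neq\tr\left(\psi_2 P_+\right)$, so the $\psi_i$ do not even agree on $\mathcal{A}=\mathcal{N}^\prime$. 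Your generating set is the correct one classically (functions of likelihood ratios), but its naive quantum lift fails precisely because $\bar\psi$ need not be compatible with the blocks of $\mathcal{N}$; the repair in References~\cite{H6,K5} is to use instead the algebra generated by the Connes cocycles $\psi_i^{\textup{i}t}\bar\psi^{-\textup{i}t}$ for all real $t$ (equivalently, the smallest algebra containing your $D_i$ that is invariant under the modular flow of $\bar\psi$), or dually the common fixed-point/multiplicative-domain algebra of all channels preserving every $\psi_i$; modular invariance is what licenses the Takesaki conditional expectation that makes $\bar\psi$, and then each $\psi_i$, split across the blocks. Two further points: your intermediate claim for Condition~1, that $\tr\left(\psi_i a\right)$ is independent of $i$ for $a\in\mathcal{A}$, is too strong even for the correct algebra, since the block weights live in the center of $\mathcal{A}$ and do depend on $i$ (compare $\left\{\Ket{0}\Bra{0},\Ket{1}\Bra{1}\right\}$ on a qubit, whose decomposition is two classical blocks with $i$-dependent weights, so the lemma's $p\left(j\right)$ should really be read as $p_i\left(j\right)$ as in Reference~\cite{K3}); and Condition~2 as sketched rests on an unproved appeal to ``structure theory of channels with a faithful invariant state,'' which is repairable via Petz's fixed-point theorem but only after the algebra itself has been corrected.
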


Using Lemma~\ref{lem:koashi_imoto_decomposition_set}, Reference~\cite{H6} considers the Koashi-Imoto decomposition of a given bipartite state $\psi^{RA}$.
The Koashi-Imoto decomposition of $\psi^{RA}$ is obtained using a set of $A$'s states that can be \textit{steered} through $\psi^{RA}$, that is, the set of states of $\mathcal{H}^A$ that can be prepared by performing a measurement of $\psi^{RA}$ on $\mathcal{H}^R$ and post-selecting an outcome.
Using an arbitrary positive semidefinite operator $\Lambda^R$,
this set of states is denoted by
\begin{equation}
  \label{eq:psi_lambda}
  \begin{split}
    S_\psi^{A|R}&\coloneq{\left\{\psi^A\left(\Lambda^R\right):\Lambda^R\geqq 0\right\}},\\
    \psi^A\left(\Lambda^R\right)&\coloneq\frac{\tr_R \left[\left(\Lambda^R\otimes\mathbb{1}^A\right)\psi^{RA}\right]}{\tr \left[\left(\Lambda^R\otimes\mathbb{1}^A\right)\psi^{RA}\right]},
  \end{split}
\end{equation}
where the post-selected outcome of a measurement of $\psi^{RA}$ on $\mathcal{H}^R$ corresponds to $\Lambda^R$.
Regard the operator $\Lambda^R$ as the index of the set $S_\psi^{A|R}$, and apply the Koashi-Imoto decomposition of a set of states shown in Lemma~\ref{lem:koashi_imoto_decomposition_set} to this set $S_\psi^{A|R}$, where $\Lambda^R$ for $\psi^A\left(\Lambda^R\right)$ corresponds to the index $i$ for $\psi_i^A$ in Lemma~\ref{lem:koashi_imoto_decomposition_set}, and the set of such positive semidefinite operators $\Lambda^R$ corresponds to the index set $I$.
Then, Reference~\cite{H6} shows that the Koashi-Imoto decomposition of the bipartite state $\psi^{RA}$ is obtained as follows.

\begin{lemma}
\label{lem:koashi_imoto_decomposition_bipartite}
    (in Proof of Theorem~6 in Reference~\cite{H6})
    \textit{Koashi-Imoto decomposition of a bipartite state.}
    Given any bipartite state $\psi^{RA}$,
    the Koashi-Imoto decomposition of the set $S_\psi^{A|R}$ defined as Equation~\eqref{eq:psi_lambda} yields a unique decomposition of $\mathcal{H}^A$ satisfying the conditions in Lemma~\ref{lem:koashi_imoto_decomposition_set}
    \begin{equation}
      \mathcal{H}^A=\bigoplus_{j=0}^{J-1}\mathcal{H}^{a_j^\textup{L}}\otimes\mathcal{H}^{a_j^\textup{R}},
    \end{equation}
    and $\psi^{RA}$ is decomposed into
    \begin{equation}
      \psi^{RA}=\bigoplus_{j=0}^{J-1} p\left(j\right) \omega_j^{a_j^\textup{L}}\otimes\phi_j^{Ra_j^\textup{R}},
    \end{equation}
    where $p\left(j\right)$ is a probability distribution.
\end{lemma}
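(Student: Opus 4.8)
The plan is to apply Lemma~\ref{lem:koashi_imoto_decomposition_set} to the steered set $S_\psi^{A|R}$ and then lift the resulting block structure from the reduced states on $\mathcal{H}^A$ to the full bipartite operator $\psi^{RA}$. Applying the set version gives the unique decomposition $\mathcal{H}^A=\bigoplus_j\mathcal{H}^{a_j^\textup{L}}\otimes\mathcal{H}^{a_j^\textup{R}}$ such that every steered state decomposes as $\psi^A\left(\Lambda^R\right)=\bigoplus_j p_\Lambda\left(j\right)\omega_j^{a_j^\textup{L}}\otimes\phi_{\Lambda,j}^{a_j^\textup{R}}$, where crucially the left factor $\omega_j^{a_j^\textup{L}}$ is the \emph{same} for every $\Lambda^R$. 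Writing $P_j$ for the projector onto the $j$-th block of $\mathcal{H}^A$, I would note that this means each (unnormalized) steered operator $\tr_R\left[\left(\Lambda^R\otimes\mathbb{1}^A\right)\psi^{RA}\right]$, and hence by linearity (since positive operators span all of $\mathcal{B}\left(\mathcal{H}^R\right)$) every operator $\tr_R\left[\left(X^R\otimes\mathbb{1}^A\right)\psi^{RA}\right]$, is block-diagonal with respect to $\{P_j\}$ and, within block $j$, has $\omega_j^{a_j^\textup{L}}$ as its $a_j^\textup{L}$-factor.

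The first step is to establish that $\psi^{RA}$ itself is block-diagonal, i.e.\ $\left(\mathbb{1}^R\otimes P_j\right)\psi^{RA}\left(\mathbb{1}^R\otimes P_{j'}\right)=0$ for $j\neq j'$. The block-diagonality of all steered states gives $P_j\tr_R\left[\left(X^R\otimes\mathbb{1}^A\right)\psi^{RA}\right]P_{j'}=0$ for every $X^R$, which, since $P_j$ acts on the surviving system and $X^R$ on the traced system, rearranges into $\tr_R\left[\left(X^R\otimes\mathbb{1}\right)\left(\mathbb{1}^R\otimes P_j\right)\psi^{RA}\left(\mathbb{1}^R\otimes P_{j'}\right)\right]=0$ for all $X^R$. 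Expanding the operator in a basis $\{\Ket{k}\Bra{l}^R\}$ and choosing $X^R=\Ket{l}\Bra{k}^R$ recovers each of its components on $\mathcal{H}^A$, so vanishing for all $X^R$ forces $\left(\mathbb{1}^R\otimes P_j\right)\psi^{RA}\left(\mathbb{1}^R\otimes P_{j'}\right)=0$. Hence $\psi^{RA}=\bigoplus_j p\left(j\right)\psi_j^{Ra_j^\textup{L}a_j^\textup{R}}$ for normalized states $\psi_j$ and weights $p\left(j\right)$ summing to one.

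The second and central step is to factor each block as $\psi_j^{Ra_j^\textup{L}a_j^\textup{R}}=\omega_j^{a_j^\textup{L}}\otimes\phi_j^{Ra_j^\textup{R}}$. Restricting the probing identity to block $j$, for every $X^R$ the operator $\tr_R\left[\left(X^R\otimes\mathbb{1}\right)\left(\mathbb{1}^R\otimes P_j\right)\psi^{RA}\right]$ equals $\omega_j^{a_j^\textup{L}}\otimes\xi_X^{a_j^\textup{R}}$ for some $\xi_X^{a_j^\textup{R}}$, because this holds for all steered states (positive $X$) and extends by linearity to arbitrary $X$. Expanding the block-$j$ part of $\psi^{RA}$ as $\sum_{kl}\Ket{k}\Bra{l}^R\otimes T_{kl}^{a_j^\textup{L}a_j^\textup{R}}$ and reading off each component via $X^R=\Ket{l}\Bra{k}^R$ shows $T_{kl}=\omega_j^{a_j^\textup{L}}\otimes\xi_{kl}^{a_j^\textup{R}}$ for every $k,l$; resumming yields the product form $\omega_j^{a_j^\textup{L}}\otimes\phi_j^{Ra_j^\textup{R}}$ with $\phi_j^{Ra_j^\textup{R}}\coloneq\sum_{kl}\Ket{k}\Bra{l}^R\otimes\xi_{kl}^{a_j^\textup{R}}$, which is positive as a block of a positive operator and can be normalized. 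Uniqueness of the whole decomposition is inherited directly from the uniqueness in Lemma~\ref{lem:koashi_imoto_decomposition_set}. I expect the main obstacle to be precisely this lifting from the reduced (steered) states on $\mathcal{H}^A$ to the bipartite operator $\psi^{RA}$: one must argue carefully that the two features supplied by the set version---block-diagonality and the $\Lambda$-independence of the left factor $\omega_j$---survive the passage through the partial trace, and the cleanest route is the basis-probing argument above, which hinges on positive operators $\Lambda^R$ spanning all of $\mathcal{B}\left(\mathcal{H}^R\right)$.
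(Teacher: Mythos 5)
Your proof is correct and takes essentially the same route as the paper: the paper states this lemma as a result imported from Reference~\cite{H6}, obtained precisely by applying Lemma~\ref{lem:koashi_imoto_decomposition_set} to the steered set $S_\psi^{A|R}$, which is exactly your starting point. The basis-probing argument you supply — using that positive operators span $\mathcal{B}\left(\mathcal{H}^R\right)$ to lift both the block-diagonality and the $\Lambda$-independence of $\omega_j^{a_j^\textup{L}}$ from the steered reduced states through the partial trace to $\psi^{RA}$ itself — is a sound and standard way to fill in the details the paper delegates to Reference~\cite{H6}.
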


Considering a purification $\Ket{\psi}^{RAB}$ of the bipartite state $\psi^{RA}$ in Lemma~\ref{lem:koashi_imoto_decomposition_bipartite},
Reference~\cite{W4} introduces the Koashi-Imoto decomposition of the tripartite pure state $\Ket{\psi}^{RAB}$ as follows.

\begin{lemma}
\label{lem:koashi_imoto_decomposition_tripartite}
  (Lemma~11 in Reference~\cite{W4})
  \textit{Koashi-Imoto decomposition of a tripartite pure state.}
  Given any tripartite pure state $\Ket{\psi}^{RAB}$,
  the Koashi-Imoto decomposition of the set $S_\psi^{A|R}$ defined as Equation~\eqref{eq:psi_lambda} yields a unique decomposition of $\mathcal{H}^A$ satisfying the conditions in Lemma~\ref{lem:koashi_imoto_decomposition_set}
  \begin{equation}
    \mathcal{H}^A=\bigoplus_{j=0}^{J-1}\mathcal{H}^{a_j^\textup{L}}\otimes\mathcal{H}^{a_j^\textup{R}}
  \end{equation}
  such that the support of $\psi^B$
  \begin{equation}
    \supp\left(\psi^B\right)\coloneq\spn\left\{\Ket{v}:\psi^B\Ket{v}\neq 0\right\}
  \end{equation}
  is decomposed into
  \begin{equation}
    \supp\left(\psi^B\right)=\bigoplus_{j=0}^{J-1}\mathcal{H}^{b_j^\textup{L}}\otimes\mathcal{H}^{b_j^\textup{R}},
  \end{equation}
  and $\Ket{\psi}^{RAB}$ is decomposed into
  \begin{equation}
    \Ket{\psi}^{RAB}=\bigoplus_{j=0}^{J-1}\sqrt{p\left(j\right)}\Ket{\omega_j}^{a_j^\textup{L} b_j^\textup{L}}\otimes\Ket{\phi_j}^{R a_j^\textup{R} b_j^\textup{R}},
  \end{equation}
  where $p\left(j\right)$ is a probability distribution.
\end{lemma}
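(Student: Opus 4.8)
The plan is to reduce the tripartite statement to the bipartite Koashi--Imoto decomposition of $\psi^{RA}$ (Lemma~\ref{lem:koashi_imoto_decomposition_bipartite}) together with the uniqueness of purification up to isometry recalled in Section~\ref{sec:decomposition}. First I would apply Lemma~\ref{lem:koashi_imoto_decomposition_bipartite} to $\psi^{RA}=\tr_B\Ket{\psi}\Bra{\psi}^{RAB}$, obtaining the unique decomposition $\mathcal{H}^A=\bigoplus_{j=0}^{J-1}\mathcal{H}^{a_j^\textup{L}}\otimes\mathcal{H}^{a_j^\textup{R}}$ and $\psi^{RA}=\bigoplus_{j=0}^{J-1}p(j)\,\omega_j^{a_j^\textup{L}}\otimes\phi_j^{Ra_j^\textup{R}}$. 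Since $\Ket{\psi}^{RAB}$ is by hypothesis a purification of $\psi^{RA}$, the whole argument amounts to exhibiting \emph{some} purification of $\psi^{RA}$ that already has the claimed block form, and then transporting that form onto the given $\Ket{\psi}^{RAB}$ by the isometric freedom of purifications.

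Second, I would build a canonical purification block by block. For each $j$, choose a minimal purification $\Ket{\omega_j}^{a_j^\textup{L}b_j^\textup{L}}$ of $\omega_j^{a_j^\textup{L}}$ on a fresh system $\mathcal{H}^{b_j^\textup{L}}$ with $\dim\mathcal{H}^{b_j^\textup{L}}=\rank\omega_j^{a_j^\textup{L}}$, and a minimal purification $\Ket{\phi_j}^{Ra_j^\textup{R}b_j^\textup{R}}$ of $\phi_j^{Ra_j^\textup{R}}$ on a fresh system $\mathcal{H}^{b_j^\textup{R}}$, taking all these purifying spaces to be mutually orthogonal summands of a single space $\mathcal{H}^{\tilde B}=\bigoplus_j\mathcal{H}^{b_j^\textup{L}}\otimes\mathcal{H}^{b_j^\textup{R}}$. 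Setting
\begin{equation}
  \Ket{\tilde\psi}^{RA\tilde B}\coloneq\bigoplus_{j=0}^{J-1}\sqrt{p(j)}\,\Ket{\omega_j}^{a_j^\textup{L}b_j^\textup{L}}\otimes\Ket{\phi_j}^{Ra_j^\textup{R}b_j^\textup{R}},
\end{equation}
I would verify $\tr_{\tilde B}\Ket{\tilde\psi}\Bra{\tilde\psi}=\psi^{RA}$: because the summands are supported on the mutually orthogonal subspaces $\mathcal{H}^{b_j^\textup{L}}\otimes\mathcal{H}^{b_j^\textup{R}}$ of $\mathcal{H}^{\tilde B}$, expanding each block by its Schmidt decomposition across the cut $Ra_j\mid b_j$ shows that $\tr_{\tilde B}$ annihilates every cross term $j\neq j'$, while each diagonal term returns $p(j)\,\omega_j^{a_j^\textup{L}}\otimes\phi_j^{Ra_j^\textup{R}}$ by the defining property of purifications. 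Hence $\Ket{\tilde\psi}^{RA\tilde B}$ is a purification of $\psi^{RA}$, and a dimension count gives $\dim\mathcal{H}^{\tilde B}=\sum_j\rank\omega_j^{a_j^\textup{L}}\cdot\rank\phi_j^{Ra_j^\textup{R}}=\rank\psi^{RA}=\rank\psi^B$.

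Third, I would invoke the uniqueness of purification up to isometry from Section~\ref{sec:decomposition}: since $\Ket{\psi}^{RAB}$ and $\Ket{\tilde\psi}^{RA\tilde B}$ are both purifications of $\psi^{RA}$, there is an isometry $W\colon\mathcal{H}^{\tilde B}\to\mathcal{H}^{B}$ with $\left(\mathbb{1}^{RA}\otimes W\right)\Ket{\tilde\psi}^{RA\tilde B}=\Ket{\psi}^{RAB}$. Because $\dim\mathcal{H}^{\tilde B}=\rank\psi^B=\dim\supp\left(\psi^B\right)$, the map $W$ is unitary from $\mathcal{H}^{\tilde B}$ onto $\supp\left(\psi^B\right)$, so defining $\mathcal{H}^{b_j^\textup{L}}$ and $\mathcal{H}^{b_j^\textup{R}}$ to be the images under $W$ of the corresponding summands of $\mathcal{H}^{\tilde B}$, with their inherited tensor factorization, and relabelling $\Ket{\omega_j},\Ket{\phi_j}$ by their images, yields both the asserted decomposition $\supp\left(\psi^B\right)=\bigoplus_j\mathcal{H}^{b_j^\textup{L}}\otimes\mathcal{H}^{b_j^\textup{R}}$ and the asserted block form of $\Ket{\psi}^{RAB}$; the uniqueness of the $A$-side decomposition is inherited directly from Lemma~\ref{lem:koashi_imoto_decomposition_bipartite}.

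I expect the main obstacle to be the second step: one must choose the purifying subsystems $\mathcal{H}^{b_j^\textup{L}},\mathcal{H}^{b_j^\textup{R}}$ as \emph{mutually orthogonal} summands so that the purification is genuinely block-diagonal across $\mathcal{H}^A\otimes\mathcal{H}^B$, since this orthogonality is precisely what makes the cross terms vanish under $\tr_{\tilde B}$ and what lets the tensor-times-direct-sum structure survive transport through $W$. Checking the cancellation of the cross terms and the bookkeeping identity $\rank\psi^{RA}=\rank\psi^B$ are the routine but load-bearing verifications.
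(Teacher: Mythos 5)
Your proposal is correct and follows exactly the route the paper intends: the paper itself gives no proof of this lemma (it cites Lemma~11 of Reference~\cite{W4}) and merely remarks that the tripartite version is obtained by ``considering a purification'' of the bipartite state in Lemma~\ref{lem:koashi_imoto_decomposition_bipartite}, which is precisely your block-by-block purification followed by the isometric freedom of purifications. The only cosmetic point worth noting is that your dimension count $\dim\mathcal{H}^{\tilde B}=\rank\psi^{B}$ tacitly assumes $p(j)>0$ for every block; any block with $p(j)=0$ contributes nothing to $\Ket{\tilde\psi}$ and should simply be assigned a trivial summand, which does not affect the stated decomposition.
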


Consequently,
to obtain the Koashi-Imoto decomposition of a given pure state $\Ket{\psi}^{RAB}$,
apply the algorithm presented in Reference~\cite{K3} or the operator-algebraic theorems used in References~\cite{H6,K5}
to the set of states $S_\psi^{A|R}$ defined as Equation~\eqref{eq:psi_lambda},
and then follow the above argument.
The former way of applying the algorithm in Reference~\cite{K3} is demonstrated in Appendix~\ref{sec:koashi_imoto} for concrete examples.

\section{\label{sec:entanglement}Entanglement as a resource for distributed quantum information processing}

This section summarizes examples of state transformations implementable by LOCC assisted by entangled states relevant to this thesis.
These examples show that entanglement can be used as a resource when spatially separated parties are restricted to LOCC\@.
After introducing some results on entanglement transformations under LOCC, this section also provides the notion of entanglement measures quantifying entanglement in terms of its value as a resource.

\paragraph{Entanglement swapping and quantum teleportation}
To begin, the following example is shown for demonstrating a protocol for transforming entangled states by LOCC, which is called \textit{entanglement swapping}~\cite{Y17,Z3}.
Entanglement swapping involves three parties $R$, $A$, and $B$, and systems $\mathcal{H}^R$ of $R$, $\mathcal{H}^{A}\otimes\mathcal{H}^{A^\prime}$ of $A$, and $\mathcal{H}^B$ of $B$, where dimensions of these systems are set to be
\begin{equation}
  D\coloneq\dim\mathcal{H}^R=\dim\mathcal{H}^A=\dim\mathcal{H}^{A^\prime}=\dim\mathcal{H}^B.
\end{equation}
Consider an entangled state of $\mathcal{H}^{R}\otimes\mathcal{H}^A$ with Schmidt rank $D$ shared between $R$ and $A$ defined as
\begin{equation}
  \Ket{\Phi_D^+}^{RA}\coloneq\frac{1}{\sqrt{D}}\sum_{l=0}^{D-1}\Ket{l}^{R}\otimes\Ket{l}^A,
\end{equation}
where ${\left\{\Ket{l}^R\right\}}_l$ and ${\left\{\Ket{l}^A\right\}}_l$ are the computational bases,
and the same form of entangled state $\Ket{\Phi_D^+}^{A^\prime B}\in\mathcal{H}^{A^\prime}\otimes\mathcal{H}^{B}$ shared between $A$ and $B$.
The whole state of $\mathcal{H}^R\otimes\mathcal{H}^{A}\otimes\mathcal{H}^{A^\prime}\otimes\mathcal{H}^B$ is
\begin{equation}
  \Ket{\Phi_D^+}^{RA}\otimes\Ket{\Phi_D^+}^{A^\prime B},
\end{equation}
and the reduced state of $R$ and $B$ is
\begin{equation}
  \frac{\mathbb{1}^R}{D}\otimes\frac{\mathbb{1}^B}{D},
\end{equation}
which is a separable state of $\mathcal{H}^R\otimes\mathcal{H}^B$ shared between $R$ and $B$.
This type of state proportional to $\mathbb{1}$ is called a completely mixed state.
Entanglement swapping aims to prepare an entangled state between $R$ and $B$ by LOCC assisted by these entangled states shared between $R$ and $A$, and between $A$ and $B$.

Consider $A$'s measurement on $\mathcal{H}^{A}\otimes\mathcal{H}^{A^\prime}$ in the basis
\begin{equation}
  \label{eq:max_basis}
  \left\{\left(\mathbb{1}^{A}\otimes {\left(X_D^{A^\prime}\right)}^l{\left(Z_D^{A^\prime}\right)}^{l^\prime}\right)\Ket{\Phi_D^+}^{AA^\prime}:l,l^\prime\in\left\{0,\ldots,D-1\right\}\right\},
\end{equation}
where the measurement outcome is labeled by $l$ and $l^\prime$, and $X_D^{A^\prime}$ and $Z_D^{A^\prime}$ are the generalized Pauli operators on a $D$-dimensional Hilbert space $\mathcal{H}^{A^\prime}$ defined as
\begin{align}
  X_{D}^{A^\prime}&\coloneq\sum_{l=0}^{D-1}\Ket{l+1\bmod D}\Bra{l}^{A^\prime},\\
  Z_{D}^{A^\prime}&\coloneq\sum_{l=0}^{D-1}\exp\left(\frac{\textup{i}2\pi l}{D}\right)\Ket{l}\Bra{l}^{A^\prime}.
\end{align}
In the case of qubits, \textit{i.e.}, $D=2$,
subscripts of the generalized Pauli operators may be omitted to simply write these operators as
\begin{align}
  X^{A^\prime}\coloneq X_2^{A^\prime},\\
  Z^{A^\prime}\coloneq Z_2^{A^\prime},
\end{align}
and the states in the above basis of $\mathcal{H}^{A}\otimes\mathcal{H}^{A^\prime}=\mathbb{C}^2\otimes\mathbb{C}^2$ reduce to
\begin{align}
  \Ket{\Phi_2^+}^{AA^\prime}&=\frac{1}{\sqrt{2}}\left(\Ket{0}\otimes\Ket{0}+\Ket{1}\otimes\Ket{1}\right),\\
  Z^{A^\prime}\Ket{\Phi_2^+}^{AA^\prime}&=\frac{1}{\sqrt{2}}\left(\Ket{0}\otimes\Ket{0}-\Ket{1}\otimes\Ket{1}\right),\\
  X^{A^\prime}\Ket{\Phi_2^+}^{AA^\prime}&=\frac{1}{\sqrt{2}}\left(\Ket{0}\otimes\Ket{1}+\Ket{1}\otimes\Ket{0}\right),\\
  X^{A^\prime}Z^{A^\prime}\Ket{\Phi_2^+}^{AA^\prime}&=\frac{1}{\sqrt{2}}\left(\Ket{0}\otimes\Ket{1}-\Ket{1}\otimes\Ket{0}\right).
\end{align}
For any $M^{A^\prime}\in\mathcal{B}\left(\mathcal{H}^{A^\prime}\right)$,
the state $\Ket{\Phi_D^+}^{A^\prime B}$ satisfies
\begin{equation}
  \left(M^{A^\prime}\otimes\mathbb{1}^B\right)\Ket{\Phi_D^+}^{A^\prime B}=\left(\mathbb{1}^{A^\prime}\otimes {\left(M^B\right)}^\textup{T}\right)\Ket{\Phi_D^+}^{A^\prime B},
\end{equation}
where ${\left(M^B\right)}^\textup{T}$ represents the transpose of $M^B$ with respect to the computational basis of $\mathcal{H}^B$.
Thus, the state
\begin{equation}
  \Ket{\Phi_D^+}^{RA}\otimes\Ket{\Phi_D^+}^{A^\prime B}
\end{equation}
is transformed by this measurement into a state proportional to
\begin{equation}
  \begin{split}
    &\left({\left(X_D^{A^\prime}\right)}^l{\left(Z_D^{A^\prime}\right)}^{l^\prime}\Ket{\Phi_D^+}\Bra{\Phi_D^+}^{AA^\prime}{\left(Z_D^{A^\prime}\right)}^{l^\prime}{\left(X_D^{A^\prime}\right)}^l\right)\left(\Ket{\Phi_D^+}^{RA}\otimes\Ket{\Phi_D^+}^{A^\prime B}\right)\\
    &\propto\left({\left(X_D^{A^\prime}\right)}^l{\left(Z_D^{A^\prime}\right)}^{l^\prime}\Ket{\Phi_D^+}^{AA^\prime}\right)\otimes \left({\left(X_D^B\right)}^l{\left(Z_D^B\right)}^{l^\prime}\Ket{\Phi_D^+}^{RB}\right),
  \end{split}
\end{equation}
which can be shown by using
\begin{align}
  {\left(X_{D}^{B}\right)}^\textup{T}&={X_{D}^{B}},\\
  {\left(Z_{D}^{B}\right)}^\textup{T}&={Z_{D}^{B}}.
\end{align}
Therefore, performing classical communication of $A$'s measurement outcome $l$ and $l^\prime$ from $A$ to $B$, followed by $B$'s unitary transformation ${\left(Z_D^B\right)}^{l^\prime}{\left(X_D^B\right)}^l$ conditioned by $l$ and $l^\prime$ for correction,
the parties can prepare an entangled state $\Ket{\Phi_D^+}^{RB}$ between $R$ and $B$ by LOCC assisted by the initially shared entangled states $\Ket{\Phi_D^+}^{RA}\otimes\Ket{\Phi_D^+}^{A^\prime B}$, while the reduced state initially shared between $R$ and $B$ is not entangled.
This protocol achieves entanglement swapping.
Note that no operation is performed on $\mathcal{H}^R$ throughout the protocol.

This protocol for entanglement swapping can be considered as LOCC performed by $A$ and $B$ assisted by an entangled state $\Ket{\Phi_D^+}^{A^\prime B}$ for transferring $A$'s part of $\Ket{\Phi_D^+}^{RA}$ from $A$ to $B$, keeping coherence between $R$ and $B$ to obtain $\Ket{\Phi_D^+}^{RB}$.
If $A$ and $B$ performing LOCC assisted by $\Ket{\Phi_D^+}^{A^\prime B}$ achieve a CPTP map $\mathcal{E}^{A\to B}$ transferring $A$'s part of $\Ket{\Phi_D^+}^{RA}$ from $A$ to $B$, that is,
\begin{equation}
  \left(\id^R\otimes\mathcal{E}^{A\to B}\right)\left({\Phi_D^+}^{RA}\right)={\Phi_D^+}^{RB},
\end{equation}
then, due to the linearity of the CPTP map, the same CPTP map $\mathcal{E}^{A\to B}$ can transfer an arbitrary state given from $\mathcal{D}\left(\mathcal{H}^{A}\right)$, that is,
\begin{equation}
  \mathcal{E}^{A\to B}\left(\psi^{A}\right)=\psi^B,\quad\forall\psi^{A}\in\mathcal{D}\left(\mathcal{H}^{A}\right),
\end{equation}
and \textit{vice versa}.
The protocol for transferring an arbitrary state on $\mathcal{H}^{A}$ by entanglement-assisted LOCC is known as \textit{quantum teleportation}~\cite{B5}.
This equivalence between transferring $A$'s part of $\Ket{\Phi_D^+}^{RA}$ from $A$ to $B$ and transferring arbitrary states of $\mathcal{H}^{A}$ is known as the relative state method~\cite{P3}, where in the former case, $R$ is regarded as reference on which neither $A$ nor $B$ can perform any operation, and $A$ and $B$ keeps coherence between $R$ and $AB$.

Quantum teleportation simulates noiseless quantum communication transferring an arbitrary state $\psi^A$ of a $D$-dimensional system $\mathcal{H}^A$ from $A$ to $B$ by LOCC assisted by shared entanglement in the form of $\Ket{\Phi_D^+}$.
Conversely, $\Ket{\Phi_D^+}$ shared between $A$ and $B$ can be prepared by quantum communication, where such a protocol can be $A$'s preparing $\Ket{\Phi_D^+}^{AA^\prime}$ by local operations, followed by transferring a part of this bipartite state corresponding to $\mathcal{H}^{A^\prime}$ from $A$ to $B$ by quantum communication.
Thus, when LOCC can be freely performed, shared entanglement and quantum communication can be used as an equivalent resource for assisting LOCC\@.
The entangled state in such a form with the minimal Schmidt rank, that is, $\Ket{\Phi_2^+}$, can be used as a basic unit of entanglement and called an \textit{ebit}.

\paragraph{Quantum state transformation by LOCC}
Given that entanglement may serve as a resource assisting LOCC for performing distribute quantum information processing, it is natural to analyze which entangled state has more capability as a resource than others under LOCC\@.
For two states $\phi$ and $\psi$ shared among $N$ parties, if there exists an LOCC map $\mathcal{E}_\textup{LOCC}$ achieving
\begin{equation}
  \mathcal{E}_\textup{LOCC}\left(\phi\right)=\psi,
\end{equation}
then $\phi$ is said to be convertible, or transformable, into $\psi$ by LOCC\@.
This state transformation by LOCC is denoted by
\begin{equation}
  \phi\xrightarrow{\textup{LOCC}}\psi.
\end{equation}
In this case, $\phi$ can be considered to have more capability as a resource for assisting LOCC than $\psi$.
If such a resource state having more capability, such as $\phi$ in the above case, is shared among parties, the parties may transform the shared resource state by LOCC into another suitable form, such as $\psi$, for assisting LOCC\@.

This paradigm yields a \textit{common resource state}~\cite{S18,G2} transformable into any state in a given set, that is, a resource state having more capability than any state in the set.
This set of states is called the \textit{target set} in the context of common resource states.
Common resource states are assumed to be \textit{fully entangled}, that is, entangled with respect to any bipartition of the parties.
More formally, given any target set $S$, a fully entangled state $\Ket{\phi}$ is called a common resource state for $S$ if for any $\psi\in S$, it holds that
\begin{equation}
  \phi\xrightarrow{\textup{LOCC}}\psi.
\end{equation}
Similarly, Reference~\cite{M6} also introduces common resource states in terms of state convertibility by stochastic LOCC, that is, performing an LOCC measurement followed by post-selecting an outcome.

For bipartite pure states of $\mathcal{H}^A\otimes\mathcal{H}^B$ shared between $A$ and $B$, convertibility of the states under LOCC is characterized by \textit{majorization}, as summarized in the following.
Consider two states having Schmidt decomposition
\begin{align}
  \Ket{\phi}^{AB}&\coloneq\sum_{l=0}^{R_\phi-1}\sqrt{\lambda_l^\phi}\Ket{\phi_l}^A\otimes\Ket{\phi_l}^B,\\
  \Ket{\psi}^{AB}&\coloneq\sum_{l=0}^{R_\psi-1}\sqrt{\lambda_l^\psi}\Ket{\psi_l}^A\otimes\Ket{\psi_l}^B.
\end{align}
Reduced states on $\mathcal{H}^A$ of these states are
\begin{align}
  \phi^A&=\sum_{l=0}^{R_\phi-1}\lambda_l^\phi \Ket{\phi_l}\Bra{\phi_l}^A,\\
  \psi^A&=\sum_{l=0}^{R_\psi-1}\lambda_l^\psi \Ket{\psi_l}\Bra{\psi_l}^A.
\end{align}
A Hermitian operator $\phi^A$ is said to be \textit{majorized} by a Hermitian operator $\psi^A$, which is denoted by
\begin{equation}
  \phi^A\prec\psi^A,
\end{equation}
if there exists a CPTP map $\mathcal{U}^A$ in the following form called a mixed unitary channel
\begin{equation}
  \begin{split}
    \phi^A&=\mathcal{U}^A\left(\psi^A\right)\\
          &\coloneq\sum_{j}p\left(j\right)U_j^A\psi^A{U_j^A}^\dag
  \end{split}
\end{equation}
where $p\left(j\right)$ is a probability distribution, and $U_j^A$ for each $j$ is a unitary operator on $\mathcal{H}^A$.
Such a probability distribution introduces randomness,
and hence,
if two quantum states $\phi$ and $\psi$ satisfy $\phi^A\prec\psi^A$,
$\phi$ can be considered to be a more randomized state than $\psi$ in the sense that $\phi$ can be obtained from $\psi$ by a mixed unitary channel.
To investigate properties of mixed unitary channels, it is useful to consider a unital channel $\tilde{\mathcal{U}}^A$ defined as a channel transforming the identity to the identity, that is,
\begin{equation}
  \tilde{\mathcal{U}}^A\left(\mathbb{1}^A\right)=\mathbb{1}^A.
\end{equation}
For qubits, a channel is a mixed unitary channel if and only if it is a unital channel.
But in general, any mixed unitary channel is a unital channel, but not \textit{vice versa}~\cite{L4,W11}.

The majorization condition of Hermitian operators can also be represented in terms of real vectors defined using eigenvalues of $\phi^A$ and $\psi^A$.
For a Hermitian operator $\phi^A$, define a real vector of $\dim\mathcal{H}^A$ elements
\begin{equation}
  \boldsymbol{\lambda}\left(\phi^A\right)\coloneq\left(\lambda_0^\phi,\ldots,\lambda_{R_\phi-1}^\phi,0,\ldots,0\right),
\end{equation}
where the first $R_\phi$ elements are eigenvalues of $\phi^A$ in descending order, and the rest is filled with zero.
Let $\boldsymbol{\lambda}\left(\psi^A\right)$ denote a real vector of $\dim\mathcal{H}^A$ elements defined for $\psi^A$ in the same way.
The zero elements of $\boldsymbol{\lambda}\left(\phi^A\right)$ and $\boldsymbol{\lambda}\left(\psi^A\right)$ may be denoted by
\begin{align}
  \lambda_{R_\phi}^{\phi}&=\lambda_{R_\phi+1}^{\phi}=\cdots=\lambda_{D-1}^{\phi}=0,\\
  \lambda_{R_\psi}^{\psi}&=\lambda_{R_\psi+1}^{\psi}=\cdots=\lambda_{D-1}^{\psi}=0.
\end{align}
A real vector $\boldsymbol{\lambda}\left(\phi^A\right)$ of $D$ elements is said to be \textit{majorized} by $\boldsymbol{\lambda}\left(\psi^A\right)$, which is denoted by
\begin{equation}
  \boldsymbol{\lambda}\left(\phi^A\right)\prec\boldsymbol{\lambda}\left(\psi^A\right),
\end{equation}
if it holds that
\begin{align}
  \sum_{l=0}^{m}\lambda_l^\phi&\leqq\sum_{l=0}^{m}\lambda_l^\psi,\quad\forall m\in\left\{0,\ldots,D-2\right\},\\
  \sum_{l=0}^{D-1}\lambda_l^\phi&=\sum_{l=0}^{D-1}\lambda_l^\psi,
\end{align}
where the elements of $\boldsymbol{\lambda}\left(\phi\right)$ and $\boldsymbol{\lambda}\left(\psi\right)$ are assumed to be in descending order.
A Hermitian operator $\phi^A$ is majorized by a Hermitian operator $\psi^A$ if and only if the corresponding real vector $\boldsymbol{\lambda}\left(\phi\right)$ is majorized by $\boldsymbol{\lambda}\left(\psi\right)$.

Using majorization, convertibility between two bipartite pure states by LOCC is characterized as follows.

\begin{lemma}
\label{lem:pure_convertibility}
  (Reference~\cite{N2})
  \textit{Convertibility between bipartite pure states by LOCC.}
  For any two bipartite pure states $\Ket{\phi}^{AB}$ and $\Ket{\psi}^{AB}$,
  \begin{equation}
    \phi^{AB}\xrightarrow{\textup{LOCC}}\psi^{AB}
  \end{equation}
  if and only if
  \begin{equation}
    \phi^A\prec\psi^A.
  \end{equation}
\end{lemma}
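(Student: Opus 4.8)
The plan is to prove the two implications separately, in each case reducing everything to the Schmidt decompositions of the two states. Since local unitaries are free under LOCC and change neither the Schmidt coefficients nor $\phi^A$ and $\psi^A$ beyond unitary equivalence, I may assume that $\Ket{\phi}^{AB}$ and $\Ket{\psi}^{AB}$ are written in a common Schmidt basis, $\Ket{\phi}^{AB}=\sum_l\sqrt{\lambda_l^\phi}\Ket{l}^A\otimes\Ket{l}^B$ and $\Ket{\psi}^{AB}=\sum_l\sqrt{\lambda_l^\psi}\Ket{l}^A\otimes\Ket{l}^B$. Throughout I would switch freely between the two equivalent formulations of $\phi^A\prec\psi^A$ recalled before the statement: the vector majorization of the eigenvalue sequences $\boldsymbol{\lambda}(\phi^A)$ and $\boldsymbol{\lambda}(\psi^A)$, and the existence of a mixed unitary channel $\mathcal{U}^A$ with $\phi^A=\mathcal{U}^A(\psi^A)$.

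For necessity, suppose $\phi^{AB}\xrightarrow{\textup{LOCC}}\psi^{AB}$. The first step is a structural reduction: any deterministic pure-state transformation by LOCC can be brought to a one-way normal form in which $A$ performs a single measurement $\{M_k^A\}_k$ with $\sum_k {(M_k^A)}^\dag M_k^A=\mathbb{1}^A$ and, after learning $k$ by classical communication, $B$ applies a correcting unitary on $\mathcal{H}^B$. Since $B$'s correction leaves the reduced state on $\mathcal{H}^A$ unchanged and the corrected global state is $\Ket{\psi}^{AB}$ in every branch, the branch-$k$ reduced state on $A$ must equal $\psi^A$ exactly; that is, $M_k^A\phi^A{(M_k^A)}^\dag=p(k)\,\psi^A$ with $p(k)=\tr[M_k^A\phi^A{(M_k^A)}^\dag]$ for every $k$ with $p(k)>0$. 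Setting $C_k:=M_k^A\sqrt{\phi^A}$ gives $C_kC_k^\dag=p(k)\psi^A$ and $\sum_k C_k^\dag C_k=\sqrt{\phi^A}\left(\sum_k {(M_k^A)}^\dag M_k^A\right)\sqrt{\phi^A}=\phi^A$. A polar decomposition then yields $C_k=\sqrt{p(k)}\,\sqrt{\psi^A}\,V_k$ for partial isometries $V_k$ (completed to unitaries on $\mathcal{H}^A$ without affecting the support of $\psi^A$), so that $\phi^A=\sum_k C_k^\dag C_k=\sum_k p(k)\,V_k^\dag\psi^A V_k$. This exhibits $\phi^A$ as the image of $\psi^A$ under a mixed unitary channel, which is exactly the characterization of $\phi^A\prec\psi^A$ recalled before the statement.

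For sufficiency, assume $\phi^A\prec\psi^A$; I would build an explicit one-way protocol in the common Schmidt basis. The vector form of majorization together with the Hardy--Littlewood--P\'{o}lya theorem provides a doubly stochastic matrix $D$ with $\boldsymbol{\lambda}(\phi^A)=D\,\boldsymbol{\lambda}(\psi^A)$, and the Birkhoff--von Neumann theorem decomposes $D=\sum_k p(k)\Pi_k$ into permutation matrices $\Pi_k$. Guided by this, I would let $A$ measure with operators $M_k^A$ acting as $\sqrt{p(k)}$ times the permutation $\Pi_k$ composed with the diagonal rescaling carrying each $\sqrt{\lambda_l^\phi}$ to $\sqrt{\lambda_{\Pi_k(l)}^\psi}$, so that the unnormalized branch $(M_k^A\otimes\mathbb{1}^B)\Ket{\phi}^{AB}$ has Schmidt coefficients $\{\sqrt{\lambda_l^\psi}\}_l$ up to a permutation, hence is local-unitarily equivalent to $\Ket{\psi}^{AB}$. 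The relation $\boldsymbol{\lambda}(\phi^A)=D\,\boldsymbol{\lambda}(\psi^A)$ is precisely what forces the completeness relation $\sum_k {(M_k^A)}^\dag M_k^A=\mathbb{1}^A$, so $\{M_k^A\}_k$ is a legitimate measurement; after $A$ sends $k$ to $B$, local permutation unitaries on $\mathcal{H}^A$ and $\mathcal{H}^B$ restore $\Ket{\psi}^{AB}$ exactly. Equivalently, since $\phi^A$ can be reached from $\psi^A$ by a finite sequence of two-level transfers, each transfer can be realized by an elementary two-outcome one-way protocol on the relevant pair of Schmidt coefficients, and the protocols composed.

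I expect the principal difficulty to lie on the necessity side, in justifying the one-way normal form: an LOCC protocol may in general use many -- even unboundedly many -- rounds of two-way classical communication with interleaved measurements by both $A$ and $B$, and one must show that for a deterministic transformation between \emph{pure} states nothing is lost by reducing to a single measurement by $A$ followed by a correction by $B$. Once this reduction is available, the remaining steps -- rewriting the branch conditions as a mixed unitary channel for necessity, and verifying the completeness relation for the constructed measurement for sufficiency -- are routine and use only the Schmidt decomposition and the two equivalent characterizations of majorization recalled before the statement.
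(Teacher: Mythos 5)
The paper does not prove this lemma at all---it is quoted from Reference~\cite{N2} (Nielsen's theorem)---so there is no in-paper proof to compare against; your argument is essentially Nielsen's original one, and both halves are sound in outline. The necessity half (polar decomposition of $M_k^A\sqrt{\phi^A}$, then summing the completeness relation to exhibit $\phi^A$ as a mixed unitary channel applied to $\psi^A$) and the sufficiency half (Hardy--Littlewood--P\'{o}lya plus Birkhoff--von Neumann to build a one-way measurement whose completeness relation is exactly $\boldsymbol{\lambda}(\phi^A)=D\,\boldsymbol{\lambda}(\psi^A)$) are both correct as written, and your observation that the completeness condition is forced by the majorization identity is the right way to see why the construction works.

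The one genuine gap is the step you yourself flag: the reduction of an arbitrary (possibly infinite-round, two-way) LOCC protocol between pure states to a single measurement by $A$ followed by a unitary correction by $B$. This is the Lo--Popescu theorem, and it is not a routine normalization---it is where most of the work in the ``only if'' direction lives. The standard way to close it is the Schmidt-symmetry (``transpose'') trick: for a pure state written in a Schmidt basis, any measurement operator $N_j^B$ applied by $B$ satisfies $\bigl(\mathbb{1}^A\otimes N_j^B\bigr)\Ket{\phi}^{AB}=\bigl(\tilde{N}_j^A\otimes\mathbb{1}^B\bigr)\Ket{\phi}^{AB}$ up to local unitaries on both sides, for a suitable family $\{\tilde{N}_j^A\}_j$ that is again a valid measurement; one then commutes every $B$-measurement past to $A$ round by round, collapses $A$'s successive measurements into one, and absorbs all of $B$'s conditional operations into a single final unitary. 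Without this lemma your necessity argument only covers one-way protocols. Two smaller points: (i) when $\phi^A$ is not full rank the diagonal rescaling in your sufficiency construction divides by zero, but majorization forces $\rank\psi^A\leqq\rank\phi^A$, so you can restrict the whole construction to $\supp\bigl(\phi^A\bigr)$ before invoking Birkhoff--von Neumann; (ii) in finite dimensions the polar decomposition $C_k=\sqrt{C_kC_k^\dag}\,U_k$ can be taken with $U_k$ genuinely unitary from the start, which is cleaner than completing a partial isometry and then having to argue the completion does not disturb $V_k^\dag\psi^A V_k$.
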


This characterization of transformations between bipartite pure states under LOCC generalizes to transformations of a bipartite pure state into a bipartite mixed state as follows.

\begin{lemma}
\label{lem:mixed}
  (Reference~\cite{J1})
  \textit{Convertibility of a bipartite pure state into a bipartite mixed by LOCC.}
  For a bipartite pure state $\Ket{\phi}^{AB}$ and a bipartite mixed state $\psi^{AB}$,
  \begin{equation}
    \phi^{AB}\xrightarrow{\textup{LOCC}}\psi^{AB}
  \end{equation}
  if and only if
  \begin{equation}
    \boldsymbol{\lambda}\left(\phi^A\right)\prec\min\sum_j p(j)\boldsymbol{\lambda}\left(\psi_j^A\right),
  \end{equation}
  where the minimization is taken over any ensemble ${\left\{p(j),\Ket{\psi_j}^{AB}\right\}}_j$ of pure states which are not necessarily orthogonal to each other and satisfy
  \begin{equation}
    \psi^{AB}=\sum_j p(j)\Ket{\psi_j}\Bra{\psi_j}^{AB}.
  \end{equation}
\end{lemma}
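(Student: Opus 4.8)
The plan is to reduce the mixed-target statement to the pure-ensemble version of Nielsen's criterion and then to prove that pure-ensemble version directly. \emph{First} I would observe that convertibility into the mixed state $\psi^{AB}$ is equivalent to convertibility into \emph{some} pure-state ensemble averaging to $\psi^{AB}$: any LOCC protocol applied to the pure input $\Ket{\phi}^{AB}$ produces, conditioned on the record of all measurement outcomes, a pure state $\Ket{\chi_k}^{AB}$ with probability $q_k$, and discarding the record yields $\sum_k q_k\chi_k^{AB}$. Hence $\phi^{AB}\xrightarrow{\textup{LOCC}}\psi^{AB}$ holds if and only if there is an ensemble $\{q_k,\Ket{\chi_k}^{AB}\}_k$ with $\sum_k q_k\chi_k^{AB}=\psi^{AB}$ producible from $\Ket{\phi}^{AB}$ together with its classical record. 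The remaining task is the \emph{pure-to-ensemble} criterion: $\Ket{\phi}^{AB}$ is LOCC-convertible into an ensemble $\{p(j),\Ket{\psi_j}^{AB}\}_j$ (keeping the label $j$) if and only if $\boldsymbol{\lambda}(\phi^A)\prec\sum_j p(j)\,\boldsymbol{\lambda}(\psi_j^A)$; the Lemma then follows by ranging over all ensembles realising $\psi^{AB}$, the symbol $\min$ being understood in the majorization order, so that majorization by it is equivalent to majorization by some attainable average spectrum.

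For the sufficiency of the pure-to-ensemble criterion, suppose $\boldsymbol{\lambda}(\phi^A)\prec\bar{\boldsymbol{\lambda}}$ with $\bar{\boldsymbol{\lambda}}\coloneqq\sum_j p(j)\boldsymbol{\lambda}(\psi_j^A)$. I would first pass through an intermediate pure state $\Ket{\Theta}^{AB}$ whose squared Schmidt coefficients are exactly the entries of $\bar{\boldsymbol{\lambda}}$; since $\boldsymbol{\lambda}(\phi^A)\prec\boldsymbol{\lambda}(\Theta^A)$, Lemma~\ref{lem:pure_convertibility} gives $\phi^{AB}\xrightarrow{\textup{LOCC}}\Theta^{AB}$. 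It then remains to realise the ensemble from $\Ket{\Theta}^{AB}$ by a measurement on $A$ that leaves the average spectrum unchanged. Fix an eigenbasis of $\Theta^A$, let $D_j$ be the operator diagonal in this basis carrying the entries of $\boldsymbol{\lambda}(\psi_j^A)$ in matching decreasing order, and let $V_j$ be a unitary with $V_j D_j V_j^\dagger=\psi_j^A$. The alignment guarantees $\sum_j p(j)D_j=\Theta^A$, so the operators $M_j\coloneqq\sqrt{p(j)}\,V_j\sqrt{D_j}\,(\Theta^A)^{-1/2}$ (defined on $\supp(\Theta^A)$) satisfy $\sum_j M_j^\dagger M_j=\mathbb{1}$ and $M_j\,\Theta^A\,M_j^\dagger=p(j)\psi_j^A$. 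Thus $A$'s measurement yields outcome $j$ with probability $p(j)$ and post-measurement reduced state $\psi_j^A$; since purifications with equal reduced spectra are related by local isometries, $A$ and $B$ correct by local unitaries to reach $\Ket{\psi_j}^{AB}$, completing an LOCC realisation of the ensemble.

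For the necessity, I would use the standard reduction (Lo--Popescu) of any bipartite pure-state LOCC to a single measurement $\{M_j\}$ by $A$ with $\sum_j M_j^\dagger M_j=\mathbb{1}$ followed by a correcting unitary by $B$; then $p(j)\psi_j^A=M_j\phi^A M_j^\dagger$ and $B$'s unitary leaves $A$'s reduced spectrum untouched. The crux is the \emph{averaging--majorization lemma}: whenever $\sum_j M_j^\dagger M_j=\mathbb{1}$ one has $\boldsymbol{\lambda}(\phi^A)\prec\sum_j\boldsymbol{\lambda}^\downarrow(M_j\phi^A M_j^\dagger)$. To prove it I would use the polar decomposition to assume $M_j\ge 0$ (a unitary factor changes no eigenvalue of $M_j\phi^A M_j^\dagger$) and check the partial-sum inequalities. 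Fixing $m$, let $Q$ be the projector onto the top-$m$ eigenvectors of $\phi^A$ and $P_j$ the projector onto $\mathrm{range}(M_j Q)$, of rank at most $m$. From $\phi^A\ge Q\phi^A Q$ and the choice of $P_j$ one gets $\tr(P_j M_j\phi^A M_j)\ge\tr(Q\phi^A Q\,M_j^2)$, and summing over $j$ with $\sum_j M_j^2=\mathbb{1}$ yields $\sum_j\tr(P_j M_j\phi^A M_j)\ge\sum_{i=0}^{m-1}\lambda_i(\phi^A)$; by Ky~Fan's maximum principle each $\tr(P_j M_j\phi^A M_j)\le\sum_{i=0}^{m-1}\lambda_i^\downarrow(M_j\phi^A M_j)$, giving the required bound, with totals matching because $\tr\phi^A=\sum_j\tr(M_j\phi^A M_j^\dagger)$. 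Since $\boldsymbol{\lambda}^\downarrow(M_j\phi^A M_j^\dagger)=p(j)\boldsymbol{\lambda}(\psi_j^A)$, this is exactly $\boldsymbol{\lambda}(\phi^A)\prec\sum_j p(j)\boldsymbol{\lambda}(\psi_j^A)$. The main obstacle is precisely this lemma: recording all outcomes as a coherent flag only proves the weaker bound in which $\boldsymbol{\lambda}(\phi^A)$ is majorized by the sorted concatenation of $\{p(j)\boldsymbol{\lambda}(\psi_j^A)\}$, whereas the correct, strictly tighter condition involves the \emph{summed} average spectrum, and closing this gap forces the per-outcome choice of the projectors $P_j$ above rather than one common projector. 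If one prefers to avoid Lo--Popescu, the same averaging--majorization lemma applied at each individual measurement step of a general multi-round protocol (using that for a pure branch $A$'s and $B$'s reduced spectra coincide, so a measurement by either party only raises the averaged $A$-spectrum in majorization order), together with the fact that majorization of sorted vectors is preserved under probabilistic mixing, composes across rounds to the same conclusion.
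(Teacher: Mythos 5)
The paper offers no proof of Lemma~\ref{lem:mixed}; it is imported directly from Reference~\cite{J1}, so there is no in-paper argument to compare against. Your proposal is a correct, essentially self-contained reconstruction of that reference's proof, and the steps check out: the fine-graining reduction from a mixed target to a pure-state ensemble with a retained classical record; the sufficiency construction via an intermediate pure state $\Ket{\Theta}^{AB}$ with Schmidt spectrum $\bar{\boldsymbol{\lambda}}=\sum_j p(j)\boldsymbol{\lambda}(\psi_j^A)$ reached by Lemma~\ref{lem:pure_convertibility}, followed by the measurement $M_j=\sqrt{p(j)}\,V_j\sqrt{D_j}\,(\Theta^A)^{-1/2}$, which satisfies $\sum_j M_j^\dagger M_j=\mathbb{1}$ on $\supp(\Theta^A)$ and $M_j\Theta^A M_j^\dagger=p(j)\psi_j^A$ precisely because $\sum_j p(j)D_j=\Theta^A$; and the necessity via Lo--Popescu together with the averaging--majorization inequality, where your use of polar decomposition, the per-outcome projectors $P_j$ onto the range of $M_jQ$, and Ky~Fan's maximum principle correctly yields the summed average spectrum rather than the weaker concatenated-spectrum bound. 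Two small remarks. First, your branch states have reduced operator \emph{exactly} $\psi_j^A$ on $A$, so the correction is an isometry on $B$ alone (two purifications of the same operator differ only on the purifying system); the phrase ``equal reduced spectra'' undersells what the construction gives and would by itself require a rotation on $A$ as well. Second, your reading of the paper's $\min$ as ``there exists an ensemble whose average spectrum majorizes $\boldsymbol{\lambda}(\phi^A)$'' is the correct content of the theorem: majorization is only a partial order, and by transitivity the existential form is exactly what the proof both uses and delivers, so the gloss is appropriate rather than a gap.
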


Convertibility of bipartite states under LOCC establishes partial order of entangled states in terms of capability as a resource,
in the sense that the set of quantum states can be regarded as a partially ordered set if a relation between two quantum states defined according to whether one state can be convertible into the other by LOCC is considered as the partial order of this set.
Given a bipartite system $\mathcal{H}^A\otimes\mathcal{H}^B$,
entanglement in this partial order can be quantified by a function $E:\mathcal{D}\left(\mathcal{H}^A\otimes\mathcal{H}^B\right)\to\mathbb{R}$ satisfying
\begin{equation}
  \begin{split}
    &\phi^{AB}\xrightarrow{\textup{LOCC}}\psi^{AB}\\
    &\Rightarrow E\left(\phi^{AB}\right)\geqq E\left(\psi^{AB}\right).
  \end{split}
\end{equation}
and a function having this property is called an \textit{entanglement measure}.
Note that one can additionally impose other properties to identify theoretically tractable entanglement measures, such as $E\left(\psi^{AB}\right)=0$ for any separable state $\psi$, as reviewed in References~\cite{H2,P1,E5}.
Various entanglement measures are known in bipartite cases, such as distillable entanglement~\cite{B3}, entanglement cost~\cite{B3,H1}, relative entropy of entanglement~\cite{V7}, and squashed entanglement~\cite{C21},
and these entanglement measures coincide for any pure state $\Ket{\psi}^{AB}$ with the entanglement entropy defined as
\begin{equation}
  -\sum_{l=0}^{R_\psi-1}\lambda_l^\psi\log_2\lambda_l^\psi,
\end{equation}
where $\lambda_l^\psi$ for each $l$ corresponds to a Schmidt coefficient appearing in the Schmidt decomposition
\begin{equation}
  \Ket{\psi}^{AB}=\sum_{l=0}^{R_\psi-1}\sqrt{\lambda_l^\psi}\Ket{\psi_l}^A\otimes\Ket{\psi_l}^B.
\end{equation}
A basic unit of these entanglement measures is \textit{ebit}, that is, the entanglement entropy of $\Ket{\Phi_2^+}$.
For bipartite pure states, the Schmidt rank is also monotonically nonincreasing under LOCC~\cite{L2}, and this property is referred to as the LOCC monotonicity of the Schmidt rank.
Hence, the Schmidt rank, or its generalization to mixed states~\cite{T1}, can be regarded as an entanglement measure, while they are discrete.

As a special case of local operations,
a unitary transformation on $\mathcal{H}^A\otimes\mathcal{H}^B$ in the form of $U^A\otimes U^B$ is called a \textit{local unitary} transformation.
Given two states $\phi^{AB}$ and $\psi^{AB}$, consider a case where there exists a local unitary transformation $U^A\otimes U^B$ such that
\begin{equation}
  \phi^{AB}=\left(U^A\otimes U^B\right)\psi^{AB}\left({U^A}^\dag\otimes {U^B}^\dag\right).
\end{equation}
In this case, $\phi^{AB}$ and $\psi^{AB}$ are said to be \textit{locally unitarily equivalent}.
Since unitary transformations are invertible, for any locally unitarily equivalent states $\phi^{AB}$ and $\psi^{AB}$ and any entanglement measure $E$, it holds that
\begin{equation}
  E\left(\phi^{AB}\right)=E\left(\psi^{AB}\right).
\end{equation}

Lemmas~\ref{lem:pure_convertibility} and~\ref{lem:mixed} imply that for any state $\psi^{AB}$, there exists an LOCC map achieving
\begin{equation}
  \Ket{\Phi_D^+}^{AB}\xrightarrow{\textup{LOCC}}\psi^{AB},
\end{equation}
where
\begin{equation}
  D=\min\left\{\dim\mathcal{H}^A,\dim\mathcal{H}^B\right\}.
\end{equation}
Hence, $\Ket{\Phi_D^+}^{AB}$ and its locally unitarily equivalent states maximize any entanglement measure $E$,
and in this sense, $\Ket{\Phi_D^+}^{AB}$ and its locally unitarily equivalent states are called \textit{maximally entangled states}.
Maximally entangled states of two qubits, that is, $\Ket{\Phi_2^+}^{AB}$ and its locally unitarily equivalent states, are called Bell states.
The maximally entangled state of a bipartite system is unique up to these local unitary transformations.

In contrast to these well-established results on bipartite entanglement,
properties of multipartite entanglement are more involved~\cite{E2,W3,B26}.
For a multipartite system in general, there may not exist a single maximally entangled state in the multipartite system itself transformable by LOCC into all the states in the system~\cite{V1,S1,S2,M1,G1,S3}.
In particular, given a multipartite system where each local dimension is $d$, almost no LOCC transformation among pure states of the system is possible~\cite{G1,S3}.
Due to these facts, applicability of resource-theoretic analysis based on state convertibility under LOCC is limited if multipartite entanglement is concerned.
In contrast, the analysis of multipartite entanglement in Part~\ref{part:1} and~\ref{part:2} adopt a different perspective, based on settings relevant to distributed quantum information processing where the parties can be restricted to having small- and intermediate-scale quantum systems of up to several dozens of qubits and connected by a network for quantum communication.
Part~\ref{part:1} analyzes a fundamental communication task, quantum state merging~\cite{H3,H4}, under such small- and intermediate-scale settings, and Part~\ref{part:2} analyzes manipulation of multipartite entanglement on networks.

\part{\label{part:1}One-shot quantum state merging on small and intermediate scales under one-way and two-way communication}

\chapter{Background and overview of Part~\ref{part:1}}

Quantum state merging~\cite{H3,H4} is a communication task playing crucial roles in distributed quantum information processing~\cite{W8,W9,W10,W12} and multipartite entanglement transformations~\cite{A3,D8,Y8,D9,S4}.
Quantum state merging, or quantum state redistribution~\cite{D2,D3} as a generalized task including state merging,
was originally introduced in the context of quantum Shannon theory, and they have also applied to the analyses of a family of other quantum communication tasks in quantum Shannon theory, such as derivation of a capacity of noisy quantum channels~\cite{D4,A7,H5,A2,H11,A8,P3,W5}.
In the task of state merging originally formulated using the framework of local operations and classical communication (LOCC)~\cite{H3,H4},
two spatially separated parties $A$ and $B$ initially share an entangled resource state and are given $n$ shared states whose purification with reference $R$ is represented as ${\left(\Ket{\psi}^{RAB}\right)}^{\otimes n}$, where $A$ and $B$ know classical description of $\Ket{\psi}^{RAB}$.
The goal of the task is to asymptotically transfer $A$'s part of $\Ket{\psi}^{RAB}$ from $A$ to $B$ and obtain $\Ket{\psi}^{RB'B}$,
keeping coherence between $B$ and $R$, by $A$ and $B$'s LOCC assisted by shared entanglement within an error in fidelity approaching to zero as $n \rightarrow \infty$.
This type of scenario of achieving a task for infinitely many times within a vanishing error is called the asymptotic scenario.
When $A$ and $B$ are initially given a shared maximally entangled resource state in addition to $\Ket{\psi}^{RAB}$,
quantum communication can be simulated by LOCC assisted by this maximally entangled resource state by means of quantum teleportation~\cite{B5}.
Given a protocol for state merging, the amount of this shared entanglement required for the protocol, or equivalently, that of quantum communication when LOCC is free, is called \textit{entanglement cost} of the protocol, regarded as the cost to be minimized.

It is an essential feature of state merging that the parties may exploit classical description of the initially given states $\Ket{\psi}^{RAB}$ for reducing entanglement cost required for the protocols.
Without classical description,
there exists a trivial protocol achieving state merging by quantum teleportation~\cite{B5} for transferring $A$'s part of $\Ket{\psi}^{RAB}$ from $A$ to $B$.
This trivial protocol does not require the classical description, and as the result, it requires the same entanglement cost for any given state.
In contrast, entanglement cost in state merging can be reduced compared to quantum teleportation and can even be negative when the protocol provides a net gain of shared entanglement.

Quantum state merging can also be regarded as an analogue of source coding with decoder's side information in classical information theory established by Slepian and Wolf~\cite{S7}.
Reference~\cite{S7} introduces and analyzes a situation involving three parties $A_1\,$, $A_2\,$, and $B$, where each of $A_1$ and $A_2$ is given classical information that is correlated with the other's, and the classical information of $A_1$ and $A_2$ is to be transferred to $B$.
Then, Reference~\cite{S7} characterizes the minimal amount of classical communication from $A_1$ and $A_2$ to $B$ required for achieving this task.
If all of the $A_1$'s classical information is first transferred to $B$, this classical information possibly correlated with $A_2$'s is called \textit{side information} at $B$, which can be used for reducing the amount of classical communication for transferring the rest of classical information from $A_2$ to $B$ compared to the case without this side information.
In quantum state merging, if $B$'s part of $\Ket{\psi}^{RAB}$ is correlated with $A$'s, $B$'s part may contribute to reducing entanglement cost required for transferring $A$'s, compared to the cases without $B$'s part.
Such a task with $B$'s ability to use a part of the shared quantum state is called a task with \textit{quantum side information} at $B$.
Similar notions of quantum side information are also widely used in the contexts other than state merging, such as entropic uncertainty relations~\cite{C12}, state exchange~\cite{Y16}, and classical-quantum Slepian-Wolf problems~\cite{D10,T5,R3,T9,L1,M5,C11}.
Properties of $B$'s quantum side information in state merging can be quantitatively captured in terms of entanglement cost in state merging.
In the asymptotic scenario of state merging, the minimal entanglement cost in state merging is given by the conditional quantum entropy ${H\left(A|B\right)}_{\psi}$ per copy~\cite{H3,H4,B20}, which clarifies an operational meaning of the conditional quantum entropy.

While the asymptotic scenario is well-established in quantum Shannon theory, there have also been studied zero-error scenarios~\cite{G5}, which are originally established in a classical setting by Shannon~\cite{S6} and first introduced into a quantum setting in Reference~\cite{M2}.
Regarding the zero-error scenarios of classical source coding with decoder's side information, optimal zero-error code design is proven to be $NP$-hard~\cite{K7}.
However, in classical coding theory, explicit construction of zero-error coding protocols such as Shannon coding~\cite{S9} and Huffman coding~\cite{H9}, if not necessarily optimal, establishes a foundation of theoretical analyses as well as practical applications.
In this direction, explicit zero-error coding protocols for classical source coding with decoder's side information are given in References~\cite{K7,W7,J2,Y11,Z2,M3,M4}.

Aside from this regime where infinitely many copies of $\Ket{\psi}^{RAB}$ are given, another regime is the one-shot regime where only a single copy of $\Ket{\psi}^{RAB}$ is given.
The scenarios in the one-shot regime can also be classified into two scenarios: one is an exact scenario with zero error, and the other is an approximate scenario in which a nonzero error is tolerated for reducing entanglement cost.
Analysis in the one-shot regime clarifies the structure of protocols achieving the communication tasks at a single-copy level and is more relevant to practical situations such as distributed quantum information processing.
In addition to the asymptotic scenario,
state merging and redistribution have been defined and analyzed in various one-shot scenarios~\cite{Y9,B9,B12,D7,D6,H10,B10,D5,M,N3,A4,A5,B15,B13,A16,A17}.
There also exist other derivatives of state merging and redistribution in modified settings~\cite{B1,B2,S11,S12,S14,A10,A11}.

In this part, after providing preliminaries in Chapter~\ref{sec:preliminaries_1},
the following two results on one-shot state merging are presented in Chapters~\ref{sec:merge} and~\ref{sec:two_way},
aiming at investigating entanglement cost of transferring quantum information of unknown states on the small and intermediate scales.
Chapter~\ref{sec:merge} constructs protocols achieving one-shot state merging even on small and intermediate scales, as well as analyzing lower bounds of the minimal achievable entanglement cost.
While the protocols constructed in Chapter~\ref{sec:merge} use only one-way LOCC from $A$ to $B$,
Chapter~\ref{sec:two_way} discusses advantage of two-way LOCC over one-way LOCC in one-shot state merging.

\section*{Quantum state merging for arbitrarily small-dimensional systems}

\begin{figure}[t!]
  \centering
  \includegraphics[width=4in]{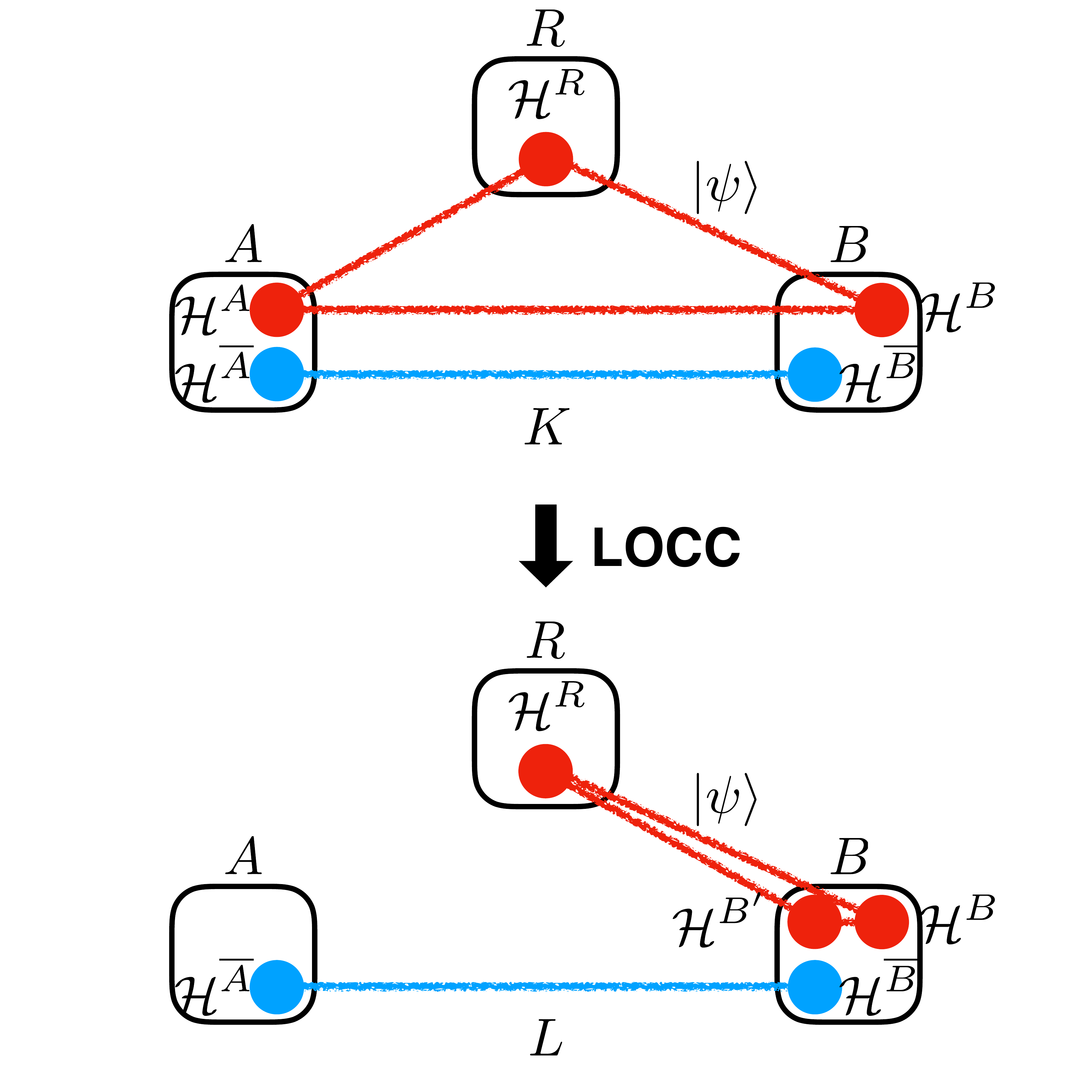}
  \caption[Exact state merging.]{\label{fig:merge}Exact state merging of a given state $\Ket{\psi}^{RAB}$ denoted by the red circles. Parties $A$ and $B$ perform LOCC assisted by a maximally entanglement resource state $\Ket{\Phi_K^+}^{\overline{AB}}$ with the Schmidt rank $K$ denoted by the top blue circles to transfer the reduced state of $\Ket{\psi}^{RAB}$ on $A$ to $B$ and obtain $\Ket{\psi}^{RB'B}$ while $\Ket{\Phi_L^+}^{\overline{AB}}$ with the Schmidt rank $L$ denoted by the bottom blue circles is also obtained.}
\end{figure}

Chapter~\ref{sec:merge} investigates general bounds of entanglement cost required for quantum state merging on the small and intermediate scales relevant to distributed quantum information processing.
The existing protocols for one-shot quantum state merging or redistribution~\cite{B9,Y9,B12,D7,D6,H10,B10,D5,M,N3,A4,A5,B15,B13,A16,A17} achieve near optimality only on a large scale relevant to \textit{one-shot quantum information theory}
where functions of states called the smooth conditional min- and max-entropies~\cite{R2,T5} are used for evaluating entanglement cost.
Definitions of these functions are summarized in Appendix~\ref{sec:one_shot_entropies}.
These protocols also cause a nonzero approximation error in fidelity, since the vital techniques for these protocols, namely, one-shot decoupling~\cite{D6} and the convex-split lemma~\cite{A4}, cannot avoid errors.
As higher fidelity is pursued in state merging of a fixed single copy of $\Ket{\psi}^{RAB}$, entanglement cost required for the protocols diverges to infinity. Hence, there always exists a region of error close to zero where the protocols do not contribute to reducing the entanglement cost.
Moreover,
in cases where the system size for the reduced state of $\Ket{\psi}^{RAB}$ on $A$ is as small as up to a few dozens of qubits,
the protocols require more entanglement cost than quantum teleportation even if the error tolerance is reasonably large. (See Remark~\ref{remark:usefulness} for more discussion.)
In this sense, strategies in state merging to exploit the classical description of $\Ket{\psi}^{RAB}$ for reducing entanglement cost have \textit{not} yet been established for arbitrarily small-dimensional systems or arbitrarily high fidelity.

In contrast, Chapter~\ref{sec:merge} explicitly constructs protocols for one-shot state merging
with the following features:
\begin{enumerate}
  \item Applicable to any state of an arbitrarily small-dimensional system, including small- and intermediate-scale states;
  \item Fulfilling arbitrarily high fidelity requirement, including zero error;
  \item Retaining the essential feature of state merging, that is, exploiting classical description of $\Ket{\psi}^{RAB}$ for reducing entanglement cost.
\end{enumerate}
The tasks of one-shot state merging investigated here are achieved exactly, that is, without approximation, and is called \textit{exact state merging}, as illustrated in Figure~\ref{fig:merge}.
Entanglement cost of the obtained protocols for exact state merging is not larger than, and can be strictly smaller than, that for its inverse task, \textit{exact state splitting} summarized in Section~\ref{sec:split}, depending on the Koashi-Imoto decomposition of $\Ket{\psi}^{RAB}$~\cite{W4,K3,H6,K5}.
Multiple examples of states including those relevant to distributed quantum information processing are also shown, where the obtained protocols for exact state merging can reduce entanglement cost, since these states have \textit{nontrivial} Koashi-Imoto decompositions.
In the same way as the asymptotic scenarios, entanglement cost of the obtained protocol can even be negative.
In addition to providing achievability bounds by constructing the protocols, Chapter~\ref{sec:merge} also obtains converse bounds, that is, lower bounds of the minimal entanglement cost required for any protocol for exact state merging.
The obtained converse bounds improve the existing converse bound~\cite{B9} given in terms of the conditional max-entropy~\cite{R2,T5,T11}, and it is shown that the converse bound is achievable when the state to be merged is represented by qubits.
By means of \textit{smoothing}~\cite{R2,T5,T11},
these results on exact state merging are straightforwardly extended to \textit{approximate state merging}, where arbitrarily small approximation error in fidelity is allowed so that the entanglement cost can further be reduced compared to exact state merging.
The obtained converse bound of entanglement cost in approximate state merging improves the existing converse bound~\cite{B10}.

\section*{One-shot quantum state merging under one-way and two-way communication}

The minimal entanglement cost ${H\left(A|B\right)}_{\psi}$ in the asymptotic scenario of state merging can be achieved by only \textit{one-way} LOCC, using one-way classical communication only from $A$ to $B$, even if $A$ and $B$ are allowed to perform \textit{two-way} LOCC, using two-way classical communication both from $A$ to $B$ and from $B$ to $A$.
Indeed, ${H\left(A|B\right)}_{\psi}$ is monotonically nondecreasing under a class of operations consisting of $B$'s preprocessing and backward classical communication from $B$ to $A$, as shown in Section~\ref{sec:cost}.
A conventional interpretation of ${H\left(A|B\right)}_{\psi}\coloneq {H\left(AB\right)}_{\psi}-{H\left(B\right)}_{\psi}$ is that the first term ${H\left(AB\right)}_{\psi}$ quantifies quantum information encoded in $A$ and $B$'s shared state, and the second term ${H\left(B\right)}_{\psi}$ quantifies quantum information initially located at $B$~\cite{H4}, while this interpretation is based on one-way communication from $A$ to $B$ in analogy to classical source coding with $B$'s classical side information~\cite{S7}.
As for one-shot scenarios of state merging, the existing protocols~\cite{B9,Y9,B12,D7,D6,H10,B10,D5,M,N3,A4,A5,B15,B13,A16,A17}, based on either technique of one-shot decoupling~\cite{D6} or the convex-split lemma~\cite{A4},
use only one-way communication similarly to the asymptotic scenario,
and whether protocols using two-way communication may outperform those using only one-way communication has been unknown~\cite{B10}.

In contrast, Chapter~\ref{sec:two_way} demonstrates a provable advantage of two-way LOCC over one-way LOCC in exploiting $B$'s quantum side information in a one-shot scenario of state merging,
showing that the minimal entanglement cost in state merging under two-way LOCC can be strictly smaller than that under one-way LOCC, while they coincide in the asymptotic scenario.
The results in Chapter~\ref{sec:two_way} suggest that under a one-shot regime, $B$'s preprocessing and backward classical communication from $B$ to $A$ can be indispensable for making the most of quantum side information, that is, for minimizing entanglement cost in state merging.

\chapter{\label{sec:preliminaries_1}Preliminaries to Part~\ref{part:1}}

This chapter provides preliminaries to Part~\ref{part:1}.
Section~\ref{sec:def_merge} defines tasks of one-shot state merging to be analyzed in the rest of Part~\ref{part:1}.
For comparison, results of the preceding works~\cite{H3,H4} on the asymptotic scenario of state merging are summarized in Section~\ref{sec:asymptotic},
and definitions and properties of state splitting, the inverse task of state merging, are summarized in Section~\ref{sec:split}.
Sections~\ref{sec:def_merge} and~\ref{sec:split} are based on Reference~\cite{Y12}.

\section{\label{sec:def_merge}Definition of one-shot state merging}

Definitions of the tasks of one-shot state merging to be analyzed in this part are presented in this section.

Quantum state merging involves three spatially separated parties $A$, $B$, and $R$,
where by convention, $A$ is a sender, $B$ is a receiver, and $R$ is a reference to consider a purification.
Let $\mathcal{H}^A$ and $\mathcal{H}^{\overline{A}}$ be $A$'s systems, $\mathcal{H}^B$, $\mathcal{H}^{B'}$, and $\mathcal{H}^{\overline{B}}$ be $B$'s, and $\mathcal{H}^R$ be $R$'s,
where
\begin{equation}
  \dim\mathcal{H}^{A}=\dim\mathcal{H}^{B'}.
\end{equation}
Given any tripartite pure state $\Ket{\psi}^{RAB}$ shared among $R$, $A$, and $B$, state merging of $\Ket{\psi}^{RAB}$ aims to transfer $A$'s part of $\Ket{\psi}^{RAB}$ to $B$, keeping coherence between $R$ and $B$, to obtain $\Ket{\psi}^{RB^\prime B}$.
Assume that the parties $A$ and $B$ are allowed to freely perform local operations and classical communication (LOCC), and quantum communication is performed by LOCC assisted by a maximally entangled resource state shared between $A$ and $B$ in $\mathcal{H}^{\overline{A}}\otimes\mathcal{H}^{\overline{B}}$.
Note that $A$ and $B$ cannot perform any operation on $R$.

While a trivial protocol for state merging of $\Ket{\psi}^{RAB}$ is quantum teleportation of $A$'s part of $\Ket{\psi}^{RAB}$,
the initially given state $\Ket{\psi}^{RAB}$ in state merging may have entanglement between $A$ and $B$, and hence, there are cases where $A$ and $B$ can distill this entanglement of $\Ket{\psi}^{RAB}$ in achieving state merging of $\Ket{\psi}^{RAB}$ to reduce the required amount of entanglement.
A maximally entangled state initially shared between $A$ and $B$ is written as
\begin{equation}
  \Ket{\Phi_K^+}^{\overline{A}\overline{B}}\coloneq\sum_{l=0}^{K-1}\Ket{l}^{\overline{A}}\otimes\Ket{l}^{\overline{B}},
\end{equation}
where $K$ denotes the Schmidt rank of the initially shared maximally entangled resource state.
After achieving state merging, $A$ and $B$ are still allowed to share a maximally entangled state
\begin{equation}
  \Ket{\Phi_L^+}^{\overline{A}\overline{B}}\coloneq\sum_{l=0}^{L-1}\Ket{l}^{\overline{A}}\otimes\Ket{l}^{\overline{B}},
\end{equation}
where $L$ denotes the Schmidt rank of the finally shared maximally entangled resource state.
The amount of entanglement for $\Ket{\Phi_K^+}$ and $\Ket{\Phi_L^+}$ is measured in terms of the entanglement entropy, that is, $\log_2 K$ and $\log_2 L$, respectively.
If $\log_2 K-\log_2 L > 0$, $\log_2 K-\log_2 L$ is regarded as the amount of entanglement consumed in the protocol, and otherwise, $\log_2 L-\log_2 K$ as a net gain of entanglement.
When $\log_2 K > 0$, $\log_2 L > 0$, and $\log_2 K-\log_2 L > 0$ in state merging, a part of entanglement of the initially shared maximally entangled resource state, that is, $\log_2 L$ ebits out of $\log_2 K$ ebits, can be considered to be used catalytically.
This setting is called a \textit{catalytic setting}.

In one-shot scenarios, it is also beneficial to minimize the initially required amount of entanglement for achieving state merging.
Hence, another setting can also be considered by fixing $\log_2 L=0$ in the above catalytic setting of state merging.
This setting is called a \textit{non-catalytic setting}.
Note that protocols for state merging in the non-catalytic setting also works in the catalytic setting, but not necessarily \textit{vice versa}.
In the following of this chapter, the catalytic setting is considered unless stated otherwise explicitly.

A simple one-shot scenario of state merging is that requiring zero error in the protocol.
This task is called \textit{exact state merging} and defined as follows.
The task of exact state merging is also illustrated in Figure~\ref{fig:merge}.

\begin{definition}
\label{def:merging}
    \textit{Exact state merging.}
    Exact state merging of a purified given state $\Ket{\psi}^{RAB}$ is a task for parties $A$ and $B$ to achieve a transformation
    \begin{equation}
            \id^R\otimes\mathcal{M}\left({\psi}^{RAB}\otimes{\Phi^+_K}^{\overline{A}\overline{B}}\right) ={\psi}^{RB'B}\otimes{\Phi^+_L}^{\overline{A}\overline{B}}
    \end{equation}
    by an LOCC map
    \begin{equation}
      \mathcal{M}:\mathcal{B}\left(\mathcal{H}^A\otimes\mathcal{H}^B\otimes\mathcal{H}^{\overline{A}}\otimes\mathcal{H}^{\overline{B}}\right)\to\mathcal{B}\left(\mathcal{H}^{B'}\otimes\mathcal{H}^B\otimes\mathcal{H}^{\overline{A}}\otimes\mathcal{H}^{\overline{B}}\right),
    \end{equation}
    which can be constructed depending on the classical description of $\Ket{\psi}^{RAB}$.
    The definition of exact state merging in the non-catalytic setting is also obtained by setting $\log_2 L=0$ in the above definition.
    Entanglement cost of a protocol for exact state merging in the catalytic setting is defined as
    \begin{equation}
      \log_2 K-\log_2 L,
    \end{equation}
    and that in the non-catalytic setting is defined as
    \begin{equation}
      \log_2 K.
    \end{equation}
\end{definition}

The minimal entanglement cost among all the protocols for exact state merging of $\Ket{\psi}^{RAB}$ may be simply referred to as entanglement cost in exact state merging of $\Ket{\psi}^{RAB}$.
If
\begin{equation}
  \log_2 K\geqq\log_2 \dim\mathcal{H}^A,
\end{equation}
there exists a trivial protocol for exact state merging by quantum teleportation to transfer $\psi^{A}$ from $A$ to $B$.
The results given in Chapters~\ref{sec:merge} and~\ref{sec:two_way} provide protocols at less entanglement cost using the classical description of $\Ket{\psi}^{RAB}$.

There exist following tasks achievable at the same entanglement cost using the same protocol as those in exact state merging of a given state $\Ket{\psi}^{RAB}$.
Consider the Schmidt decomposition of $\Ket{\psi}^{RAB}$ with respect to bipartition between $\mathcal{H}^R$ and $\mathcal{H}^{A}\otimes\mathcal{H}^{B}$
\begin{equation}
  \label{eq:schmidt}
  \Ket{\psi}^{RAB}=\sum_{l=0}^{D-1} \sqrt{\lambda_l}\Ket{l}^R\otimes\Ket{\psi_l}^{AB},
\end{equation}
where $D$ is the Schmidt rank, and $\lambda_l>0$ for each $l\in\{0,\ldots,D-1\}$.
Then, entanglement cost in exact state merging of $\Ket{\psi}^{RAB}$ equals to that of a maximally entangled state $\Ket{\Phi_D^+\left(\psi\right)}^{RAB}$ with Schmidt rank $D$ corresponding to $\Ket{\psi}^{RAB}$
\begin{equation}
  \label{eq:max}
  \Ket{\Phi_D^+\left(\psi\right)}^{RAB}\coloneq\sum_{l=0}^{D-1} \frac{1}{\sqrt{D}}\Ket{l}^R\otimes\Ket{\psi_l}^{AB},
\end{equation}
where the Schmidt basis on the right-hand side is the same as that in Equation~\eqref{eq:schmidt}, and this maximally entangled state is independent of the Schmidt coefficients $\sqrt{\lambda_0},\ldots,\sqrt{\lambda_{D-1}}$ in Equation~\eqref{eq:schmidt}.

This equivalence also implies that entanglement cost in exact state merging of $\Ket{\psi}^{RAB}$ is the same as that required for merging arbitrary bipartite states shared between $A$ and $B$ on a subspace of $\mathcal{H}^A\otimes\mathcal{H}^B$ spanned by the Schmidt-basis states ${\left\{\Ket{\psi_l}^{AB}\right\}}_l$ corresponding to nonzero Schmidt coefficients in Equation~\eqref{eq:schmidt}.
The equivalence between considering the maximally entangled state with $R$ in Equation~\eqref{eq:max} and considering arbitrary bipartite states on the corresponding subspace is also known as the relative state method~\cite{P3}.
Note that in general, entanglement cost in exact state merging of $\Ket{\psi}^{RAB}$ is different from that required for merging arbitrary bipartite states given from an ensemble ${\left\{p\left(l\right),\Ket{\psi_l}^{AB}\right\}}_l$ for a probability distribution $p\left(l\right)$, since coherence of arbitrary superposition of ${\left\{\Ket{\psi_l}^{AB}\right\}}_l$ has to be kept in state merging of $\Ket{\psi}^{RAB}$.

These equivalences are shown as the following proposition, and see Appendix~\ref{sec:equivalence} for the proof.

\begin{proposition}
\label{prp:max}
  \textit{Equivalence of exact state merging of an arbitrary tripartite pure state, a corresponding maximally entangled state, and a corresponding set of bipartite states.}
  Given any fixed integer $K$, $L$, and any pure state $\Ket{\psi}^{RAB}$ whose Schmidt rank with respect to bipartition between $\mathcal{H}^R$ and $\mathcal{H}^{A}\otimes\mathcal{H}^{B}$ is $D$ and whose Schmidt decomposition is given by Equation~\eqref{eq:schmidt},
  the following statements are equivalent:
  \begin{enumerate}
    \item An LOCC map $\mathcal{M}$ achieves the following exact state merging of $\Ket{\psi}^{RAB}$
      \begin{equation}
          \id^R\otimes\mathcal{M}\left({\psi}^{RAB}\otimes{\Phi^+_K}^{\overline{A}\overline{B}}\right) ={\psi}^{RB^\prime B}\otimes{\Phi^+_L}^{\overline{A}\overline{B}};
      \end{equation}
    \item The same LOCC map $\mathcal{M}$ as the above achieves the following exact state merging of $\Ket{\Phi_D^+\left(\psi\right)}^{RAB}$
      \begin{equation}
        \id^R\otimes\mathcal{M}\left({\Phi_D^+\left(\psi\right)}^{RAB}\otimes{\Phi^+_K}^{\overline{A}\overline{B}}\right)
        ={\Phi_D^+\left(\psi\right)}^{RB'B}\otimes{\Phi^+_L}^{\overline{A}\overline{B}},
      \end{equation}
      where $\Ket{\Phi_D^+\left(\psi\right)}^{RAB}$ is the maximally entangled state corresponding to $\Ket{\psi}^{RAB}$, defined as Equation~\eqref{eq:max}.
    \item Define a set $S_\psi^{AB}\subset\mathcal{H}^A\otimes\mathcal{H}^B$ of arbitrary bipartite states on a subspace of $\mathcal{H}^A\otimes\mathcal{H}^B$ spanned by the Schmidt-basis states ${\left\{\Ket{\psi_l}^{AB}\right\}}_l$ corresponding to nonzero Schmidt coefficients of $\Ket{\psi}^{RAB}$ in Equation~\eqref{eq:schmidt}, that is,
      \begin{equation}
          S_\psi^{AB}
          \coloneq{\left\{\psi_{\boldsymbol{\alpha}}^{AB}\coloneq\sum_{l=0}^{D-1}\sum_{l^\prime=0}^{D-1}\alpha_{l,l^\prime} \Ket{\psi_l}\Bra{\psi_{l^\prime}}^{AB}\in\mathcal{D}\left(\mathcal{H}^A\otimes\mathcal{H}^B\right)\right\}}_{\boldsymbol{\alpha}},
      \end{equation}
      where $\boldsymbol{\alpha}$ denotes the tuple of the parameters $\alpha_{l,l^\prime}$ for all $l$ and $l^\prime$.
      Then, the same LOCC map $\mathcal{M}$ as the above achieves the following state transformation for any bipartite state $\psi_{\boldsymbol{\alpha}}^{AB}\in S_\psi^{AB}$
      \begin{equation}
          \mathcal{M}\left(\psi_{\boldsymbol{\alpha}}^{AB}\otimes{\Phi^+_K}^{\overline{A}\overline{B}}\right)
          =\psi_{\boldsymbol{\alpha}}^{B^\prime B}\otimes{\Phi^+_L}^{\overline{A}\overline{B}},
      \end{equation}
      where $\mathcal{M}$ is independent of $\boldsymbol{\alpha}$.
  \end{enumerate}
  The same equivalence also holds in the non-catalytic setting by fixing $\log_2 L=0$.
\end{proposition}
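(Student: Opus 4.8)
The plan is to establish the equivalences by exploiting throughout that the LOCC map $\mathcal{M}$ acts only on $\mathcal{H}^A\otimes\mathcal{H}^B\otimes\mathcal{H}^{\overline{A}}\otimes\mathcal{H}^{\overline{B}}$ and never touches the reference $\mathcal{H}^R$. I would prove (1)$\Leftrightarrow$(2) by relating the two input states through an invertible operator supported on $R$, and (2)$\Leftrightarrow$(3) by the relative state method, reading off the action of $\mathcal{M}$ on a basis of operators from its action on the maximally entangled state.

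For (1)$\Leftrightarrow$(2), first introduce the diagonal operator on $\mathcal{H}^R$
\[
  T^R\coloneq\sum_{l=0}^{D-1}\sqrt{D\lambda_l}\,\Ket{l}\Bra{l}^R,
\]
which is invertible precisely because every $\lambda_l>0$. Comparing Equations~\eqref{eq:schmidt} and~\eqref{eq:max} shows that $\Ket{\psi}^{RAB}=(T^R\otimes\mathbb{1})\Ket{\Phi_D^+\left(\psi\right)}^{RAB}$ and, on the merged side, $\Ket{\psi}^{RB'B}=(T^R\otimes\mathbb{1})\Ket{\Phi_D^+\left(\psi\right)}^{RB'B}$, since these states share the Schmidt basis ${\left\{\Ket{\psi_l}\right\}}_l$ and differ only through the coefficients carried by $R$. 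Because $T^R$ is supported on $\mathcal{H}^R$, which $\mathcal{M}$ leaves invariant, conjugation by $T^R\otimes\mathbb{1}$ commutes through the channel,
\[
  \id^R\otimes\mathcal{M}\Bigl(\left(T^R\otimes\mathbb{1}\right)X\left(T^R\otimes\mathbb{1}\right)^\dag\Bigr)=\left(T^R\otimes\mathbb{1}\right)\bigl[\id^R\otimes\mathcal{M}\left(X\right)\bigr]\left(T^R\otimes\mathbb{1}\right)^\dag,
\]
for every input operator $X$. Taking $X={\Phi_D^+\left(\psi\right)}^{RAB}\otimes{\Phi_K^+}^{\overline{A}\overline{B}}$ and invoking (2) transforms the left-hand side into the left-hand side of (1) and the right-hand side into $\psi^{RB'B}\otimes{\Phi_L^+}^{\overline{A}\overline{B}}$, giving (1); conjugating instead by $\left(T^R\right)^{-1}\otimes\mathbb{1}$ runs the argument backwards and yields (2) from (1).

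For (2)$\Leftrightarrow$(3), expand the maximally entangled state in the matrix units on $R$,
\[
  {\Phi_D^+\left(\psi\right)}^{RAB}=\frac{1}{D}\sum_{l,l'}\Ket{l}\Bra{l'}^R\otimes\Ket{\psi_l}\Bra{\psi_{l'}}^{AB},
\]
and likewise for the merged state on $RB'B$. Substituting into (2) and using linearity of $\id^R\otimes\mathcal{M}$ produces an operator identity whose two sides are expanded over the linearly independent family ${\left\{\Ket{l}\Bra{l'}^R\right\}}_{l,l'}$; matching the coefficient of each $\Ket{l}\Bra{l'}^R$ gives
\[
  \mathcal{M}\left(\Ket{\psi_l}\Bra{\psi_{l'}}^{AB}\otimes{\Phi_K^+}^{\overline{A}\overline{B}}\right)=\Ket{\psi_l}\Bra{\psi_{l'}}^{B'B}\otimes{\Phi_L^+}^{\overline{A}\overline{B}}
\]
for all $l,l'$. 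Since the operators $\Ket{\psi_l}\Bra{\psi_{l'}}^{AB}$ span the whole operator algebra on $\spn{\left\{\Ket{\psi_l}^{AB}\right\}}_l$ and each $\psi_{\boldsymbol{\alpha}}^{AB}\in S_\psi^{AB}$ is a complex linear combination of them, linearity of $\mathcal{M}$ promotes these identities to the statement in (3); conversely, the density operators in $S_\psi^{AB}$ already span the same algebra, so (3) recovers the coefficient-wise identities and hence (2).

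The main obstacle will be the first equivalence: one must check carefully that conjugating by the non-unitary, trace-changing operator $T^R$ before and after the channel is legitimate, which rests entirely on the disjointness of the factors on which $T^R$ and $\mathcal{M}$ act, and that strict positivity $\lambda_l>0$ is exactly what renders $T^R$ invertible so the transport is reversible. The second equivalence is routine once one notes that the operators $\Ket{\psi_l}\Bra{\psi_{l'}}$ are not themselves states, so the passage between (2) and (3) must proceed through complex linear spans rather than convex combinations. Finally, the value of $L$ is used nowhere in the argument, so setting $L=1$ reproduces the non-catalytic case verbatim.
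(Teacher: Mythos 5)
Your proposal is correct, and its second half is essentially the paper's own argument: the paper also pivots everything through the coefficient-wise identities $\mathcal{M}\left({\Ket{\psi_l}\Bra{\psi_{l'}}}^{AB}\otimes{\Phi^+_K}^{\overline{A}\overline{B}}\right)={\Ket{\psi_l}\Bra{\psi_{l'}}}^{B'B}\otimes{\Phi^+_L}^{\overline{A}\overline{B}}$, extracted exactly as you do by expanding over the linearly independent family ${\left\{\Ket{l}\Bra{l'}^R\right\}}_{l,l'}$ and invoking linearity of $\mathcal{M}$. For the passage from Statement~3 back to these identities, the paper exhibits two explicit density operators, $\frac{1}{2}\left(\Ket{\psi_l}\Bra{\psi_l}+\Ket{\psi_l}\Bra{\psi_{l'}}+\Ket{\psi_{l'}}\Bra{\psi_l}+\Ket{\psi_{l'}}\Bra{\psi_{l'}}\right)$ and its $\pm\frac{\textup{i}}{2}$ counterpart, whereas you appeal to the general fact that the density operators in $S_\psi^{AB}$ span the full operator algebra on $\spn{\left\{\Ket{\psi_l}^{AB}\right\}}_l$ over $\mathbb{C}$; that is the same idea, just stated abstractly rather than witnessed concretely. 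Where you genuinely depart from the paper is the first equivalence: the paper proves each of Statements~1 and~2 equivalent to the coefficient-wise identities by matching coefficients of $\Ket{l}\Bra{l'}^R$ twice, while you relate the two inputs directly through the invertible local filter $T^R=\sum_l\sqrt{D\lambda_l}\Ket{l}\Bra{l}^R$ and the observation that conjugation on $\mathcal{H}^R$ commutes with $\id^R\otimes\mathcal{M}$. This buys a one-step, coefficient-free argument and makes transparent exactly where strict positivity of the Schmidt coefficients enters (invertibility of $T^R$); the paper's route has the minor advantage of producing the matrix-unit identities once and reusing them for all three statements. Both arguments are sound, and your closing remarks about the role of $\lambda_l>0$ and about $L$ playing no role (so the non-catalytic case follows by setting $\log_2 L=0$) match the paper.
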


While this zero-error scenario is fundamental,
a sufficiently small error in fidelity of quantum states does not largely affect outcomes of any measurement in quantum mechanics.
Hence, another scenario can be considered, where a nonzero error for approximation may be tolerated for reducing entanglement cost in comparison with exact state merging.
This task is called \textit{approximate state merging} and defined as follows.

\begin{definition}
\label{def:approxiamte_state_merging}
  \textit{Approximate state merging.}
    Approximate state merging of a given state $\Ket{\psi}^{RAB}$ within a given error $\epsilon\geqq 0$ is a task of parties $A$ and $B$ performing an LOCC map
    \begin{equation}
      \tilde{\mathcal{M}}: \mathcal{B}\left(\mathcal{H}^A\otimes\mathcal{H}^{B}\otimes\mathcal{H}^{\overline{A}}\otimes\mathcal{H}^{\overline{B}}\right) \to \mathcal{B}\left(\mathcal{H}^{B^\prime}\otimes\mathcal{H}^B\otimes\mathcal{H}^{\overline{A}}\otimes\mathcal{H}^{\overline{B}}\right)
    \end{equation}
    achieving a transformation
    \begin{equation}
        \id^R \otimes\tilde{\mathcal{M}}\left({\psi}^{RAB}\otimes{\Phi^+_K}^{\overline{A}\overline{B}}\right)
        ={\tilde\psi}^{RB^\prime B\overline{A}\overline{B}},
    \end{equation}
    where the fidelity of the final state satisfies
    \begin{equation}
      F^2\left({\tilde{\psi}}^{RB^\prime B\overline{A}\overline{B}}, \psi^{RB^\prime B}\otimes{\Phi_L^+}^{\overline{A}\overline{B}}\right)\coloneq\left(\Bra{\psi}\otimes\Bra{\Phi_L^+}\right){\tilde{\psi}}\left(\Ket{\psi}\otimes\Ket{\Phi_L^+}\right)\geqq 1-\epsilon^2.
    \end{equation}
    Given a protocol for approximate state merging, entanglement cost of the protocol is defined as
    \begin{equation}
      \log_2 K - \log_2 L.
    \end{equation}
    Approximate state merging of $\Ket{\psi}^{RAB}$ within $\epsilon$ in the non-catalytic setting is defined by fixing
    \begin{equation}
      \log_2 L = 0
    \end{equation}
    in the above definition.
\end{definition}

\section{\label{sec:asymptotic}Asymptotic scenario of quantum state merging and quantum entropy}

The results of the preceding works~\cite{H3,H4} on entanglement cost required for the asymptotic scenario of quantum state merging is summarized in this section.
The minimal entanglement cost in the asymptotic scenario is characterized by entropic functions, which are also summarized in this section.

The asymptotic scenario of quantum state merging is originally analyzed in References~\cite{H3,H4},
which aims at achieving approximate state merging of many copies of the same state so as to use the law of large numbers; that is, instead of $\Ket{\psi}^{RAB}$ in Definition~\ref{def:approxiamte_state_merging}, the given state is in the form of
\begin{equation}
  {\left(\Ket{\psi}^{RAB}\right)}^{\otimes n}.
\end{equation}
For any $\epsilon>0$,
the rate of entanglement cost of a protocol achieving approximate state merging of ${\left(\Ket{\psi}^{RAB}\right)}^{\otimes n}$ within $\epsilon$ is evaluated by
\begin{equation}
  \frac{1}{n}\left(\log_2 K-\log_2 L\right).
\end{equation}
References~\cite{H3,H4} analyze the minimal achievable rate of the entanglement cost under asymptotically vanishing error, that is,
\begin{equation}
  \lim_{\epsilon\to 0}\lim_{n\to\infty}\inf\left\{\frac{1}{n}\left(\log_2 K-\log_2 L\right)\right\},
\end{equation}
where the infimum is taken over all the protocols achieving approximate state merging of ${\left(\Ket{\psi}^{RAB}\right)}^{\otimes n}$ within $\epsilon$.

This minimal achievable rate of the entanglement cost in state merging in the asymptotic scenario is evaluated using quantum entropy summarized in the following.
For any discrete random variable $X$ with a probability distribution $p\left(x\right)$,
the entropy of $X$ is defined as
\begin{equation}
  H\left(X\right)\coloneq -\sum_x p\left(x\right)\log_2 p\left(x\right).
\end{equation}
Similarly, for any state $\psi^A$,
the quantum entropy of $\psi^A$ is defined as
\begin{equation}
  {H\left(A\right)}_\psi\coloneq-\tr\psi^A\log_2\psi^A,
\end{equation}
which is also written as
\begin{equation}
  H\left(\psi^A\right)\coloneq {H\left(A\right)}_\psi.
\end{equation}
Quantum entropy is invariant under isometry; that is, for any isometry $U^{A\to A^\prime}$,
it holds that
\begin{equation}
  H\left(\psi^A\right)={H\left(\left(U^{A\to A^\prime}\right)\psi^A{\left(U^{A\to A^\prime}\right)}^\dag\right)}.
\end{equation}

For any bipartite state $\psi^{AB}$,
the joint quantum entropy of $\psi^{AB}$ for the bipartite system $\mathcal{H}^A\otimes\mathcal{H}^B$ is defined as
\begin{equation}
  {H\left(AB\right)}_\psi\coloneq H\left(\psi^{AB}\right).
\end{equation}
The joint quantum entropy of multipartite states is generally defined in the same way.
If a given bipartite state $\psi^{XA}$ is in the form of
\begin{equation}
  \psi^{XA}=\sum_x p\left(x\right)\Ket{x}\Bra{x}^X\otimes\psi_x^A,
\end{equation}
where $p\left(x\right)$ is a probability distribution, ${\left\{\Ket{x}\right\}}_x$ is the computational basis of $\mathcal{H}^X$, and $\psi_x^A\in\mathcal{D}\left(\mathcal{H}^A\right)$ for each $x$,
then $\psi^{XA}$ is called a classical-quantum state.
This classical-quantum state $\psi^{XA}$ can be regarded as a mixed state representing an ensemble ${\left\{p\left(x\right),\Ket{x}\Bra{x}^X\otimes\psi_x^A\right\}}_x$.
The joint quantum entropy of this classical-quantum state $\psi^{XA}$ is evaluated as
\begin{equation}
  {H\left(XA\right)}_\psi= H\left(X\right)+\sum_x p\left(x\right) H\left(\psi_x^A\right).
\end{equation}

For any bipartite state $\psi^{AB}$,
the conditional quantum entropy of $\psi^{AB}$ is defined as
\begin{equation}
  {H\left(A|B\right)}_\psi\coloneq {H\left(AB\right)}_\psi - {H\left(B\right)}_\psi.
\end{equation}
Conditional quantum entropy satisfies for any CPTP map $\mathcal{N}^{B\to B^\prime}$
\begin{align}
  {H\left(A|B\right)}_\psi&\geqq{H\left(A|B^\prime\right)}_{\psi^\prime}\,,\\
  {\psi^\prime}^{B^\prime}&\coloneq\mathcal{N}^{B\to B^\prime}\left(\psi^B\right),
\end{align}
and this type of inequality is called the data processing inequality.

Using these notations, the minimal achievable rate of entanglement cost in the asymptotic scenario of state merging is evaluated as follows.

\begin{lemma}
  (References~\cite{H3,H4}.)
  \textit{Entanglement cost in state merging in the asymptotic scenario}
  For any pure state $\Ket{\psi}^{RAB}$, the minimal achievable rate of the entanglement cost in the asymptotic scenario of state merging is given by the conditional quantum entropy of $\psi^{AB}$, that is,
  \begin{equation}
    \lim_{\epsilon\to 0}\lim_{n\to\infty}\inf\left\{\frac{1}{n}\left(\log_2 K-\log_2 L\right)\right\}={H\left(A|B\right)}_\psi
  \end{equation}
  where the infimum is taken over all the protocols achieving approximate state merging of ${\left(\Ket{\psi}^{RAB}\right)}^{\otimes n}$ within $\epsilon$.
\end{lemma}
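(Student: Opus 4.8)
The plan is to establish the two matching bounds separately: an achievability (direct) part exhibiting a family of protocols whose rate of entanglement cost tends to ${H\left(A|B\right)}_\psi$, and a converse (optimality) part showing that no protocol beats this rate, even under vanishing error. I would phrase both at the single-shot level for $n$ copies and then pass to the i.i.d.\ limit.

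For achievability I would run a decoupling argument on $n$ copies. Work with the purification $\Ket{\psi}^{RAB}$ and its marginal $\psi^{RA}=\tr_B\psi^{RAB}$. Alice applies a Haar-random unitary to $A^n$, splits $A^n$ into $A_1A_2$, keeps $A_1$, and teleports $A_2$ to $B$ using shared entanglement. Choosing $\log_2\dim\mathcal{H}^{A_1}\approx\frac{n}{2}\left(\log_2 d_A-{H\left(A|B\right)}_\psi\right)$ with $d_A\coloneq\dim\mathcal{H}^A$, the decoupling theorem guarantees that, on average over the random unitary, the reduced state on $\mathcal{H}^{A_1}\otimes\mathcal{H}^{R^n}$ is $o(1)$-close to $\frac{\mathbb{1}^{A_1}}{\dim\mathcal{H}^{A_1}}\otimes\psi^{R^n}$; I would make this quantitative by bounding the $2$-norm and invoking that the smooth conditional min-entropy equilibrates to $n{H\left(A|R\right)}_\psi+o(n)=-n{H\left(A|B\right)}_\psi+o(n)$ in the i.i.d.\ limit, since ${H\left(A|R\right)}_\psi=-{H\left(A|B\right)}_\psi$ by purity of $\Ket{\psi}^{RAB}$. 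Because the global state is pure, decoupling of $A_1$ from $R^n$ forces $A_1$ to be maximally entangled with its complement $A_2B^n$, which after the teleportation is held entirely by $B$; hence $B$ recovers ${\left(\Ket{\psi}^{RB'B}\right)}^{\otimes n}$ up to a local isometry while $A_1$ supplies $\log_2\dim\mathcal{H}^{A_1}$ fresh ebits. Teleporting $A_2$ consumes $\log_2\dim\mathcal{H}^{A_2}=n\log_2 d_A-\log_2\dim\mathcal{H}^{A_1}\approx\frac{n}{2}\left(\log_2 d_A+{H\left(A|B\right)}_\psi\right)$ ebits, so the net rate is $\frac{1}{n}\left(\log_2 K-\log_2 L\right)=\frac{1}{n}\left(\log_2\dim\mathcal{H}^{A_2}-\log_2\dim\mathcal{H}^{A_1}\right)\to{H\left(A|B\right)}_\psi$, which is negative precisely when $\psi$ is sufficiently entangled (net entanglement gain). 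The original random-measurement protocol yields the same count.

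For the converse I would use monotonicity of the entropy of entanglement across the cut separating $B$'s laboratory $B\overline{B}$ from everything else $RA\overline{A}$. Since the protocol is LOCC between $A$ and $B$ and never touches $R$, regarding $R$ as part of Alice's side makes the whole procedure LOCC across this bipartition. In the exact catalytic case the global state is pure before and after, so the entropy of $B$'s marginal equals the entanglement across the cut, and by the majorization criterion of Lemma~\ref{lem:pure_convertibility} together with Schur concavity of the von Neumann entropy it is nonincreasing. Using that $\overline{B}$ is maximally mixed and uncorrelated with the rest and that $\psi^{RB'B}$ is isometric to $\psi^{RAB}$ (so ${H\left(B'B\right)}_\psi={H\left(AB\right)}_\psi$), this reads
\begin{equation}
  {H\left(B\right)}_\psi+\log_2 K \;\geqq\; {H\left(AB\right)}_\psi+\log_2 L,
\end{equation}
and rearranging gives $\log_2 K-\log_2 L\geqq{H\left(AB\right)}_\psi-{H\left(B\right)}_\psi={H\left(A|B\right)}_\psi$, the matching lower bound.

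To reach the asymptotic statement with its vanishing error, I would extend the converse from exact to approximate transformations by continuity of the von Neumann entropy (Fannes--Audenaert): the final state is $\epsilon$-close to the ideal pure target, so replacing it by the ideal one shifts the entropies in the displayed inequality by only $o(n)$, after which dividing by $n$ and taking $\epsilon\to 0$ recovers the rate bound ${H\left(A|B\right)}_\psi$. I expect the two genuinely technical points to be, on the achievability side, the quantitative decoupling-plus-smoothing estimate that certifies the dimension count with an error vanishing as $n\to\infty$, and, on the converse side, the continuity bookkeeping that transfers the clean exact inequality to the approximate regime while keeping the correction $o(n)$; both are standard in one-shot quantum information theory but are where the real work lies.
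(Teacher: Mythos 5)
Your proposal is correct, and it reconstructs the standard Horodecki--Oppenheim--Winter argument that the thesis itself does not reproduce: the lemma is stated there without proof, as an imported result citing References~\cite{H3,H4}, and your decoupling-based achievability (random unitary on $A^n$, teleport $A_2$, recover $\log_2\dim\mathcal{H}^{A_1}$ ebits from the purity of the global state) together with the entanglement-monotonicity converse and Fannes-type continuity for the approximate regime is precisely the argument of those references. Both the dimension count $\frac{1}{n}\left(\log_2\dim\mathcal{H}^{A_2}-\log_2\dim\mathcal{H}^{A_1}\right)\to{H\left(A|B\right)}_\psi$ and the converse inequality ${H\left(B\right)}_\psi+\log_2 K\geqq{H\left(AB\right)}_\psi+\log_2 L$ are right, with the only genuinely technical work lying where you say it does.
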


\section{\label{sec:split}Quantum state splitting for arbitrarily small-dimensional systems}

Before proceeding to analysis of one-shot state merging in Chapters~\ref{sec:merge} and~\ref{sec:two_way},
an inverse task of one-shot state merging, one-shot state splitting, is introduced in this section for comparison.
State splitting involves three spatially separated parties $A$, $B$, and $R$,
where by convention, $A$ is a sender, $B$ is a receiver, and $R$ is a reference to consider purification.
Let $A$ have systems $\mathcal{H}^A$, $\mathcal{H}^{A^\prime }$, and $\mathcal{H}^{\overline{A}}$, $B$ have $\mathcal{H}^B$ and $\mathcal{H}^{\overline{B}}$, and $R$ have $\mathcal{H}^R$,
where
\begin{equation}
  \dim\mathcal{H}^{A^\prime }=\dim\mathcal{H}^B.
\end{equation}
In state splitting, $A$ and $B$ initially share a maximally entangled resource state, and $A$ also holds a given bipartite state of $\mathcal{H}^A\otimes\mathcal{H}^{A^\prime}$, whose purification with reference $R$ is represented as $\Ket{\psi}^{RAA^\prime}$.
Assume that $A$ and $B$ can freely perform local operations and classical communication (LOCC) assisted by a maximally entangled resource state initially shared between $\mathcal{H}^{\overline{A}}$ of $A$ and $\mathcal{H}^{\overline{B}}$ of $B$, that is,
\begin{equation}
  \Ket{\Phi^+_K}^{\overline{A} \overline{B}}\coloneq\frac{1}{\sqrt{K}}\sum_{l=0}^{K-1}\Ket{l}^{\overline{A}} \otimes\Ket{l}^{\overline{B}},
\end{equation}
where $K$ denotes the Schmidt rank of this resource state.
Note that $A$ and $B$ cannot perform any operation on $\mathcal{H}^R$.
State splitting of $\Ket{\psi}^{RAA^\prime}$ aims to transfer a part of $\Ket{\psi}^{RAA^\prime}$ corresponding to $\mathcal{H}^{A^\prime}$ from $A$ to $B$, keeping coherence between $R$ and $AB$, to obtain $\Ket{\psi}^{RAB}$, where $\mathcal{H}^B$ is $B$'s system corresponding to $\mathcal{H}^{A^\prime}$.

In the same way as exact state merging,
a simple one-shot scenario of state splitting is that requiring zero error.
This task is called \textit{exact state splitting}, illustrated in Figure~\ref{fig:split}, and defined as follows.
Note that it suffices to consider no catalytic use of entanglement in exact state splitting.

\begin{figure}[t!]
  \centering
  \includegraphics[width=4in]{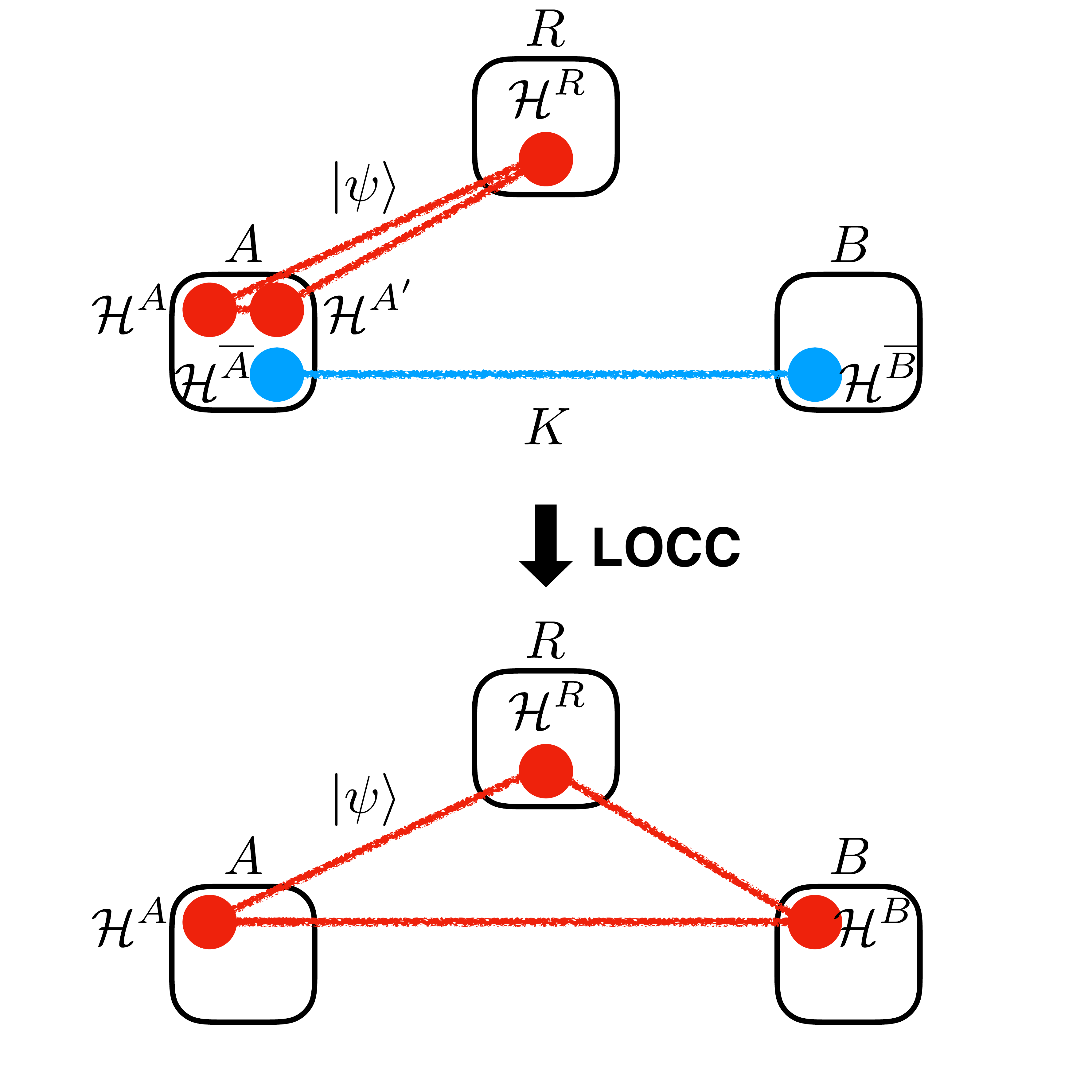}
  \caption[Exact state splitting]{\label{fig:split}Exact state splitting of a given state $\Ket{\psi}^{RAA^\prime }$ denoted by the red circles. Parties $A$ and $B$ perform LOCC assisted by a maximally entanglement resource state $\Ket{\Phi_K^+}^{\overline{AB}}$ with the Schmidt rank $K$ denoted by the blue circles to transfer the reduced state $\psi^{A^\prime }$ from $A$ to $B$ and obtain $\Ket{\psi}^{RAB}$. }
\end{figure}

\begin{definition}
\label{def:splitting}
    \textit{Exact state splitting.}
    Exact state splitting of a purified given state $\Ket{\psi}^{R A A^\prime }$ is a task for parties $A$ and $B$ to achieve a transformation
    \begin{equation}
        \id^R \otimes\mathcal{S}\left({\psi}^{RAA^\prime }\otimes{\Phi^+_K}^{\overline{A}\overline{B}}\right)
        ={\psi}^{RAB}
    \end{equation}
    by an LOCC map
    \begin{equation}
      \mathcal{S}: \mathcal{B}\left(\mathcal{H}^A\otimes\mathcal{H}^{A^\prime }\otimes\mathcal{H}^{\overline{A}}\otimes\mathcal{H}^{\overline{B}}\right) \to \mathcal{B}\left(\mathcal{H}^A\otimes\mathcal{H}^B\otimes\mathcal{H}^{\overline{A}}\otimes\mathcal{H}^{\overline{B}}\right),
    \end{equation}
    which can be constructed depending on the classical description of $\Ket{\psi}^{RAA^\prime}$.
    Given a protocol for exact state splitting, entanglement cost of the protocol is defined as $\log_2 K$.
\end{definition}

If
\begin{equation}
  \log_2 K\geqq\log_2\dim\mathcal{H}^{A^\prime},
\end{equation}
there exists a trivial protocol for state splitting by quantum teleportation to transfer $\psi^{A^\prime }$ from $A$ to $B$.
In contrast, there are cases where classical description of $\Ket{\psi}^{RAA^\prime}$ can be used for reducing the entanglement cost.
The following theorem shows the minimal entanglement cost in exact state splitting and a protocol achieving the minimal entanglement cost.

\begin{theorem}
\label{thm:split}
    \textit{Optimal entanglement cost in exact state splitting.}
    Given any $\Ket{\psi}^{RAA^\prime }$,
    exact state splitting of $\Ket{\psi}^{RAA^\prime }$ is achievable if and only if
    \begin{equation}
        \log_2 K \geqq \log_2 \rank \psi^{A^\prime }.
    \end{equation}
\end{theorem}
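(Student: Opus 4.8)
The plan is to prove the two directions separately, treating the achievability (``if'') constructively and the converse (``only if'') via the LOCC monotonicity of the Schmidt rank. Throughout write $r\coloneq\rank\psi^{A^\prime}$, and note that $r$ equals the Schmidt rank of $\Ket{\psi}^{RAA^\prime}$ across the bipartition $\mathcal{H}^{A^\prime}$ versus $\mathcal{H}^R\otimes\mathcal{H}^A$. I fix the corresponding Schmidt decomposition
\begin{equation}
  \Ket{\psi}^{RAA^\prime}=\sum_{l=0}^{r-1}\sqrt{\mu_l}\Ket{\chi_l}^{RA}\otimes\Ket{e_l}^{A^\prime},
\end{equation}
so that ${\left\{\Ket{e_l}^{A^\prime}\right\}}_l$ is an orthonormal basis of $\supp\left(\psi^{A^\prime}\right)$.

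For achievability, suppose $\log_2 K\geqq\log_2 r$, equivalently $K\geqq r$. First $A$ applies the local isometry $W^{A^\prime\to C}$ onto an $r$-dimensional system $\mathcal{H}^C=\mathbb{C}^r$ defined by $W\Ket{e_l}^{A^\prime}=\Ket{l}^C$, compressing the part to be transferred into a qudit of dimension exactly $r$. Since $K\geqq r$, Lemma~\ref{lem:pure_convertibility} gives $\Phi_K^+\xrightarrow{\textup{LOCC}}\Phi_r^+$, because the flat spectrum of $\Ket{\Phi_K^+}$ is majorized by that of $\Ket{\Phi_r^+}$; distilling $\Ket{\Phi_r^+}^{\overline{A}\overline{B}}$ and teleporting~\cite{B5} the system $\mathcal{H}^C$ from $A$ to $B$ moves the compressed part to $B$ exactly. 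Finally $B$ applies the inverse isometry $\Ket{l}^C\mapsto\Ket{e_l}^B$ under the identification $\mathcal{H}^{A^\prime}\cong\mathcal{H}^B$, reconstructing $\Ket{\psi}^{RAB}$. As every step is exact, this realizes exact state splitting at entanglement cost $\log_2 K$.

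For the converse, I would track the Schmidt rank across the cut separating $B$'s laboratory from everything else, assigning the reference $R$ to the complementary side. Set $\mathcal{P}_A\coloneq\left\{R,A,A^\prime,\overline{A}\right\}$ and $\mathcal{P}_B\coloneq\left\{\overline{B}\right\}$ for the input, and $\mathcal{P}_A=\left\{R,A,\overline{A}\right\}$, $\mathcal{P}_B=\left\{B,\overline{B}\right\}$ for the output. The input $\psi^{RAA^\prime}\otimes{\Phi_K^+}^{\overline{A}\overline{B}}$ is a product of a state supported entirely on $\mathcal{P}_A$ and the resource spanning the cut, so its Schmidt rank across $\mathcal{P}_A|\mathcal{P}_B$ is $K$. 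The target output is the pure state $\Ket{\psi}^{RAB}$, whose purity forces $\overline{A}\overline{B}$ to decouple into a product state; here the absence of any residual maximally entangled state—exact state splitting being non-catalytic—ensures that $\overline{A}$ and $\overline{B}$ contribute no entanglement across the cut, so the output Schmidt rank across the same bipartition is $\rank\psi^B=r$. Because the splitting map is LOCC between $A$ and $B$, with every operation of $A$ acting only within $\mathcal{P}_A$, every operation of $B$ only within $\mathcal{P}_B$, and $R$ untouched, the transformation is LOCC with respect to this bipartition; the LOCC monotonicity of the Schmidt rank~\cite{L2} then yields $r\leqq K$, that is, $\log_2 K\geqq\log_2\rank\psi^{A^\prime}$.

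The main obstacle I anticipate is the converse bookkeeping rather than any deep inequality: one must choose the bipartition so that the reference $R$ and the registers $\overline{A},\overline{B}$ are grouped correctly, and must justify that the exact output being pure on $RAB$ forces $\overline{A}\overline{B}$ to carry Schmidt rank $1$ across the cut, so that the monotone compares $K$ against exactly $r$. A secondary point to state explicitly is that the local dimensions on each side of the cut change during the protocol—the system $A^\prime$ leaves $A$ while $B$ appears on $B$'s side—which is harmless because Schmidt-rank monotonicity under LOCC persists when local ancillas are appended or discarded.
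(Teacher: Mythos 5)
Your proposal is correct and follows essentially the same route as the paper: the achievability direction compresses $\mathcal{H}^{A^\prime}$ onto its support via a local isometry and teleports the resulting $r$-dimensional system, and the converse applies the LOCC monotonicity of the Schmidt rank across the cut between $B$'s laboratory and $R,A,\overline{A}$. Your extra care in the converse bookkeeping (grouping $R$ with $A$'s side and noting the output carries no residual entanglement on $\overline{A}\overline{B}$) matches what the paper's proof implicitly uses.
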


\begin{proof}
    \textit{If part}:
    The proof is by construction, and an LOCC protocol achieving
    \begin{equation}
        \label{eq:split_upper}
        \log_2 K = \log_2 \rank \psi^{A^\prime }
    \end{equation}
    is constructed.
    Note that the trivial protocol, that is, quantum teleportation of $\psi^{A^\prime}$, requires entanglement cost $\log_2 K = \log_2\dim\mathcal{H}^{A^\prime }$,
    and the protocol achieving Equation~\eqref{eq:split_upper} outperforms this trivial protocol when $\psi^{A^\prime}$ is not a full-rank state, that is, $\rank\psi^{A^\prime }<\dim\mathcal{H}^{A^\prime}$; \textit{e.g.},
    when $\psi^{A^\prime}$ is locally represented as a code state of a quantum error correcting code~\cite{G,D,T10,B} using a larger-dimensional system $\mathcal{H}^{A^\prime}$ than the rank of $\psi^{A^\prime}$.

    To achieve Equation~\eqref{eq:split_upper}, a method for compressing $\psi^{A^\prime }$ is provided.
    Consider the Schmidt decomposition of the given state $\Ket{\psi}^{RAA^\prime }$ with respect to the bipartition between $\mathcal{H}^R\otimes\mathcal{H}^A$ and $\mathcal{H}^{A^\prime }$, that is,
    \begin{equation}
        \Ket{\psi}^{RAA^\prime }=\sum_{l\in R_{\psi}} \sqrt{\lambda_l^\psi}\Ket{l}^{RA}\otimes\Ket{l}^{A^\prime },
    \end{equation}
    where $R_{\psi}\coloneq\left\{0,\ldots,\rank \psi^{A^\prime }-1\right\}$, each $\sqrt{\lambda_l^\psi}>0$ is a nonzero Schmidt coefficient, and ${\left\{\Ket{l}^{RA}: l\in R_{\psi}\right\}}$ and ${\left\{\Ket{l}^{A^\prime }: l\in R_{\psi}\right\}}$ are subsets of the Schmidt bases of $\mathcal{H}^R\otimes\mathcal{H}^A$ and $\mathcal{H}^{A^\prime }$, respectively, corresponding to the nonzero Schmidt coefficients.
    Let $\mathcal{H}^{A^{\prime\prime}}$ be $A$'s auxiliary system satisfying
    \begin{equation}
      \dim\mathcal{H}^{A^{\prime\prime}}=\rank \psi^{A^\prime },
    \end{equation}
    and ${\left\{\Ket{l}^{A^{\prime\prime}}: l\in R_{\psi}\right\}}$ be the computational basis of $\mathcal{H}^{A^{\prime\prime}}$.
    Consider an isometry $U_\textup{split}$ from $\mathcal{H}^{A^\prime }$ to $\mathcal{H}^{A^{\prime\prime}}$ satisfying for each $l\in R_{\psi}$
    \begin{equation}
      \Ket{l}^{A^{\prime\prime}}=U_\textup{split}\Ket{l}^{A^\prime }.
    \end{equation}
    By performing $U_\textup{split}$, $\psi^{A^\prime }$ is compressed into a state on $\mathcal{H}^{A^{\prime\prime}}$, that is,
    \begin{equation}
        \begin{split}
          \Ket{\psi'}^{RAA^{\prime\prime}}&\coloneq\mathbb{1}^{RA}\otimes U_\textup{split}\Ket{\psi}^{RAA^\prime }\\
                             &=\sum_{l\in R_{\psi}} \sqrt{\lambda_l^\psi}\Ket{l}^{RA}\otimes\Ket{l}^{A^{\prime\prime}}.
        \end{split}
    \end{equation}
    By performing $U_\textup{split}^\dag$, the given state $\Ket{\psi}$ can be recovered from the compressed state $\Ket{\psi'}$.

    The LOCC protocol achieving Equation~\eqref{eq:split_upper} is as follows.
    First, $A$ performs $U_\textup{split}$ to transform the given state $\Ket{\psi}^{RAA^\prime }$ into the compressed state $\Ket{\psi'}^{RAA^{\prime\prime}}$.
    Next, the reduced state ${\psi'}^{A^{\prime\prime}}$ is sent from $A$ to $B$ by quantum teleportation using the resource state satisfying Equation~\eqref{eq:split_upper}.
    After performing quantum teleportation, $B$ performs $U_\textup{split}^\dag$ on the system for the received state to recover $\Ket{\psi}^{RAB}$.

    \textit{Only if part}:
    The proof uses the LOCC monotonicity of the Schmidt rank.
    The Schmidt rank of $\Ket{\psi}^{RAA^\prime }\otimes\Ket{\Phi^+_K}^{\overline{A}\overline{B}}$ between the party $B$ and the other parties $R$ and $A$ is $K$.
    After performing an LOCC map $\id^R\otimes\mathcal{S}$, the Schmidt rank of $\Ket{\psi}^{RAB}$ between the party $B$ and the other parties $R$ and $A$ is $\rank\psi^{A^\prime }$.
    Since the Schmidt rank of pure states is monotonically nonincreasing under LOCC,
    it holds that
    \begin{equation}
      K\geqq\rank\psi^{A^\prime }.
    \end{equation}
    Therefore, it holds that
    \begin{equation}
      \log_2 K \geqq \log_2 \rank \psi^{A^\prime }.
    \end{equation}
\end{proof}

\chapter{\label{sec:merge}Quantum state merging for arbitrarily small-dimensional systems}

This chapter investigates general bounds of entanglement cost required for one-shot state merging defined in Chapter~\ref{sec:preliminaries_1} and illustrated in Figure~\ref{fig:merge}, aiming at achieving this task on the small and intermediate scales.
Sections~\ref{sec:merge_achievability_exact} and~\ref{sec:achievability_approximate} construct protocols for exact state merging and approximate state merging, respectively, so that these protocols can be applied to arbitrarily small-dimensional systems.
Sections~\ref{sec:converse} and~\ref{sec:converse_approximate} provides improved converse bounds of entanglement cost in exact state merging and approximate state merging, respectively.
Implications are discussed in Section~\ref{sec:examples}.

\section{\label{sec:merge_achievability_exact}Achievability bound for exact state merging}

In this section, protocols for exact state merging are constructed, which are applicable to any state of an arbitrarily small-dimensional system.
To construct the protocols, the Koashi-Imoto decomposition introduced in Section~\ref{sec:decomposition} is used.
Given any state $\Ket{\psi}^{RAB}$,
Lemma~\ref{lem:koashi_imoto_decomposition_tripartite} implies that there exists a unique decomposition of $\mathcal{H}^A$ and $\supp\left(\psi^B\right)$
\begin{align}
  \label{eq:notation_space}
  \mathcal{H}^A=\bigoplus_{j=0}^{J-1}\mathcal{H}^{a_j^\textup{L}}\otimes\mathcal{H}^{a_j^\textup{R}},\quad
  \supp\left(\psi^B\right)=\bigoplus_{j=0}^{J-1}\mathcal{H}^{b_j^\textup{L}}\otimes\mathcal{H}^{b_j^\textup{R}},
\end{align}
such that $\Ket{\psi}^{RAB}$ is uniquely decomposed into
\begin{equation}
  \label{eq:notation_state}
  \Ket{\psi}^{RAB}=\bigoplus_{j=0}^{J-1}\sqrt{p\left(j\right)}\Ket{\omega_j}^{a_j^\textup{L} b_j^\textup{L}}\otimes\Ket{\phi_j}^{R a_j^\textup{R} b_j^\textup{R}},
\end{equation}
where $p\left(j\right)$ is a probability distribution.
Using this Koashi-Imoto decomposition,
a protocol for exact state merging can be constructed, as shown in the following theorem.

\begin{theorem}
\label{thm:merge}
    \textit{An achievability bound of entanglement cost in exact state merging applicable to arbitrarily small-dimensional systems.}
    Given any $\Ket{\psi}^{RAB}$ and any $\delta > 0$,
    there exists a protocol for exact state merging of $\Ket{\psi}^{RAB}$ achieving
    \begin{equation}
        \label{eq:merge_cost}
        \log_2 K-\log_2 L \leqq \max_{j}\left\{\log_2\left(\lambda_0^{a_j^\textup{L}}\dim\mathcal{H}^{a_j^\textup{R}}\right)\right\} + \delta,
    \end{equation}
    where $\lambda_0^{a_j^\textup{L}}$ is the largest eigenvalue of
    \begin{equation}
      \omega_j^{a_j^\textup{L}}=\tr_{b_j^\textup{L}}\Ket{\omega_j}\Bra{\omega_j}^{a_j^\textup{L} b_j^\textup{L}},
    \end{equation}
    and the other notations are the same as those in Equations~\eqref{eq:notation_space} and~\eqref{eq:notation_state}.
\end{theorem}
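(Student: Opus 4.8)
The plan is to prove the bound by explicit construction of an LOCC protocol that exploits the Koashi-Imoto decomposition in Equations~\eqref{eq:notation_space} and~\eqref{eq:notation_state}. The central observation is that within each block $j$ the state factorizes as $\Ket{\omega_j}^{a_j^\textup{L} b_j^\textup{L}}\otimes\Ket{\phi_j}^{R a_j^\textup{R} b_j^\textup{R}}$, where the left factor $\Ket{\omega_j}$ is a pure state shared between $A$ and $B$ carrying \emph{no} correlation with the reference $R$, while the right factor $\Ket{\phi_j}$ carries all of the $R$-correlation and has $\phi_j^{a_j^\textup{R}}$ of full rank on $\mathcal{H}^{a_j^\textup{R}}$ by maximality of the decomposition. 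Accordingly, I would treat the two factors by different mechanisms: the $a_j^\textup{R}$ part must genuinely be transmitted, whereas the $a_j^\textup{L}$ part can be relocated to $B$ at a \emph{net gain} of entanglement.

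Concretely, for a single fixed block $j$ I would first transfer $a_j^\textup{R}$ from $A$ to $B$ by quantum teleportation, consuming $\log_2\dim\mathcal{H}^{a_j^\textup{R}}$ ebits and reconstructing $\Ket{\phi_j}^{R (a_j^\textup{R})' b_j^\textup{R}}$ at $B$, and then merge the pure bipartite state $\Ket{\omega_j}^{a_j^\textup{L} b_j^\textup{L}}$ using Nielsen's majorization criterion (Lemma~\ref{lem:pure_convertibility}). For the merging step, since the target places $\Ket{\omega_j}$ entirely at $B$, the relevant transformation is $\Ket{\omega_j}^{a_j^\textup{L} b_j^\textup{L}}\otimes\Ket{\Phi_{K'}^+}^{\overline{A}\overline{B}}\to\Ket{\omega_j}^{(a_j^\textup{L})' b_j^\textup{L}}\otimes\Ket{\Phi_{L}^+}^{\overline{A}\overline{B}}$, and the majorization condition reduces to the single inequality $\lambda_0^{a_j^\textup{L}}/K'\leqq 1/L$ on the largest eigenvalue, because a probability vector is majorized by the uniform distribution on $L$ points precisely when its maximal entry is at most $1/L$. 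Choosing $L$ as large as this inequality permits yields net entanglement cost $\log_2\lambda_0^{a_j^\textup{L}}$ up to an arbitrarily small rounding term, so the combined cost for block $j$ is $\log_2(\lambda_0^{a_j^\textup{L}}\dim\mathcal{H}^{a_j^\textup{R}})$ plus a contribution that can be made smaller than $\delta$ by taking the resource dimension large.

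The remaining and main difficulty is to run these per-block operations \emph{coherently} over the superposition $\sum_j\sqrt{p(j)}(\cdots)$ without decohering the block index $j$, which is in general correlated with $R$. Since the blocks are supported on mutually orthogonal subspaces of both $\mathcal{H}^A$ and $\supp(\psi^B)$, the value of $j$ is simultaneously accessible to $A$ and $B$, so they can perform a $j$-controlled protocol $\bigoplus_j\mathcal{M}_j$. The key constraint is that the shared resource must enter as a single $\Ket{\Phi_K^+}^{\overline{A}\overline{B}}$ and exit as a single $\Ket{\Phi_L^+}^{\overline{A}\overline{B}}$ with $K$ and $L$ \emph{independent of $j$}; otherwise the resource register would record which block occurred and destroy the coherence with $R$. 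I would therefore fix the common net cost to the worst block, $\log_2 K-\log_2 L=\max_j\{\log_2(\lambda_0^{a_j^\textup{L}}\dim\mathcal{H}^{a_j^\textup{R}})\}+\delta$, and verify that every block is realizable with this common $(K,L)$: each block is achievable at any net cost at least $\log_2(\lambda_0^{a_j^\textup{L}}\dim\mathcal{H}^{a_j^\textup{R}})$, which the common value satisfies, and any surplus is harmless unused (catalytic) entanglement. I expect the most delicate points to be, first, implementing teleportation and the Nielsen transformation coherently so that all measurement outcomes are uniformly random or corrected away and leave no $j$-dependent residue, and second, checking that with the common $(K,L)$ the outputs of all branches coincide exactly on the resource system, so that $\Ket{\Phi_L^+}^{\overline{A}\overline{B}}$ factors out and the output is precisely $\Ket{\psi}^{RB'B}\otimes\Ket{\Phi_L^+}^{\overline{A}\overline{B}}$; the $+\delta$ absorbs the integer rounding of Schmidt ranks, controlled by choosing $K$ large.
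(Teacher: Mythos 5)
Your overall route is the same as the paper's: decompose $\Ket{\psi}^{RAB}$ via Koashi--Imoto, teleport the quantum part $a_j^{\textup{R}}$ at cost $\log_2\dim\mathcal{H}^{a_j^{\textup{R}}}$, distill entanglement from the redundant part $\Ket{\omega_j}^{a_j^{\textup{L}}b_j^{\textup{L}}}$ via Nielsen's majorization criterion with the single binding inequality $\lambda_0^{a_j^{\textup{L}}}/K'\leqq 1/L$, equalize the per-block costs to the worst block so that a common $(K,L)$ works in every branch, and absorb the rounding into $\delta$. This matches the paper's Subprocesses~1 and~2 and its choice of $K_j\,,L_j$ with $D^{a_j^{\textup{R}}}K_j/L_j$ constant in $j$ and $K$ the least common multiple of the $D^{a_j^{\textup{R}}}K_j$ (the only cosmetic difference is that $\delta$ arises from approximating $\lambda_0^{a_{j_0}^{\textup{L}}}$ by a rational, not from rounding Schmidt ranks). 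Your per-block cost accounting is correct.

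The genuine gap is the step you flag as ``delicate'' but do not resolve: how $A$ disposes of its copy of the block index $j$ coherently and at zero entanglement cost. A $j$-controlled protocol $\bigoplus_j\mathcal{M}_j$ is not enough, because after each branch has teleported $a_j^{\textup{R}}$ and consumed $a_j^{\textup{L}}$, the which-block degree of freedom itself (the register $\mathcal{H}^{a_0}$ in the isometry picture) is still quantum information at $A$ that is correlated with $R$ through the $j$-dependence of $\Ket{\phi_j}$; if $A$ measures it projectively in the $\{\Ket{j}\}$ basis or simply discards it, the superposition over $j$ decoheres and the output is no longer $\Ket{\psi}^{RB'B}$. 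The paper's resolution (Subprocess~3) is to measure $\mathcal{H}^{a_0}$ in the Fourier basis conjugate to $\{\Ket{j}^{a_0}\}$, communicate the outcome $m_3\,$, and have $B$ undo the resulting $j$-dependent phase $\exp(\textup{i}\pi jm_3/J)$ using its own perfectly correlated copy $\Ket{j}^{b_0}$ --- this is exactly what makes the classical part mergeable for free, and it is also what forces the three subprocesses to be combined as a single measurement $M_{m_1,m_2,m_3}=\sum_j[\exp(-\textup{i}\pi jm_3/J)\Bra{j}^{a_0}]\otimes[\Bra{\Phi_{j,m_2}}U'_jM_{j,m_1}]$ whose outcome statistics carry no information about $j$. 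Without this ingredient (or an equivalent one) your construction does not close, so you should add it explicitly and verify the completeness relation $\sum_{m_1,m_2,m_3}M_{m_1,m_2,m_3}^\dag M_{m_1,m_2,m_3}=\mathbb{1}$ for the combined measurement.
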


As for the non-catalytic setting of exact state merging, entanglement cost $\log_2 K$ for the initially shared maximally entangled resource state can be reduced compared to $\log_2 K$ in the catalytic setting in Theorem~\ref{thm:merge}.
Note that $\log_2 K$ in the non-catalytic setting may be more than the net entanglement cost $\log_2 K - \log_2 L$ in the catalytic setting in Theorem~\ref{thm:merge}.

\begin{theorem}
\label{thm:merge_without_catalyst}
    \textit{An achievability bound of entanglement cost in the non-catalytic setting of exact state merging applicable to arbitrarily small-dimensional systems.}
    Given any $\Ket{\psi}^{RAB}$,
    there exists a protocol for exact state merging of $\Ket{\psi}^{RAB}$ in the non-catalytic setting achieving
    \begin{equation}
        \label{eq:merge_without_catalyst_cost}
        \log_2 K \leqq \max_{j}\left\{\log_2\left\lceil\lambda_0^{a_j^\textup{L}}\dim\mathcal{H}^{a_j^\textup{R}}\right\rceil\right\},
    \end{equation}
    where $\lceil{}\cdots{}\rceil$ is the ceiling function, and the other notations are the same as those in Theorem~\ref{thm:merge}.
\end{theorem}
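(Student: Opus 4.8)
The plan is to construct an explicit one-way LOCC protocol from the Koashi--Imoto decomposition in Equations~\eqref{eq:notation_space} and~\eqref{eq:notation_state}, mirroring the construction behind Theorem~\ref{thm:merge} but fixing $\log_2 L=0$ and rounding the resource dimension up to an integer. Abbreviating $d_j\coloneq\dim\mathcal{H}^{a_j^\textup{R}}$, I would take the shared resource to be a maximally entangled state of Schmidt rank $K=\max_j\lceil\lambda_0^{a_j^\textup{L}}d_j\rceil$ and exhibit a one-way LOCC map that merges $\Ket{\psi}^{RAB}$ exactly with this resource.

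The mechanism rests on the opposite roles of the two tensor factors inside each block $j$. The factor $\Ket{\omega_j}^{a_j^\textup{L}b_j^\textup{L}}$ is a fixed, classically known pure state completely decoupled from the reference $R$; hence after the transfer the receiver can regenerate it from scratch on fresh local systems, and its shared copy is free to be spent as an entanglement resource. The factor $\Ket{\phi_j}^{Ra_j^\textup{R}b_j^\textup{R}}$ is genuinely correlated with $R$, so its $A$-part $a_j^\textup{R}$ must be transferred faithfully, which I would do by teleporting it through a maximally entangled state of Schmidt rank $d_j$. The crux is to manufacture that rank-$d_j$ state out of the given $\Ket{\omega_j}^{a_j^\textup{L}b_j^\textup{L}}$ together with the external resource: by Lemma~\ref{lem:pure_convertibility} this conversion is possible if and only if the reduced operator $\omega_j^{a_j^\textup{L}}\otimes\frac{1}{K}\mathbb{1}$ is majorized by $\frac{1}{d_j}\mathbb{1}$. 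Since a normalized state is majorized by the flat distribution on $d_j$ levels exactly when its largest eigenvalue does not exceed $1/d_j$, this reduces to the single inequality $\lambda_0^{a_j^\textup{L}}/K\leqq 1/d_j$, i.e.\ $K\geqq\lambda_0^{a_j^\textup{L}}d_j$. Integrality of the Schmidt rank upgrades this to $K\geqq\lceil\lambda_0^{a_j^\textup{L}}d_j\rceil$, and the single choice $K=\max_j\lceil\lambda_0^{a_j^\textup{L}}d_j\rceil$ then works simultaneously for every block; any entanglement left unused in a given block is simply dissipated, so no residual maximally entangled state must be retained and $\log_2 L=0$ is consistent.

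The hard part will be performing these block-dependent steps coherently, without ever measuring the block index $j$ and thereby collapsing the superposition over $j$ in Equation~\eqref{eq:notation_state} that $R$ can witness. I would implement the Nielsen conversion, the teleportation, and the regeneration of $\Ket{\omega_j}$ as block-diagonal operations on the orthogonal summands $\mathcal{H}^{a_j^\textup{L}}\otimes\mathcal{H}^{a_j^\textup{R}}$ and $\mathcal{H}^{b_j^\textup{L}}\otimes\mathcal{H}^{b_j^\textup{R}}$, so that they commute with the block label and cannot decohere it. The genuinely delicate point is that the internal Nielsen step and teleportation produce classical outcomes followed by conditional corrections; I must choose the block-diagonal measurement operators to share a common outcome labelling across all $j$, and exploit that teleportation outcomes are uniform and independent of the transferred data, so that announcing an outcome from $A$ to $B$ and applying the $j$-controlled correction reveals no which-block information. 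Getting this bookkeeping exactly right, so that the announced classical data is statistically decoupled from $j$, is where the real work lies.

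For the verification I would invoke Proposition~\ref{prp:max}, by which exact merging of $\Ket{\psi}^{RAB}$ is equivalent to exact merging of its maximally entangled representative $\Ket{\Phi_D^+\left(\psi\right)}^{RAB}$ of Equation~\eqref{eq:max} by the same LOCC map; since the construction manipulates only the known factors $\Ket{\omega_j}$ and the resource and otherwise performs universal teleportation, it is manifestly independent of the Schmidt coefficients. Granting the coherence argument, correctness then follows block-by-block from the exactness of Nielsen's transformation and of teleportation. The only difference from Theorem~\ref{thm:merge} is that, without a catalyst to absorb the integrality and the global maximum over blocks, each block must be supplied with the full rank $K=\max_j\lceil\lambda_0^{a_j^\textup{L}}\dim\mathcal{H}^{a_j^\textup{R}}\rceil$, which is exactly why the non-catalytic cost carries the ceiling and the outer maximum in place of the slack $\delta$.
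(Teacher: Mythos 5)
Your construction is essentially the paper's: you use the Koashi--Imoto decomposition, convert the redundant factor $\Ket{\omega_j}^{a_j^\textup{L}b_j^\textup{L}}$ together with the full external resource $\Ket{\Phi_K^+}$ into a maximally entangled state of Schmidt rank $d_j=\dim\mathcal{H}^{a_j^\textup{R}}$ via Lemma~\ref{lem:pure_convertibility}, teleport the quantum factor through it, have $B$ regenerate $\Ket{\omega_j}$ locally, and combine the blocks with controlled operations. The majorization bookkeeping ($\lambda_0^{a_j^\textup{L}}/K\leqq 1/d_j$, hence $K\geqq\lceil\lambda_0^{a_j^\textup{L}}d_j\rceil$, with the outer maximum making one $K$ serve all blocks) is exactly the paper's modified Subprocess~1, and your explanation of why the ceiling and the maximum replace the catalytic slack $\delta$ is correct.

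The one genuine gap is the classical part. Purely block-diagonal operations that ``commute with the block label'' leave $A$ holding a register carrying the block index $j$ in coherent superposition with $R$ and $B$; at the end of your protocol the state is of the form $\sum_j\sqrt{p(j)}\Ket{j}^{a_0}\otimes\Ket{j}^{b_0}\otimes(\cdots)$ with $\Ket{j}^{a_0}$ still at $A$, so the merge is not complete. It is not enough to avoid decohering $j$; the block-index register itself must be transferred to $B$. The paper does this (Subprocess~3 of Theorem~\ref{thm:merge}, reused verbatim here) by measuring $\mathcal{H}^{a_0}$ in the Fourier basis $\Ket{m_3}^{a_0}=\sum_j\exp(\textup{i}\pi jm_3/J)\Ket{j}^{a_0}$ and having $B$ undo the resulting $j$-dependent phases using its own copy $\Ket{j}^{b_0}$. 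Relatedly, you locate the residual difficulty in making the announced outcomes statistically decoupled from $j$; that part is automatic once the block-diagonal measurement operators share a common outcome labelling and satisfy completeness (as the paper verifies explicitly), whereas the step you are actually missing is this coherent disposal of $a_0$. Your aside that leftover entanglement ``is simply dissipated'' is also slightly dangerous in principle---a $j$-dependent residual would itself decohere the block index---but it is moot here because the conversion target rank $d_j$ is exactly what teleportation consumes, so nothing is left over.
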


In the following, the proofs of Theorems~\ref{thm:merge} and~\ref{thm:merge_without_catalyst} are provided.

\begin{proof}[Proof of Theorem~\ref{thm:merge}]
    The proof is by construction, and a protocol for exact state merging of $\Ket{\psi}^{RAB}$ achieving Inequality~\eqref{eq:merge_cost} is shown in the following.
    Define
    \begin{align}
      j_0&\coloneq\argmax_{j}\left\{\log_2\left(\lambda_0^{a_j^\textup{L}}\dim\mathcal{H}^{a_j^\textup{R}}\right)\right\},\\
      D^{a_j^\textup{R}}&\coloneq\dim\mathcal{H}^{a_j^\textup{R}}\quad
        \text{for each $j\in\left\{0,\ldots,J-1\right\}$.}
    \end{align}

    The protocol uses the Koashi-Imoto decomposition in the following tensor-product form of the Koashi-Imoto decomposition of $\Ket{\psi}^{RAB}$, which is equivalent to that shown in Lemma~\ref{lem:koashi_imoto_decomposition_tripartite} as well as Equations~\eqref{eq:notation_space} and~\eqref{eq:notation_state}.
    Given the Koashi-Imoto decomposition of $\Ket{\psi}^{RAB}$ in the form of Equation~\eqref{eq:notation_state}, this decomposition can also be written using auxiliary systems $\mathcal{H}^{a_0}$ and $\mathcal{H}^{b_0}$ as
    \begin{equation}
      \label{eq:ki_tripartite_isometry}
      \begin{split}
        \left(\mathbb{1}^R\otimes U^A \otimes U^B\right) \Ket{\psi}^{RAB}
        =\sum_{j=0}^{J-1}\sqrt{p\left(j\right)}\Ket{j}^{a_0}\otimes\Ket{j}^{b_0}\otimes\Ket{\omega_j}^{a^L b^L}\otimes\Ket{\phi_j}^{R a^R b^R},
      \end{split}
    \end{equation}
    where $\mathcal{H}^{a_0}$, $\mathcal{H}^{b_0}$, $\mathcal{H}^{a^L}$, $\mathcal{H}^{b^L}$, $\mathcal{H}^{a^R}$, and $\mathcal{H}^{b^R}$ satisfy
    \begin{align}
      \dim\mathcal{H}^{a_0}&=J,\\
      \dim\mathcal{H}^{b_0}&=J,\\
      \dim\mathcal{H}^{a^L}&=\max_j\left\{\dim\mathcal{H}^{a_j^L}\right\},\\
      \dim\mathcal{H}^{b^L}&=\max_j\left\{\dim\mathcal{H}^{b_j^L}\right\},\\
      \dim\mathcal{H}^{a^R}&=\max_j\left\{\dim\mathcal{H}^{a_j^R}\right\},\\
      \dim\mathcal{H}^{b^R}&=\max_j\left\{\dim\mathcal{H}^{b_j^R}\right\},
    \end{align}
    $U^A$ is an isometry from $\mathcal{H}^A$ to $\mathcal{H}^{a_0}\otimes\mathcal{H}^{a^L}\otimes\mathcal{H}^{a^R}$,
    $U^B$ is an isometry from $\mathcal{H}^B$ to $\mathcal{H}^{b_0}\otimes\mathcal{H}^{b^L}\otimes\mathcal{H}^{b^R}$,
    and ${\{\Ket{j}^{a_0}:j=0,\ldots,J-1\}}$ and ${\{\Ket{j}^{b_0}:j=0,\ldots,J-1\}}$ are the computational basis of $\mathcal{H}^{a_0}$ and $\mathcal{H}^{b_0}$, respectively.
    As stressed in Reference~\cite{K3},
    information on $\psi^A$ is encoded in three parts of the Koashi-Imoto decomposition in Equation~\eqref{eq:ki_tripartite_isometry}, namely, $\mathcal{H}^{a_0}$, $\mathcal{H}^{a^\textup{R}}$, and $\mathcal{H}^{a^\textup{L}}$, which can be regarded as the classical part, the quantum part, and the redundant part, respectively.
    In the rest of the proof, the following three subprocesses are presented:
    \begin{enumerate}
      \item Entanglement distillation from the \textit{redundant} part;
      \item Quantum teleportation of the \textit{quantum} part;
      \item Coherent merging of the \textit{classical} part by a measurement.
    \end{enumerate}
    Then, these three subprocesses are combined using controlled measurements and controlled isometries, which are controlled by computational-basis states of $\mathcal{H}^{a_0}$ and $\mathcal{H}^{b_0}$.

    \textit{Subprocess~1: Entanglement distillation from the redundant part.}
    Due to the continuity of $\log_2\,$, there exists a rational number $\tilde{\lambda}_0^{a_{j_0}^\textup{L}}\in\mathbb{Q}$ such that
    \begin{equation}
        \log_2\left(\lambda_{0}^{a_{j_0}^\textup{L}}D^{a^\textup{R}_{j_0}}\right)
        \leqq\log_2\left(\tilde{\lambda}_{0}^{a_{j_0}^\textup{L}}D^{a^\textup{R}_{j_0}}\right)
        \leqq\log_2\left(\lambda_{0}^{a_{j_0}^\textup{L}}D^{a^\textup{R}_{j_0}}\right)+\delta.
    \end{equation}
    Thus, for any $j\in\left\{0,\ldots,J-1\right\}$, it holds that
    \begin{equation}
        \lambda_{0}^{a_{j}^\textup{L}}D^{a^\textup{R}_j}
        \leqq\lambda_{0}^{a_{j_0}^\textup{L}}D^{a^\textup{R}_{j_0}}
        \leqq\tilde{\lambda}_{0}^{a_{j_0}^\textup{L}}D^{a^\textup{R}_{j_0}}.
    \end{equation}
    Hence, it is obtained that
    \begin{equation}
        \lambda_0^{a_{j}^\textup{L}}\leqq\frac{D^{a^\textup{R}_{j_0}}}{D^{a^\textup{R}_j}}{\tilde{\lambda}_0^{a_{j_0}^\textup{L}}},
    \end{equation}
    and since $\tilde{\lambda}_0^{a_{j_0}^\textup{L}}\in\mathbb{Q}$, there exist integers $K_j$ and $L_j$ such that the right-hand side of the above inequality is written as
    \begin{equation}
         \frac{D^{a^\textup{R}_{j_0}}}{D^{a^\textup{R}_j}}{\tilde{\lambda}_0^{a_{j_0}^\textup{L}}}=\frac{K_j}{L_j}.
    \end{equation}
    Therefore, it holds that
    \begin{equation}
        \frac{\lambda_0^{a_{j}^\textup{L}}}{K_j}\leqq \frac{1}{L_j}.
    \end{equation}
    For each $j\in\left\{0,\ldots,J-1\right\}$, the majorization condition for LOCC convertibility between bipartite pure states in Lemma~\ref{lem:pure_convertibility} guarantees that there exists an LOCC map represented by a family of operators 
    \begin{equation}
      {\left\{M_{j,m_1}\otimes U_{j,m_1}\right\}}_{m_1}
    \end{equation}
    achieving for each $m_1\,$,
    \begin{equation}
        \left(M_{j,m_1}\otimes U_{j,m_1}\right)\left(\Ket{\omega_j}^{a^\textup{L} b^\textup{L}}\otimes\Ket{\Phi^+_{K_j}}^{\overline{A}\overline{B}}\right)
        =\Ket{\Phi^+_{L_j}}^{\overline{A}\overline{B}},
    \end{equation}
    where ${\left\{M_{j,m_1}\right\}}_{m_1}$ represents $A$'s measurement from $\mathcal{H}^{a^\textup{L}}\otimes\mathcal{H}^{\overline{A}}$ to $\mathcal{H}^{\overline{A}}$ with outcome $m_1$ satisfying the completeness
    \begin{equation}
      \sum_{m_1} M_{j,m_1}^\dag M_{j,m_1}=\mathbb{1},
    \end{equation}
    and $U_{j,m_1}$ represents $B$'s isometry from $\mathcal{H}^{b^\textup{L}}\otimes\mathcal{H}^{\overline{B}}$ to $\mathcal{H}^{\overline{B}}$ conditioned by $m_1$.
    Regarding an explicit form of ${\left\{M_{j,m_1}\otimes U_{j,m_1}\right\}}_{m_1}\,$, refer to References~\cite{N2,T3}.

    \textit{Subprocess~2: Quantum teleportation of the quantum part.}
    While quantum teleportation for sending the full reduced state $\phi_j^{a^R}\coloneqq\tr_{R b^R}\Ket{\phi_j}\Bra{\phi_j}^{R a^R b^R}$ requires a maximally entangled resource state with Schmidt rank
    \begin{equation}
      \dim\mathcal{H}^{a^R}=\max_j\dim\mathcal{H}^{a_j^R},
    \end{equation}
    a compression method is adopted here instead of just performing quantum teleportation of $\phi_j^{a^R}$, so that each $\phi_j^{a^R}$ is transferred from $A$ to $B$ using a maximally entangled resource state with Schmidt rank $\dim\mathcal{H}^{a_j^R}$, which is smaller than or equal to $\dim\mathcal{H}^{a^R}$.
    Consider $A$'s auxiliary system $\bigotimes_{j=0}^{J-1}\mathcal{H}^{{\left(a^\prime\right)}_j^\textup{R}}$, where $\dim\mathcal{H}^{{\left(a^\prime\right)}_j^\textup{R}}=D^{a_j^\textup{R}}$ for each $j$.
    The state $\Ket{\phi_j}^{R a^\textup{R} b^\textup{R}}$ can be compressed into
    \begin{equation}
            \Ket{\phi_j}^{R{\left(a^\prime\right)}_j^\textup{R} b^\textup{R}}=U'_j\Ket{\phi_j}^{R a^\textup{R} b^\textup{R}},
    \end{equation}
    where $U'_j$ is an isometry from $\mathcal{H}^{a^\textup{R}}$ to $\mathcal{H}^{{\left(a^\prime\right)}_j^\textup{R}}$, and $\Ket{\phi_j}^{R{\left(a^\prime\right)}_j^\textup{R} b^\textup{R}}$ represents the same state as $\Ket{\phi_j}^{R a^\textup{R} b^\textup{R}}$.
    Quantum teleportation~\cite{B5} to send states of $\mathcal{H}^{{\left(a^\prime\right)}_j^\textup{R}}$ consists of $A$'s projective measurement in the maximally entangled basis ${\left\{\Ket{\Phi_{j,m_2}}\right\}}_{m_2}$ on $\mathcal{H}^{{\left(a^\prime\right)}_j^\textup{R}}\otimes\mathcal{H}^{\overline{A}}$ with measurement outcome $m_2$ and $B$'s generalized Pauli correction $\sigma_{j,m_2}$ from $\mathcal{H}^{\overline{B}}$ to $\mathcal{H}^{{(b')}^\textup{R}}$ conditioned by $m_2\,$, where $\mathcal{H}^{{(b')}^\textup{R}}$ is $B$'s auxiliary system corresponding to $\mathcal{H}^{a^\textup{R}}$.
    The map for quantum teleportation is represented by
    \begin{equation}
      {\left\{\Bra{\Phi_{j,m_2}}\otimes\sigma_{j,m_2}\right\}}_{m_2}\,,
    \end{equation}
    which traces out the post-measurement state of $A$ and achieves for each $m_2\,$,
    \begin{equation}
        \begin{split}
            &\left(\Bra{\Phi_{j,m_2}}\otimes\sigma_{j,m_2}\right)\left(\Ket{\phi_j}^{R {\left(a^\prime\right)}_j^\textup{R} b^\textup{R}}\otimes\Ket{\Phi_{D^{a_j^\textup{R}}}^+}^{\overline{A}\overline{B}}\right)\\
            &=\left[\left(\Bra{\Phi_{j,m_2}}U'_j\right)\otimes\sigma_{j,m_2}\right]\left(\Ket{\phi_j}^{R a^\textup{R} b^\textup{R}}\otimes\Ket{\Phi_{D^{a_j^\textup{R}}}^+}^{\overline{A}\overline{B}}\right)\\
            &=\Ket{\phi_j}^{R {(b')}^\textup{R} b^\textup{R}}.
        \end{split}
    \end{equation}

    \textit{Subprocess~3: Coherent merging of the classical part by a measurement.}
    As for the classical part $\mathcal{H}^{a_0}$,
    $A$ performs a measurement to merge the classical part.
    This measurement should be performed without breaking coherence between $R$ and $B$.
    This contrasts with the protocol proposed in Reference~\cite{K4} for transferring a state drawn from a given ensemble, in which a projective measurement onto each of the subspaces of the Koashi-Imoto decomposition indexed by $j$ in Lemma~\ref{lem:koashi_imoto_decomposition_set} destroys superposition of states among different subspaces.
    For coherent merging of the classical part, $A$ performs a projective measurement on $\mathcal{H}^{a_0}$ with outcome $m_3$ in the Fourier basis ${\left\{\Ket{m_3}\right\}}_{m_3}$ defined in terms of the computational basis ${\left\{\Ket{j}^{a_0}\right\}}_j\,$, that is, for each $m_3\,$,
    \begin{equation}
        \Ket{m_3}^{a_0}\coloneq \sum_{j=0}^{J-1}\exp\left(\frac{\textup{i}{\pi}jm_3}{J}\right)\Ket{j}^{a_0}.
    \end{equation}
    After sending the measurement outcome $m_3$ by classical communication from $A$ to $B$,
    the originally given state of $\mathcal{H}^{a_0}\otimes\mathcal{H}^{a_L}\otimes\mathcal{H}^{b_L}$
    can be recovered from $B$'s classical part $\mathcal{H}^{b_0}$ of the post-measurement state by $B$'s local isometry conditioned by $m_3$
    \begin{equation}
      \label{eq:subprocess_3}
      \sum_{j=0}^{J-1}\exp\left(\frac{\textup{i}{\pi}jm_3}{J}\right)\Ket{j}^{{(b')}_0}\otimes\Ket{j}\Bra{j}^{b_0}\otimes\Ket{\omega_j}^{{(b')}^L b^L},
    \end{equation}
    where $\mathcal{H}^{{(b')}_0}\otimes\mathcal{H}^{{(b')}^L}$ is $B$'s auxiliary system corresponding to $\mathcal{H}^{a_0}\otimes\mathcal{H}^{a^L}$.

    Subprocesses~1--3 are combined using controlled measurements and controlled isometries.
    Regarding $A$'s measurement,
    the measurements used in Subprocesses~1 and~2 are performed by extending each measurement to a measurement controlled coherently by the computational-basis state $\Ket{j}^{a_0}$.
    Regarding Subprocess~1 for the redundant part, the controlled version of the measurement is given by
    \begin{equation}
      \sum_{j=0}^{J-1}\Ket{j}\Bra{j}^{a_0}\otimes M_{j,m_1}\,,
    \end{equation}
    and regarding Subprocess~2 for the quantum part, given by
    \begin{equation}
      \sum_{j=0}^{J-1}\Ket{j}\Bra{j}^{a_0}\otimes\left(\Bra{\Phi_{j,m_2}}U'_j\right).
    \end{equation}
    The measurement in Subprocess~3 for the classical part is also represented in terms of the computational basis as
    \begin{equation}
      \sum_{j=0}^{J-1}\Bra{m_3}^{a_0}\left(\Ket{j}\Bra{j}^{a_0}\right)=\sum_{j=0}^{J-1}\exp\left(\frac{-\textup{i}{\pi}jm_3}{J}\right)\Bra{j}^{a_0}.
    \end{equation}
    Combining these three together, $A$'s measurement ${\left\{M_{m_1\,,m_2\,,m_3}\right\}}_{m_1\,,m_2\,,m_3}$ is given by
    \begin{equation}
      M_{m_1\,,m_2\,,m_3}
      =\sum_{j=0}^{J-1}\left[\exp\left(\frac{-\textup{i}{\pi}jm_3}{J}\right)\Bra{j}^{a_0}\right]\otimes\left[\Bra{\Phi_{j,m_2}}U'_j M_{j,m_1}\right].
    \end{equation}
    The completeness of this measurement follows from
    \begin{equation}
        \begin{split}
            &\sum_{m_1\,,m_2\,,m_3}M_{m_1\,,m_2\,,m_3}^\dag M_{m_1\,,m_2\,,m_3}\\
            &=\sum_{m_1\,,m_2\,,m_3}\sum_{j,j'}\left[\exp\left(\frac{\textup{i}{\pi}m_3(j'-j)}{J}\right)\Ket{j'}\Bra{j}\right]
            \otimes\left[M_{j,m_1}^\dag {U'_{j'}}^\dag\Ket{\Phi_{j',m_2}}\Bra{\Phi_{j,m_2}}U'_j M_{j,m_1}\right]\\
            &=\sum_{j}\Ket{j}\Bra{j}
            \otimes\left[\sum_{m_1\,,m_2}M_{j,m_1}^\dag{U'_j}^\dag\Ket{\Phi_{j,m_2}}\Bra{\Phi_{j,m_2}}U'_j M_{j,m_1}\right]\\
            &=\mathbb{1},
        \end{split}
    \end{equation}
    where $\mathbb{1}$ is the identity operator on $\mathcal{H}^{a_0}\otimes\mathcal{H}^{a^\textup{L}}\otimes\mathcal{H}^{a^\textup{R}}\otimes\mathcal{H}^{\overline{A}}$.

    As for $B$'s isometry,
    the isometries in Subprocesses~1 and~2 are also controlled coherently by the computational-basis state $\Ket{j}^{b_0}$.
    Regarding Subprocess~1 for the redundant part, the controlled version of the isometry is given by
    \begin{equation}
      \sum_{j=0}^{J-1}\Ket{j}\Bra{j}^{b_0}\otimes U_{j,m_1}\,,
    \end{equation}
    and regarding Subprocess~2 for the quantum part, given by
    \begin{equation}
      \sum_{j=0}^{J-1}\Ket{j}\Bra{j}^{b_0}\otimes\sigma_{j,m_2}.
    \end{equation}
    The isometry in Subprocess~3 is given by Equation~\eqref{eq:subprocess_3}.
    Combining these three together, $B$'s isometry $U_{m_1\,,m_2\,,m_3}$ is given by
    \begin{equation}
        \begin{split}
          U_{m_1\,,m_2\,,m_3}
          =\sum_{j=0}^{J-1}\exp\left(\frac{\textup{i}{\pi}jm_3}{J}\right)\Ket{j}^{{(b')}_0}\otimes\Ket{j}\Bra{j}^{b_0}\otimes\Ket{\omega_j}^{{(b')}^L b^\textup{L}}
          \otimes\sigma_{j,m_2} U_{j,m_1}.
        \end{split}
    \end{equation}

    Consequently, for any combination $(m_1\,,m_2\,,m_3)$,
    the LOCC map represented by a family of operators ${\left\{M_{m_1\,,m_2\,,m_3}\otimes U_{m_1\,,m_2\,,m_3}\right\}}_{m_1\,,m_2\,,m_3}$ achieves state merging of $\Ket{\psi}^{RAB}$
    \begin{equation}
        \begin{split}
            &\left(M_{m_1\,,m_2\,,m_3}\otimes U_{m_1\,,m_2\,,m_3}\right)\\
            &\quad\left(\Ket{j}^{a_0}\otimes\Ket{j}^{b_0}\otimes\Ket{\omega_j}^{a^\textup{L} b^\textup{L}}\otimes\Ket{\phi_j}^{R a^\textup{R} b^\textup{R}}
            \otimes\Ket{\Phi_{D^{a_j^\textup{R}}}^+}^{\overline{A}\overline{B}}\otimes\Ket{\Phi_{K_j}^+}^{\overline{A}\overline{B}}\right)\\
            &=\Ket{j}^{{(b')}_0}\otimes\Ket{j}^{b_0}\otimes\Ket{\omega_j}^{{(b')}^L b^\textup{L}}\otimes\Ket{\phi_j}^{R {(b')}^\textup{R} b^\textup{R}}
            \otimes\Ket{\Phi_{L_j}^+}^{\overline{A}\overline{B}}.
        \end{split}
    \end{equation}
    Choose $K$ as the least common multiple of the integers $\left\{D^{a_0^R}K_0\,,\ldots,D^{a_{J-1}^R}K_{J-1}\right\}$,
    and this state transformation yields
    \begin{equation}
      \begin{split}
        &\left(M_{m_1\,,m_2\,,m_3}\otimes U_{m_1\,,m_2\,,m_3}\right)\\
        &\quad\left(\Ket{j}^{a_0}\otimes\Ket{j}^{b_0}\otimes\Ket{\omega_j}^{a^L b^L}\otimes\Ket{\phi_j}^{R a^R b^R}\otimes\Ket{\Phi_{K}^+}^{\overline{A}\overline{B}}\right)\\
        &=\Ket{j}^{{(b')}_0}\otimes\Ket{j}^{b_0}\otimes\Ket{\omega_j}^{{(b')}^L b^L}\otimes\Ket{\phi_j}^{R {(b')}^R b^R}\otimes\Ket{\Phi_{L}^+}^{\overline{A}\overline{B}},
      \end{split}
    \end{equation}
    where $L$ is an integer defined as
    \begin{equation}
      L\coloneq\frac{K}{\tilde{\lambda}_0^{a_{j_0}^L}D^{a_{j_0}^R}}=\frac{K}{D^{a_j^R} K_j}L_j\,,\quad \forall j\in\left\{0,\ldots,J-1\right\}.
    \end{equation}
    Then, an LOCC map represented as
    \begin{equation}
      \label{eq:merge}
        {\left\{\left[ M_{m_1\,,m_2\,,m_3}U^A\right] \otimes\left[ \left({\left(U^{B'}\right)}^\dag\otimes {\left(U^B\right)}^\dag\right) U_{m_1\,,m_2\,,m_3}U^B\right]\right\}}_{m_1\,,m_2\,,m_3}
    \end{equation}
    achieves for each $(m_1\,,m_2\,,m_3)$,
    \begin{equation}
      \begin{split}
        &\left(\left[ M_{m_1\,,m_2\,,m_3}U^A\right]\otimes\left[ \left({\left(U^{B'}\right)}^\dag\otimes {\left(U^B\right)}^\dag\right) U_{m_1\,,m_2\,,m_3}U^B\right]\right)\\
        &\quad\left(\Ket{\psi}^{RAB}\otimes\Ket{\Phi_{K}^+}^{\overline{A}\overline{B}}\right)\\
        &=\Ket{\psi}^{RB'B}\otimes\Ket{\Phi_{L}^+}^{\overline{A}\overline{B}},
      \end{split}
    \end{equation}
    where $U^A$ and $U^B$ are those in Equation~\eqref{eq:ki_tripartite_isometry}, and ${\left(U^{B'}\right)}^\dag$ from $\mathcal{H}^{{(b')}_0}\otimes\mathcal{H}^{{(b')}^L}\otimes\mathcal{H}^{{(b')}^\textup{R}}$ to $\mathcal{H}^{B'}=\bigoplus_{j=0}^{J-1}\mathcal{H}^{{(b')}_j^L}\otimes\mathcal{H}^{{(b')}_j^R}$ acts in the same way as ${\left(U^A\right)}^\dag$.

    The protocol represented by the LOCC map shown in Equation~\eqref{eq:merge} achieves the condition on entanglement cost given in Inequality~\ref{eq:merge_cost}, as shown in the following.
    For each $j$, entanglement cost amounts to
    \begin{equation}
      \begin{split}
        &\log_2 D^{a_j^\textup{R}} + \log_2 K_j - \log_2 L_j\\
        &= \log_2 \left(\frac{K_j}{L_j} D^{a_j^\textup{R}}\right)\\
        &= \log_2 \left(\tilde{\lambda}_0^{a_{j_0}^\textup{L}}D^{a_{j_0}^\textup{R}}\right),
      \end{split}
    \end{equation}
    which is independent of $j$.
    Thus, entanglement cost of the whole protocol is given by
    \begin{equation}
      \begin{split}
      &\log_2 K-\log_2 L\\
      &=\log_2 \left(\tilde{\lambda}_0^{a_{j_0}^\textup{L}}D^{a_{j_0}^\textup{R}}\right)\\
      &\leqq \log_2 \left(\lambda_0^{a_{j_0}^\textup{L}}D^{a^\textup{R}_{j_0}}\right)+\delta\\
      &=\max_j\left\{\log_2 \left(\lambda_0^{a_{j_0}^\textup{L}}\dim\mathcal{H}^{a^\textup{R}_{j_0}}\right)\right\}+\delta,
      \end{split}
    \end{equation}
    which yields the conclusion.
\end{proof}

\begin{proof}[Proof of Theorem~\ref{thm:merge_without_catalyst}]
    The proof is by construction, and a protocol for exact state merging of $\Ket{\psi}^{RAB}$ in the non-catalytic setting achieving the equality in~\eqref{eq:merge_without_catalyst_cost} is shown in the following.
    Define for each $j\in\{0,\ldots,J-1\}$,
    \begin{equation}
        D^{a_j^\textup{R}}\coloneq\dim\mathcal{H}^{a_j^\textup{R}}.
    \end{equation}
    The core idea of the protocol is similar to that in Theorem~\ref{thm:merge} using the Koashi-Imoto decomposition in the form of Equation~\eqref{eq:ki_tripartite_isometry}.
    The rest of the proof is given in the same way as the proof of Theorem~\ref{thm:merge}, where Subprocess~2 and Subprocess~3 are the same as those in Theorem~\ref{thm:merge}, and Subprocess~1 is modified as follows, since the resource state is not used catalytically in the entanglement distillation from the redundant part in Subprocess~1.

    \textit{Subprocess~1:}
    For each $j\in\{0,\ldots,J-1\}$, it holds that
    \begin{equation}
        \lambda_{0}^{a_{j}^\textup{L}}D^{a^\textup{R}_j}
        \leqq\left\lceil\lambda_{0}^{a_j^\textup{L}}D^{a_j^\textup{R}}\right\rceil
        \leqq\max_j\left\{ \left\lceil \lambda_0^{a_j^\textup{L}}D^{a^\textup{R}_j}\right\rceil\right\}.
    \end{equation}
    Then, given the resource state $\Ket{\Phi_K^+}$, where
    \begin{equation}
        K=\max_j\left\{\left\lceil \lambda_0^{a_j^\textup{L}}D^{a_j^\textup{R}}\right\rceil\right\},
    \end{equation}
    it holds that
    \begin{equation}
        \frac{\lambda_{0}^{a_{j}^\textup{L}}}{K}\leqq\frac{1}{D^{a_j^\textup{R}}}.
    \end{equation}
    For each $j\in\left\{0,\ldots,J-1\right\}$, the majorization condition for LOCC convertibility between bipartite pure states in Lemma~\ref{lem:pure_convertibility} guarantees that there exists an LOCC map represented by a family of operators ${\left\{M_{j,m_1}\otimes U_{j,m_1}\right\}}_{m_1}$ achieving, for each $m_1\,$,
    \begin{equation}
        \left(M_{j,m_1}\otimes U_{j,m_1}\right)\left(\Ket{\omega_j}^{a^\textup{L} b^\textup{L}}\otimes\Ket{\Phi^+_{K}}^{\overline{A}\overline{B}}\right)
        =\Ket{\Phi^+_{D^{a_j^\textup{R}}}}^{\overline{A}\overline{B}},
    \end{equation}
    where ${\left\{M_{j,m_1}\right\}}_{m_1}$ represents $A$'s measurement from $\mathcal{H}^{a^\textup{L}}\otimes\mathcal{H}^{\overline{A}}$ to $\mathcal{H}^{\overline{A}}$ with outcome $m_1$ satisfying the completeness
    \begin{equation}
      \sum_{m_1} M_{j,m_1}^\dag M_{j,m_1}=\mathbb{1},
    \end{equation}
    and $U_{j,m_1}$ represents $B$'s isometry from $\mathcal{H}^{b^\textup{L}}\otimes\mathcal{H}^{\overline{B}}$ to $\mathcal{H}^{\overline{B}}$ conditioned by $m_1$.

    In the same way as Theorem~\ref{thm:merge},
    $A$'s combined measurement
    \begin{equation}
      {\left\{\Bra{m_1\,,m_2\,,m_3}\right\}}_{m_1\,,m_2\,,m_3}\,,
    \end{equation}
    where the post-measurement state is traced out, is given by
    \begin{equation}
      \Bra{m_1\,,m_2\,,m_3}
      =\sum_{j=0}^{J-1}\left[\exp\left(\frac{-\textup{i}{\pi}jm_3}{J}\right)\Bra{j}^{a_0}\right]\otimes\left[\Bra{\Phi_{j,m_2}}U'_j M_{j,m_1}\right].
    \end{equation}

    Also, $B$'s combined isometry $U_{m_1\,,m_2\,,m_3}$ is given by
    \begin{equation}
        U_{m_1\,,m_2\,,m_3}
        =\sum_{j=0}^{J-1}\exp\left(\frac{\textup{i}{\pi}jm_3}{J}\right)\Ket{j}^{{(b')}_0}\otimes\Ket{j}\Bra{j}^{b_0}\otimes\Ket{\omega_j}^{{(b')}^L b^\textup{L}}
        \otimes\sigma_{j,m_2} U_{j,m_1}.
    \end{equation}

    Consequently, the LOCC map represented by
    \begin{equation}
      \label{eq:merge_without_catalyst}
        {\left\{\left[\Bra{m_1\,,m_2\,,m_3}U^A\right]\otimes
        \left[\left({\left(U^{B'}\right)}^\dag\otimes{\left(U^B\right)}^\dag\right) U_{m_1\,,m_2\,,m_3}U^B\right]\right\}}_{m_1\,,m_2\,,m_3}\,,
    \end{equation}
    achieves for any combination $(m_1\,,m_2\,,m_3)$,
    \begin{equation}
      \begin{split}
        &\left(\left[\Bra{m_1\,,m_2\,,m_3}U^A\right]\otimes \left[\left({\left(U^{B'}\right)}^\dag\otimes{\left(U^B\right)}^\dag\right) U_{m_1\,,m_2\,,m_3}U^B\right]\right)\\
        &\quad\left(\Ket{\psi}^{RAB}\otimes\Ket{\Phi_{K}^+}^{\overline{A}\overline{B}}\right)\\
        &=\Ket{\psi}^{RB'B},
      \end{split}
    \end{equation}
    where $U^A$, $U^B$, and $U^{B'}$ are the same as those in Equation~\eqref{eq:merge}.

    Entanglement cost of the protocol represented by the LOCC map shown in Equation~\eqref{eq:merge_without_catalyst} is given by
    \begin{equation}
      \begin{split}
        &\log_2 K\\
        &= \max_j\left\{\log_2\left\lceil \lambda_0^{a_j^\textup{L}}D^{a_j^\textup{R}}\right\rceil\right\}\\
        &=\log_2\max_j\left\{\log_2\left\lceil \lambda_0^{a_j^\textup{L}}\dim\mathcal{H}^{a_j^\textup{R}}\right\rceil\right\},
      \end{split}
    \end{equation}
    which yields the conclusion.
\end{proof}

\begin{remark}
\label{remark:merge}
    \textit{Comparison between exact state merging and splitting.}
    Entanglement cost in exact state merging is not larger than that in its inverse task, that is, exact state splitting summarized in Section~\ref{sec:split}.
    For any $\Ket{\psi}^{RAB}$, it holds that
    \begin{align}
      &\max_{j}\left\{\log_2\lambda_0^{a_j^\textup{L}}\dim\mathcal{H}^{a_j^\textup{R}}\right\}\leqq\log_2\rank\psi^{A},\\
      &\max_{j}\left\{\log_2\left\lceil\lambda_0^{a_j^\textup{L}}\dim\mathcal{H}^{a_j^\textup{R}}\right\rceil\right\}\leqq\log_2\rank\psi^{A},
    \end{align}
    where the left-hand sides are the optimal entanglement cost in exact state merging shown in Theorems~\ref{thm:merge} and~\ref{thm:merge_without_catalyst}, the right-hand sides are the optimal entanglement cost in exact state splitting shown in Theorem~\ref{thm:split}, $\mathcal{H}^{a_j^\textup{R}}$ is a Hilbert space defined according to the Koashi-Imoto decomposition of $\Ket{\psi}^{RAB}$ in Equation~\eqref{eq:notation_space}, and $\lambda_0^{a_j^\textup{L}}$ is the largest eigenvalue of a reduced state
    \begin{equation}
      \omega_j^{a_j^\textup{L}}=\tr_{b_j^\textup{L}}\Ket{\omega_j}\Bra{\omega_j}^{a_j^\textup{L} b_j^\textup{L}}
    \end{equation}
    of $\Ket{\omega_j}^{a_j^\textup{L} b_j^\textup{L}}$ also defined according to the Koashi-Imoto decomposition of $\Ket{\psi}^{RAB}$ in Equation~\eqref{eq:notation_state}.
    These inequalities can be derived from
    \begin{equation}
      \dim\mathcal{H}^{a_j^\textup{R}}\leqq\rank\psi^{A}
    \end{equation}
    and
    \begin{equation}
      \lambda_0^{a_j^\textup{L}}\leqq 1,
    \end{equation}
    where the former inequality holds by construction of the Koashi-Imoto decomposition, and the latter follows from the normalization of $\omega_j^{a_j^\textup{L}}$.
    Moreover, as will be shown in Implication~\ref{ex:1} in Section~\ref{sec:examples}, entanglement cost in exact state merging can be strictly smaller than that in spitting.
\end{remark}

\begin{remark}
\label{remark:usefulness}
    \textit{Usefulness of the protocols for exact state merging on small and intermediate scales.}
    The obtained protocols for exact state merging outperforms the existing protocols for approximate state merging~\cite{B9,Y9,B12,D7,D6,H10,B10,D5,M,N3,A4,A5,B15,B13,A16,A17} in terms of entanglement cost, as discussed in the following.
    While some of the existing protocols are fully quantum protocols achieved by local operations and quantum communication assisted by shared entanglement,
    replacing the quantum communication in a fully quantum protocol with quantum teleportation yields an entanglement-assisted LOCC protocol corresponding to the fully quantum protocol.
    Using this replacement, the entanglement cost in state merging of $\Ket{\psi}^{RAB}$, which is denoted here by $E_\textup{merge}(\psi)$, is compared in the LOCC framework.

    The protocols in Theorem~\ref{thm:merge} and~\ref{thm:merge_without_catalyst} for exact state merging of $\Ket{\psi}^{RAB}$ require at most as much entanglement cost as that required for quantum teleportation of $\psi^A$, and when the system size for $\psi^A$ is small, these protocols cost less than the existing protocols for approximate state merging in one-shot scenarios.
    Regarding the existing protocols,
    the achievability bounds of $E_\textup{merge}(\psi)$ of the corresponding entanglement-assisted LOCC protocols can be calculated from the analyses in References~\cite{B9,B12,D7,D6,H10,B10,N3}.
    Given $\epsilon > 0$, these achievability bounds are in the form
    \begin{equation}
      E_\textup{merge}(\psi)=\cdots+O\left(\log\frac{1}{\epsilon}\right) \quad\textup{as } \epsilon\to 0,
    \end{equation}
    which diverges to infinity as higher fidelity is pursued.
    For example, from Theorem~4 in Reference~\cite{B10}, the achievability bound of $E_\textup{merge}(\psi)$ of one-shot state merging of $\Ket{\psi}^{RAB}$ within an error $\epsilon>0$ is given by
    \begin{equation}
      {H_{\max}^{\epsilon_1}\left(A|B\right)}_{\psi}+2\log_2 \frac{1}{\epsilon_4}+3,
    \end{equation}
    where $\epsilon=8\epsilon_1+\sqrt{3\epsilon_4}$, and the first term is represented by the smooth conditional max-entropy~\cite{R2,T5,T11} summarized in Appendix~\ref{sec:one_shot_entropies}.
    To achieve $\epsilon=0.02$, the second and third terms amount to
    \begin{equation}
      2\log_2 \frac{1}{\epsilon_4}+3>28.7.
    \end{equation}
    Note that $\epsilon=0.02$ guarantees, in the task of state discrimination of $\Ket{\psi}$ and the final state, the optimal success probability
    \begin{equation}
      P_\textup{succ}=\frac{1}{2}+\frac{1}{4}{\left\|\psi-\psi_\textup{final}\right\|}_1\leqq 51\%,
    \end{equation}
    which is obtained from the Fuchs-van de Graaf inequalities
    \begin{equation}
      \frac{1}{4}{\left\|\psi-\psi_\textup{final}\right\|}_1\leqq \frac{1}{2}\sqrt{1-F^2}.
    \end{equation}
    Thus, given $\Ket{\psi}^{RAB}$ where
    \begin{equation}
      \dim\mathcal{H}^A\leqq 2^{28},
    \end{equation}
    even if
    \begin{equation}
      {H_{\max}^{\epsilon_1}\left(A|B\right)}_{\psi}=0,
    \end{equation}
    the approximate protocols requires more entanglement cost than the protocols in Theorem~\ref{thm:merge} and~\ref{thm:merge_without_catalyst} and even than quantum teleportation.

    Also, as will be discussed in Implication~\ref{ex:1} in Section~\ref{sec:examples}, useful states for distributed quantum information processing, including the Greenberger-Horne-Zeilinger (GHZ) states and multipartite code states for quantum error correcting codes~\cite{G,D,T10,B}, have \textit{nontrivial} Koashi-Imoto decompositions, that is, $J\neq 1$, when these states are regarded as tripartite states.
    In this regard, the protocols for exact state merging are already sufficient for reducing entanglement cost compared to quantum teleportation in these cases relevant to distributed quantum information processing.
\end{remark}

\section{\label{sec:achievability_approximate}Achievability bound for approximate state merging}

The protocols on exact state merging presented in the previous section is extended to its approximate versions, using smoothing~\cite{R2,T5,T11}.
In the following, the catalytic setting in Theorem~\ref{thm:merge} is considered, while extension of the protocol in the non-catalytic setting in Theorem~\ref{thm:merge_without_catalyst} is also possible in the same way.
Note that while allowing small error in smoothing may provide better bounds, the bounds obtained by smoothing usually include optimization over a ball of close states, and exact state merging already suffices for useful examples including those relevant to distributed quantum information processing, as discussed in Remark~\ref{remark:usefulness}.

Given any pure state $\Ket{\psi}^{RAB}$ and an error $\epsilon\geqq 0$,
an achievability bound of entanglement cost in approximate state merging of $\Ket{\psi}^{RAB}$ within $\epsilon$ is obtained as follows.
Consider the Koashi-Imoto decomposition of any pure state $\Ket{\tilde\psi}^{RAB}$ satisfying
\begin{equation}
  {F^2\left(\psi^{RAB},\tilde\psi^{RAB}\right)}\coloneq{\left|\Braket{\psi|\tilde\psi}\right|}^2\geqq 1-{\left(\frac{\epsilon}{2}\right)}^2.
\end{equation}
Due to the Koashi-Imoto decomposition of $\Ket{\tilde\psi}^{RAB}$ shown in Lemma~\ref{lem:koashi_imoto_decomposition_tripartite}, $\mathcal{H}^A$ and $\mathcal{H}^B$ are uniquely decomposed into
\begin{align}
\label{eq:notation_space_approx}
  \mathcal{H}^A=\bigoplus_{j=0}^{J-1}{\mathcal{H}}^{\tilde{a}_j^\textup{L}}\otimes{\mathcal{H}}^{\tilde{a}_j^\textup{R}},\quad
  \mathcal{H}^B=\bigoplus_{j=0}^{J-1}{\mathcal{H}}^{\tilde{b}_j^\textup{L}}\otimes{\mathcal{H}}^{\tilde{b}_j^\textup{R}},
\end{align}
and $\Ket{\tilde\psi}^{RAB}$ is uniquely decomposed into
\begin{equation}
\label{eq:notation_state_approx}
\Ket{\tilde\psi}^{RAB}=\bigoplus_{j=0}^{\tilde{J}-1}\sqrt{\tilde{p}\left(j\right)}\Ket{\tilde{\omega}_j}^{\tilde{a}_j^\textup{L} \tilde{b}_j^\textup{L}}\otimes\Ket{\tilde{\phi}_j}^{R \tilde{a}_j^\textup{R} \tilde{b}_j^\textup{R}},
\end{equation}
where $\tilde{p}\left(j\right)$ is a probability distribution.
Using these notations, Theorem~\ref{thm:merge} on exact state merging is extended to approximate state merging as follows.

\begin{theorem}
\label{thm:approximate}
  \textit{An achievability bound of entanglement cost in approximate state merging applicable to arbitrarily small-dimensional systems.}
  Given any $\Ket{\psi}^{RAB}$,
  any $\epsilon \geqq 0$,
  and any $\delta > 0$,
  there exists a protocol for approximate state merging of $\Ket{\psi}^{RAB}$ within $\epsilon$ achieving
  \begin{equation}
    \label{eq:approximate}
    \log_2 K-\log_2 L\leqq \min_{\Ket{\tilde\psi}}\max_{j}\left\{\log_2\left(\lambda_0^{\tilde{a}_j^\textup{L}}\dim\mathcal{H}^{\tilde{a}_j^\textup{R}}\right)\right\} + \delta,
  \end{equation}
  where the notations are the same as those in Equations~\eqref{eq:notation_space_approx} and~\eqref{eq:notation_state_approx}, $\lambda^{\tilde{a}_j^\textup{L}}_0$ is the largest eigenvalue of $\tilde{\omega}_j^{\tilde{a}_j^\textup{L}}$, and the minimization is over any normalized pure state $\Ket{\tilde\psi}^{RAB}$ satisfying
  \begin{equation}
    {F^2\left(\psi^{RAB},\tilde\psi^{RAB}\right)}\coloneq{\left|\Braket{\psi|\tilde\psi}\right|}^2\geqq 1-{\left(\frac{\epsilon}{2}\right)}^2.
  \end{equation}
\end{theorem}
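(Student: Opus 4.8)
The plan is to reduce the approximate task to the exact one already solved in Theorem~\ref{thm:merge} by a smoothing argument. Fix any normalized pure state $\Ket{\tilde\psi}^{RAB}$ with $F^2(\psi^{RAB},\tilde\psi^{RAB})\geqq 1-{(\epsilon/2)}^2$, which by the definition of the purified distance is equivalent to $P(\psi^{RAB},\tilde\psi^{RAB})\leqq\epsilon/2$. First I would invoke Theorem~\ref{thm:merge} applied to $\Ket{\tilde\psi}^{RAB}$ rather than to $\Ket{\psi}^{RAB}$: for the given $\delta>0$ it supplies an LOCC map $\tilde{\mathcal{M}}$ and integers $K,L$ with $\log_2 K-\log_2 L\leqq\max_j\left\{\log_2\left(\lambda_0^{\tilde{a}_j^\textup{L}}\dim\mathcal{H}^{\tilde{a}_j^\textup{R}}\right)\right\}+\delta$ such that $\tilde{\mathcal{M}}$ exactly merges $\Ket{\tilde\psi}^{RAB}$, i.e.\ $\id^R\otimes\tilde{\mathcal{M}}\left(\tilde\psi^{RAB}\otimes{\Phi_K^+}^{\overline{A}\overline{B}}\right)=\tilde\psi^{RB'B}\otimes{\Phi_L^+}^{\overline{A}\overline{B}}$. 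The entanglement cost claimed in the theorem is exactly this quantity, so the cost bound is immediate once I verify that the \emph{same} map $\tilde{\mathcal{M}}$, when applied to the true input $\Ket{\psi}^{RAB}$, performs approximate merging within $\epsilon$.

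The error analysis is where the purified distance does the work. Since $\tilde{\mathcal{M}}$ acts on the common input space $\mathcal{H}^A\otimes\mathcal{H}^B\otimes\mathcal{H}^{\overline{A}}\otimes\mathcal{H}^{\overline{B}}$, it can be fed either input. I would apply the monotonicity of $P$ under the CPTP map $\id^R\otimes\tilde{\mathcal{M}}$ together with the invariance of $P$ under tensoring by the fixed resource state ${\Phi_K^+}^{\overline{A}\overline{B}}$ to obtain
\[
  P\!\left(\id^R\otimes\tilde{\mathcal{M}}\left(\psi^{RAB}\otimes{\Phi_K^+}^{\overline{A}\overline{B}}\right),\,\tilde\psi^{RB'B}\otimes{\Phi_L^+}^{\overline{A}\overline{B}}\right)\leqq P\left(\psi^{RAB},\tilde\psi^{RAB}\right)\leqq\frac{\epsilon}{2}.
\]
Separately, the target of the true task is $\psi^{RB'B}\otimes{\Phi_L^+}^{\overline{A}\overline{B}}$, and since $\mathcal{H}^{B'}$ is merely a relabelled copy of $\mathcal{H}^A$ (recall $\dim\mathcal{H}^A=\dim\mathcal{H}^{B'}$), the states $\tilde\psi^{RB'B}$ and $\psi^{RB'B}$ are related by the very same fixed relabelling unitary that relates $\tilde\psi^{RAB}$ and $\psi^{RAB}$. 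Invariance of $P$ under this unitary, being a reversible special case of monotonicity, together with invariance under tensoring by ${\Phi_L^+}^{\overline{A}\overline{B}}$, then gives $P\left(\tilde\psi^{RB'B}\otimes{\Phi_L^+}^{\overline{A}\overline{B}},\psi^{RB'B}\otimes{\Phi_L^+}^{\overline{A}\overline{B}}\right)=P\left(\psi^{RAB},\tilde\psi^{RAB}\right)\leqq\epsilon/2$. Combining the two estimates through the triangle inequality for $P$ yields a total purified distance at most $\epsilon$, hence $F^2\geqq 1-\epsilon^2$, which is precisely the fidelity requirement of Definition~\ref{def:approxiamte_state_merging}.

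Finally I would optimize: the preceding holds for \emph{every} admissible $\Ket{\tilde\psi}^{RAB}$, so taking the infimum of the cost bound over all such $\tilde\psi$ produces the minimization in Inequality~\eqref{eq:approximate}. The main obstacle I anticipate is not the error accounting itself, which is a clean three-term purified-distance argument, but the bookkeeping needed to justify that the protocol of Theorem~\ref{thm:merge} is genuinely defined as a map on the full Hilbert space $\mathcal{H}^A\otimes\mathcal{H}^B$ through the tensor-product form of the Koashi-Imoto decomposition in Equation~\eqref{eq:ki_tripartite_isometry}, and can therefore be run verbatim on an input whose Koashi-Imoto structure differs from that of $\tilde\psi$. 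One must confirm that the cost, which depends only on the decomposition of $\tilde\psi$, is unaffected by which input is actually supplied, and that the relabelling $\mathcal{H}^A\to\mathcal{H}^{B'}$ defining the two target states is the identical isometry in both cases.
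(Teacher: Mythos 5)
Your proposal is correct and follows essentially the same route as the paper's proof: run the exact-merging protocol of Theorem~\ref{thm:merge} built for the optimizing $\Ket{\tilde\psi}$ on the true input $\Ket{\psi}$, then bound the error by the triangle inequality for the purified distance, using monotonicity under the LOCC map for one term and the invariance of $P$ under the relabelling and tensoring with the resource state for the other, giving $\epsilon/2+\epsilon/2=\epsilon$. The bookkeeping concerns you raise at the end are legitimate but are exactly the points the paper also leaves implicit.
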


\begin{proof}
  The proof is by construction, and it is shown that the LOCC map $\tilde{\mathcal{M}}$ for exact state merging of an approximate state $\Ket{\tilde\psi}$ for the minimum in Inequality~\eqref{eq:approximate} achieves approximate state merging of $\Ket{\psi}^{RAB}$ within $\epsilon$.
  To calculate the error in approximate state merging,
  the purified distance $P\left(\rho,\sigma\right)$ of any two states $\rho$ and $\sigma$ is used.
  Using the properties of the purified distance summarized in Section~\ref{sec:decomposition}, it is straightforward to obtain
  \begin{equation}
    \begin{split}
      &P\left(\id^R\otimes\tilde{\mathcal{M}}\left({\psi}^{RAB}\otimes{\Phi^+_K}^{\overline{A}\overline{B}}\right),{\psi}^{RB'B}\otimes{\Phi^+_L}^{\overline{A}\overline{B}}\right)\\
      &\leqq P\Big(\id^R\otimes\tilde{\mathcal{M}}\left({\psi}^{RAB}\otimes{\Phi^+_K}^{\overline{A}\overline{B}}\right), \id^R\otimes\tilde{\mathcal{M}}\left({\tilde\psi}^{RAB}\otimes{\Phi^+_K}^{\overline{A}\overline{B}}\right)\Big)\\
      &\quad +P\Big(\id^R\otimes\tilde{\mathcal{M}}\left({\tilde\psi}^{RAB}\otimes{\Phi^+_K}^{\overline{A}\overline{B}}\right), {\psi}^{RB'B}\otimes{\Phi^+_L}^{\overline{A}\overline{B}}\Big)\\
      &\leqq P\left({\psi}^{RAB}\otimes{\Phi^+_K}^{\overline{A}\overline{B}},{\tilde\psi}^{RAB}\otimes{\Phi^+_K}^{\overline{A}\overline{B}}\right)
      +P\left({\tilde\psi}^{RB'B}\otimes{\Phi^+_L}^{\overline{A}\overline{B}},{\psi}^{RB'B}\otimes{\Phi^+_L}^{\overline{A}\overline{B}}\right)\\
      &= P\left({\psi}^{RAB},{\tilde\psi}^{RAB}\right) +P\left({\tilde\psi}^{RB'B},{\psi}^{RB'B}\right)\\
      &\leqq\frac{\epsilon}{2}+\frac{\epsilon}{2}\\
      &=\epsilon.
    \end{split}
  \end{equation}
  Therefore, it holds that
  \begin{equation}
    \begin{split}
      &F^2\left(\id^R\otimes\tilde{\mathcal{M}}\left({\psi}^{RAB}\otimes{\Phi^+_K}^{\overline{A}\overline{B}}\right),{\psi}^{RB'B}\otimes{\Phi^+_L}^{\overline{A}\overline{B}}\right)\\
      &\geqq 1-\epsilon^2.
    \end{split}
  \end{equation}
\end{proof}

\section{\label{sec:converse}Converse bound for exact state merging}

A converse bound of entanglement cost in exact state merging illustrated in Figure~\ref{fig:merge} is derived in this section.
In other words, this section aims at obtaining a lower bound of the entanglement cost of any possible protocol for exact state merging.
This converse bound improves the existing converse bound in terms of the conditional max-entropy originally shown in Reference~\cite{B9}.
After showing this converse bound, comparison with the existing bound is analyzed, followed by discussing the tightness of the obtained converse bound.

A converse bound for exact state merging is obtained as follows.

\begin{theorem}
\label{thm:new}
\textit{A converse bound of entanglement cost in exact state merging.}
For any state $\Ket{\psi}^{RAB}$ and any protocol for exact state merging of $\Ket{\psi}^{RAB}$,
it holds that
\begin{equation}
    \label{eq:lower_catalytic}
      \log_2 K - \log_2 L
      \geqq \inf\left\{\log_2 K - \log_2 L: \frac{\mathbb{1}_K}{K}\otimes\psi^{B}\prec\frac{\mathbb{1}_L}{L}\otimes\psi^{AB}\right\},
\end{equation}
where $\prec$ denotes majorization for Hermitian operators summarized in Section~\ref{sec:entanglement}.
Also, for any protocol for exact state merging of $\Ket{\psi}^{RAB}$ in the non-catalytic setting,
it holds that
\begin{equation}
    \label{eq:lower_non_catalytic}
      \log_2 K
      \geqq \min\left\{\log_2 K: \frac{\mathbb{1}_K}{K}\otimes\psi^{B}\prec\psi^{AB}\right\},
\end{equation}
where the notations are the same as those in Inequality~\eqref{eq:lower_catalytic}.
\end{theorem}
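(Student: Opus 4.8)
The plan is to reduce the converse to the LOCC-convertibility criterion for bipartite pure states, namely Nielsen's majorization theorem (Lemma~\ref{lem:pure_convertibility}). The crucial observation is that although state merging nominally involves three parties, the reference $R$ is never acted upon: any LOCC protocol between $A$ and $B$ is, in particular, an LOCC protocol with respect to the coarser bipartition in which $R$ is grouped together with $A$. Concretely, I would regard party $A$ as holding $\mathcal{H}^R\otimes\mathcal{H}^A\otimes\mathcal{H}^{\overline{A}}$ and party $B$ as holding $\mathcal{H}^B\otimes\mathcal{H}^{\overline{B}}$ (and, after merging, $\mathcal{H}^{B'}\otimes\mathcal{H}^B\otimes\mathcal{H}^{\overline{B}}$). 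Since both the input $\Ket{\psi}^{RAB}\otimes\Ket{\Phi_K^+}^{\overline{A}\overline{B}}$ and the output $\Ket{\psi}^{RB'B}\otimes\Ket{\Phi_L^+}^{\overline{A}\overline{B}}$ are pure across this bipartition, any successful protocol realizes an LOCC conversion between two bipartite pure states, so Lemma~\ref{lem:pure_convertibility} forces a majorization relation between their reduced states.

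I would then compute these reduced states on the $B$-side. Tracing out the $A$-side of the input gives $\psi^B\otimes\frac{\mathbb{1}_K}{K}$, using that $\Ket{\Phi_K^+}$ has maximally mixed marginal $\mathbb{1}_K/K$. Tracing out the $A$-side of the output gives $\psi^{B'B}\otimes\frac{\mathbb{1}_L}{L}$, and since $\Ket{\psi}^{RB'B}$ is merely $\Ket{\psi}^{RAB}$ with the label $A$ replaced by $B'$, the operator $\psi^{B'B}$ has the same spectrum as $\psi^{AB}$. Lemma~\ref{lem:pure_convertibility} then forces the necessary condition
\[
  \frac{\mathbb{1}_K}{K}\otimes\psi^B \prec \frac{\mathbb{1}_L}{L}\otimes\psi^{AB},
\]
where the ordering of tensor factors is immaterial for majorization. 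Because every protocol for exact state merging, with its pair $(K,L)$, must satisfy this relation, its entanglement cost $\log_2 K-\log_2 L$ is bounded below by the infimum of $\log_2 K-\log_2 L$ over all $(K,L)$ obeying the majorization, which is exactly Inequality~\eqref{eq:lower_catalytic}. Setting $\log_2 L=0$, i.e. $L=1$, specializes the argument to the non-catalytic case and gives Inequality~\eqref{eq:lower_non_catalytic}.

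The main technical point to handle carefully is that the input and output live on Hilbert spaces of different dimension, since $A$'s local system $\mathcal{H}^A$ is surrendered while the fresh system $\mathcal{H}^{B'}$ is created at $B$. I would address this by invoking the vector formulation of majorization from Section~\ref{sec:entanglement}: the eigenvalue vectors of the two reduced states are padded with zeros to a common length before comparison, so Lemma~\ref{lem:pure_convertibility} applies once both states are embedded in a sufficiently large common Hilbert space. A secondary subtlety is the catalytic setting, where part of the resource is reused; but since $\Ket{\Phi_L^+}$ appears explicitly in both the input and the output, the relation above is a genuine (non-catalytic) majorization of the full endpoint states, so no catalytic-majorization argument is needed. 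I expect the grouping of $R$ with $A$ — and the verification that it does not weaken the necessary condition — to be the conceptual heart of the proof, while the marginal computations are routine.
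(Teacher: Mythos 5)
Your proposal is correct and follows essentially the same route as the paper: both group the reference $R$ with $A$ to view the protocol as a bipartite pure-state LOCC conversion across the cut $R A \overline{A}$ versus $B B' \overline{B}$, apply the majorization criterion of Lemma~\ref{lem:pure_convertibility} to the reduced states $\frac{\mathbb{1}_K}{K}\otimes\psi^{B}$ and $\frac{\mathbb{1}_L}{L}\otimes\psi^{AB}$, and take the infimum over all admissible $(K,L)$. Your additional remarks on zero-padding and on why no catalytic-majorization argument is needed are sound and merely make explicit what the paper leaves implicit.
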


\begin{proof}
  The proof of Inequality~\eqref{eq:lower_catalytic} is given, while Inequality~\eqref{eq:lower_non_catalytic} can be shown in a similar way by substituting $L$ in the following proof with $1$.

  Any protocol for exact state merging transforms $\Ket{\psi}^{RAB}\otimes\Ket{\Phi_K^+}^{\overline{A}\overline{B}}$ into $\Ket{\psi}^{RB'B}\otimes\Ket{\Phi_L^+}^{\overline{A}\overline{B}}$ by LOCC\@.
  Hence, with respect to the bipartition between $\mathcal{H}^R\otimes\mathcal{H}^A\otimes\mathcal{H}^{\overline{A}}$ and $\mathcal{H}^B\otimes\mathcal{H}^{B^\prime}\otimes\mathcal{H}^{\overline{B}}$, LOCC convertibility between bipartite pure states yields
  the majorization condition in Lemma~\ref{lem:pure_convertibility}
  \begin{equation}
    \frac{\mathbb{1}_K}{K}\otimes\psi^{B}\prec\frac{\mathbb{1}_L}{L}\otimes\psi^{AB}
  \end{equation}
  in terms of Hermitian operators representing their reduced states.
  Since this majorization holds for any $K$ and $L$ achieving exact state merging of $\Ket{\psi}^{RAB}$, Inequality~\eqref{eq:lower_catalytic} is obtained.
\end{proof}

As a corollary of Theorem~\ref{thm:new}, the following converse bound for maximally entangled states in the form of Equation~\eqref{eq:max} is obtained, which is easier to calculate compared to that in Theorem~\ref{thm:new}.
The following analysis in this section may assume that $\psi^R=\frac{\mathbb{1}^R}{D}$ holds for a given state $\Ket{\psi}^{RAB}$ for simplicity,
based on the fact that entanglement cost in exact state merging of $\Ket{\psi}^{RAB}$ and that of $\Ket{\Phi_D^+\left(\psi\right)}^{RAB}$ are the same, as shown in Proposition~\ref{prp:max}.
Note that to calculate the converse bound in the following corollary for any given state $\Ket{\psi}^{RAB}$, first calculate the Schmidt decomposition of $\Ket{\psi}^{RAB}$ to obtain the corresponding maximally entangled state $\Ket{\Phi_D^+\left(\psi\right)}^{RAB}$ from Equation~\eqref{eq:max}, and then apply the corollary to $\Ket{\Phi_D^+\left(\psi\right)}^{RAB}$.

\begin{corollary}
\label{col:tractable_converse}
\textit{A converse bound of entanglement cost in exact state merging derived from Theorem~\ref{thm:new}}.
For any state $\Ket{\psi}^{RAB}$ satisfying $\psi^R=\frac{\mathbb{1}^R}{D}$, and any protocol for exact state merging of $\Ket{\psi}^{RAB}$, it holds that
\begin{equation}
  \label{eq:tractable_lower_catalytic}
  \log_2 K - \log_2 L \geqq \log_2 \left({\lambda_0^B}D\right),
\end{equation}
where $\lambda_0^B$ is the largest eigenvalue of $\psi^B$.
Also, for any protocol for exact state merging in the non-catalytic setting of $\Ket{\psi}^{RAB}$ satisfying $\psi^R=\frac{\mathbb{1}^R}{D}$,
it holds that
\begin{equation}
  \label{eq:tractable_lower_non_catalytic}
  \log_2 K \geqq \log_2 \left\lceil\lambda_0^B D\right\rceil,
\end{equation}
where $\lceil{}\cdots{}\rceil$ is the ceiling function, and $\lambda_0^B$ is the same as that in Equation~\eqref{eq:tractable_lower_catalytic}.
\end{corollary}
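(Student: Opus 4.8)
The plan is to derive both inequalities directly from the operator majorization condition established in Theorem~\ref{thm:new}, specializing it using the hypothesis $\psi^R=\frac{\mathbb{1}^R}{D}$. The only extra ingredient required is the elementary fact that majorization of Hermitian operators controls their largest eigenvalues: if $\phi\prec\chi$, then taking $m=0$ in the partial-sum characterization of vector majorization summarized in Section~\ref{sec:entanglement} gives $\lambda_0^\phi\leqq\lambda_0^\chi$, where $\lambda_0^\phi$ and $\lambda_0^\chi$ denote the largest eigenvalues. (Padding the shorter eigenvalue vector with zeros to a common length, as in the definition in Section~\ref{sec:entanglement}, makes this comparison legitimate even though the two operators act on spaces of different dimension; both have unit trace, so the equality constraint on the full sum is automatically met.)

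First I would invoke Theorem~\ref{thm:new} to obtain, for any protocol for exact state merging of $\Ket{\psi}^{RAB}$, the majorization condition
\begin{equation}
  \frac{\mathbb{1}_K}{K}\otimes\psi^{B}\prec\frac{\mathbb{1}_L}{L}\otimes\psi^{AB}.
\end{equation}
Next I would compute the largest eigenvalue of each side. Because $\frac{\mathbb{1}_K}{K}$ has every eigenvalue equal to $\frac{1}{K}$, tensoring merely rescales the spectrum of $\psi^B$, so the largest eigenvalue of the left-hand side is $\frac{\lambda_0^B}{K}$, with $\lambda_0^B$ the largest eigenvalue of $\psi^B$. For the right-hand side I would use that $\Ket{\psi}^{RAB}$ is pure, so with respect to the bipartition $\mathcal{H}^R$ versus $\mathcal{H}^A\otimes\mathcal{H}^B$ the reduced states $\psi^R$ and $\psi^{AB}$ share the same nonzero eigenvalues. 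The hypothesis $\psi^R=\frac{\mathbb{1}^R}{D}$ forces all of these nonzero eigenvalues to equal $\frac{1}{D}$, whence $\lambda_0^{AB}=\frac{1}{D}$ and the largest eigenvalue of $\frac{\mathbb{1}_L}{L}\otimes\psi^{AB}$ equals $\frac{1}{LD}$.

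Applying the largest-eigenvalue comparison then yields $\frac{\lambda_0^B}{K}\leqq\frac{1}{LD}$, that is, $\frac{K}{L}\geqq\lambda_0^B D$, and taking $\log_2$ gives Inequality~\eqref{eq:tractable_lower_catalytic}. For the non-catalytic case I would set $L=1$ throughout the argument, obtaining $K\geqq\lambda_0^B D$; since $K$ is a positive integer this sharpens to $K\geqq\lceil\lambda_0^B D\rceil$, which upon taking $\log_2$ gives Inequality~\eqref{eq:tractable_lower_non_catalytic}.

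I do not expect a genuine obstacle here, since the entire content is a reduction from the operator majorization of Theorem~\ref{thm:new} to a single scalar inequality between top eigenvalues. The points needing care are purely bookkeeping: confirming that tensoring with a scaled identity simply rescales the spectrum, that purity together with $\psi^R=\frac{\mathbb{1}^R}{D}$ pins down $\lambda_0^{AB}=\frac{1}{D}$, and that the integrality of $K$ is what upgrades the plain bound into the ceiling appearing in Inequality~\eqref{eq:tractable_lower_non_catalytic}.
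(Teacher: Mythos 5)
Your proposal is correct and follows essentially the same route as the paper's proof: specialize the majorization of Theorem~\ref{thm:new}, compare largest eigenvalues to get $\frac{\lambda_0^B}{K}\leqq\frac{1}{DL}$ (using that purity and $\psi^R=\frac{\mathbb{1}^R}{D}$ force $\lambda_0^{AB}=\frac{1}{D}$), and invoke integrality of $K$ for the ceiling in the non-catalytic case. The only difference is that you spell out the spectral bookkeeping the paper leaves implicit.
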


\begin{proof}[\textit{Proof of Inequality~\eqref{eq:tractable_lower_catalytic}}]
  Due to Theorem~\ref{thm:new}, exact state merging implies
  \begin{equation}
    \frac{\mathbb{1}_K}{K}\otimes\psi^{B}\prec\frac{\mathbb{1}_L}{L}\otimes\psi^{AB}.
  \end{equation}
  Thus, the largest eigenvalues of the both sides of this majorization satisfy
  \begin{equation}
    \frac{\lambda_0^B}{K}\leqq\frac{1}{DL}.
  \end{equation}
  Hence, it holds that
  \begin{equation}
    \log_2 K - \log_2 L \geqq \log_2\left({\lambda_0^B}D\right).
  \end{equation}
\end{proof}

\begin{proof}[\textit{Proof of Inequality~\eqref{eq:tractable_lower_non_catalytic}}]
    From the same argument as the above, it is obtained that
    \begin{equation}
        \frac{\lambda_0^B}{K}\leqq\frac{1}{D}.
    \end{equation}
    Hence, it holds that
    \begin{equation}
        K\geqq\lambda_0^B D,
    \end{equation}
    and since $K$ is an integer, it holds that
    \begin{equation}
        K\geqq \left\lceil \lambda_0^B D\right\rceil.
    \end{equation}
    Therefore, it is obtained that
    \begin{equation}
        \log_2 K \geqq \log_2 \left\lceil\lambda_0^B D\right\rceil.
    \end{equation}
\end{proof}

Reference~\cite{B9} also provides a converse bound of entanglement cost in exact state merging of any given state $\Ket{\psi}^{RAB}$ in terms of the conditional max-entropy as follows. Note that this converse bound in Reference~\cite{B9} is only shown for one-way LOCC,
while the converse bounds in Theorem~\ref{thm:new} and Corollary~\ref{col:tractable_converse} are applicable to any LOCC map including two-way LOCC\@.

\begin{lemma}
\label{lem:old}
    (Corollary 4.12.\ in Reference~\cite{B9})
    \textit{A converse bound of entanglement cost in exact state merging in Reference~\cite{B9}.}
    For any state $\Ket{\psi}^{RAB}$ and any one-way LOCC protocol for exact state merging of $\Ket{\psi}^{RAB}$,
    where classical communication is performed only from $A$ to $B$,
    it holds that
    \begin{equation}
        \log_2 K - \log_2 L \geqq {H_{\max}(A|B)}_\psi,
    \end{equation}
    where the right-hand side is the conditional max-entropy summarized in Appendix~\ref{sec:one_shot_entropies}.
\end{lemma}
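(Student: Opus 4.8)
The plan is to reconstruct the bound of Reference~\cite{B9} through the smooth conditional entropy calculus. I would use the fidelity-based form of the conditional max-entropy, $H_{\max}(A|B)_\psi=\max_{\sigma^B}\log_2 F^2\left(\psi^{AB},\mathbb{1}^A\otimes\sigma^B\right)$ with the maximization over density operators $\sigma^B$, together with its duality $H_{\max}(A|B)_\psi=-H_{\min}(A|R)_\psi$ for the pure state $\Ket{\psi}^{RAB}$, its additivity on tensor products, the value $H_{\max}(\overline{A}|\overline{B})_{\Phi_N^+}=-\log_2 N$ on a maximally entangled state of Schmidt rank $N$, and the data-processing inequalities for $H_{\min}$ and $H_{\max}$. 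The idea is to track the conditional max-entropy of the total system held by $A$ given the total system held by $B$ along the one-way protocol and to read it off at the endpoints: the input ${\psi}^{RAB}\otimes{\Phi_K^+}^{\overline{A}\overline{B}}$ contributes $H_{\max}(A|B)_\psi-\log_2 K$, whereas the exact output ${\psi}^{RB'B}\otimes{\Phi_L^+}^{\overline{A}\overline{B}}$ contributes $-\log_2 L$ (the reductions onto $A$ and $B$ being understood in $H_{\max}$). Thus the desired inequality $\log_2 K-\log_2 L\geqq H_{\max}(A|B)_\psi$ is equivalent to the statement that this quantity does not decrease from the input to the output.

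The main obstacle is precisely establishing that monotonicity across the classical-communication step, and it is here that the restriction to one-way LOCC is essential. A naive single-quantity telescoping fails: $A$'s local operations raise the conditional max-entropy of $A$'s total system given $B$'s total system by data processing, while broadcasting the classical outcome to $B$ lowers it, so the two effects act in opposite directions. To control this I would keep the measurement outcome in a classical register and invoke a chain rule for $H_{\max}$ conditioned on that register, using crucially that in one-way LOCC the message flows only from $A$ to $B$ and that $B$'s corrections are controlled by it. This one-sidedness is what the two-way setting lacks, which explains why the bound of Reference~\cite{B9} is stated only for one-way LOCC and does not obviously extend to two-way LOCC.

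Within the present framework a cleaner route is available through the majorization converse of Theorem~\ref{thm:new}, which holds for arbitrary LOCC and hence for one-way LOCC. Reducing by Proposition~\ref{prp:max} to a state with $\psi^R=\frac{\mathbb{1}^R}{D}$, so that $\psi^{AB}=\frac{1}{D}\Pi$ is a rescaled rank-$D$ projector, Corollary~\ref{col:tractable_converse} gives $\log_2 K-\log_2 L\geqq\log_2\left(\lambda_0^B D\right)$, and a short Cauchy--Schwarz estimate in the fidelity definition then yields $H_{\max}(A|B)_\psi\leqq\log_2\left(\lambda_0^B D\right)$: for any $\sigma^B$ one has $\tr\left[\Pi\left(\mathbb{1}^A\otimes\sigma^B\right)\right]=D\,\tr\left(\psi^B\sigma^B\right)\leqq D\lambda_0^B$ and $\rank\left[\Pi\left(\mathbb{1}^A\otimes\sigma^B\right)\Pi\right]\leqq D$, whence $F^2\left(\psi^{AB},\mathbb{1}^A\otimes\sigma^B\right)=\frac{1}{D}\left(\tr\sqrt{\Pi\left(\mathbb{1}^A\otimes\sigma^B\right)\Pi}\right)^2\leqq\lambda_0^B D$. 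The remaining gap is that this reduction replaces the genuine Schmidt coefficients of $\Ket{\psi}^{RAB}$ by flat ones, so it bounds the cost by $H_{\max}$ of the flattened state; closing the gap to $H_{\max}(A|B)_\psi$ for general coefficients requires the full majorization ordering of Theorem~\ref{thm:new} rather than only its largest-eigenvalue consequence, or else the direct one-way argument of Reference~\cite{B9} sketched above.
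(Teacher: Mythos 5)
The paper itself offers no proof of Lemma~\ref{lem:old}: it is imported verbatim as Corollary~4.12 of Reference~\cite{B9}, so there is no internal argument to compare against, and your proposal must be judged on its own. As written it does not close the argument for a general $\Ket{\psi}^{RAB}$. Your first route is the correct skeleton, and it is essentially how Reference~\cite{B9} proceeds: evaluate the conditional max-entropy of Alice's total holdings given Bob's total holdings at the two endpoints, obtaining $H_{\max}(A|B)_\psi-\log_2 K$ for the input and $-\log_2 L$ for the output (both endpoint values are computed correctly, using additivity and $H_{\max}(\overline{A}|\overline{B})_{\Phi_N^+}=-\log_2 N$), and then show this quantity cannot decrease along a one-way protocol. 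But that non-decrease across Alice's instrument and the broadcast of its outcome \emph{is} the entire mathematical content of the lemma, and you only gesture at it (``invoke a chain rule for $H_{\max}$ conditioned on that register''). As you yourself observe, Alice's local operations and the communication step push the quantity in opposite directions, so one genuinely needs the quantitative chain-rule and duality bookkeeping of Reference~\cite{B9}; without it the first route is a plan rather than a proof.

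Your second route is correct as far as it goes but proves a different statement. The Cauchy--Schwarz estimate $F^2\left(\psi^{AB},\mathbb{1}^A\otimes\sigma^B\right)\leqq\lambda_0^B D$ for $\psi^{R}=\frac{\mathbb{1}^R}{D}$ is a nice elementary alternative to the paper's SDP-feasibility argument (the paper instead takes $Z^{AB}=D\psi^{AB}$ as a feasible point in the dual program of Reference~\cite{V2}), and combined with Corollary~\ref{col:tractable_converse} it yields $\log_2 K-\log_2 L\geqq\log_2\left(\lambda_0^B D\right)\geqq H_{\max}(A|B)_\psi$ for such states --- but this is precisely the comparison Proposition that follows Lemma~\ref{lem:old} in the paper, not the lemma itself. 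For general Schmidt coefficients, Proposition~\ref{prp:max} lets you pass to $\Ket{\Phi_D^+\left(\psi\right)}$ at equal merging cost, but the bound you then obtain involves $H_{\max}(A|B)_{\Phi_D^+\left(\psi\right)}$, and there is no general inequality guaranteeing that this dominates $H_{\max}(A|B)_\psi$. So neither route, as written, establishes Lemma~\ref{lem:old} in the stated generality; the missing piece is the one-way monotonicity argument of Reference~\cite{B9}.
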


For states in the form of Equation~\eqref{eq:max}, the converse bounds in Theorem~\ref{thm:new} and Corollary~\ref{col:tractable_converse} are at least as tight as the existing bound in Lemma~\ref{lem:old} as shown in the following proposition.
Moreover, Implication~\ref{ex:2} in Section~\ref{sec:examples} will show a case where these converse bounds are strictly tighter than the existing bound.
It is sufficient to show that the converse bound in Corollary~\ref{col:tractable_converse} is at least as tight as that in Lemma~\ref{lem:old}, since Theorem~\ref{thm:new} provides at least as tight bound as that in Corollary~\ref{col:tractable_converse}.

\begin{proposition}
  \textit{Comparison of converse bounds of entanglement cost in exact state merging.}
    For any state $\Ket{\psi}^{RAB}$ satisfying $\psi^R=\frac{\mathbb{1}^R}{D}$,
    it holds that
    \begin{equation}
        \log_2\left({\lambda_0^B}D\right)\geqq{H_{\max}(A|B)}_\psi,
    \end{equation}
    where the notations are the same as those in Corollary~\ref{col:tractable_converse} and Lemma~\ref{lem:old}.
\end{proposition}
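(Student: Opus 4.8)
The plan is to work from the fidelity characterization of the conditional max-entropy, ${H_{\max}(A|B)}_\psi=\max_{\sigma^B}\log_2 F^2\left(\psi^{AB},\mathbb{1}^A\otimes\sigma^B\right)$, where the maximization runs over density operators $\sigma^B\in\mathcal{D}\left(\mathcal{H}^B\right)$ and $F$ is the square-root fidelity ${\left\|\sqrt{\cdot}\sqrt{\cdot}\right\|}_1$ from Section~\ref{sec:decomposition} (the precise convention being that fixed in Appendix~\ref{sec:one_shot_entropies}). The crucial simplification comes from the hypothesis $\psi^R=\frac{\mathbb{1}^R}{D}$: the state $\Ket{\psi}^{RAB}$ is then maximally entangled across the bipartition between $\mathcal{H}^R$ and $\mathcal{H}^A\otimes\mathcal{H}^B$, so its Schmidt coefficients across this cut are all equal to $\frac{1}{\sqrt{D}}$ and the reduced state is proportional to a rank-$D$ projector, that is, $\psi^{AB}=\frac{1}{D}\Pi^{AB}$ with $\Pi^{AB}\coloneq\sum_{l=0}^{D-1}\Ket{\psi_l}\Bra{\psi_l}^{AB}$. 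First I would substitute this into the fidelity so as to reduce the whole statement to a bound on a single trace norm.

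Concretely, using $\sqrt{\psi^{AB}}=\frac{1}{\sqrt{D}}\Pi^{AB}$ and $\sqrt{\mathbb{1}^A\otimes\sigma^B}=\mathbb{1}^A\otimes\sqrt{\sigma^B}$, I would write, for any density operator $\sigma^B$,
\begin{equation}
  F^2\left(\psi^{AB},\mathbb{1}^A\otimes\sigma^B\right)=\frac{1}{D}{\left\|\Pi^{AB}\left(\mathbb{1}^A\otimes\sqrt{\sigma^B}\right)\right\|}_1^2 .
\end{equation}
Setting $X\coloneq\Pi^{AB}\left(\mathbb{1}^A\otimes\sqrt{\sigma^B}\right)$, the key estimate is the Cauchy--Schwarz inequality on singular values, ${\left\|X\right\|}_1\leqq\sqrt{\rank X}\,{\left\|X\right\|}_2$, together with $\rank X\leqq\rank\Pi^{AB}=D$, where ${\left\|\cdot\right\|}_2$ is the Hilbert--Schmidt norm. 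The Hilbert--Schmidt norm is then computed exactly using $\left(\Pi^{AB}\right)^2=\Pi^{AB}$ and cyclicity of the trace: ${\left\|X\right\|}_2^2=\tr\left[\Pi^{AB}\left(\mathbb{1}^A\otimes\sigma^B\right)\right]=D\,\tr\left[\psi^{AB}\left(\mathbb{1}^A\otimes\sigma^B\right)\right]=D\,\tr\left[\psi^B\sigma^B\right]$, and finally $\tr\left[\psi^B\sigma^B\right]\leqq{\left\|\psi^B\right\|}_\infty{\left\|\sigma^B\right\|}_1=\lambda_0^B$ by Hölder's inequality and the normalization of $\sigma^B$. Chaining these gives ${\left\|X\right\|}_1^2\leqq D\cdot D\lambda_0^B=\lambda_0^B D^2$, hence $F^2\leqq\lambda_0^B D$ for every $\sigma^B$.

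Taking the logarithm and then the maximum over $\sigma^B$ yields ${H_{\max}(A|B)}_\psi\leqq\log_2\left(\lambda_0^B D\right)$, which is exactly the claim. The main obstacle I anticipate is not analytic but one of bookkeeping with the definition of ${H_{\max}}$: one must check that the convention fixed in Appendix~\ref{sec:one_shot_entropies} is the fidelity form above, so that the factor $\frac{1}{D}$ arising from the unnormalized operator $\mathbb{1}^A\otimes\sigma^B$ (whose trace equals $\dim\mathcal{H}^A$) is accounted for correctly, and that the optimization genuinely ranges over normalized $\sigma^B$ so that ${\left\|\sigma^B\right\|}_1=1$ may be used in the last step. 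Once the definition is pinned down, every inequality above is either Cauchy--Schwarz or Hölder, so no delicate optimization is needed; the slack between the two sides is precisely what lets Corollary~\ref{col:tractable_converse} improve on Lemma~\ref{lem:old}, consistently with the strict separation exhibited later in Implication~\ref{ex:2}.
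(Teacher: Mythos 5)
Your proof is correct, and it takes a genuinely different route from the paper's. The paper invokes the semidefinite-programming characterization of $2^{{H_{\max}(A|B)}_\psi}$ from Reference~\cite{V2} (minimize ${\left\|Z^B\right\|}_\infty$ subject to $\mathbb{1}^R\otimes Z^{AB}\geqq\Ket{\psi}\Bra{\psi}^{RAB}$ and $Z^{AB}\geqq 0$) and simply exhibits the feasible point $Z^{AB}=D\psi^{AB}$, for which ${\left\|Z^B\right\|}_\infty=\lambda_0^B D$; the bound is then immediate because any feasible value upper-bounds the minimum. You instead work directly from the primal, fidelity-based definition of ${H_{\max}}$ fixed in Appendix~\ref{sec:one_shot_entropies}, exploit that $\psi^R=\frac{\mathbb{1}^R}{D}$ forces $\psi^{AB}=\frac{1}{D}\Pi^{AB}$ for a rank-$D$ projector $\Pi^{AB}$, and control ${\left\|\Pi^{AB}\left(\mathbb{1}^A\otimes\sqrt{\sigma^B}\right)\right\|}_1$ via the rank--Cauchy--Schwarz estimate ${\left\|X\right\|}_1\leqq\sqrt{\rank X}\,{\left\|X\right\|}_2$ followed by H\"older; each step checks out, including the normalization ${\left\|\sigma^B\right\|}_1=1$ (the optimization in the non-smoothed ${H_{\max}}$ is indeed over $\mathcal{D}\left(\mathcal{H}^B\right)$) and the bound $\rank X\leqq\rank\Pi^{AB}=D$. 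What each approach buys: the paper's argument is a two-line verification once the SDP duality of Reference~\cite{V2} is granted, whereas yours is self-contained, needing only the definition of ${H_{\max}}$ and elementary norm inequalities; both make transparent where the slack lies that allows Corollary~\ref{col:tractable_converse} to be strictly tighter than Lemma~\ref{lem:old}, as in Implication~\ref{ex:2}.
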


\begin{proof}
    Consider the Schmidt decomposition of $\Ket{\psi}^{RAB}$
    \begin{equation}
        \Ket{\psi}^{RAB}=\sum_{l=0}^{D-1} \frac{1}{\sqrt{D}}\Ket{l}^R\otimes\Ket{\psi_l}^{AB}.
    \end{equation}
    Reference~\cite{V2} shows that $2^{{H_{\max}(A|B)}_\psi}$ equals to the following optimization problem, which is a type of optimization problem called semidefinite programming:
    minimize ${\left\|Z^B\right\|}_\infty$ subject to $\mathbb{1}^R \otimes Z^{AB}\geqq\Ket{\psi}\Bra{\psi}^{RAB}$ and $Z^{AB}\geqq 0$.
    The case
    \begin{equation}
      Z^{AB}=D\psi^{AB}
    \end{equation}
    satisfies these constraints:
    \begin{align}
      &\mathbb{1}^R\otimes D\psi^{AB}=\sum_l\Ket{l}\Bra{l}^R\otimes\sum_l\Ket{\psi_l}\Bra{\psi_l}^{AB}\geqq\Ket{\psi}\Bra{\psi}^{RAB};\\
      &D\psi^{AB}\geqq 0.
    \end{align}
    Therefore, it holds that
    \begin{equation}
      \begin{split}
        &\log_2 \left({\lambda_0^B}D\right) = \log_2 {\left\|D\psi^B\right\|}_\infty\\
        &\geqq\min_{Z^{AB}} \log_2{\left\|Z^B\right\|}_\infty ={{H_{\max}(A|B)}_\psi}.
      \end{split}
    \end{equation}
\end{proof}

It is natural to ask how tight the converse bounds in Theorem~\ref{thm:new} and Corollary~\ref{col:tractable_converse} are.
In the following analysis of the tightness, the non-catalytic setting of exact state merging using one-way LOCC from $A$ to $B$ is considered for simplicity,
and the following proposition simplifies the analysis.

\begin{proposition}
\label{prp:equivalence}
    \textit{A necessary and sufficient condition for exact state merging in the non-catalytic setting by one-way LOCC\@.}
    Given any pure state $\Ket{\psi}^{RAB}$ satisfying $\psi^R=\frac{\mathbb{1}^R}{D}$,
    there exists one-way LOCC map $\mathcal{M}^{A\to B}$ from $A$ to $B$ achieving
    \begin{equation}
        \id^R\otimes\mathcal{M}^{A\to B}\left(\psi^{RAB}\otimes{\Phi_K^+}^{\overline{A}\overline{B}}\right)=\psi^{RB'B}
    \end{equation}
    if and only if
    there exists a mixed-unitary channel
    \begin{equation}
      \mathcal{U}(\rho)=\sum_m p\left(m\right) U_m\rho U_m^\dag,
    \end{equation}
    where $p\left(m\right)$ is a probability distribution and $U_m$ for each $m$ is a unitary, achieving
    \begin{equation}
      \label{eq:mixed_unitary}
      \id^R\otimes\mathcal{U}^{\hat{B}}\left({\Phi_D^+}^{R\hat{B}}\right)=\psi^{RB}\otimes\frac{\mathbb{1}_K^{\overline{B}}}{K},
    \end{equation}
    where $\mathcal{H}^{\hat{B}}=\mathcal{H}^{B}\otimes\mathcal{H}^{\overline{B}}$ and $\Ket{\Phi_D^+}^{R\hat{B}}\coloneq\frac{1}{\sqrt{D}}\sum_{l=0}^{D-1}\Ket{l}^R\otimes\Ket{l}^{\hat{B}}$.
\end{proposition}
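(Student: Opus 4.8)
The plan is to reduce an arbitrary one-way protocol to a canonical form and then read the mixed-unitary channel directly off the states that appear on $R\hat{B}$ after $A$'s measurement. First I would normalise any one-way LOCC protocol from $A$ to $B$: by dilating and refining $A$'s measurement to be rank-one I may assume $A$'s operation is a measurement $\left\{\Bra{\xi_m}^{A\overline{A}}\right\}_m$ with $\sum_m\Ket{\xi_m}\Bra{\xi_m}^{A\overline{A}}=\mathbb{1}$ that discards $A\overline{A}$ (in the non-catalytic setting $A$ keeps nothing), after which $A$ sends $m$ and $B$ applies an operation on $\hat{B}=\mathcal{H}^B\otimes\mathcal{H}^{\overline{B}}$ only. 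Since the global input $\Ket{\psi}^{RAB}\otimes\Ket{\Phi_K^+}^{\overline{A}\overline{B}}$ is pure and the protocol is exact, each outcome $m$ (with probability $p(m)$) leaves $R\hat{B}$ in a pure state $\Ket{\chi_m}^{R\hat{B}}$ determined by $\left(\id^{R\hat{B}}\otimes\Bra{\xi_m}^{A\overline{A}}\right)\left(\Ket{\psi}^{RAB}\otimes\Ket{\Phi_K^+}^{\overline{A}\overline{B}}\right)=\sqrt{p(m)}\Ket{\chi_m}^{R\hat{B}}$, and $B$ maps $\Ket{\chi_m}$ to $\Ket{\psi}^{RB'B}$ by an $\hat{B}$-local operation. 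The whole statement then hinges on describing these branch states.

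For the forward direction I would argue as follows. Because $B$'s recovery acts on $\hat{B}$ only, it leaves the $R$-marginal invariant, so $\chi_m^R=\psi^R=\frac{\mathbb{1}_D}{D}$; hence each $\Ket{\chi_m}^{R\hat{B}}$ is maximally entangled of Schmidt rank $D$ and can be written $\Ket{\chi_m}=\left(\id^R\otimes U_m^{\hat{B}}\right)\Ket{\Phi_D^+}^{R\hat{B}}$ for a unitary $U_m$ on $\hat{B}$ (the computational-basis embedding of $\Phi_D^+$ is without loss of generality, any other embedding being absorbed into $U_m$). Setting $\mathcal{U}:=\sum_m p(m)\,U_m(\cdot)U_m^\dag$ and using that tracing out $A\overline{A}$ after $A$'s (trace-preserving) measurement is the same as tracing it out directly, I get $\sum_m p(m)\,\chi_m^{R\hat{B}}=\tr_{A\overline{A}}\left(\psi^{RAB}\otimes{\Phi_K^+}^{\overline{A}\overline{B}}\right)=\psi^{RB}\otimes\frac{\mathbb{1}_K^{\overline{B}}}{K}$, which is exactly $\id^R\otimes\mathcal{U}\left({\Phi_D^+}^{R\hat{B}}\right)=\psi^{RB}\otimes\frac{\mathbb{1}_K}{K}$.

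For the converse I would run this backward. Given $\mathcal{U}=\sum_m p(m)U_m(\cdot)U_m^\dag$ realising the displayed condition, set $\Ket{\chi_m}^{R\hat{B}}:=\left(\id^R\otimes U_m\right)\Ket{\Phi_D^+}^{R\hat{B}}$, so that $\left\{p(m),\chi_m^{R\hat{B}}\right\}$ is an ensemble for $\psi^{RB}\otimes\frac{\mathbb{1}_K}{K}$. Since $\Ket{\psi}^{RAB}\otimes\Ket{\Phi_K^+}^{\overline{A}\overline{B}}$ is a purification of this state with purifying system $A\overline{A}$, the standard correspondence between ensembles of a reduced state and measurements on the purifying system (the steering used in Equation~\eqref{eq:psi_lambda}) supplies a measurement $\left\{\Bra{\xi_m}^{A\overline{A}}\right\}$ with $\left(\id\otimes\Bra{\xi_m}\right)\left(\Ket{\psi}\otimes\Ket{\Phi_K^+}\right)=\sqrt{p(m)}\Ket{\chi_m}$. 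Then $A$ measures $\left\{\Bra{\xi_m}\right\}$, sends $m$, and $B$ applies $U_m^\dag$ on $\hat{B}$ to obtain $\Ket{\Phi_D^+}^{R\hat{B}}$ followed by the fixed partial isometry $\Ket{l}^{\hat{B}}\mapsto\Ket{\psi_l}^{B'B}$ on the $D$-dimensional subspace carrying $\Phi_D^+$, where $\Ket{\psi_l}^{AB}$ are the Schmidt vectors of $\Ket{\psi}^{RAB}$ and $\Ket{\psi_l}^{B'B}$ relabels $A$ to $B'$; this recovers $\Ket{\psi}^{RB'B}$ and is manifestly one-way LOCC from $A$ to $B$.

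The main conceptual step, and the one I would emphasise, is the forward observation that recoverability of a \emph{maximally} entangled target by an $\hat{B}$-local operation forces every post-measurement state on $R\hat{B}$ to be maximally entangled, which is precisely what allows the unitaries $U_m$ to be extracted. The technical care lies in the reduction to rank-one $A$-measurements with pure branches (obtained by refining and dilating the protocol, and noting a mixture mapping to a fixed pure output forces each component to map to it), and in observing that the ensemble-to-measurement correspondence permits arbitrarily many non-orthogonal POVM outcomes, so there is no dimension clash between the number of unitaries in the decomposition of $\mathcal{U}$ and $\dim\left(\mathcal{H}^A\otimes\mathcal{H}^{\overline{A}}\right)$; any unused outcomes are completed arbitrarily and carry zero probability.
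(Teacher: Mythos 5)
Your proof is correct and follows essentially the same route as the paper: the direction from the mixed-unitary channel to the protocol is the same purification/steering (HJW) argument the paper uses to produce $A$'s measurement $\{\Bra{m}^{A_0}U\}_m$ and $B$'s corrections $U_m^\dag$, and the reverse direction extracts the unitaries from the post-measurement branch states exactly as in the paper's only-if part. Your observation that preservation of the $R$-marginal under $B$'s $\hat{B}$-local recovery forces every branch state to be maximally entangled of Schmidt rank $D$ is a welcome explicit justification of a reduction the paper's proof states without argument.
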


\begin{proof}
  \textit{If part:} Assume that
  \begin{equation}
    \psi^{RB}\otimes\frac{\mathbb{1}_K^{\overline{B}}}{K}=\sum_m p\left(m\right)\left(\mathbb{1}^R\otimes U_m^{\hat{B}}\right){\Phi_D^+}^{R{\hat{B}}}{\left(\mathbb{1}^R\otimes U_m^{\hat{B}}\right)}^\dag.
  \end{equation}
  A purification yields
  \begin{equation}
    \left(\mathbb{1}^{RB \overline{B}}\otimes U \right)\left(\Ket{\psi}^{RAB}\otimes\Ket{\Phi_K^+}^{\overline{A}\overline{B}}\right)
    =\sum_m \sqrt{p\left(m\right)}\Ket{m}^{A_0}\otimes\left(\mathbb{1}^R\otimes U_m^{\hat{B}}\right)\Ket{\Phi_D^+}^{R{\hat{B}}},
  \end{equation}
  where $\mathcal{H}^{A_0}$ is $A$'s auxiliary system, and $U$ is an isometry performed by $A$.
  Hence, one-way LOCC from $A$ to $B$ represented by ${\left\{\left(\Bra{m}^{A_0} U\right)\otimes {\left(U_m^{\hat{B}}\right)}^\dag\right\}}_m\,$, where the post-measurement state of $A$ is traced out, achieves, for each $m$,
  \begin{equation}
    \mathbb{1}^{R}\otimes\left[\left(\Bra{m}^{A_0} U\right)\otimes {\left(U_m^{\hat{B}}\right)}^\dag\right]\left(\Ket{\psi}^{RAB}\otimes\Ket{\Phi_K^+}^{\overline{A}\overline{B}}\right)
    \propto\Ket{\Phi_D^+}^{R{\hat{B}}},
  \end{equation}
  and $\Ket{\Phi_D^+}^{R{\hat{B}}}$ on the right-hand side can be transformed into $\Ket{\psi}^{RB^\prime B}$ by $B$'s local isometry.

  \textit{Only if part:} Assume that there exists $A$'s POVM ${\left\{\Lambda_m\right\}}_m$ on $\mathcal{H}^A\otimes\mathcal{H}^{\overline{A}}$ satisfying for each $m$
  \begin{equation}
    \tr_A\left[\left(\mathbb{1}^{RB\overline{B}}\otimes\Lambda_m\right)\left(\psi^{RAB}\otimes{\Phi_K^+}^{\overline{A}\overline{B}}\right)\right]
    =p\left(m\right){\left(\mathbb{1}^{R}\otimes U_m^{\hat{B}}\right)}{\Phi_D^+}^{R{\hat{B}}}{\left(\mathbb{1}^{R}\otimes U_m^{\hat{B}}\right)}^\dag,
  \end{equation}
  where $p\left(m\right)$ is a probability distribution, and $U_m^{\hat{B}}$ is $B$'s unitary correction conditioned by $m$.
  Note that ${\Phi_D^+}^{R{\hat{B}}}$ on the right-hand side can be transformed into $\Ket{\psi}^{RB^\prime B}$ by $B$'s local isometry.
  Then, it holds that
  \begin{equation}
    \begin{split}
      &\psi^{RB}\otimes\frac{\mathbb{1}_K^{\overline{B}}}{K}\\
      &=\sum_m\tr_A\left[\left(\mathbb{1}^{RB\overline{B}}\otimes\Lambda_m\right)\left(\psi^{RAB}\otimes{\Phi_K^+}^{\overline{A}\overline{B}}\right)\right]\\
      &=\sum_m p\left(m\right)\left(\mathbb{1}^R\otimes U_m^{\hat{B}}\right){\Phi_D^+}^{R{\hat{B}}}{\left(\mathbb{1}^R\otimes U_m^{\hat{B}}\right)}^\dag\\
      &=\id^R\otimes\mathcal{U}^{\hat{B}}\left({\Phi_D^+}^{R\hat{B}}\right).
    \end{split}
  \end{equation}
\end{proof}

Note that it is straightforward to generalize the above proof of Proposition~\ref{prp:equivalence} in the non-catalytic setting to the catalytic setting, that is,
\begin{equation}
  \begin{split}
    &\id^R\otimes\mathcal{M}^{A\to B}\left(\psi^{RAB}\otimes{\Phi_K^+}^{\overline{A}\overline{B}}\right)={\psi}^{RB^\prime B}\otimes{\Phi_L^+}^{\overline{A}\overline{B}}\\
    &\Leftrightarrow\id^R\otimes\mathcal{U}^{\hat{B}}\left({\Phi_D^+}^{R{\hat{B}}}\otimes\frac{\mathbb{1}_L^{\hat{B}}}{L}\right)=\psi^{RB}\otimes\frac{\mathbb{1}_K^{\overline{B}}}{K},
  \end{split}
\end{equation}
which can also be shown for quantum state redistribution in approximate scenarios~\cite{B15,B13}.

For qubits, the converse bound in Corollary~\ref{col:tractable_converse} is tight enough to provide the optimal entanglement cost, as shown in the following.
Note that an equivalent condition in terms of Schmidt coefficients of $\Ket{\psi_l}^{AB}$ in Equation~\eqref{eq:max} is also given in Theorem~II.1.\ in Reference~\cite{O}.
\begin{theorem}
\label{thm:qubit}
    \textit{Optimal entanglement cost in exact state merging in the non-catalytic setting for qubits.}
    Consider any three-qubit pure state $\Ket{\psi}^{RAB}\in{\left(\mathbb{C}^2\right)}^{\otimes 3}$ satisfying $\psi^R=\frac{\mathbb{1}^R}{2}$,
    exact state merging of $\Ket{\psi}^{RAB}$ in the non-catalytic setting is achievable if and only if
    \begin{equation}
        \log_2 K \geqq \log_2 \left\lceil\lambda_0^B D\right\rceil,
    \end{equation}
    where the notations are the same as those in Corollary~\ref{col:tractable_converse}.
    Equivalently,
    the exact state merging of $\Ket{\psi}^{RAB}$ is achievable at entanglement cost $\log_2 K = 0$ if and only if $\psi^B=\frac{\mathbb{1}^B}{2}$,
    and otherwise, entanglement cost $\log_2 K = 1$ is required.
\end{theorem}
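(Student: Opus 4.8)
The plan is to split the biconditional into its two implications and notice that the lower bound is already in hand. Corollary~\ref{col:tractable_converse}, specifically Inequality~\eqref{eq:tractable_lower_non_catalytic}, applies verbatim to any three-qubit $\Ket{\psi}^{RAB}$ with $\psi^R=\frac{\mathbb{1}^R}{2}$ and yields $\log_2 K\geqq\log_2\lceil\lambda_0^B D\rceil$, which is the ``only if'' direction; so the work lies entirely in achievability. Since $D=2$ and $\psi^B$ is a qubit state, its largest eigenvalue obeys $\frac{1}{2}\leqq\lambda_0^B\leqq 1$, so $\lceil\lambda_0^B D\rceil=\lceil 2\lambda_0^B\rceil$ equals $1$ exactly when $\lambda_0^B=\frac{1}{2}$, i.e.\ when $\psi^B=\frac{\mathbb{1}^B}{2}$, and equals $2$ otherwise. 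This reproduces the dichotomy in the ``equivalently'' clause, and I would establish achievability in the two corresponding cases.

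The case $\psi^B\neq\frac{\mathbb{1}^B}{2}$ is immediate: the target cost is $\log_2 K=1=\log_2\dim\mathcal{H}^A$, which is met by the trivial protocol of teleporting the single qubit $\psi^A$ from $A$ to $B$ using one ebit, as noted after Definition~\ref{def:merging}. The substance is the case $\psi^B=\frac{\mathbb{1}^B}{2}$, where the claim is that merging succeeds with $K=1$, i.e.\ by LOCC with no shared entanglement at all. For this I would apply Proposition~\ref{prp:equivalence} with $K=1$, so that $\mathcal{H}^{\overline{B}}$ is trivial and $\mathcal{H}^{\hat{B}}=\mathcal{H}^B$: exact merging by one-way LOCC at zero entanglement cost holds if and only if there is a mixed-unitary channel $\mathcal{U}^B$ with
\begin{equation}
  \id^R\otimes\mathcal{U}^{B}\left({\Phi_2^+}^{RB}\right)=\psi^{RB}.
\end{equation}
The entire problem thus reduces to producing such a $\mathcal{U}^B$.

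The decisive step, and the one where the qubit hypothesis is indispensable, is to read this identity through the one-to-one correspondence between a CPTP map and its Choi operator from Equation~\eqref{eq:choi_operator}: the left-hand side is precisely the normalized Choi operator of $\mathcal{U}^B$ with $\mathcal{H}^R$ as reference. I would therefore take $\mathcal{U}^B$ to be the unique linear map whose normalized Choi operator is the given $\psi^{RB}$. Then $\psi^{RB}\geqq 0$ forces complete positivity; the marginal $\psi^R=\frac{\mathbb{1}^R}{2}$ forces trace preservation; and the hypothesis $\psi^B=\frac{\mathbb{1}^B}{2}$ forces unitality $\mathcal{U}^B(\mathbb{1}^B)=\mathbb{1}^B$, each via a one-line partial-trace computation on $\id^R\otimes\mathcal{U}^B\left({\Phi_2^+}^{RB}\right)$. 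Invoking the qubit-specific fact recalled in Section~\ref{sec:entanglement} that a qubit channel is mixed unitary if and only if it is unital, $\mathcal{U}^B$ is mixed unitary, and Proposition~\ref{prp:equivalence} then delivers the LOCC protocol at $K=1$. The hard part is exactly this identification of the admissible $\mathcal{U}^B$ as mixed unitary: in dimension larger than two, unital channels need not be mixed unitary, so both this route and the sharpness of the theorem are genuinely restricted to qubits.
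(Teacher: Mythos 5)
Your proposal is correct and follows essentially the same route as the paper's own proof: the converse from Corollary~\ref{col:tractable_converse}, teleportation for the case $\psi^B\neq\frac{\mathbb{1}^B}{2}$, and, for $\psi^B=\frac{\mathbb{1}^B}{2}$, the reduction via Proposition~\ref{prp:equivalence} to exhibiting a mixed-unitary channel, obtained by reading $\psi^{RB}$ as the normalized Choi operator of a unital qubit channel and invoking the fact that unital qubit channels are mixed unitary. The only difference is that you spell out the ``only if'' direction and the ceiling-function dichotomy explicitly, which the paper leaves implicit.
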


\begin{proof}
    \textit{If part:}
    Assume that
    \begin{equation}
      \psi^B=\frac{\mathbb{1}^B}{2},
    \end{equation}
    and the existence of an LOCC protocol for exact state merging of $\Ket{\psi}^{RAB}$ achieving $\log_2 K=0$ is shown.
    Note that otherwise, quantum teleportation of $\psi^A$ achieves $\log_2 K = 1$.
    To show the existence of the LOCC protocol, Proposition~\ref{prp:equivalence} implies that it is sufficient to prove the existence of a mixed-unitary channel $\mathcal{U}$ achieving
    \begin{equation}
      \label{eq:qubit}
      \id^R\otimes\mathcal{U}^B\left({\Phi_2^+}^{RB}\right)=\psi^{RB}.
    \end{equation}
    Note that $\mathcal{H}^{\hat{B}}$ in Equation~\eqref{eq:mixed_unitary} in Proposition~\ref{prp:equivalence} is simply written as $\mathcal{H}^B$ in Equation~\eqref{eq:qubit} since $\mathcal{H}^{\hat{B}}=\mathcal{H}^B$ in this proof.

    Given $\psi^{RB}$ satisfying $\psi^R=\frac{\mathbb{1}^R}{2}$,
    $\psi^{RB}$ can be regarded as a normalized operator of the Choi operator of a CPTP map $\mathcal{U}^B$ defined as Equation~\eqref{eq:choi_operator}.
    Tracing out $\mathcal{H}^R$ for $\psi^{RB}$ yields
    \begin{equation}
        \mathcal{U}^B\left(\frac{\mathbb{1}^B}{2}\right)=\psi^B=\frac{\mathbb{1}^B}{2},
    \end{equation}
    and hence, $\mathcal{U}^B$ is a unital channel.
    Since any unital channel on a qubit is a mixed-unitary channel, $\mathcal{U}^B$ is a mixed-unitary channel, which yield the conclusion.
\end{proof}

As for qudits of more than two dimension, the converse bound in Theorem~\ref{thm:new} is not necessarily achievable, since the following proposition shows an example of exact state merging that does not satisfy the equality of~\eqref{eq:lower_non_catalytic}.
The following proposition shows a three-qutrit state of which any one-way LOCC protocol for exact state merging in the non-catalytic setting fails to achieve
\begin{equation}
    \log_2 K =\min\left\{\log_2 K: \frac{\mathbb{1}_K}{K}\otimes\psi^{B}\prec\psi^{AB}\right\}.
\end{equation}
\begin{proposition}
\label{prp:qutrit}
  \textit{Impossibility of achieving the converse bound of entanglement cost in exact state merging in the non-catalytic setting for qutrits.}
  There exists a three-qutrit pure state $\Ket{\psi}^{RAB}\in{\left(\mathbb{C}^3\right)}^{\otimes 3}$ satisfying $\psi^R=\frac{\mathbb{1}^R}{D}$ where $D=3$, such that exact state merging of $\Ket{\psi}^{RAB}$ in the non-catalytic setting cannot be achieved by any one-way LOCC protocol at entanglement cost
  \begin{equation}
    \log_2 K =\min\left\{\log_2 K: \frac{\mathbb{1}_K}{K}\otimes\psi^{B}\prec\psi^{AB}\right\},
  \end{equation}
  where the notations are the same as those in Theorem~\ref{thm:new}.
\end{proposition}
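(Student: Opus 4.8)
The plan is to reduce the construction to Proposition~\ref{prp:equivalence}, which in the non-catalytic setting with one-way LOCC from $A$ to $B$ characterizes achievability at entanglement cost $\log_2 K$ by the existence of a mixed-unitary channel $\mathcal{U}^{\hat B}$ reproducing the prescribed Choi state. I would exhibit a three-qutrit state whose converse bound in Theorem~\ref{thm:new} and Corollary~\ref{col:tractable_converse} equals $\log_2 K=0$, yet for which the channel forced by Proposition~\ref{prp:equivalence} at $K=1$ is unital but not mixed-unitary, so that no one-way LOCC protocol attains the bound.

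Concretely, I take $\Ket{\psi}^{RAB}$ to be a purification, with a qutrit $A$, of the normalized projector $\psi^{RB}=\frac{1}{3}P_-^{RB}$ onto the three-dimensional antisymmetric subspace of $\mathcal{H}^R\otimes\mathcal{H}^B=\mathbb{C}^3\otimes\mathbb{C}^3$; equivalently, $\psi^{RB}$ is the Choi state of the qutrit Werner--Holevo channel $\mathcal{W}(\rho)=\frac{1}{2}\left(\tr(\rho)\mathbb{1}-\rho^{\mathrm{T}}\right)$. Since $\rank\psi^{RB}=3$, the purifying system $A$ may indeed be taken to be a qutrit, so $\Ket{\psi}^{RAB}\in{\left(\mathbb{C}^3\right)}^{\otimes 3}$. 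The first routine step is to verify the marginals: using $P_-=\frac{1}{2}\left(\mathbb{1}-\mathrm{SWAP}\right)$ and $\tr_B\,\mathrm{SWAP}=\mathbb{1}$ gives $\psi^R=\psi^B=\frac{\mathbb{1}_3}{3}$, so that $\psi^R=\frac{\mathbb{1}^R}{D}$ with $D=3$ as required and $\lambda_0^B=\frac{1}{3}$. Consequently $\lambda_0^B D=1$, and since the purity of $\Ket{\psi}^{RAB}$ forces $\psi^{AB}$ to share the spectrum $\left(\frac{1}{3},\frac{1}{3},\frac{1}{3}\right)$ of $\psi^R$, the majorization $\frac{\mathbb{1}_K}{K}\otimes\psi^B\prec\psi^{AB}$ already holds at $K=1$; hence $\min\left\{\log_2 K:\frac{\mathbb{1}_K}{K}\otimes\psi^B\prec\psi^{AB}\right\}=0$.

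The heart of the argument is then to show that one-way LOCC cannot achieve $\log_2 K=0$. By Proposition~\ref{prp:equivalence} (with $K=1$, so $\mathcal{H}^{\hat B}=\mathcal{H}^B$), this would require a mixed-unitary channel $\mathcal{U}^B$ with $\id^R\otimes\mathcal{U}^B\left({\Phi_3^+}^{RB}\right)=\psi^{RB}$, and by the one-to-one Choi correspondence this channel is forced to be $\mathcal{W}$. I would then observe that any mixed-unitary channel has a Choi state of the form $\sum_m p(m)\Ket{\Phi_m}\Bra{\Phi_m}$ with each $\Ket{\Phi_m}$ maximally entangled, being the Choi state $\left(\mathbb{1}\otimes U_m\right)\Ket{\Phi_3^+}$ of a unitary. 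If this mixture equals $\frac{1}{3}P_-$, every $\Ket{\Phi_m}$ must lie in its support, the antisymmetric subspace. But a vector in that subspace corresponds to an antisymmetric coefficient matrix $M=-M^{\mathrm{T}}$, which is singular since $\det M=(-1)^3\det M$, whereas a maximally entangled state corresponds to a matrix proportional to a unitary, hence non-singular. This contradiction shows $\mathcal{W}$ is not mixed-unitary, so no one-way LOCC protocol attains the converse bound, completing the proof.

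The main obstacle is precisely this last nonexistence claim, that the forced channel is unital yet not mixed-unitary; everything else is bookkeeping of marginals and spectra. The clean way around it is the observation above that the antisymmetric subspace of $\mathbb{C}^3\otimes\mathbb{C}^3$ contains no maximally entangled vector, which rules out any decomposition of $\frac{1}{3}P_-$ into unitary Choi states and thereby exploits the dimension-three phenomenon that unital channels need not be mixed-unitary, a gap absent for qubits as reflected in Theorem~\ref{thm:qubit}.
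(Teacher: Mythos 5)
Your proposal is correct and follows essentially the same route as the paper: the state is identical (your $\frac{1}{3}P_-$ is exactly the normalized Choi operator $J(\mathcal{N})/3$ of the Werner--Holevo map used in the paper), and the argument proceeds through Proposition~\ref{prp:equivalence} and the fact that this unital qutrit channel is not mixed-unitary. The only difference is that the paper cites~\cite{L4,W11} for the non-mixed-unitarity, whereas you supply a short self-contained proof via the observation that the antisymmetric subspace of $\mathbb{C}^3\otimes\mathbb{C}^3$ contains no maximally entangled vector (antisymmetric $3\times 3$ matrices being singular), which is a correct and nice addition.
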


\begin{proof}
    Consider a CPTP map
    \begin{equation}
      \mathcal{N}(\rho)=\frac{1}{2}(\tr\rho)\mathbb{1}-\frac{1}{2}\rho^\textup{T},
    \end{equation}
    where $\rho^\textup{T}$ is the transpose of $\rho$ with respect to the computational basis.
    The Choi operator of $\mathcal{N}$ defined as Equation~\eqref{eq:choi_operator} is written as
    \begin{equation}
        \begin{split}
          &J(\mathcal{N})\coloneq\\
          &\frac{1}{2}\left(\Ket{2}\otimes\Ket{1}-\Ket{1}\otimes\Ket{2}\right){\left(\Bra{2}\otimes\Bra{1}-\Bra{1}\otimes\Bra{2}\right)}+\\
          &\frac{1}{2}\left(\Ket{0}\otimes\Ket{2}-\Ket{2}\otimes\Ket{0}\right){\left(\Bra{0}\otimes\Bra{2}-\Bra{2}\otimes\Bra{0}\right)}+\\
          &\frac{1}{2}\left(\Ket{1}\otimes\Ket{0}-\Ket{0}\otimes\Ket{1}\right){\left(\Bra{1}\otimes\Bra{0}-\Bra{0}\otimes\Bra{1}\right)}.
        \end{split}
    \end{equation}
    This map $\mathcal{N}$ is a unital channel but not a mixed-unitary channel~\cite{L4,W11}.

    Consider a normalized state
    \begin{equation}
        \psi^{RB}\coloneq\frac{J(\mathcal{N})}{3}.
    \end{equation}
    A purification of $\psi^{RB}$ is
    \begin{equation}
        \begin{split}
          &\Ket{\psi}^{RAB}=\\
          &\frac{1}{\sqrt{3}}\Ket{0}^A\otimes{\left(\frac{1}{\sqrt{2}}\Ket{2}^R\otimes\Ket{1}^B-\frac{1}{\sqrt{2}}\Ket{1}^R\otimes\Ket{2}^B\right)}+\\
          &\frac{1}{\sqrt{3}}\Ket{1}^A\otimes{\left(\frac{1}{\sqrt{2}}\Ket{0}^R\otimes\Ket{2}^B-\frac{1}{\sqrt{2}}\Ket{2}^R\otimes\Ket{0}^B\right)}+\\
          &\frac{1}{\sqrt{3}}\Ket{2}^A\otimes{\left(\frac{1}{\sqrt{2}}\Ket{1}^R\otimes\Ket{0}^B-\frac{1}{\sqrt{2}}\Ket{0}^R\otimes\Ket{1}^B\right)}
        \end{split}
    \end{equation}
    This state satisfies
    \begin{align}
      &\psi^R=\frac{\mathbb{1}^R}{3},\\
      &\psi^B=\frac{\mathbb{1}^B}{3}.
    \end{align}
    Hence, it holds that
    \begin{equation}
        \min\left\{\log_2 K: \frac{\mathbb{1}_K}{K}\otimes\psi^{B}\prec\psi^{AB}\right\}=0.
    \end{equation}

    Assume that there exists a one-way LOCC protocol for exact state merging of $\Ket{\psi}^{RAB}$ in the non-catalytic setting at entanglement cost $\log_2 K=0$, to derive a contradiction.
    Due to Proposition~\ref{prp:equivalence}, this assumption is equivalent to the existence of a mixed-unitary channel $\mathcal{U}^B$ such that
    \begin{equation}
        \id^R\otimes\mathcal{U}^B\left({\Phi_3^+}^{RB}\right)=\psi^{RB}=\frac{J(\mathcal{N})}{3},
    \end{equation}
    where, in the same way as Equation~\eqref{eq:qubit}, $\mathcal{H}^{\hat{B}}$ in Equation~\eqref{eq:mixed_unitary} in Proposition~\ref{prp:equivalence} is written as $\mathcal{H}^B$.
    Due to the one-to-one correspondence between a CPTP map and the Choi operator of the CPTP map, $\mathcal{N}=\mathcal{U}$ is necessary, which contradicts to the fact that $\mathcal{N}$ is not a mixed-unitary channel, and the conclusion is obtained.
\end{proof}

\section{\label{sec:converse_approximate}Converse bound for approximate state merging}

Given any pure state $\Ket{\psi}^{RAB}$ and an error $\epsilon\geqq 0$,
Theorem~\ref{thm:new} is extended in this section, to obtain a converse bound of entanglement cost in approximate state merging of $\Ket{\psi}^{RAB}$ within $\epsilon$.
While the catalytic setting is analyzed in the following, the same argument holds for the non-catalytic setting.
It is also shown that this converse bound for approximate state merging improves the converse bound derived from the previous study on one-shot approximate state redistribution~\cite{B10} when $\epsilon$ is sufficiently small.

In the same way as the proof of Theorem~\ref{thm:new} on exact state merging,
a converse bound of entanglement cost in approximate state merging is obtained by applying a majorization condition for LOCC convertibility to the bipartition between $B$ and $RA$.
While the proof of Theorem~\ref{thm:new} on exact state merging uses the majorization condition for LOCC convertibility between bipartite pure states in Lemma~\ref{lem:pure_convertibility},
approximate state merging requires another majorization condition for LOCC convertibility from a bipartite pure state to a bipartite mixed state shown in Lemma~\ref{lem:mixed}, since the final state in approximate state merging can be a mixed state.
Given any pure state $\Ket{\psi}^{RAB}$ and an error $\epsilon\geqq 0$,
a converse bound of entanglement cost in approximate state merging of $\Ket{\psi}^{RAB}$ within $\epsilon$ is obtained using Lemma~\ref{lem:mixed} as follows.

\begin{theorem}
\label{thm:approximate_converse}
\textit{A converse bound of entanglement cost in approximate state merging.}
For any state $\Ket{\psi}^{RAB}$, any error $\epsilon\geqq 0$, and any protocol for approximate state merging of $\Ket{\psi}^{RAB}$ within $\epsilon$,
\begin{equation}
  \begin{split}
    &\log_2 K - \log_2 L \geqq\\
    &\inf\Big\{\log_2 K - \log_2 L:\\
    &\quad\boldsymbol{\lambda}\left(\psi^{B}\otimes\frac{\mathbb{1}_K^{\overline{B}}}{K}\right)\prec\sum_j p(j)\boldsymbol{\lambda}\left(\psi_j^{B^\prime B\overline{B}}\right),\\
    &\quad\left.F^2\left(\sum_j p(j)\Ket{\psi_j}\Bra{\psi_j}^{RB^\prime B\overline{A}\overline{B}},\psi^{RB^\prime B}\otimes{\Phi_L^+}^{\overline{A}\overline{B}}\right)\geqq 1-\epsilon^2\right\}.
  \end{split}
\end{equation}
\end{theorem}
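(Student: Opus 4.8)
The plan is to mirror the proof of Theorem~\ref{thm:new}, replacing the pure-state majorization criterion of Lemma~\ref{lem:pure_convertibility} by the pure-to-mixed criterion of Lemma~\ref{lem:mixed}, since the output of an approximate protocol is in general a mixed state rather than pure. First I would fix an arbitrary protocol for approximate state merging of $\Ket{\psi}^{RAB}$ within $\epsilon$ with entanglement cost $\log_2 K - \log_2 L$. By Definition~\ref{def:approxiamte_state_merging} this protocol is an LOCC map between $A$ and $B$ carrying the pure input $\Ket{\psi}^{RAB}\otimes\Ket{\Phi_K^+}^{\overline{A}\overline{B}}$ to some output $\tilde\psi^{RB^\prime B\overline{A}\overline{B}}$ whose squared fidelity with $\psi^{RB^\prime B}\otimes{\Phi_L^+}^{\overline{A}\overline{B}}$ is at least $1-\epsilon^2$.

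Next I would single out the bipartition between $B$ and $RA$, grouping $\mathcal{H}^R$ together with $A$'s systems; this is legitimate because $R$ is never acted upon, so the protocol is in particular an LOCC conversion across this coarser cut between $\mathcal{H}^{B'}\otimes\mathcal{H}^B\otimes\mathcal{H}^{\overline{B}}$ and $\mathcal{H}^R\otimes\mathcal{H}^{\overline{A}}$. On this cut the input is pure with reduced state $\psi^{B}\otimes\frac{\mathbb{1}_K^{\overline{B}}}{K}$ on $B$'s side, while the output $\tilde\psi$ is mixed. Invoking the ``only if'' direction of Lemma~\ref{lem:mixed} on convertibility from a bipartite pure state to a bipartite mixed state then produces an ensemble ${\{p(j),\Ket{\psi_j}^{RB^\prime B\overline{A}\overline{B}}\}}_j$ decomposing $\tilde\psi$ and obeying $\boldsymbol{\lambda}\bigl(\psi^{B}\otimes\frac{\mathbb{1}_K^{\overline{B}}}{K}\bigr)\prec\sum_j p(j)\boldsymbol{\lambda}\bigl(\psi_j^{B^\prime B\overline{B}}\bigr)$, where each $\psi_j^{B^\prime B\overline{B}}$ is the reduced state of $\Ket{\psi_j}$ on $B$'s side. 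This is exactly the first constraint appearing inside the infimum, and the fidelity bound inherited from the definition of the protocol, rewritten with $\tilde\psi=\sum_j p(j)\Ket{\psi_j}\Bra{\psi_j}$, is exactly the second.

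Finally I would observe that this single ensemble, together with the protocol's own $K$ and $L$, places $\log_2 K - \log_2 L$ inside the feasible set over which the infimum is taken, so it cannot be smaller than that infimum, which yields the claimed converse. The step I expect to require the most care is the correct use of Lemma~\ref{lem:mixed}: the lemma is an iff statement whose minimization over ensembles encodes a \emph{sufficient} condition, but for a converse I only need that an \emph{achieved} conversion forces the existence of at least one feasible ensemble, which is precisely what the necessity half supplies. I would also make explicit that enlarging the local party from $A$ to $RA$ can only make an LOCC conversion easier, so the necessity of the majorization for the coarser $B$-versus-$RA$ cut genuinely follows from the given protocol.
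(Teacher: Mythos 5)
Your proposal is correct and follows essentially the same route as the paper's proof: both apply the necessity half of Lemma~\ref{lem:mixed} across the bipartition between $\mathcal{H}^{B'}\otimes\mathcal{H}^{B}\otimes\mathcal{H}^{\overline{B}}$ and $\mathcal{H}^{R}\otimes\mathcal{H}^{A}\otimes\mathcal{H}^{\overline{A}}$, with the pure input $\Ket{\psi}^{RAB}\otimes\Ket{\Phi_K^+}^{\overline{A}\overline{B}}$ and the mixed output $\tilde\psi$, to extract an ensemble witnessing membership of $(K,L)$ in the feasible set of the infimum. Your two flagged caveats (using only the necessity direction of the iff, and that absorbing the untouched $R$ into $A$'s side preserves the LOCC structure) are exactly the points the paper relies on implicitly.
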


\begin{proof}
  Any protocol for approximate state merging transforms $\Ket{\psi}^{RAB}\otimes\Ket{\Phi_K^+}^{\overline{A}\overline{B}}$ into ${\tilde\psi}^{RB^\prime B \overline{A}\overline{B}}$ by LOCC, where ${\tilde\psi}$ satisfies
  \begin{equation}
    F^2\left({\tilde\psi}^{RB^\prime B \overline{A}\overline{B}},\psi^{RB^\prime B}\otimes{\Phi_L^+}^{\overline{A}\overline{B}}\right)\geqq 1-\epsilon_1^2.
  \end{equation}
  Substituting $A$, $B$, $\Ket{\phi}^{AB}$, and $\psi^{AB}$ in Lemma~\ref{lem:mixed} with $R\overline{A}$, $B^\prime B \overline{B}$, $\Ket{\psi}^{RAB}\otimes\Ket{\Phi_K^+}^{\overline{A}\overline{B}}$, and ${\tilde\psi}^{RB^\prime B \overline{A}\overline{B}}$, respectively, yields an ensemble $\left\{p(j),\Ket{\psi_j}^{RB^\prime B \overline{A}\overline{B}}\right\}$ satisfying
  \begin{align}
    &{\tilde\psi}^{RB^\prime B \overline{A}\overline{B}}=\sum_j p(j)\Ket{\psi_j}\Bra{\psi_j}^{RB^\prime B \overline{A}\overline{B}},\\
    &\boldsymbol{\lambda}\left(\psi^{B}\otimes\frac{\mathbb{1}_K^{\overline{B}}}{K}\right)\prec\sum_j p(j)\boldsymbol{\lambda}\left(\psi_j^{B^\prime B\overline{B}}\right).
  \end{align}
  Therefore, the conclusion is obtained.
\end{proof}

Reference~\cite{B10} also analyzes a converse bound for fully quantum protocols for one-shot approximate state redistribution, which is a generalized task including approximate state merging as a special case.
As discussed in Remark~\ref{remark:usefulness}, it is straightforward to convert this converse bound for fully quantum protocols to the converse bound of entanglement cost in the LOCC framework, which yields the following.

\begin{lemma}
\label{lem:existing_approxiamte_converse}
  (Proposition~12 in Reference~\cite{B10})
  \textit{A converse bound of entanglement cost in approximate state merging given in Reference~\cite{B10}.}
  For any state $\Ket{\psi}^{RAB}$, any errors $\epsilon_1\in(0,1)$, $\epsilon_2\in(0,1-\epsilon_1)$, and any protocol for approximate state merging of $\Ket{\psi}^{RAB}$ within $\epsilon_1\,$,
  it holds that
  \begin{equation}
    \log_2 K - \log_2 L \geqq {H_{\min}^{\epsilon_2}(AB)}_\psi-{H_{\min}^{\epsilon_1+\epsilon_2}(B)}_\psi,
  \end{equation}
  where $H_{\min}^{\epsilon}$ is the smooth min-entropy summarized in Appendix~\ref{sec:one_shot_entropies}.
\end{lemma}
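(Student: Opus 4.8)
The plan is to prove this not by a fresh one-shot argument, but by importing the converse bound of Reference~\cite{B10}, which is formulated for fully quantum protocols, and re-expressing it inside the entanglement-assisted LOCC framework of Definition~\ref{def:approxiamte_state_merging}. The bridge between the two frameworks is the teleportation-based correspondence recalled in Remark~\ref{remark:usefulness}: once LOCC is free, shared entanglement and quantum communication are interchangeable resources (Section~\ref{sec:entanglement}), so every protocol expressed in one framework has a counterpart in the other with matching resource cost.

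First I would make the correspondence between protocols explicit. Reference~\cite{B10} studies one-shot approximate state redistribution, of which approximate state merging is the special case obtained by letting the part of the state retained by the sender be trivial; I would therefore specialize Proposition~12 of~\cite{B10} to the merging instance $\Ket{\psi}^{RAB}$. In that framework a protocol consists of local operations, shared entanglement, and quantum communication, and Proposition~12 lower-bounds the relevant resource by ${H_{\min}^{\epsilon_2}(AB)}_\psi-{H_{\min}^{\epsilon_1+\epsilon_2}(B)}_\psi$ whenever the output is $\epsilon_1$-close in fidelity to the target.

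Next I would transfer this bound. Given any LOCC protocol for approximate state merging of $\Ket{\psi}^{RAB}$ within $\epsilon_1$ that consumes a net amount $\log_2 K-\log_2 L$ of entanglement, I would read it as a fully quantum protocol: by quantum teleportation~\cite{B5} one ebit together with classical communication simulates one qubit of quantum communication, and conversely the shared maximally entangled resource states $\Ket{\Phi_K^+}$ and $\Ket{\Phi_L^+}$ supply exactly the communication resource counted in~\cite{B10}. Because this substitution leaves the input, the output, and hence the fidelity condition $\geqq 1-\epsilon_1^2$ unchanged, the resource counted in~\cite{B10} is at most $\log_2 K-\log_2 L$. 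Combining this with the lower bound of Proposition~12 yields $\log_2 K-\log_2 L\geqq{H_{\min}^{\epsilon_2}(AB)}_\psi-{H_{\min}^{\epsilon_1+\epsilon_2}(B)}_\psi$, which is the assertion.

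The main obstacle is the resource bookkeeping, not any hard estimate. In the fully quantum framework quantum communication and entanglement are tallied separately, whereas Definition~\ref{def:approxiamte_state_merging} records only the net entanglement cost $\log_2 K-\log_2 L$; I would need to verify that the teleportation substitution matches the two ledgers exactly (one qubit $\leftrightarrow$ one ebit, with classical communication free), and that the specialization of state redistribution to state merging sends the redistribution quantity of~\cite{B10} to precisely $\log_2 K-\log_2 L$ rather than to a rescaled version. I would also confirm that the smoothing parameters, together with the constraints $\epsilon_1\in(0,1)$ and $\epsilon_2\in(0,1-\epsilon_1)$, survive the translation without modification.
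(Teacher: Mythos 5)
Your proposal matches the paper's treatment: the lemma is imported from Proposition~12 of Reference~\cite{B10} for fully quantum state redistribution, specialized to merging, and converted to the entanglement-assisted LOCC framework by the teleportation replacement already described in Remark~\ref{remark:usefulness}. The paper gives no further proof beyond this conversion, so your more explicit bookkeeping of the resource correspondence is consistent with, and essentially identical to, the paper's argument.
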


When the error tolerance in approximate state merging is sufficiently small,
the converse bound shown in Theorem~\ref{thm:approximate_converse} improves the converse bound shown in Lemma~\ref{lem:existing_approxiamte_converse} in the following sense.
\begin{proposition}
  \textit{Comparison of converse bounds of entanglement cost in approximate state merging.}
  For any state $\Ket{\psi}^{RAB}$, any errors $\epsilon_1\in(0,1)$, $\epsilon_2\in(0,1-\epsilon_1)$, and any protocol for approximate state merging of $\Ket{\psi}^{RAB}$ within $\epsilon_1\,$,
  \begin{equation}
    \begin{split}
      &\lim_{\epsilon_1\to 0}\inf\Big\{\log_2 K - \log_2 L:\\
      &\quad\boldsymbol{\lambda}\left(\psi^{B}\otimes\frac{\mathbb{1}_K^{\overline{B}}}{K}\right)\prec\sum_j p(j)\boldsymbol{\lambda}\left(\psi_j^{B^\prime B\overline{B}}\right),\\
      &\quad F^2\Big(\sum_j p(j)\Ket{\psi_j}\Bra{\psi_j}^{RB^\prime B\overline{A}\overline{B}},\psi^{RB^\prime B}\otimes{\Phi_L^+}^{\overline{A}\overline{B}}\Big)\geqq 1-\epsilon^2\Big\}\\
      &\geqq\lim_{\epsilon_1,\epsilon_2\to 0}\left({H_{\min}^{\epsilon_2}(AB)}_\psi-{H_{\min}^{\epsilon_1+\epsilon_2}(B)}_\psi\right),
    \end{split}
  \end{equation}
  where the notations are the same as those in Theorem~\ref{thm:approximate_converse} and Lemma~\ref{lem:existing_approxiamte_converse}.
\end{proposition}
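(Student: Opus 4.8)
The plan is to evaluate both limits explicitly in terms of the largest eigenvalues of $\psi^B$ and $\psi^{AB}$, and to show that the new bound from Theorem~\ref{thm:approximate_converse} converges to the same value as the existing bound from Lemma~\ref{lem:existing_approxiamte_converse}, so that the claimed inequality holds (in fact with equality in the limit). First I would dispose of the right-hand side. Since the smooth min-entropy reduces to the ordinary min-entropy as the smoothing parameter tends to zero (see Appendix~\ref{sec:one_shot_entropies}), and since ${H_{\min}(\rho)}=-\log_2\lambda_{\max}(\rho)$ for any state $\rho$, I obtain
\begin{equation}
  \lim_{\epsilon_1,\epsilon_2\to 0}\left({H_{\min}^{\epsilon_2}(AB)}_\psi-{H_{\min}^{\epsilon_1+\epsilon_2}(B)}_\psi\right)={H_{\min}(AB)}_\psi-{H_{\min}(B)}_\psi=\log_2\lambda_0^B-\log_2\lambda_0^{AB},
\end{equation}
where $\lambda_0^B$ is the largest eigenvalue of $\psi^B$ and $\lambda_0^{AB}$ the largest eigenvalue of $\psi^{AB}$.

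For the left-hand side, I would take an arbitrary feasible point, namely integers $K,L$ and an ensemble ${\left\{p(j),\Ket{\psi_j}^{RB^\prime B\overline{A}\overline{B}}\right\}}_j$ obeying the two constraints of Theorem~\ref{thm:approximate_converse}, and extract the largest-eigenvalue content of the majorization constraint. Reading off the top component of $\boldsymbol{\lambda}\left(\psi^B\otimes\frac{\mathbb{1}_K^{\overline{B}}}{K}\right)\prec\sum_j p(j)\boldsymbol{\lambda}\left(\psi_j^{B^\prime B\overline{B}}\right)$ gives
\begin{equation}
  \frac{\lambda_0^B}{K}\leqq\sum_j p(j)\,\lambda_{\max}\left(\psi_j^{B^\prime B\overline{B}}\right).
\end{equation}
Next I would use the fidelity constraint to control the right-hand side as $\epsilon_1\to 0$. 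Writing $\Ket{\Psi}\coloneq\Ket{\psi}^{RB^\prime B}\otimes\Ket{\Phi_L^+}^{\overline{A}\overline{B}}$, purity of the target yields $\sum_j p(j)\,F^2\left(\psi_j,\Ket{\Psi}\Bra{\Psi}\right)=\Bra{\Psi}\left(\sum_j p(j)\psi_j\right)\Ket{\Psi}\geqq 1-\epsilon_1^2$, so the ensemble concentrates on $\Ket{\Psi}$. Because $\dim\mathcal{H}^{B^\prime}=\dim\mathcal{H}^A$, the marginal $\psi^{B^\prime B}$ has the same spectrum as $\psi^{AB}$, and the marginal of $\Ket{\Psi}$ on $B^\prime B\overline{B}$ is $\psi^{B^\prime B}\otimes\frac{\mathbb{1}_L}{L}$, whose largest eigenvalue is $\lambda_0^{AB}/L$. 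The indices $j$ with $\Ket{\psi_j}$ close to $\Ket{\Psi}$ therefore contribute $\lambda_{\max}\left(\psi_j^{B^\prime B\overline{B}}\right)\to\lambda_0^{AB}/L$, while the total weight of the remaining indices vanishes with $\epsilon_1$ and their eigenvalues are bounded by $1$; hence $\sum_j p(j)\,\lambda_{\max}\left(\psi_j^{B^\prime B\overline{B}}\right)\to\lambda_0^{AB}/L$. Combining with the previous display gives, in the limit, $\lambda_0^B/K\leqq\lambda_0^{AB}/L$, that is $\log_2 K-\log_2 L\geqq\log_2\lambda_0^B-\log_2\lambda_0^{AB}$. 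Since this holds for every feasible point, taking the infimum and then $\lim_{\epsilon_1\to 0}$ shows that the left-hand side is at least $\log_2\lambda_0^B-\log_2\lambda_0^{AB}$, matching the right-hand limit.

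The step I expect to be the main obstacle is the control of $\sum_j p(j)\,\lambda_{\max}\left(\psi_j^{B^\prime B\overline{B}}\right)$: this is an \emph{average of largest eigenvalues}, not the largest eigenvalue of the average, so it cannot be bounded directly by monotonicity of the marginal under the purified distance (Section~\ref{sec:decomposition}). To make the concentration argument rigorous for each fixed $\epsilon_1>0$ rather than only in the limit, I would split the ensemble according to whether $F^2\left(\psi_j,\Ket{\Psi}\Bra{\Psi}\right)$ exceeds a threshold, bound the high-fidelity branch by a continuity estimate that converts the $\epsilon_1$-closeness into the smoothing radii $\epsilon_2$ and $\epsilon_1+\epsilon_2$ appearing on the right-hand side, and bound the low-fidelity branch by its vanishing total probability via Markov's inequality. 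Matching the smoothing parameters so that the intermediate estimate is exactly ${H_{\min}^{\epsilon_2}(AB)}_\psi-{H_{\min}^{\epsilon_1+\epsilon_2}(B)}_\psi$ is the delicate bookkeeping; once the limits are taken, however, both sides collapse to $\log_2\lambda_0^B-\log_2\lambda_0^{AB}$ and the asserted inequality follows.
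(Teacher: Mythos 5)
Your overall strategy is the same as the paper's: both sides are evaluated in the limit via largest eigenvalues (the right-hand side collapses to ${H_{\min}(AB)}_\psi-{H_{\min}(B)}_\psi=\log_2\lambda_0^B-\log_2\lambda_0^{AB}$ exactly as you compute), and the key inequality is extracted from the first component of the majorization condition. The difference is in how the left-hand limit is handled. The paper first asserts that
\begin{equation*}
  \lim_{\epsilon_1\to 0}\inf\left\{\cdots\right\}=\inf\left\{\log_2 K-\log_2 L:\frac{\mathbb{1}_K}{K}\otimes\psi^{B}\prec\frac{\mathbb{1}_L}{L}\otimes\psi^{AB}\right\},
\end{equation*}
i.e., it replaces the $\epsilon_1$-feasible set by the exact one, and then reads off $\lambda_0^B/K\leqq\lambda_0^{AB}/L$ from the \emph{exact} majorization, where no ensemble average of largest eigenvalues ever appears. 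You instead keep the approximate constraints and try to show $\sum_j p(j)\lambda_{\max}\bigl(\psi_j^{B'B\overline{B}}\bigr)\to\lambda_0^{AB}/L$ by a concentration argument.

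The concrete place where your version does not close is the uniformity in $L$ of that concentration step. Your continuity estimate gives, for the high-fidelity branch, $\lambda_{\max}\bigl(\psi_j^{B'B\overline{B}}\bigr)\leqq\lambda_0^{AB}/L+O(\delta)$ with an \emph{additive} error that does not scale with $1/L$, and the low-fidelity branch contributes another additive $O(\epsilon_1^2/\delta)$. Feeding this into $\lambda_0^B/K\leqq\sum_j p(j)\lambda_{\max}(\cdot)$ yields $\log_2 K-\log_2 L\geqq\log_2\lambda_0^B-\log_2\bigl(\lambda_0^{AB}+L\cdot O(\delta+\epsilon_1^2/\delta)\bigr)$, which only recovers the claimed bound if $L$ stays $o(1/\delta)$ along the minimizing sequence. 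Since the infimum is over all catalysts and $L$ is unbounded on the feasible set, the limit does not follow from your estimate as stated; you would need either a multiplicative (relative-error) version of the eigenvalue continuity on the $\overline{B}$ factor or an a priori bound on $L$. To be fair, the paper's own proof sidesteps rather than resolves this: the displayed equality of the limit with the exact infimum is asserted without argument, and the $\geqq$ direction of that equality hides the same limit-interchange issue. If you want a clean writeup in the paper's style, the shortest route is to prove (or cite) that assertion and then apply the largest-eigenvalue comparison only to the exact majorization, as the paper does.
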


\begin{proof}
  Regarding the converse bound shown in Theorem~\ref{thm:approximate_converse}, it holds that
  \begin{equation}
    \begin{split}
      &\lim_{\epsilon_1\to 0}\inf\Big\{\log_2 K - \log_2 L:\\
      &\quad\boldsymbol{\lambda}\left(\psi^{B}\otimes\frac{\mathbb{1}_K^{\overline{B}}}{K}\right)\prec\sum_j p(j)\boldsymbol{\lambda}\left(\psi_j^{B^\prime B\overline{B}}\right),\\
      &\quad F^2\Big(\sum_j p(j)\Ket{\psi_j}\Bra{\psi_j}^{RB^\prime B\overline{A}\overline{B}},\psi^{RB^\prime B}\otimes{\Phi_L^+}^{\overline{A}\overline{B}}\Big)\geqq 1-\epsilon_1^2\Big\}\\
      &=\inf\left\{\log_2 K - \log_2 L: \frac{\mathbb{1}_K}{K}\otimes\psi^{B}\prec\frac{\mathbb{1}_L}{L}\otimes\psi^{AB}\right\}.
    \end{split}
  \end{equation}
  As for the converse bound shown in Lemma~\ref{lem:existing_approxiamte_converse}, the limit can be calculated as~\cite{R2,T5,T11}
  \begin{equation}
    \lim_{\epsilon_1,\epsilon_2\to 0}\left({H_{\min}^{\epsilon_2}(AB)}_\psi-{H_{\min}^{\epsilon_1+\epsilon_2}(B)}_\psi\right)=\log_2 \frac{1}{\lambda_0^{AB}} - \log_2 \frac{1}{\lambda_0^{B}},
  \end{equation}
  where $\lambda_0^{AB}$ and $\lambda_0^{B}$ are the largest eigenvalues of $\psi^{AB}$ and $\psi^B$, respectively.

  The majorization
  \begin{equation}
    \frac{\mathbb{1}_K}{K}\otimes\psi^{B}\prec\frac{\mathbb{1}_L}{L}\otimes\psi^{AB}
  \end{equation}
  implies that the largest eigenvalues of this majorization satisfy
  \begin{equation}
    \frac{\lambda_0^{B}}{K}\leqq\frac{\lambda_0^{AB}}{L},
  \end{equation}
  and hence,
  \begin{equation}
    \log_2 K - \log_2 L \geqq\log_2 \frac{1}{\lambda_0^{AB}} - \log_2 \frac{1}{\lambda_0^{B}}.
  \end{equation}
  Due to this implication, it holds that
  \begin{equation}
    \begin{split}
      &\inf\left\{\log_2 K - \log_2 L: \frac{\mathbb{1}_K}{K}\otimes\psi^{B}\prec\frac{\mathbb{1}_L}{L}\otimes\psi^{AB}\right\}
      \geqq\log_2 \frac{1}{\lambda_0^{AB}} - \log_2 \frac{1}{\lambda_0^{B}},
    \end{split}
  \end{equation}
  which yields the conclusion.
\end{proof}

\section{\label{sec:examples}Implications}

Implications of the results in this chapter are discussed. In the following, $\otimes$ in representing the tensor product of states may be omitted for brevity.
Define
\begin{align}
  \Ket{+}&\coloneq\frac{1}{\sqrt{2}}\left(\Ket{0}+\Ket{1}\right),\\
  \Ket{\Psi^\pm}&\coloneq\frac{1}{\sqrt{2}}\left(\Ket{0}\Ket{1}\pm\Ket{1}\Ket{0}\right),\\
  \Ket{\Phi^\pm}&\coloneq\frac{1}{\sqrt{2}}\left(\Ket{0}\Ket{0}\pm\Ket{1}\Ket{1}\right).
\end{align}

\begin{implication}
\label{ex:1}
\textit{Reduced entanglement cost in exact state merging compared with quantum teleportation and exact state splitting by performing a measurement on the classical part followed by classical communication.}
Consider a tripartite Greenberger-Horne-Zeilinger (GHZ) state of $d$-dimensional systems for any $d\geqq 2$
\begin{equation}
  \Ket{\textup{GHZ}_d}^{RAB}\coloneq\frac{1}{\sqrt{d}}\sum_{l=0}^{d-1}\Ket{l}^R\Ket{l}^A\Ket{l}^B.
\end{equation}
Quantum teleportation of the reduced state of $\Ket{\textup{GHZ}_d}^{RAB}$ on $A$ requires $\log_2 d$ ebits, that is, $\Ket{\Phi_d^+}$ for an initial resource state.
Note that exact state splitting summarized in Section~\ref{sec:split} also requires $\log_2 d$ ebits due to Theorem~\ref{thm:split}.
By contrast, the protocols for exact state merging of $\Ket{\textup{GHZ}_d}^{RAB}$ in Theorems~\ref{thm:merge} and~\ref{thm:merge_without_catalyst} achieve respectively
\begin{equation}
  \log_2 K - \log_2 L = 0<\log_2 d
\end{equation}
and
\begin{equation}
  \log_2 K  = 0<\log_2 d.
\end{equation}
In a similar way, as will be shown in Chapter~\ref{sec:distributed_encoding_decoding}, the protocol for exact state merging can be used for achieving \textit{zero} entanglement cost in exact state merging of multipartite code states of quantum error correcting codes~\cite{G,D,T10,B}.
\end{implication}

\begin{implication}
\label{ex:3}
\textit{Negative entanglement cost in exact state merging by entanglement distillation from the redundant part.}
Consider a pure state
\begin{equation}
  \begin{split}
    \Ket{\psi}^{RAB}=\frac{1}{\sqrt{3}}\Big(&\Ket{0}^R\Ket{\Psi^+}^{A_1B_1}\Ket{\Phi^-}^{A_2B_2}\Ket{\Phi^+}^{A_3B_3}+\\
                                            &\Ket{1}^R\Ket{0}^{A_1}\Ket{0}^{B_1}\Ket{\Phi^-}^{A_2B_2}\Ket{\Phi^+}^{A_3B_3}+\\
                                            &\Ket{2}^R\Ket{2}^{A_1}\Ket{2}^{B_1}\Ket{0}^{A_2}\Ket{0}^{B_2}\Ket{\Psi^-}^{A_3B_3}\Big),
  \end{split}
\end{equation}
where each of $\mathcal{H}^A=\mathcal{H}^{A_1}\otimes\mathcal{H}^{A_2}\otimes\mathcal{H}^{A_3}$ and $\mathcal{H}^B=\mathcal{H}^{B_1}\otimes\mathcal{H}^{B_2}\otimes\mathcal{H}^{B_3}$ is of $3\times 2\times 2=12$ dimension.
Quantum teleportation of $\psi^A$ requires $\log_2 12$ ebits, that is, $\Ket{\Phi_{12}^+}$ for an initial resource state.
By contrast, the protocols for exact state merging of $\Ket{\psi}^{RAB}$ in Theorems~\ref{thm:merge} and~\ref{thm:merge_without_catalyst} achieve respectively
\begin{equation}
  \log_2 K - \log_2 L = -1 < 0
\end{equation}
and
\begin{equation}
  \log_2 K = 0.
\end{equation}
The former negative entanglement cost leads to a net gain of shared entanglement.
\end{implication}

\begin{implication}
\label{ex:2}
\textit{Improvement in converse bounds of entanglement cost in exact state merging.}
Consider a three-qubit pure state
\begin{equation}
    \Ket{\psi}^{RAB}=\frac{1}{\sqrt{2}}\Big(\Ket{0}^R\Ket{\Psi^+}^{AB} +\Ket{1}^R\Ket{0}^A\Ket{0}^{B}\Big).
\end{equation}
The protocols for exact state merging of $\Ket{\psi}^{RAB}$ in Theorems~\ref{thm:merge} and~\ref{thm:merge_without_catalyst} require respectively
\begin{equation}
  \log_2 K - \log_2 L = 1
\end{equation}
and
\begin{equation}
  \log_2 K = 1.
\end{equation}
Since $\psi^B\neq\frac{\mathbb{1}^B}{2}$,
the latter equality for exact state merging in the non-catalytic setting is optimal due to Theorem~\ref{thm:qubit}.
As for the former, this example shows the difference between the converse bounds of entanglement cost in exact state merging in Theorem~\ref{thm:new} and Lemma~\ref{lem:old}.
In this case,
\begin{align}
  &\log_2 \left({\lambda_0^B}D\right)=\log_2 \frac{3}{2} > 0.5849,\\
  &{H_{\max}(A|B)}_\psi < 0.5432,
\end{align}
where the notations are the same as those in Theorem~\ref{thm:new} and Lemma~\ref{lem:old},
and the value of ${H_{\max}(A|B)}_\psi$ is calculated by a semidefinite programming~\cite{V2} using Split Conic Solver (SCS)~\cite{S8} and YALMIP~\cite{L5}.
These calculations imply that the converse bounds in Theorem~\ref{thm:new} and Corollary~\ref{col:tractable_converse} can be strictly tighter than the existing converse bound obtained from Lemma~\ref{lem:old}.
\end{implication}

\begin{implication}
\label{ex:4}
\textit{Asymmetry between $A$ and $B$ in exact state merging.}
Consider a three-qubit pure state
\begin{equation}
    \Ket{\psi}^{RAB}=\frac{1}{\sqrt{2}}\Big(\Ket{0}^R\Ket{0}^{A}\Ket{0}^{B} +\Ket{1}^R\Ket{1}^A\Ket{+}^{B}\Big).
\end{equation}
The protocols for exact state merging of $\Ket{\psi}^{RAB}$ in Theorems~\ref{thm:merge} and~\ref{thm:merge_without_catalyst} require respectively
\begin{equation}
  \log_2 K - \log_2 L = 1
\end{equation}
and
\begin{equation}
  \log_2 K = 1.
\end{equation}
Since $\psi^B\neq\frac{\mathbb{1}^B}{2}$, the latter equality for exact state merging in the non-catalytic setting is optimal due to Theorem~\ref{thm:qubit}.

In contrast, interchange $A$ and $B$ for $\Ket{\psi}^{RAB}$ to consider
\begin{equation}
    \Ket{\psi^\prime}^{RAB}=\frac{1}{\sqrt{2}}\Big(\Ket{0}^R\Ket{0}^{A}\Ket{0}^{B} +\Ket{1}^R\Ket{+}^A\Ket{1}^{B}\Big).
\end{equation}
In the same way as the above case of $\Ket{\psi}^{RAB}$,
the protocols for exact state merging of $\Ket{\psi^\prime}^{RAB}$ in Theorems~\ref{thm:merge} and~\ref{thm:merge_without_catalyst} require respectively
\begin{equation}
  \log_2 K - \log_2 L = 1
\end{equation}
and
\begin{equation}
  \log_2 K = 1.
\end{equation}
However, since $\psi^B=\frac{\mathbb{1}^B}{2}$, Theorem~\ref{thm:qubit} implies that there exists a protocol
for exact state merging in the non-catalytic setting of $\Ket{\psi^\prime}^{RAB}$ achieving
\begin{equation}
  \log_2 K = 0 < 1.
\end{equation}
Indeed, $\Ket{\psi^\prime}^{RAB}$ can also be written as
\begin{equation}
    \begin{split}
      \Ket{\psi^\prime}^{RAB}=&\sqrt{\frac{1}{2}+\frac{\sqrt{2}}{4}}{\left[\frac{\left(1+\sqrt{2}\right)\Ket{0}+\Ket{1}}{\sqrt{4+2\sqrt{2}}}\right]}^A\Ket{\Phi^-}^{RB}+\\
                       &\sqrt{\frac{1}{2}-\frac{\sqrt{2}}{4}}{\left[\frac{\left(1-\sqrt{2}\right)\Ket{0}+\Ket{1}}{{\sqrt{4-2\sqrt{2}}}}\right]}^A\Ket{\Phi^+}^{RB},
    \end{split}
\end{equation}
and hence, $A$'s measurement in basis
\begin{equation}
    \left\{\frac{\left(1+\sqrt{2}\right)\Ket{0}+\Ket{1}}{\sqrt{4+2\sqrt{2}}}, \frac{\left(1-\sqrt{2}\right)\Ket{0}+\Ket{1}}{{\sqrt{4-2\sqrt{2}}}}\right\}
\end{equation}
yields a maximally entangled state between $R$ and $B$.

These cases imply that the difference in entanglement costs between the optimal protocol and the protocols presented in Theorems~\ref{thm:merge} and~\ref{thm:merge_without_catalyst} may arise depending on whether the quantum part of the Koashi-Imoto decomposition can be merged at less entanglement cost than performing quantum teleportation.
Note that the optimal protocol obtained in Theorem~\ref{thm:qubit} works only for qubits, and Proposition~\ref{prp:qutrit} implies that extension to qudits is not straightforward.
\end{implication}

\begin{implication}
\label{ex:5}
\textit{Special cases where the achievability and converse bounds for exact state merging coincide.}
Special cases are discussed where one of the subsystems of the system $\mathcal{H}^R\otimes\mathcal{H}^A\otimes\mathcal{H}^B$ for a given state $\Ket{\psi}^{RAB}$ is initially decoupled from the others.
In these cases, the achievability bound for exact state merging in Theorem~\ref{thm:merge} coincides with the converse bound in Theorem~\ref{thm:new}.
Note that in general, there may exist a gap between these bounds as discussed in Implications~\ref{ex:2} and~\ref{ex:4}, while full characterization of the cases where this gap closes is unknown.

Consider the case where the system $\mathcal{H}^R$ is initially decoupled with the others, and a given pure state is in the form
\begin{equation}
  \Ket{\psi_{R\textup{-}AB}}^{RAB}=\Ket{\mu}^{R}\otimes\Ket{\nu}^{AB}.
\end{equation}
Due to the Koashi-Imoto decomposition of $\Ket{\psi}^{RAB}$ in Lemma~\ref{lem:koashi_imoto_decomposition_tripartite}, the decomposition of $\mathcal{H}^A$ is
\begin{equation}
  \mathcal{H}^A=\mathcal{H}^{a_0^L},
\end{equation}
where in terms of the notations of Lemma~\ref{lem:koashi_imoto_decomposition_tripartite}, $J=1$, and $\mathcal{H}^{a_0^R}$ does not explicitly appear since in this case
\begin{equation}
  \dim\mathcal{H}^{a_0^L}=\dim\mathcal{H}^A,\quad\dim\mathcal{H}^{a_0^R}=1.
\end{equation}
As for $\Ket{\psi_{R\textup{-}AB}}^{RAB}$, the decomposition yields
\begin{equation}
  \Ket{\psi_{R\textup{-}AB}}^{RAB}=\Ket{\mu}^{R}\otimes\Ket{\nu}^{a_0^L b_0^L},
\end{equation}
and define
\begin{equation}
  \lambda_0\coloneq\lambda_0^{a_0^L}=\lambda_0^B,
\end{equation}
where the notations are the same as those in Theorems~\ref{thm:merge} and~\ref{thm:new}.
The protocol in Theorem~\ref{thm:merge} for exact state merging of $\Ket{\psi_{R\textup{-}AB}}^{RAB}$ achieves for any $\delta > 0$
\begin{equation}
  \log_2 K - \log_2 L \leqq \log_2\lambda_0 + \delta,
\end{equation}
where shared entanglement is distilled by Subprocess~1 in the proof of Theorems~\ref{thm:merge}.
The converse bound in Theorem~\ref{thm:new} shows for any protocol for exact state merging of $\Ket{\psi_{R\textup{-}AB}}^{RAB}$
\begin{equation}
  \log_2 K - \log_2 L \geqq \log_2\lambda_0.
\end{equation}

Next, consider the case where the system $\mathcal{H}^B$ is initially decoupled with the others, and a given pure state is in the form
\begin{equation}
  \Ket{\psi_{B\textup{-}RA}}^{RAB}=\Ket{\mu}^{B}\otimes\Ket{\nu}^{RA}.
\end{equation}
Due to the Koashi-Imoto decomposition of $\Ket{\psi_{B\textup{-}RA}}^{RAB}$ in Lemma~\ref{lem:koashi_imoto_decomposition_tripartite}, the decomposition of $\mathcal{H}^A$ is
\begin{equation}
  \mathcal{H}^A=\mathcal{H}^{a_0^R}\oplus\mathcal{H}^{a_1^L},
\end{equation}
where in terms of the notations of Lemma~\ref{lem:koashi_imoto_decomposition_tripartite}, $J=1$, and $\mathcal{H}^{a_0^L}$ and $\mathcal{H}^{a_1^R}$ do not explicitly appear since in this case
\begin{align}
  \dim\mathcal{H}^{a_0^L}&=1,\\
  \dim\mathcal{H}^{a_0^R}&=\rank\psi_{B\textup{-}RA}^A,\\
  \dim\mathcal{H}^{a_1^L}&=\dim\mathcal{H}^A-\rank\psi_{B\textup{-}RA}^A,\\
  \dim\mathcal{H}^{a_1^R}&=1.
\end{align}
As for $\Ket{\psi_{B\textup{-}RA}}^{RAB}$, the decomposition yields
\begin{equation}
  \Ket{\psi_{B\textup{-}RA}}^{RAB}=\Ket{\mu}^{b_0^L}\otimes\Ket{\nu}^{R a_0^R}.
\end{equation}
The protocol in Theorem~\ref{thm:merge} for exact state merging of $\Ket{\psi_{B\textup{-}RA}}^{RAB}$ achieves
\begin{equation}
  \log_2 K = \rank \nu^{a_0^R} = \rank\psi_{B\textup{-}RA}^A,\quad \log_2 L = 0.
\end{equation}
where $\nu^{a_0^R}$ is transferred using quantum teleportation in Subprocess~2 in the proof of Theorems~\ref{thm:merge}.
The converse bound in Theorem~\ref{thm:new} shows for any protocol for exact state merging of $\Ket{\psi_{B\textup{-}RA}}^{RAB}$
\begin{equation}
  \log_2 K - \log_2 L \geqq \rank\psi_{B\textup{-}RA}^A.
\end{equation}

Finally, consider the case where the system $\mathcal{H}^A$ is initially decoupled with the others, and a given pure state is in the form
\begin{equation}
  \Ket{\psi_{A\textup{-}RB}}^{RAB}=\Ket{\mu}^{A}\otimes\Ket{\nu}^{RB}.
\end{equation}
Due to the Koashi-Imoto decomposition of $\Ket{\psi_{A\textup{-}RB}}^{RAB}$ in Lemma~\ref{lem:koashi_imoto_decomposition_tripartite}, the decomposition of $\mathcal{H}^A$ is
\begin{equation}
  \mathcal{H}^A=\mathcal{H}^{a_0^L},
\end{equation}
where in terms of the notations of Lemma~\ref{lem:koashi_imoto_decomposition_tripartite}, $J=1$, and $\mathcal{H}^{a_0^R}$ does not explicitly appear since in this case
\begin{align}
  \dim\mathcal{H}^{a_0^L}=\dim\mathcal{H}^A,\quad \dim\mathcal{H}^{a_0^R}=1.
\end{align}
As for $\Ket{\psi_{A\textup{-}RB}}^{RAB}$, the decomposition yields
\begin{equation}
  \Ket{\psi_{A\textup{-}RB}}^{RAB}=\Ket{\mu}^{a_0^L}\otimes\Ket{\nu}^{R b_0^R}.
\end{equation}
The protocol in Theorem~\ref{thm:merge} for exact state merging of $\Ket{\psi_{A\textup{-}RB}}^{RAB}$ achieves
\begin{equation}
  \log_2 K = \log_2 L = 0,
\end{equation}
where $B$ locally prepares a state corresponding to $\Ket{\mu}^{a_0^L}$ due to Subprocess~3 in the proof of Theorems~\ref{thm:merge}.
The converse bound in Theorem~\ref{thm:new} shows for any protocol for exact state merging of $\Ket{\psi_{A\textup{-}RB}}^{RAB}$
\begin{equation}
  \log_2 K - \log_2 L \geqq 0.
\end{equation}
\end{implication}

\chapter{\label{sec:two_way}One-shot quantum state merging under one-way and two-way communication}

This chapter proves that in one-shot state merging from $A$ to $B$, $B$'s preprocessing of quantum side information and backward classical communication from $B$ to $A$ can be indispensable for minimizing the entanglement cost.
The setting and the statement are presented in Section~\ref{sec:statement_two_way},
and the proof is given in Section~\ref{sec:proof} using interconnection between state merging and another relevant task, local state discrimination.
Based on this interconnection between state merging and local state discrimination, interpretation of entanglement cost in state merging is discussed in Section~\ref{sec:cost}

\section{\label{sec:statement_two_way}Separation between one-way and two-way LOCC in a one-shot state merging.}
The main result of this chapter shows a provable advantage of two-way LOCC over one-way LOCC in a one-shot scenario of state merging,
which contrasts with the existing protocols for one-shot state merging using only one-way communication~\cite{B9,Y9,B12,D7,D6,H10,B10,D5,M,N3,A4,A5,B15,B13,A16,A17}.
This advantage is shown for approximate state merging of a particular given state in the non-catalytic setting introduced in Definition~\ref{def:approxiamte_state_merging}.
Note that this result straightforwardly shows that the advantage also exists for exact state merging.
In the following of this chapter, state merging may refer to this approximate state merging in the non-catalytic setting, if obvious.
The main result is illustrated in Figure~\ref{fig:result} and shown as follows.

\begin{theorem}
\label{thm:result}
  \textit{Separation between one-way LOCC and two-way LOCC in a one-shot state merging}
  There exists a state $\Ket{\psi}^{RAB}$ (defined later in Eq.~\eqref{eq:phi}) and a nonzero error threshold $\epsilon_0>0$ such that for any $\epsilon\in\left[0,\epsilon_0\right]$, the following hold.
  \begin{enumerate}
    \item The \textit{optimal one-way} LOCC protocol for non-catalytic approximate state merging of $\Ket{\psi}^{RAB}$ within $\epsilon$ requires \textit{one} ebit of entanglement cost, that is,
      \begin{equation}
        \log_2 K = 1;
      \end{equation}
    \item There exists a \textit{two-way} LOCC protocol for non-catalytic approximate state merging of $\Ket{\psi}^{RAB}$ within $\epsilon$ achieving \textit{zero} entanglement cost, that is,
      \begin{equation}
        \log_2 K = 0 < 1.
      \end{equation}
  \end{enumerate}
\end{theorem}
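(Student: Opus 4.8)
The plan is to tie the solvability of non-catalytic approximate state merging from $A$ to $B$ at zero entanglement cost to a local discrimination/correction problem on $B$'s side, for which a provable gap between one-way and two-way LOCC is available. First I would exploit the generalized equivalence between state merging and a channel-realization condition. In the non-catalytic setting, Proposition~\ref{prp:equivalence} shows that one-way LOCC merging from $A$ to $B$ at cost $\log_2 K$ of a state $\Ket{\psi}^{RAB}$ with $\psi^R=\mathbb{1}^R/D$ is equivalent to realizing $\psi^{RB}\otimes\mathbb{1}_K^{\overline{B}}/K$ as $\id^R\otimes\mathcal{U}^{\hat B}(\Phi_D^+)$ for a \emph{mixed-unitary} channel $\mathcal{U}$. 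I would choose $\Ket{\psi}^{RAB}$ (Eq.~\eqref{eq:phi}) so that $\psi^{RB}$ is proportional to the Choi operator of a channel $\mathcal{N}$ that is \emph{unital but not mixed-unitary}, following the construction underlying Proposition~\ref{prp:qutrit}; this gives exact impossibility of one-way cost $0$, while arranging $\psi^B=\mathbb{1}^B/D$ keeps the universal converse bounds of Theorem~\ref{thm:new} and Corollary~\ref{col:tractable_converse} equal to $0$, so no two-way obstruction survives at the level of the majorization converse.

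Second, for the two-way achievability (part 2) I would exhibit an explicit protocol in which $B$ \emph{preprocesses} its part and uses backward classical communication from $B$ to $A$. The point is that, with $B$ acting first, the effective operation is no longer constrained to the one-way channel-realization condition above, so a transfer unattainable one-way becomes attainable. Concretely I would let $B$ perform a measurement splitting $\psi^{RB}$ into branches on each of which the residual $A$-to-$B$ transfer costs zero, send the branch label back to $A$, and have $A$ complete the merging by a measurement whose correction $B$ can then apply. Verifying that the final state equals $\psi^{RB'B}$ is a direct computation once the branches are fixed; since an exact two-way protocol is a fortiori within any error tolerance, this settles part 2 for all $\epsilon\in[0,\epsilon_0]$ at once.

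Third, and this is the main obstacle, I must upgrade the \emph{exact} one-way impossibility into a \emph{robust} statement: a threshold $\epsilon_0>0$ below which no one-way protocol reaches cost $0$, together with cost exactly $1$ being necessary and sufficient one-way. Sufficiency is immediate, since the protocols of Theorems~\ref{thm:merge} and~\ref{thm:merge_without_catalyst} are one-way and can be arranged to cost $\log_2 K=1$ on this state. For necessity I would argue by compactness: for fixed (zero) entanglement cost and bounded ancilla dimensions the set of one-round (one-way) LOCC maps is compact, as recorded in Section~\ref{sec:locc}, so the infimum over such maps of the merging error $P(\id^R\otimes\mathcal{M}(\psi^{RAB}),\psi^{RB'B})$ is attained; since the value $0$ is unattainable by the unital-but-not-mixed-unitary obstruction, this infimum is some strictly positive number, which I would \emph{define} as $\epsilon_0$. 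Then for every $\epsilon\in[0,\epsilon_0]$ any one-way cost-$0$ protocol would violate the fidelity requirement of Definition~\ref{def:approxiamte_state_merging}, forcing $K\ge 2$, i.e.\ $\log_2 K=1$. The delicate points are bounding the ancilla dimensions without loss of generality so that compactness genuinely applies, and making the quantitative link between the merging error and the distance of the realized channel to the mixed-unitary set explicit, so that the discrimination gap provably survives the approximation.
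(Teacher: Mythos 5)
Your route to the one-way impossibility is genuinely different from the paper's: you propose to invoke Proposition~\ref{prp:equivalence} and rule out zero-cost one-way merging by showing that $\psi^{RB}$ is the Choi operator of a unital but not mixed-unitary channel, as in Proposition~\ref{prp:qutrit}. The paper instead never touches the mixed-unitary obstruction here; it takes the state $\Ket{\psi}^{RAB}$ of Equation~\eqref{eq:phi} to be built from a specific $2$-LOCC set of Reference~\cite{N4} and derives one-way impossibility from the impossibility of \emph{one-way local state discrimination} of that set (merging at zero cost would imply discrimination, so the contrapositive applies). Your compactness argument for promoting exact impossibility to a nonzero threshold $\epsilon_0$ is exactly the paper's, and is fine modulo the ancilla-dimension caveat you yourself flag.

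The genuine gap is in part~2, which is the crux of the theorem. A state whose $\psi^{RB}$ is the Choi operator of a unital-but-not-mixed-unitary channel (e.g.\ the antisymmetric three-qutrit state of Proposition~\ref{prp:qutrit}) gives you one-way impossibility, but nothing in your argument shows that \emph{two-way} LOCC achieves zero cost for such a state, and for the Proposition~\ref{prp:qutrit} state there is no reason to believe it does --- the paper uses that state precisely as an example where the converse bound is \emph{not} achievable, with no claim of two-way achievability. The absence of a majorization obstruction (ensured by $\psi^B=\mathbb{1}^B/D$) is not evidence of achievability. The entire difficulty of Theorem~\ref{thm:result} is to find a single state satisfying \emph{both} properties simultaneously, and the paper resolves it by choosing a $2$-LOCC set admitting a two-way discrimination protocol \emph{without elimination} (the $\mathbb{C}^{11}\otimes\mathbb{C}^{11}$ construction with phases $\gamma_1,\gamma_2$), then writing out an explicit three-outcome measurement for $B$, a conditioned thirty-three-outcome measurement for $A$, and verifying that every branch leaves a rank-three maximally entangled state between $R$ and $B$. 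Your sketch ``let $B$ split $\psi^{RB}$ into branches on each of which the residual transfer costs zero'' is a restatement of what must be proved, not a proof. A secondary, smaller gap: for part~1 sufficiency you cannot simply quote Theorems~\ref{thm:merge} and~\ref{thm:merge_without_catalyst}, since the general protocol costs $\log_2 3$ ebits on the paper's state; reaching one ebit requires the nontrivial modification of Subprocess~2 (replacing teleportation of the three-dimensional quantum part by a computational-basis measurement on the $j\geqq 1$ blocks, coherently recombined at the price of a single ebit), which your proposal does not supply.
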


\begin{figure}[t!]
  \centering
  \includegraphics[width=4in]{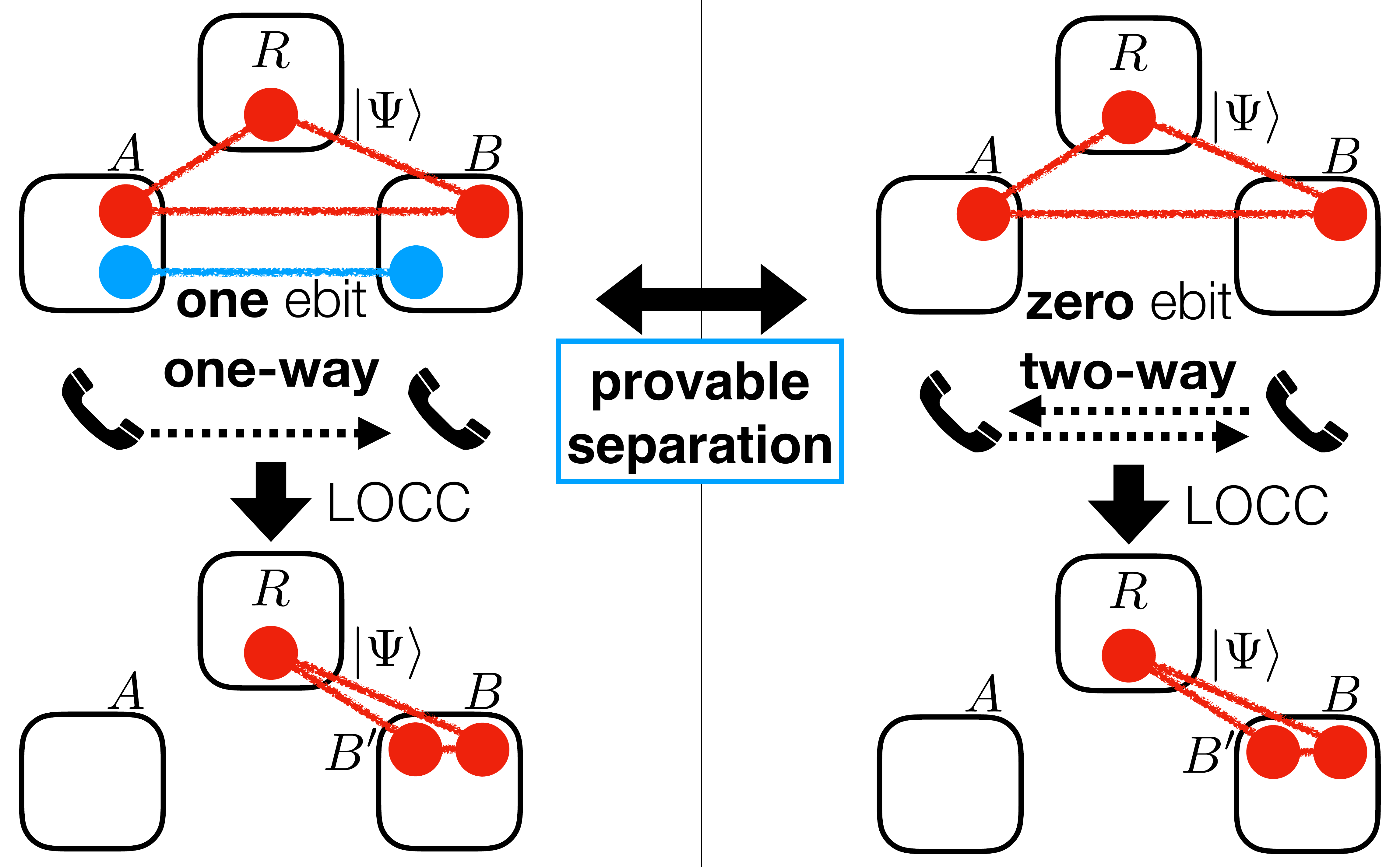}
  \caption[Provable separation between one-way LOCC and two-way LOCC in a one-shot state merging.]{\label{fig:result}The result shown in Theorem~\ref{thm:result} demonstrating provable separation between one-way and two-way local operations and classical communication (LOCC) in a one-shot state merging of $\Ket{\psi}^{RAB}$ defined as Equation~\eqref{eq:phi} represented by the red circles, where classical communication is represented by the dotted arrows. While an optimal one-way LOCC protocol for this task requires one ebit of entanglement cost represented by the connected blue circles, there exists a two-way LOCC protocol achieving zero entanglement cost.}
\end{figure}

\begin{table}[t!]
  \centering
  \caption[Is there a case where separation between one-way LOCC and two-way LOCC can be shown?]{\label{table:compare}Is there a case where separation between one-way LOCC and two-way LOCC can be shown? The separations are in terms of achievability of deterministic transformations between two fixed bipartite pure states, entanglement cost in state spitting, entanglement cost in state merging, distillable entanglement from bipartite mixed states, and success probability of local state discrimination among bipartite states. State merging provides the contrast between the asymptotic and one-shot scenarios.}
  \begin{tabular}{@{}lll@{}}
    \toprule
    task  & asymptotic scenario & one-shot scenario \\
    \midrule
    \begin{tabular}{@{}l@{}}state transformation (bipartite pure)\end{tabular} & No~\cite{B3}.  & No~\cite{N2}.\\
    state splitting & No~\cite{A2}. & No~(Theorem~\ref{thm:split}).\\
    state merging & No~\cite{H3,H4}. & Yes~(Theorem~\ref{thm:result}).\\
    \begin{tabular}{@{}l@{}} entanglement distillation \end{tabular} & Yes~\cite{B14}. & Yes~\cite{C2}.\\
    \begin{tabular}{@{}l@{}}local state discrimination \end{tabular} & Yes~\cite{O3}. & Yes~\cite{G4,C4,O2,C5,N4,T7,T8,C3}.\\
    \bottomrule
  \end{tabular}
\end{table}

Note that for proving Theorem~\ref{thm:result}, it is not sufficient to apply the proof techniques having used for obtaining converse bounds of entanglement cost in state merging that are based on the monotonicity of entropic functions~\cite{H4,H10} or the majorization condition used in Chapter~\ref{sec:merge}, since these techniques are based on no-go theorems applicable to any LOCC map including two-way LOCC\@.
The proof of Theorem~\ref{thm:result} requires a no-go theorem that is \textit{only applicable to one-way} LOCC and is \textit{provably false for two-way} LOCC, and hence, another proof technique than these existing ones has to be established.

Regarding provable separation between one-way LOCC and two-way LOCC in achievability of a given task, only several examples are known to date, such as entanglement distillation and local state discrimination, as shown in Table~\ref{table:compare}.
Note that while the set of one-way LOCC maps is strictly included in that of two-way LOCC maps~\cite{C7}, this difference does not necessarily affect achievability of a given task; \textit{e.g.}, one-way LOCC suffices for deterministic transformations between two fixed bipartite pure states and state splitting.
Among the known separations, the separation in local state discrimination based on hypothesis testing is first proven in a one-shot scenario~\cite{O2}, but whether the separation still survives in the corresponding asymptotic scenario was open until it is shown in Reference~\cite{O3} that the separation \textit{does survive}.
In contrast to such known separations shown in both asymptotic and one-shot scenarios, Theorem~\ref{thm:result} on state merging provides a case where provable separation in a one-shot scenario \textit{does not asymptotically survive}, in the sense that one-way LOCC suffices in the corresponding asymptotic scenario.

As for another remark, this chapter evaluates the amount of \textit{initially} shared entanglement and does not allow catalytic use of this shared entanglement, for simplicity.
There are known only a few tasks of which catalytic use of entanglement is proven to affect achievability, such as entanglement transformation~\cite{J1,E1}, distributed implementation of a nonlocal bipartite unitary~\cite{V4}, and local state discrimination~\cite{Y15}.
While state merging can be regarded as a transformation of tripartite pure states,
problems on state transformations in such a catalytic setting are hard to solve analytically in general, even in bipartite cases as pointed out in Reference~\cite{J1}.
As for catalyst in state merging, even if catalyst is allowed in the definition itself, asymptotic optimality can be achieved with an inconsiderable amount of catalyst~\cite{H4}, while there exists no quantitative study in one-shot scenarios.

\section{\label{sec:proof}Interconnection between state merging and local state discrimination}
To prove separation between one-way LOCC and two-way LOCC in a one-shot scenario of state merging in Theorem~\ref{thm:result}, local state discrimination is used.
In local state discrimination, two parties $A$ and $B$ initially share an unknown state $\Ket{\psi_l}^{AB}$ given from a known set
\begin{equation}
  {\left\{\Ket{\psi_l}^{AB}\right\}}_{l=0,\ldots,D-1}
\end{equation}
of $D$ orthogonal pure states, and the task aims to determine the index $l$ of $\Ket{\psi_l}^{AB}$ with unit probability by an LOCC measurement.
Note that for the analysis in this chapter, it suffices to consider local state discrimination without using initially shared entangled resource states, while generalization to that using resources of shared entanglement is straightforward, as discussed in References~\cite{C1,B11,A6,Z1,B7,G3,B8}.
There exists a set of orthogonal pure states for which local state discrimination is not achievable by one-way LOCC but is achievable by two-way LOCC, which is called a $2$-LOCC set.
References~\cite{N4,T7,T8} provide $2$-LOCC sets for any possible dimensional systems.

State merging can be viewed as a generalized task of local state discrimination, in the sense that achievability of the former implies that of the latter.
Proposition~\ref{prp:equivalence} shows that
if there exists a protocol achieving state merging of a tripartite state having the Schmidt decomposition
\begin{equation}
  \Ket{\psi}^{RAB}\coloneq\frac{1}{\sqrt{D}}\sum_{l=0}^{D-1}\Ket{l}^R\otimes\Ket{\psi_l}^{AB}
\end{equation}
at zero entanglement cost,
then this protocol transforms any superposition of the $D$ orthogonal states
\begin{equation}
  {\left\{\Ket{\psi_l}^{AB}\right\}}_{l=0,\ldots,D-1}
\end{equation}
into that of
\begin{equation}
  {\left\{\Ket{\psi_l}^{B^\prime B}\right\}}_{l=0,\ldots,D-1}\,,
\end{equation}
\textit{i.e.},
\begin{equation}
  \label{eq:relative}
  \sum_{l=0}^{D-1}\alpha_l\Ket{\psi_l}^{AB}\xrightarrow{\textup{LOCC}}\sum_{l=0}^{D-1}\alpha_l\Ket{\psi_l}^{B^\prime B}.
\end{equation}
Thus, local state discrimination for ${\left\{\Ket{\psi_l}^{AB}\right\}}_{l}$ can be achieved by first performing the protocol for state merging of $\Ket{\psi}^{RAB}$ to transform $\Ket{\psi_l}^{AB}$ into $\Ket{\psi_l}^{B^\prime B}$ for any $l$, and then performing $B$'s measurement for discriminating $B$'s orthogonal states ${\left\{\Ket{\psi_l}^{B^\prime B}\right\}}_{l}$.
Note that a similar interconnection is also pointed out in the asymptotic scenario~\cite{A9}.

In contrast, achievability of local state discrimination does not necessarily imply that of state merging if a protocol achieving local state discrimination uses a technique called \textit{elimination}, \textit{i.e.}, the measurement for excluding some of the possibilities of ${\left\{\Ket{\psi_l}^{AB}\right\}}_{l}$.
For example, consider a set of states
\begin{equation}
  \big\{\Ket{\psi_0}^{AB}\coloneq\Ket{0}^A\otimes\Ket{0}^B,\Ket{\psi_1}^{AB}\coloneq\Ket{0}^A\otimes\Ket{1}^B,\Ket{\psi_2}^{AB}\coloneq\Ket{1}^A\otimes\Ket{+}^B\big\},
\end{equation}
where
\begin{equation}
  \Ket{+}\coloneq\frac{1}{\sqrt{2}}\left(\Ket{0}+\Ket{1}\right).
\end{equation}
If $A$ eliminates some of the possibilities by a measurement in basis $\left\{\Ket{0},\Ket{1}\right\}$, $B$'s local measurement conditioned by $A$'s outcome can discriminate the remaining orthogonal states on $B$.
In contrast, state merging of the corresponding tripartite state
\begin{equation}
  \frac{1}{\sqrt{3}}\sum_{l=0}^{2}\Ket{l}^R\otimes\Ket{\psi_l}^{AB}
\end{equation}
is not achievable at zero entanglement cost due to the converse bound shown in Corollary~\ref{col:tractable_converse}.
In this way, a protocol for local state discrimination using elimination does not generalize to that for state merging, because elimination destroys coherence between $R$ and the others.
As for the known $2$-LOCC sets,
two-way LOCC protocols shown in References~\cite{N4,T7,T8} for achieving local state discrimination require elimination, and hence, \textit{do not generalize} to state merging in a straightforward way.

In contrast, the following analysis identifies a $2$-LOCC set for which a two-way LOCC protocol for local state discrimination can be constructed \textit{without elimination},
and the corresponding two-way LOCC protocol for state merging can also be constructed.
Consider a set ${\left\{\Ket{\psi_l}^{AB}\right\}}_{l=0,1,2}$ of three orthogonal states of $\mathbb{C}^{11}\otimes\mathbb{C}^{11}$,
and define each state as
\begin{equation}
    \label{eq:s}
    \begin{split}
      \Ket{\psi_0}^{AB}\coloneq&\sqrt{\frac{2}{11}}\Ket{\Phi_2^+}^{AB}\oplus\sqrt{\frac{9}{11}}\Ket{\Phi_9^+}^{AB},\\
      \Ket{\psi_1}^{AB}\coloneq&\sqrt{\frac{2}{11}}\gamma_1 X_2^A\Ket{\Phi_2^+}^{AB}\oplus\sqrt{\frac{9}{11}}{\left(X_9^A\right)}^3\Ket{\Phi_9^+}^{AB},\\
      \Ket{\psi_2}^{AB}\coloneq&\sqrt{\frac{2}{11}}\gamma_2 Z_2^A\Ket{\Phi_2^+}^{AB}\oplus\sqrt{\frac{9}{11}}{\left(X_9^A\right)}^6\Ket{\Phi_9^+}^{AB},
    \end{split}
\end{equation}
where each subsystem is decomposed into subspaces
\begin{equation}
  \mathbb{C}^{11}=\mathbb{C}^{2}\oplus\mathbb{C}^{9},
\end{equation}
$X_k^A$ and $Z_k^A$ are the generalized Pauli operator on a subspace $\mathbb{C}^k$ of $A$'s system for $A$'s part of
\begin{equation}
  \Ket{\Phi_k^+}^{AB}\coloneq\frac{1}{\sqrt{k}}\sum_{l=0}^{k-1}\Ket{l}^A\otimes\Ket{l}^B,
\end{equation}
and $\gamma_1$ and $\gamma_2$ are nonreal complex numbers satisfying
\begin{align}
  &{\left|\gamma_1\right|}^2=1,\\
  &{\left|\gamma_2\right|}^2=1,\\
  &\gamma_2\neq\pm\textup{i}\gamma_1^2.
\end{align}
The corresponding tripartite state is
\begin{equation}
\label{eq:phi}
  \Ket{\psi}\coloneq\frac{1}{\sqrt{3}}\sum_{l=0}^{2}\Ket{l}^R\otimes\Ket{\psi_l}^{AB},
\end{equation}
where ${\left\{\Ket{\psi_l}^{AB}\right\}}_{l=0,1,2}$ is defined as Equation~\eqref{eq:s}.
This state $\Ket{\psi}^{RAB}$ yields Theorem~\ref{thm:result} as follows.

\begin{proof}[Proof of the first statement in Theorem~\ref{thm:result}.]
The set
\begin{equation}
  {\left\{\Ket{\psi_l}^{AB}\right\}}_{l=0,1,2}
\end{equation}
defined as Equation~\eqref{eq:s} is shown to be a $2$-LOCC set in Reference~\cite{N4},
and hence, impossibility of local state discrimination by one-way LOCC yields impossibility of \textit{exact} state merging in the non-catalytic setting of $\Ket{\psi}^{RAB}$ defined as Equation~\eqref{eq:phi} at zero entanglement cost by one-way LOCC\@.
Since the set of one-way LOCC maps is compact,
this impossibility of exact state merging in the non-catalytic setting by one-way LOCC implies that there exists a sufficiently small but nonzero error $\epsilon>0$ such that \textit{approximate} state merging in the non-catalytic setting of $\Ket{\psi}^{RAB}$ within $\epsilon$ is still impossible at zero entanglement cost by one-way LOCC\@.
Note that the no-go theorem on local state discrimination by one-way LOCC in Reference~\cite{N4} does not generalize in a straightforward way to scenarios where catalytic use of entanglement is allowed, due to the fact that there may exist local state discrimination that is achievable at zero entanglement by using shared entanglement catalytically, but is not achievable without catalytic use of entanglement~\cite{Y20}.

The rest of the proof constructs a one-way LOCC protocol for state merging of $\Ket{\psi}^{RAB}$ achieving one ebit of entanglement cost and zero error, \textit{i.e.},
\begin{equation}
  \label{eq:one_ebit}
  \begin{split}
    \log_2 K&=1,\\
    F^2\left(\tilde\psi,\Ket{\psi}\Bra{\psi}\right)&=1,
  \end{split}
\end{equation}
based on the general protocol established in Theorem~\ref{thm:merge_without_catalyst} using the Koashi-Imoto decomposition.
Note that this one-way LOCC protocol is less costly than the trivial protocol of performing quantum teleportation of $A$'s part of $\Ket{\psi}^{RAB}$ of an eleven-dimensional system.

While the general protocol shown in Theorem~\ref{thm:merge_without_catalyst} requires $\log_2 3$ ebits of entanglement cost for $\Ket{\psi}^{RAB}$, this protocol can be modified using a specific structure of $\Ket{\psi}^{RAB}$, to achieve one ebit of entanglement cost.
The following construction of this protocol mainly discusses this specific part in the particular case of $\Ket{\psi}^{RAB}$.
For brevity, define
\begin{align}
  &\Ket{\Psi_0}\coloneq\Ket{\Phi_2^+},\\
  &\Ket{\Psi_1}\coloneq\left(\gamma_1 X_2^A\otimes\mathbb{1}^B\right)\Ket{\Phi_2^+},\\
  &\Ket{\Psi_2}\coloneq\left(\gamma_2 Z_2^A\otimes\mathbb{1}^B\right)\Ket{\Phi_2^+}.
\end{align}

Using Lemma~\ref{lem:koashi_imoto_decomposition_tripartite},
the following Koashi-Imoto decomposition of $\Ket{\psi}^{RAB}$ is obtained.
The Hilbert spaces $\mathcal{H}^A=\mathbb{C}^{11}$ of $A$ and $\supp\left(\psi^B\right)=\mathcal{H}^B=\mathbb{C}^{11}$ of $B$ are decomposed into
\begin{equation}
  \label{eq:h_decomposition}
  \begin{split}
    &\mathcal{H}^A=\bigoplus_{j=0}^{3}\mathcal{H}^{a_j^\textup{R}},\\
    &\mathcal{H}^B=\bigoplus_{j=0}^{3}\mathcal{H}^{b_j^\textup{R}},
  \end{split}
\end{equation}
where
\begin{align}
    &\dim\mathcal{H}^{a_0^\textup{R}}=\dim\mathcal{H}^{b_0^\textup{R}}=2,\\
    &\dim\mathcal{H}^{a_1^\textup{R}}=\dim\mathcal{H}^{b_1^\textup{R}}=3,\\
    &\dim\mathcal{H}^{a_2^\textup{R}}=\dim\mathcal{H}^{b_2^\textup{R}}=3,\\
    &\dim\mathcal{H}^{a_3^\textup{R}}=\dim\mathcal{H}^{b_3^\textup{R}}=3.
\end{align}
Note that $\mathcal{H}^{a_j^\textup{L}}$ and $\mathcal{H}^{b_j^\textup{L}}$ in Lemma~\ref{lem:koashi_imoto_decomposition_tripartite} do not explicitly appear in the decomposition in Equation~\eqref{eq:h_decomposition}, since $\mathcal{H}^{a_j^\textup{L}}=\mathbb{C}$ and $\mathcal{H}^{b_j^\textup{L}}=\mathbb{C}$ for each $j\in\left\{0,\ldots,3\right\}$ in this case.
The state $\Ket{\psi}^{RAB}$ is decomposed into
\begin{equation}
    \Ket{\psi}^{RAB}=\sqrt{\frac{2}{11}}\Ket{\phi_0}^{Ra_0^\textup{R} b_0^\textup{R}}\oplus\bigoplus_{j=1}^{3}\sqrt{\frac{3}{11}}\Ket{\phi_j}^{Ra_j^\textup{R} b_j^\textup{R}},
\end{equation}
where
\begin{equation}
    \Ket{\phi_0}^{Ra_0^\textup{R} b_0^\textup{R}}\coloneq\sqrt{\frac{1}{3}}\sum_{l=0}^{2}\Ket{l}^R\otimes\Ket{\Psi_l}^{a_0^\textup{R} b_0^\textup{R}},
\end{equation}
and for each $j\in\left\{1,2,3\right\}$,
\begin{equation}
  \Ket{\phi_j}^{Ra_j^\textup{R} b_j^\textup{R}}\coloneq\sqrt{\frac{1}{9}}\sum_{l,m=0}^{2}\Ket{l}^R\otimes\Ket{l+m\bmod 3}^{a_j^\textup{R}}\otimes\Ket{m}^{b_j^\textup{R}}.
\end{equation}
While the definition of ${\left\{\Ket{\psi_l}^{AB}\right\}}_{l=0,1,2}$ in Equation~\eqref{eq:s} uses the decomposition of each system $\mathbb{C}^{11}=\mathbb{C}^2\oplus\mathbb{C}^9$,
$\mathcal{H}^{a_0^\textup{R}}$ and $\mathcal{H}^{b_0^\textup{R}}$ in Equation~\eqref{eq:h_decomposition} correspond to $\mathbb{C}^2$, $\mathcal{H}^{a_1^\textup{R}}$ and $\mathcal{H}^{b_1^\textup{R}}$ in Equation~\eqref{eq:h_decomposition} correspond to a three-dimensional subspace of $\mathbb{C}^9$ spanned by $\left\{\Ket{0},\Ket{3},\Ket{6}\right\}$,
$\mathcal{H}^{a_2^\textup{R}}$ and $\mathcal{H}^{b_2^\textup{R}}$ correspond to that by $\left\{\Ket{1},\Ket{4},\Ket{7}\right\}$,
and $\mathcal{H}^{a_3^\textup{R}}$ and $\mathcal{H}^{b_3^\textup{R}}$ correspond to that by $\left\{\Ket{2},\Ket{5},\Ket{8}\right\}$.
Introducing auxiliary systems $\mathcal{H}^{a_0}$ of $A$ and $\mathcal{H}^{b_0}$ of $B$,
this decomposition can also be written as
\begin{equation}
  \label{eq:psi_decomposition}
  \begin{split}
    \left(U^A\otimes U^B\right)\Ket{\psi}^{RAB}
    =\sqrt{\frac{2}{11}}\Ket{0}^{a_0}\otimes\Ket{0}^{b_0}\otimes\Ket{\phi_0}^{Ra^\textup{R} b^\textup{R}}
    +\sum_{j=1}^{3}\sqrt{\frac{3}{11}}\Ket{j}^{a_0}\otimes\Ket{j}^{b_0}\otimes\Ket{\phi_j}^{Ra^\textup{R} b^\textup{R}}
  \end{split}
\end{equation}
where
\begin{align}
  &\dim\mathcal{H}^{a_0}=\dim\mathcal{H}^{b_0}=4,\\
  &\dim\mathcal{H}^{a^\textup{R}}=\max_j\left\{\dim\mathcal{H}^{a_j^\textup{R}}\right\}=3,\\
  &\dim\mathcal{H}^{b^\textup{R}}=\max_j\left\{\dim\mathcal{H}^{b_j^\textup{R}}\right\}=3,
\end{align}
$U^A$ is $A$'s local isometry from $\mathcal{H}^A$ to $\mathcal{H}^{a_0}\otimes\mathcal{H}^{a^\textup{R}}$, and
$U^B$ is $B$'s local isometry from $\mathcal{H}^B$ to $\mathcal{H}^{b_0}\otimes\mathcal{H}^{b^\textup{R}}$.

Using the Koashi-Imoto decomposition in the form of Equation~\eqref{eq:psi_decomposition},
the protocol for exact state merging shown in Theorem~\ref{thm:merge_without_catalyst} performs three subprocesses $1$, $2$, and $3$, which are combined using controlled measurements and controlled isometries.
In the following, these three subprocesses in the case of $\Ket{\psi}^{RAB}$ are discussed.
In particular, Subprocess~2 is modified using a specific structure of $\Ket{\psi}^{RAB}$ to achieve one ebit of entanglement cost.

\textit{Subprocess 1:} The first subprocess is concerned with reduced states on $\mathcal{H}^{a_j^\textup{L}}\otimes\mathcal{H}^{b_j^\textup{L}}$, and since $\mathcal{H}^{a_j^\textup{L}}$ and $\mathcal{H}^{b_j^\textup{L}}$ do not explicitly appear in the decomposition in Equation~\eqref{eq:h_decomposition}, this subprocess is not performed in this case.

\textit{Subprocess 2:} The second subprocess is for transferring $A$'s part of $\Ket{\phi_j}^{Ra^\textup{R} b^\textup{R}}$ to $B$, so that $\Ket{\phi_j}^{R{\left(b^\prime\right)}^\textup{R} b^\textup{R}}$ is obtained, where $\mathcal{H}^{{\left(b^\prime\right)}^\textup{R}}$ is $B$'s auxiliary system corresponding to $\mathcal{H}^{a^\textup{R}}$.
While quantum teleportation is used for this subprocess in the proofs of Theorems~\ref{thm:merge} and~\ref{thm:merge_without_catalyst} to provide a general protocol,
there may exist cases where this subprocess can be achieved at less entanglement cost than performing quantum teleportation, as pointed out in Implication~\ref{ex:4}.
As for the case of $\Ket{\psi}^{RAB}$, $\Ket{\phi_0}^{Ra^\textup{R} b^\textup{R}}$ is merged using quantum teleportation, which requires one ebit of an initially shared maximally entangled state $\Ket{\Phi_2^+}^{\overline{A}\overline{B}}$, where $\mathcal{H}^{\overline{A}}$ and $\mathcal{H}^{\overline{B}}$ are systems for the shared maximally entangled states of $A$ and $B$, respectively.
If $\Ket{\phi_1}^{Ra^\textup{R} b^\textup{R}}$, $\Ket{\phi_2}^{Ra^\textup{R} b^\textup{R}}$, or $\Ket{\phi_3}^{Ra^\textup{R} b^\textup{R}}$ are also merged in the same way,
$\log_2 3$ ebits are required.
Instead, by performing $A$'s measurement on $\mathcal{H}^{a^\textup{R}}$ in the computational basis
\begin{equation}
  {\left\{\Ket{m}^{a^\textup{R}}\right\}}_{m=0,1,2}
\end{equation}
followed by $B$'s isometry correction conditioned by $A$'s measurement outcome, no entanglement is required for merging $\Ket{\phi_1}^{Ra^\textup{R} b^\textup{R}}$, $\Ket{\phi_2}^{Ra^\textup{R} b^\textup{R}}$, and $\Ket{\phi_3}^{Ra^\textup{R} b^\textup{R}}$.
However, to coherently combine Subprocess~2 for $\Ket{\phi_0}^{Ra^\textup{R} b^\textup{R}}$, $\Ket{\phi_1}^{Ra^\textup{R} b^\textup{R}}$, $\Ket{\phi_2}^{Ra^\textup{R} b^\textup{R}}$, and $\Ket{\phi_3}^{Ra^\textup{R} b^\textup{R}}$,
one ebit of entanglement $\Ket{\Phi_2^+}^{\overline{A}\overline{B}}$ has to be consumed by $A$'s measurement on $\mathcal{H}^{\overline{A}}$ in the computational basis ${\left\{\Ket{m}^{\overline{A}}\right\}}_{m=0,1}$ followed by $B$'s isometry correction.
Consequently,
the LOCC map for Subprocess~2 can be written as a family of operators
\begin{equation}
  {\left\{\Bra{j,m_2}\otimes\sigma_{j,m_2}\right\}}_{m_2}
\end{equation}
tracing out the post-measurement state of $A$, where $\Ket{0,m_2}$ and $\sigma_{0,m_2}$ corresponds to ${\left(U_j^\prime\right)}^\dag\Ket{\Phi_{j,m_2}}$ and $\sigma_{j,m_2}$ in Subprocess~2 used for Theorems~\ref{thm:merge} and~\ref{thm:merge_without_catalyst} based on quantum teleportation, and for each $j\in\left\{1,2,3\right\}$, ${\left\{\Ket{j,m_2}\right\}}_{m_2}$ and $\sigma_{j,m_2}$ are the computational basis for $A$'s measurement and the isometry for $B$'s correction conditioned by $A$'s measurement outcome $m_2\,$, respectively.

\textit{Subprocess 3:} The third subprocess is for merging states on $\mathcal{H}^{a_0}\otimes\mathcal{H}^{b_0}$, and this subprocess can be performed in the same way as Theorems~\ref{thm:merge} and~\ref{thm:merge_without_catalyst}.

Combining these three subprocesses in the same way as Theorems~\ref{thm:merge} and~\ref{thm:merge_without_catalyst}, obtain the one-way LOCC protocol achieving Equation~\eqref{eq:one_ebit} is obtained, which yields the conclusion.

\end{proof}

\begin{proof}[Proof of the second statement in Theorem~\ref{thm:result}]
  The proof is by construction, and a two-way LOCC protocol for exact state merging of $\Ket{\psi}$ in the non-catalytic setting achieving zero entanglement cost
  \begin{equation}
    \log_2 K=0
  \end{equation}
  is constructed.
  This two-way LOCC protocol works as follows, and the explicit forms of measurements are shown later.
  While three maximally entangled two-qubit states
  \begin{equation}
    \left\{\Ket{\Phi_2^+}^{AB},\gamma_1 X_2^A\Ket{\Phi_2^+}^{AB},\gamma_2 Z_2^A\Ket{\Phi_2^+}^{AB}\right\}
  \end{equation}
  used in Equation~\eqref{eq:s} cannot be discriminated by any LOCC measurement by themselves~\cite{G10},
  $B$ can perform an appropriate three-outcome measurement
  \begin{equation}
    {\left\{M_j^B\right\}}_{j=0,1,2}\,,
  \end{equation}
  so that the additional terms on $A$'s subspace $\mathbb{C}^9$ in Equation~\eqref{eq:s} become orthogonal.
  Using this orthogonality, $A$ can also perform an appropriate thirty-three-outcome measurement
  \begin{equation}
    {\left\{M_{k|j}^A\right\}}_{k=0,\ldots,32}
  \end{equation}
  conditioned by $B$'s measurement outcome $j$, so that for each measurement outcome $j$ and $k$ of the LOCC measurement
  \begin{equation}
    {\left\{M_{k|j}^A\otimes M_{j}^B\right\}}_{j,k}\,,
  \end{equation}
  orthogonal states ${\left\{\Ket{\psi_l}^{AB}\right\}}_{l=0,1,2}$ defined as Equation~\eqref{eq:s} are transformed into orthogonal states of $B$.
  Thus, $B$'s local isometry correction conditioned by $j$ and $k$ yields $\Ket{\psi}^{RB^\prime B}$.

  In the following, $B$'s measurement
  \begin{equation}
    {\left\{M_j^B\right\}}_{j=0,1,2}
  \end{equation}
  and $A$'s measurement
  \begin{equation}
    {\left\{M_{k|j}^A\right\}}_{k=0,\ldots,32}
  \end{equation}
  conditioned by $B$'s measurement outcome $j$ are shown explicitly.
  To present these measurements, consider that $\mathcal{H}^A$ and $\mathcal{H}^B$ are decomposed in the same way as Equation~\eqref{eq:s} for defining ${\left\{\Ket{\psi_l}^{AB}\right\}}_l\,$, that is,
  \begin{align}
    \mathcal{H}^A&=\mathbb{C}^2\oplus\mathbb{C}^9,\\
    \mathcal{H}^B&=\mathbb{C}^2\oplus\mathbb{C}^9.
  \end{align}

The measurement ${\left\{M_j^B\right\}}_{j=0,1,2}$ performed by $B$ is
\begin{align}
    &M_0^B\coloneq\sqrt{\frac{1}{3}}\left(\Ket{0}\Bra{0}+\Ket{1}\Bra{1}\right)\oplus\left(\Ket{0}\Bra{0}+\Ket{1}\Bra{1}+\Ket{2}\Bra{2}\right),\\
    &M_1^B\coloneq\sqrt{\frac{1}{3}}\left(\Ket{0}\Bra{0}+\Ket{1}\Bra{1}\right)\oplus\left(\Ket{3}\Bra{3}+\Ket{4}\Bra{4}+\Ket{5}\Bra{5}\right),\\
    &M_2^B\coloneq\sqrt{\frac{1}{3}}\left(\Ket{0}\Bra{0}+\Ket{1}\Bra{1}\right)\oplus\left(\Ket{6}\Bra{6}+\Ket{7}\Bra{7}+\Ket{8}\Bra{8}\right),
\end{align}
where each operator on the right-hand side is on $\mathbb{C}^2\oplus\mathbb{C}^9$.
This measurement satisfies the completeness condition
\begin{equation}
  \sum_{j=0}^{2}M_j^\dag M_j=\mathbb{1}.
\end{equation}

As for $A$'s measurement ${\left\{M_{k|j}^A\right\}}_{k=0,\ldots,32}$ conditioned by $j\in\left\{0,1,2\right\}$,
the case of $j=0$, that is, ${\left\{M_{k|0}^A\right\}}_{k=0,\ldots,32}\,$, is shown first, while a similar  construction applies to the cases of $j=1,2$, as discussed later.
For brevity, define a bipartite pure state $\Ket{\Psi}\in\mathbb{C}^9\otimes\mathbb{C}^9$ with Schmidt rank three as
\begin{equation}
  \Ket{\Psi}\coloneq\sqrt{\frac{1}{3}}\left(\Ket{0}\otimes\Ket{0}+\Ket{1}\otimes\Ket{1}+\Ket{2}\otimes\Ket{2}\right),
\end{equation}
and also define the Fourier-basis states of three-dimensional subspaces of $\mathbb{C}^9$
\begin{align}
  \Ket{\omega_{n}^{\left(0,4,8\right)}}&\coloneqq\frac{1}{\sqrt{3}}\Ket{0}+\frac{\exp\left(\frac{\textup{i}\pi n}{3}\right)}{\sqrt{3}}\Ket{4}+\frac{\exp\left(\frac{\textup{i}\pi 2n}{3}\right)}{\sqrt{3}}\Ket{8},\\
  \Ket{\omega_{n}^{\left(1,5,6\right)}}&\coloneqq\frac{1}{\sqrt{3}}\Ket{1}+\frac{\exp\left(\frac{\textup{i}\pi n}{3}\right)}{\sqrt{3}}\Ket{5}+\frac{\exp\left(\frac{\textup{i}\pi 2n}{3}\right)}{\sqrt{3}}\Ket{6},\\
  \Ket{\omega_{n}^{\left(2,3,7\right)}}&\coloneqq\frac{1}{\sqrt{3}}\Ket{2}+\frac{\exp\left(\frac{\textup{i}\pi n}{3}\right)}{\sqrt{3}}\Ket{3}+\frac{\exp\left(\frac{\textup{i}\pi 2n}{3}\right)}{\sqrt{3}}\Ket{7},
\end{align}
where $n\in\left\{0,1,2\right\}$.
If $B$'s measurement outcome is $j=0$, the post-measurement state is
\begin{equation}
  \Ket{\psi^{\left(0\right)}}^{RAB}=\frac{1}{\sqrt{3}}\sum_{l=0}^{2}\Ket{l}^R\otimes\Ket{\psi_l^{\left(0\right)}}^{AB},
\end{equation}
where
\begin{equation}
  \begin{split}
    \Ket{\psi_0^{\left(0\right)}}\coloneq&\sqrt{\frac{2}{11}}\Ket{\Phi_2^+}\oplus\sqrt{\frac{9}{11}}\Ket{\Psi}\\
    =&\sqrt{\frac{1}{11}}\left(\Ket{0}\otimes\Ket{0}+\Ket{1}\otimes\Ket{1}\right)\oplus\sqrt{\frac{3}{11}}\left(\Ket{0}\otimes\Ket{0}+\Ket{1}\otimes\Ket{1}+\Ket{2}\otimes\Ket{2}\right),\\
    \Ket{\psi_1^{\left(0\right)}}\coloneq&\sqrt{\frac{2}{11}}\left(\gamma_1 X_2\otimes\mathbb{1}\right)\Ket{\Phi_2^+}\oplus\sqrt{\frac{9}{11}}\left({\left(X_9\right)}^3\otimes\mathbb{1}\right)\Ket{\Psi}\\
    =&\sqrt{\frac{1}{11}}\gamma_1\left(\Ket{1}\otimes\Ket{0}+\Ket{0}\otimes\Ket{1}\right)\oplus\sqrt{\frac{3}{11}}\left(\Ket{3}\otimes\Ket{0}+\Ket{4}\otimes\Ket{1}+\Ket{5}\otimes\Ket{2}\right),\\
    \Ket{\psi_2^{\left(0\right)}}\coloneq&\sqrt{\frac{2}{11}}\left(\gamma_2 Z_2\otimes\mathbb{1}\right)\Ket{\Phi_2^+}\oplus\sqrt{\frac{9}{11}}\left({\left(X_9\right)}^6\otimes\mathbb{1}\right)\Ket{\Psi}\\
    =&\sqrt{\frac{1}{11}}\gamma_2\left(\Ket{0}\otimes\Ket{0}-\Ket{1}\otimes\Ket{1}\right)\oplus\sqrt{\frac{3}{11}}\left(\Ket{6}\otimes\Ket{0}+\Ket{7}\otimes\Ket{1}+\Ket{8}\otimes\Ket{2}\right).
  \end{split}
\end{equation}
In this case, $A$'s measurement ${\left\{M_{k|0}^A\right\}}_{k=0,\ldots,32}$ is in the form of
\begin{equation}
  M_{k|0}\coloneq\Bra{\phi_{k|0}},
\end{equation}
where $k\in\left\{0,\ldots,32\right\}$, the post-measurement state of $A$ is traced out, and $\Ket{\phi_{k|0}}\in\mathbb{C}^2\oplus\mathbb{C}^9$ is an unnormalized vector.
Each $\Ket{\phi_{k|0}}$ is defined as
\begin{align}
  \Ket{\phi_{0 |0}}   &\coloneqq& \sqrt{\frac{3}{36}}\Ket{0}&\oplus\sqrt{\frac{1}{36}}\left( \Ket{0}+\Ket{4}-\overline{\gamma_2}\Ket{6}\right),\\
  \Ket{\phi_{1 |0}}   &\coloneqq&-\sqrt{\frac{3}{36}}\Ket{0}&\oplus\sqrt{\frac{1}{36}}\left( \Ket{0}+\Ket{4}-\overline{\gamma_2}\Ket{6}\right),\\
  \Ket{\phi_{2 |0}}   &\coloneqq& \sqrt{\frac{3}{36}}\Ket{1}&\oplus\sqrt{\frac{1}{36}}\left(-\Ket{0}+\Ket{4}-\overline{\gamma_2}\Ket{6}\right),\\
  \Ket{\phi_{3 |0}}   &\coloneqq&-\sqrt{\frac{3}{36}}\Ket{1}&\oplus\sqrt{\frac{1}{36}}\left(-\Ket{0}+\Ket{4}-\overline{\gamma_2}\Ket{6}\right),\\
  \Ket{\phi_{4 |0}}   &\coloneqq& \sqrt{\frac{3}{36}}\Ket{0}&\oplus\sqrt{\frac{1}{36}}\left( \Ket{0}-\Ket{4}-\overline{\gamma_2}\Ket{6}\right),\\
  \Ket{\phi_{5 |0}}   &\coloneqq&-\sqrt{\frac{3}{36}}\Ket{0}&\oplus\sqrt{\frac{1}{36}}\left( \Ket{0}-\Ket{4}-\overline{\gamma_2}\Ket{6}\right),\\
  \Ket{\phi_{6 |0}}   &\coloneqq& \sqrt{\frac{3}{36}}\Ket{1}&\oplus\sqrt{\frac{1}{36}}\left(-\Ket{0}-\Ket{4}-\overline{\gamma_2}\Ket{6}\right),\\
  \Ket{\phi_{7 |0}}   &\coloneqq&-\sqrt{\frac{3}{36}}\Ket{1}&\oplus\sqrt{\frac{1}{36}}\left(-\Ket{0}-\Ket{4}-\overline{\gamma_2}\Ket{6}\right),\\
  \Ket{\phi_{8 |0}}   &\coloneqq& \sqrt{\frac{3}{36}}\Ket{0}&\oplus\sqrt{\frac{1}{36}}\left( \Ket{1}+\Ket{5}-\overline{\gamma_2}\Ket{7}\right),\\
  \Ket{\phi_{9 |0}}   &\coloneqq&-\sqrt{\frac{3}{36}}\Ket{0}&\oplus\sqrt{\frac{1}{36}}\left( \Ket{1}+\Ket{5}-\overline{\gamma_2}\Ket{7}\right),\\
  \Ket{\phi_{10|0}}   &\coloneqq& \sqrt{\frac{3}{36}}\Ket{1}&\oplus\sqrt{\frac{1}{36}}\left(-\Ket{1}+\Ket{5}-\overline{\gamma_2}\Ket{7}\right),\\
  \Ket{\phi_{11|0}}   &\coloneqq&-\sqrt{\frac{3}{36}}\Ket{1}&\oplus\sqrt{\frac{1}{36}}\left(-\Ket{1}+\Ket{5}-\overline{\gamma_2}\Ket{7}\right),\\
  \Ket{\phi_{12|0}}   &\coloneqq& \sqrt{\frac{3}{36}}\Ket{0}&\oplus\sqrt{\frac{1}{36}}\left( \Ket{1}-\Ket{5}-\overline{\gamma_2}\Ket{7}\right),\\
  \Ket{\phi_{13|0}}   &\coloneqq&-\sqrt{\frac{3}{36}}\Ket{0}&\oplus\sqrt{\frac{1}{36}}\left( \Ket{1}-\Ket{5}-\overline{\gamma_2}\Ket{7}\right),\\
  \Ket{\phi_{14|0}}   &\coloneqq& \sqrt{\frac{3}{36}}\Ket{1}&\oplus\sqrt{\frac{1}{36}}\left(-\Ket{1}-\Ket{5}-\overline{\gamma_2}\Ket{7}\right),\\
  \Ket{\phi_{15|0}}   &\coloneqq&-\sqrt{\frac{3}{36}}\Ket{1}&\oplus\sqrt{\frac{1}{36}}\left(-\Ket{1}-\Ket{5}-\overline{\gamma_2}\Ket{7}\right),\\
  \Ket{\phi_{16|0}}   &\coloneqq& \sqrt{\frac{3}{36}}\Ket{0}&\oplus\sqrt{\frac{1}{36}}\left( \Ket{2}+\Ket{3}-\overline{\gamma_2}\Ket{8}\right),\\
  \Ket{\phi_{17|0}}   &\coloneqq&-\sqrt{\frac{3}{36}}\Ket{0}&\oplus\sqrt{\frac{1}{36}}\left( \Ket{2}+\Ket{3}-\overline{\gamma_2}\Ket{8}\right),\\
  \Ket{\phi_{18|0}}   &\coloneqq& \sqrt{\frac{3}{36}}\Ket{1}&\oplus\sqrt{\frac{1}{36}}\left(-\Ket{2}+\Ket{3}-\overline{\gamma_2}\Ket{8}\right),\\
  \Ket{\phi_{19|0}}   &\coloneqq&-\sqrt{\frac{3}{36}}\Ket{1}&\oplus\sqrt{\frac{1}{36}}\left(-\Ket{2}+\Ket{3}-\overline{\gamma_2}\Ket{8}\right),\\
  \Ket{\phi_{20|0}}   &\coloneqq& \sqrt{\frac{3}{36}}\Ket{0}&\oplus\sqrt{\frac{1}{36}}\left( \Ket{2}-\Ket{3}-\overline{\gamma_2}\Ket{8}\right),\\
  \Ket{\phi_{21|0}}   &\coloneqq&-\sqrt{\frac{3}{36}}\Ket{0}&\oplus\sqrt{\frac{1}{36}}\left( \Ket{2}-\Ket{3}-\overline{\gamma_2}\Ket{8}\right),\\
  \Ket{\phi_{22|0}}   &\coloneqq& \sqrt{\frac{3}{36}}\Ket{1}&\oplus\sqrt{\frac{1}{36}}\left(-\Ket{2}-\Ket{3}-\overline{\gamma_2}\Ket{8}\right),\\
  \Ket{\phi_{23|0}}   &\coloneqq&-\sqrt{\frac{3}{36}}\Ket{1}&\oplus\sqrt{\frac{1}{36}}\left(-\Ket{2}-\Ket{3}-\overline{\gamma_2}\Ket{8}\right),\\
  \Ket{\phi_{24|0}}&\coloneqq&\boldsymbol{0}&\oplus\sqrt{\frac{28}{36}}\Ket{\omega_{0}^{\left(0,4,8\right)}},\\
  \Ket{\phi_{25|0}}&\coloneqq&\boldsymbol{0}&\oplus\sqrt{\frac{28}{36}}\Ket{\omega_{1}^{\left(0,4,8\right)}},\\
  \Ket{\phi_{26|0}}&\coloneqq&\boldsymbol{0}&\oplus\sqrt{\frac{28}{36}}\Ket{\omega_{2}^{\left(0,4,8\right)}},\\
  \Ket{\phi_{27|0}}&\coloneqq&\boldsymbol{0}&\oplus\sqrt{\frac{28}{36}}\Ket{\omega_{0}^{\left(1,5,6\right)}},\\
  \Ket{\phi_{28|0}}&\coloneqq&\boldsymbol{0}&\oplus\sqrt{\frac{28}{36}}\Ket{\omega_{1}^{\left(1,5,6\right)}},\\
  \Ket{\phi_{29|0}}&\coloneqq&\boldsymbol{0}&\oplus\sqrt{\frac{28}{36}}\Ket{\omega_{2}^{\left(1,5,6\right)}},\\
  \Ket{\phi_{30|0}}&\coloneqq&\boldsymbol{0}&\oplus\sqrt{\frac{28}{36}}\Ket{\omega_{0}^{\left(2,3,7\right)}},\\
  \Ket{\phi_{31|0}}&\coloneqq&\boldsymbol{0}&\oplus\sqrt{\frac{28}{36}}\Ket{\omega_{1}^{\left(2,3,7\right)}},\\
  \Ket{\phi_{32|0}}&\coloneqq&\boldsymbol{0}&\oplus\sqrt{\frac{28}{36}}\Ket{\omega_{2}^{\left(2,3,7\right)}},
\end{align}
where $\boldsymbol{0}$ is the zero vector on $\mathbb{C}^2$.
This measurement satisfies the completeness condition
\begin{equation}
  \sum_{k=0}^{32}M_{k|0}^\dag M_{k|0}=\mathbb{1}.
\end{equation}

Similarly,
the other measurements for $A$ conditioned by $B$'s measurement outcomes $j=1$ and $j=2$, that is,
${\left\{M_{k|1}^A\right\}}_{k=0,\ldots,32}$ and ${\left\{M_{k|2}^A\right\}}_{k=0,\ldots,32}\,$, respectively,
are defined for each $k\in\left\{0,\ldots,32\right\}$ as
\begin{align}
    M_{k|1}&\coloneq M_{k|0}\left(\boldsymbol{0}\oplus{\left(X_9\right)}^3\right),\\
    M_{k|2}&\coloneq M_{k|0}\left(\boldsymbol{0}\oplus{\left(X_9\right)}^6\right).
\end{align}
These measurements satisfy the completeness condition
\begin{equation}
  \sum_{k=0}^{32}M_{k|j}^\dag M_{k|j}=\mathbb{1},
\end{equation}
for each $j\in\left\{1,2\right\}$.

In the two-way LOCC protocol for exact state merging of $\Ket{\psi}^{RAB}$ in the non-catalytic setting at zero entanglement cost,
$B$ first performs the measurement ${\left\{M_j^B\right\}}_{j=0,1,2}\,$, and the measurement outcome $j$ is sent by classical communication from $B$ to $A$.
Conditioned by $j$, the measurement ${\left\{M_{k|j}^A\right\}}_{k=0,\ldots,32}$ is performed by $A$, and the measurement outcome $k$ is sent by classical communication from $A$ to $B$.
After this LOCC measurement ${\left\{M_{k|j}^A\otimes M_j^B\right\}}_{j,k}$ by $A$ and $B$,
for any pair of measurement outcomes $j\in\left\{0,1,2\right\}$ and $k\in\left\{0,\ldots,32\right\}$,
the post-measurement state
\begin{equation}
  \frac{\left(M_{k|j}^A\otimes M_j^B\right)\Ket{\psi}^{RAB}}{\left\|\left(M_{k|j}^A\otimes M_j^B\right)\Ket{\psi}^{RAB}\right\|},
\end{equation}
is a maximally entangled state with Schmidt rank three between $R$ and $B$.
Therefore, $B$ performs local isometry conditioned by $j$ and $k$ to transform this maximally entangled state into $\Ket{\psi}^{RB^\prime B}$.
This protocol yields the conclusion.

\end{proof}

\section{\label{sec:cost}Interpretation of entanglement cost in quantum state merging}

This section discusses how entanglement cost in state merging can be interpreted.
As Theorem~\ref{thm:result} shows that entanglement cost in state merging under one-way LOCC and that two-way LOCC are different under a one-shot regime,
the entanglement cost under two-way LOCC cannot be interpreted based only on one-way communication in analogy to classical source coding with $B$'s side information.
But given the interconnection between the tasks of state merging and local state discrimination, these tasks can be interpreted as \textit{distributed decoding} of information encoded in an initially shared state.
This section first summarizes the difference in properties of $B$'s side information in asymptotic and one-shot scenarios of state merging, and then provide another interpretation of state merging based on distributed decoding.

References~\cite{H3,H4} interpret the minimal entanglement cost in the asymptotic scenario of state merging as \textit{partial quantum information conditioned by $B$'s prior quantum information}.
Consider three parties, namely, $A$, $B$, and $R$, and any tripartite pure state $\Ket{\psi}^{RAB}$ shared among $A$, $B$, and $R$.
Define a measure of partial quantum information conditioned by $B$'s prior quantum information for $\Ket{\psi}^{RAB}$ as the rate of the minimal entanglement cost in the asymptotic scenario of state merging of $\Ket{\psi}^{RAB}$, which is given by the conditional quantum entropy ${H\left(A|B\right)}_\psi$~\cite{H3,H4}.
Here, let $A$ and $B$ perform a class of operations consisting of any local preprocessing of $B$'s prior quantum information of $\psi^B$ and backward classical communication from $B$ to $A$, which is a subclass of LOCC\@.
The following proposition show that ${H\left(A|B\right)}_\psi$ is monotonically nondecreasing on average under this class of operations, and the proof is given in Appendix~\ref{sec:monotonic}.
A similar monotonic property of conditional quantum entropy induced by measurements is also discussed in Reference~\cite{C10}.
Note that while Reference~\cite{H4} discusses a case where the conditional quantum entropy is decreased by adding quantum side information to $A$, this case in Reference~\cite{H4} is different from the case discussed here, since in Reference~\cite{H4}, entanglement between $A$ and $B$ can be increased by adding the quantum side information to $A$, but in this case cannot be increased by LOCC\@.

\begin{proposition}
\label{prp:monotonic}
  \textit{Monotonic property of partial quantum information in the asymptotic scenario.}
  Given any state $\Ket{\psi}^{RAB}$,
  for any operation by $A$ and $B$ represented as
  \begin{equation}
    \label{eq:backward}
    {\left\{U_j^A\otimes M_j^B\right\}}_j\,,
  \end{equation}
  where ${\left\{M_j^B\right\}}_j$ is $B$'s measurement for preprocessing satisfying the completeness condition $\sum_j M_j^\dag M_j=\mathbb{1}$, and $U_j^A$ is $A$'s isometry conditioned by $B$'s measurement outcome $j$ sent by backward classical communication from $B$ to $A$,
  it holds that
  \begin{equation}
    {H\left(A|B\right)}_\psi\leqq\sum_j p\left(j\right){H\left(A|B\right)}_{\psi_j}\,,
  \end{equation}
  where $\Ket{\psi_j}^{RAB}$ is the post-measurement state corresponding to $j$, that is,
  \begin{align}
      \Ket{\psi_j}^{RAB}&\coloneq\sqrt{\frac{1}{p\left(j\right)}}\left(\mathbb{1}^R\otimes U_j^A\otimes M_j^B\right)\Ket{\psi}^{RAB},\\
      p\left(j\right)&\coloneq{\left\|\left(\mathbb{1}^R\otimes U_j^A\otimes M_j^B\right)\Ket{\psi}^{RAB}\right\|}^2.
  \end{align}
\end{proposition}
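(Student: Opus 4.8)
The plan is to reduce the statement to the standard monotonicity of conditional entropy under a channel acting on the conditioning system. First I would dispose of $A$'s conditional isometry $U_j^A$. Since each $U_j^A$ is norm-preserving, the probabilities $p\left(j\right)$ are unchanged if $U_j^A$ is replaced by the identity; and since ${H\left(A|B\right)}$ is invariant under an isometry applied to the $A$-system (the joint entropy ${H\left(AB\right)}$ is preserved by an isometry on a subsystem, while the reduced state $\psi_j^B$ does not depend on $U_j^A$ at all), the value ${H\left(A|B\right)}_{\psi_j}$ is the same whether or not $U_j^A$ is applied. Hence it suffices to prove the claim for the operation in which only $B$'s measurement ${\left\{M_j^B\right\}}_j$ acts.

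Next I would dilate $B$'s measurement to an isometry. Introducing an outcome register $\mathcal{H}^J$ and a generic output space $\mathcal{H}^{B''}$ for the $M_j^B$, I set $V^{B\to B''J}\Ket{\phi}^B\coloneq\sum_j\left(M_j^B\Ket{\phi}^B\right)\otimes\Ket{j}^J$, which is an isometry exactly because of the completeness condition $\sum_j {M_j^B}^\dag M_j^B=\mathbb{1}$. Applying $V$ to the $B$-part of $\Ket{\psi}^{RAB}$ produces the coherent state $\Ket{\Psi}^{RAB''J}=\sum_j\sqrt{p\left(j\right)}\Ket{\psi_j}^{RAB''}\otimes\Ket{j}^J$. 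Because $V$ acts on $B$ alone, invariance of conditional entropy under isometry gives ${H\left(A|B\right)}_\psi={H\left(A|B''J\right)}_\Psi$.

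Then I would compare this coherent quantity with its classically read-out counterpart. Letting $\Delta^J$ denote dephasing of $J$ in the basis ${\left\{\Ket{j}\right\}}_j$, I set $\tilde\Psi\coloneq\left(\id^{RAB''}\otimes\Delta^J\right)\left(\Ket{\Psi}\Bra{\Psi}\right)=\sum_j p\left(j\right)\psi_j^{RAB''}\otimes\Ket{j}\Bra{j}^J$, which is a classical-quantum state on $J$. Applying the joint-entropy formula for classical-quantum states to both $AB''J$ and $B''J$, the $H\left(J\right)$ terms cancel and I obtain ${H\left(A|B''J\right)}_{\tilde\Psi}=\sum_j p\left(j\right){H\left(A|B''\right)}_{\psi_j}=\sum_j p\left(j\right){H\left(A|B\right)}_{\psi_j}$. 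Since $\Delta^J$ is a CPTP map on the register $J$, which is held by $B$ and is part of the conditioning system $B''J$, the data-processing property of conditional entropy under processing of the conditioning system yields ${H\left(A|B''J\right)}_\Psi\leqq{H\left(A|B''J\right)}_{\tilde\Psi}$. Chaining the three relations gives ${H\left(A|B\right)}_\psi\leqq\sum_j p\left(j\right){H\left(A|B\right)}_{\psi_j}$.

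The crux of the argument, and the step to handle with care, is the direction of the monotonicity: the entire content is that dephasing the outcome register — that is, actually recording the classical measurement result rather than retaining it coherently — can only \emph{increase} ${H\left(A|B\right)}$, which is the correct sign of data processing when a channel acts on the conditioning system and is the direction I would invoke here. A secondary technical point is to carry a generic output space $\mathcal{H}^{B''}$ throughout so that measurement operators $M_j^B$ changing the dimension of $B$, and the fact that $\left(\mathbb{1}\otimes M_j^B\right)\Ket{\psi}$ is subnormalized before dividing by $\sqrt{p\left(j\right)}$, cause no difficulty.
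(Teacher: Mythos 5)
Your proof is correct and takes essentially the same route as the paper's: both arguments reduce the claim to the data-processing inequality for conditional entropy under a channel acting on the conditioning system, combined with the classical--quantum joint-entropy formula to evaluate $\sum_j p\left(j\right){H\left(A|B\right)}_{\psi_j}$. Your dilate-then-dephase presentation is just the Stinespring-factored form of the paper's direct application of data processing to the measurement channel $B\to XB$, and your up-front removal of $U_j^A$ via isometric invariance of ${H\left(A|B\right)}$ replaces the paper's controlled-isometry bookkeeping with registers $X$ and $X^\prime$; neither difference is substantive.
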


In contrast to this asymptotic scenario, Theorem~\ref{thm:result} indicates that entanglement cost in a one-shot scenario of state merging can be strictly decreased by the class of operations shown in Equation~\eqref{eq:backward}.
In the asymptotic scenario, Proposition~\ref{prp:monotonic} shows that the ability of performing $B$'s preprocessing and backward classical communication in two-way LOCC does not contribute to increasing $B$'s \textit{prior quantum information} from $\psi^B$.
In contrast, Theorem~\ref{thm:result} can be interpreted to say that, in a one-shot scenario, the same ability may increase $B$'s \textit{prior quantum information}.
In this sense, these interpretations provide notions of $B$'s \textit{prior quantum information} having \textit{different} properties depending on scenarios.

However, state merging can also be viewed in another way, based on the interconnection between state merging and local state discrimination.
In local state discrimination for ${\left\{\Ket{\psi_l}^{AB}\right\}}_l\,$, the index $l$ can be regarded as classical information encoded in $\Ket{\psi_l}^{AB}$, and local state discrimination aims to decode this classical information by LOCC\@.
In the same way, state merging of $\Ket{\psi}^{RAB}$ can also be regarded as distributed decoding of \textit{quantum} information by entanglement-assisted LOCC, in the sense that a protocol for state merging decodes arbitrary superposition of nonlocally shared states into the same superposition of $B$'s states, as shown in Formula~\eqref{eq:relative}.
This view of state merging as distributed decoding is generalized to more than two parties in Chapter~\ref{sec:distributed_encoding_decoding}.
These notions of information may be \textit{nonlocally} encoded in the shared quantum state~\cite{B6,B19}, in the sense that neither $A$ nor $B$ has local access to such nonlocally encoded information.

From this viewpoint, the minimal entanglement cost in state merging can be interpreted to characterize a nonlocal property of the map
\begin{equation}
  \mathcal{D}^{AB\to B}\left(\sum_l\alpha_l\Ket{\psi_l}^{AB}\right)=\sum_l\alpha_l\Ket{l}^{B}
\end{equation}
for decoding quantum information initially encoded in $\sum_l\alpha_l\Ket{\psi_l}^{AB}$, where $\sum_l\alpha_l\Ket{l}^{B}$ is locally unitarily equivalent to $\sum_l\alpha_l\Ket{\psi_l}^{B^\prime B}$.
This map $\mathcal{D}^{AB\to B}$ is an isometry map possibly defined for any given $\Ket{\psi}^{RAB}$.
Note that if catalytic use of entanglement is allowed, negative entanglement cost can also be viewed as a net gain of shared entanglement from the redundant part of $\sum_l\alpha_l\Ket{\psi_l}^{AB}$ as discussed in Chapter~\ref{sec:merge}, and the gained entanglement can be used as a resource for distributed decoding in future in the same way as the conventional interpretation~\cite{H3,H4}.

\part{\label{part:2}Operational analysis of multipartite quantum entanglement in distributed quantum information processing}

\chapter{Background and overview of Part~\ref{part:2}}

This part is motivated by the following previous studies on state constructions and transformations using entanglement-assisted LOCC\@.
In Reference~\cite{Y6} and the master thesis~\cite{Y18} of the author of this thesis, tasks of construction of a multipartite entangled state shared among spatially separated parties from a separable state have been analyzed, using a given inter-party network for quantum communication.
In the framework of local operations and classical communication (LOCC), single use of a noiseless quantum channel and that of a maximally entangled state are at equivalent cost by means of quantum teleportation simulating quantum communication~\cite{B5}.
For a bipartite state, the minimal amount of quantum communication required for preparing the state provides a well-established entanglement measure quantifying a nonlocal property of the state, called the \textit{entanglement cost} of the state~\cite{B3,H1,T1}.
The entanglement cost of a bipartite \textit{state} also generalizes to that required for spatially separated parties implementing a given nonlocal state \textit{transformation}, such as nonlocal unitaries~\cite{Z,E,C15,C16,N,Y5,C17,Y,S,S16,Y4,Y2,S17,X1,C18,V,Y3,W8,W9,W10,W12} and nonlocal measurements~\cite{J,B24,B25}, although this generalization usually accompanies challenging optimization and has been analyzed only in special cases to date.
Another direction is generalization of a \textit{bipartite} state to a \textit{multipartite} state~\cite{Y6,G9,Y7} while analysis of multipartite entanglement is also challenging~\cite{E2,W3,B26}.
To characterize  multipartite entanglement in terms of quantum communication, Reference~\cite{Y6} formulates that required for preparing the multipartite state shared among parties using a network of the noiseless quantum channels, establishing a characterization called \textit{graph-associated entanglement cost} of multipartite states.

In this part, after providing preliminaries in Chapter~\ref{sec:preliminaries_2}, the following two results are presented in Chapters~\ref{sec:distributed_encoding_decoding} and~\ref{sec:multipartite}.

\section*{Distributed encoding and decoding of quantum information over networks}

Encoding and decoding quantum information in a multipartite quantum system are fundamental building blocks in quantum information processing.
In particular, quantum error correcting codes~\cite{G,D,T10,B} require such encoding and decoding between a logical state and an entangled physical state of a multipartite system.
Quantum information is represented by this logical state, and these encoding and decoding are the inverse transformations of each other, mathematically represented by isometries.
These types of encoding and decoding have to be performed so that coherence of these states is kept; that is, an arbitrary superposition of the logical state should be preserved without revealing the classical description of the logical state.
In addition to quantum information processing, the concept of encoding and decoding nowadays has interdisciplinary roles in analyzing many-body quantum systems exhibiting nonlocal features, such as topological order in quantum phase of matter~\cite{K,K2}, holographic principle in quantum gravity~\cite{A,P}, and eigenstate thermalization hypothesis in statistical physics~\cite{F}.

\begin{figure}[t!]
    \centering
    \includegraphics[width=4in]{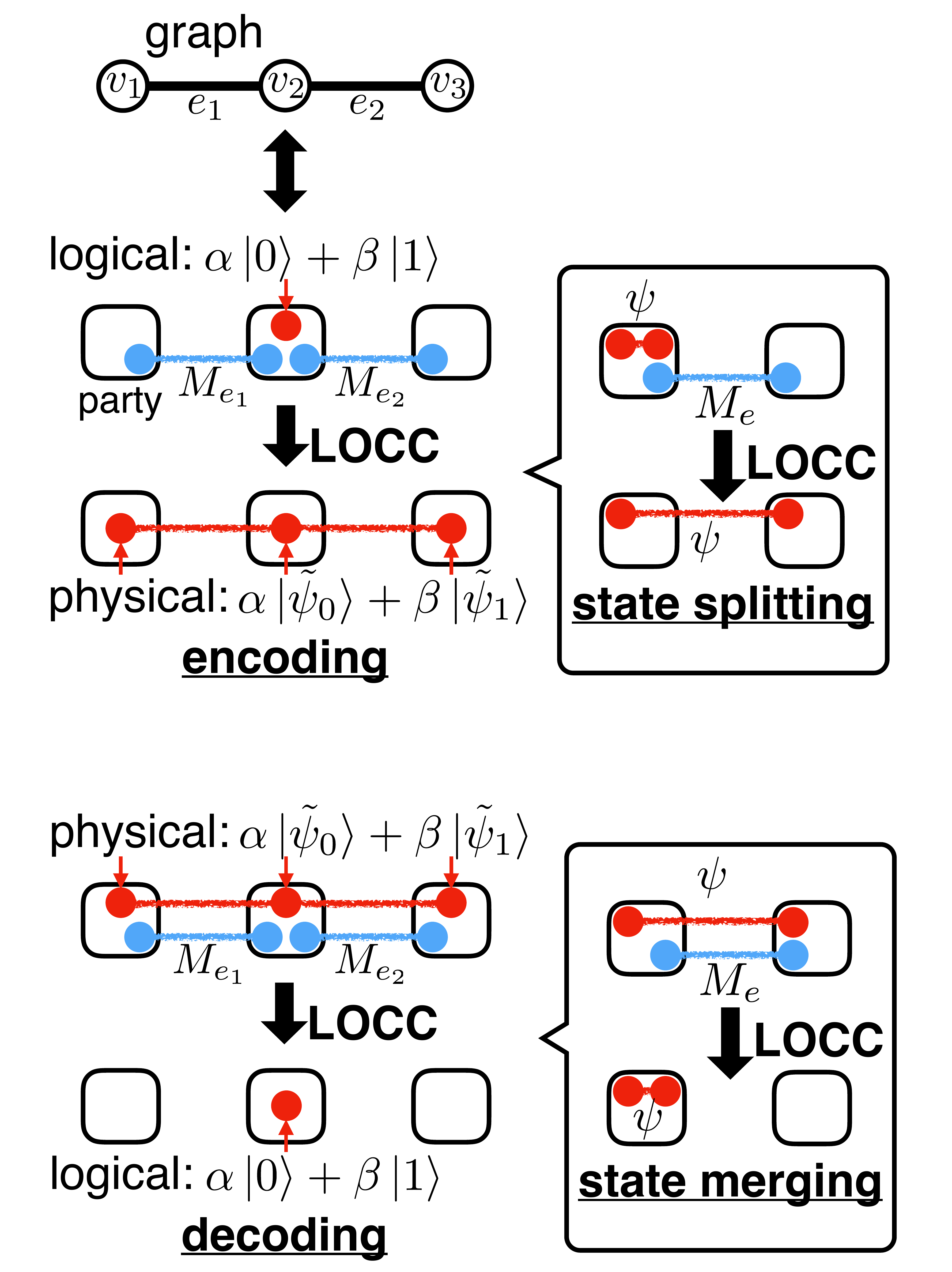}
    \caption[Encoding and decoding quantum information in a multipartite quantum system shared among spatially separated parties.]{\label{fig:intro}Encoding and decoding quantum information in a multipartite quantum system shared among spatially separated parties, where the quantum information is represented by \textit{unknown} quantum states illustrated by red circles. The parties are connected by a network of noiseless quantum channels  represented by a graph, so that the parties can sequentially apply exact state splitting to spread quantum information for encoding, and exact state merging to concentrate quantum information for decoding. Under LOCC, single use of each noiseless quantum channel represented by an edge $e$ of the graph is equivalent to that of a maximally entangled state $\Ket{\Phi_{M_e}^+}$ illustrated by a pair of blue circles connected by a line, where $M_e$ is the Schmidt rank of $\Ket{\Phi_{M_e}^+}$.}
\end{figure}

These encoding and decoding are also indispensable for distributed quantum information processing, where spatially separated parties connected by a network for quantum communication cooperate in achieving an information processing task.
Especially, encoding and decoding are crucial for some multiparty cryptographic tasks such as quantum secret sharing~\cite{B21,C,G7}.
In such distributed settings, a multipartite system for encoding a logical state is distributed among the spatially separated parties.
In this case, encoding and decoding are nonlocal transformations over all the parties, and the nonlocal properties of transformations for encoding and decoding lead to cost in implementations of the encoding and decoding.

In Chapter~\ref{sec:distributed_encoding_decoding}, entanglement costs characterizing the nonlocal properties of transformations for encoding and decoding are formalized.
Consider a setting where $N$ parties are connected by a network of the noiseless quantum channels, as illustrated in Figure~\ref{fig:intro}.
The network topology is represented by a graph in graph theory~\cite{B22} in terms of vertices and edges.
Any connected network of $N$ parties requires at least $N-1$ channels.
If an $N$-vertex connected graph has exactly $N-1$ edges, the graph is called a tree.
Using the network, the parties can spread and concentrate quantum information of \textit{unknown} states so as to encode and decode quantum information in a distributed system according to a given isometry representing the encoding and decoding.
The amount of quantum communication required for spreading and concentrating quantum information over the network characterizes nonlocal properties of the isometry.
Due to the equivalence between the noiseless quantum channel and the maximally entangled state,
a collection of maximally entangled states distributed according to the network topology comprises the resource state for spreading and concentrating quantum information by LOCC\@.
It is assumed that LOCC is free, and motivated by quantum communication on networks, Chapter~\ref{sec:distributed_encoding_decoding} considers this type of initial resource state consisting of bipartite entanglement.
The minimal total amount of quantum communication is evaluated by the entanglement entropy of the maximally entangled states for each edge, which are called the \textit{entanglement costs in spreading and concentrating quantum information}.
The entanglement cost in spreading quantum information characterizes the encoding, and that of concentrating characterizes the decoding.

Chapter~\ref{sec:distributed_encoding_decoding} evaluates the entanglement costs in spreading and concentrating quantum information over any given tree-topology network for an \textit{arbitrarily} given isometry, which differs from the works presented in References~\cite{F3,S15} for implementing \textit{particular} isometries in the context of quantum secret sharing.
To analyze the entanglement costs,
spreading and concentrating quantum information are reduced to sequential applications of exact state merging and splitting for two parties defined in Sections~\ref{sec:merge_achievability_exact} and~\ref{sec:split}, respectively, as illustrated in Figure~\ref{fig:intro}.
Regarding spreading quantum information, exact state splitting is used for providing a protocol and derive the optimal entanglement cost in spreading quantum information, which is given in terms of the rank of a state defined with respect to each edge of the given tree.
Another protocol is also shown for achieving concentrating quantum information. In particular, using exact state merging, the entanglement cost in concentrating quantum information can be reduced compared to that of spreading quantum information.
During spreading and concentrating \textit{quantum} information, coherence has to be kept, and this point is contrasted with encoding and decoding \textit{classical} information in quantum states shared among multiple parties investigated in the context of a type of quantum secret sharing based on local state discrimination~\cite{C14,R,Y10,W6,B23,L3}.
The protocols for spreading and concentrating quantum information are applicable to any isometry representing encoding and decoding and provide a protocol for one-shot distributed source compression~\cite{D8,D9,A8} applicable to arbitrarily small-dimensional systems and a general protocol for LOCC-assisted decoding of shared quantum information having studied in the context of quantum secret sharing~\cite{G8}.

\section*{When does multipartite entanglement outperform bipartite entanglement?}

In a distributed setting where spatially separable parties can freely perform LOCC, any multipartite entangled state can be prepared by LOCC from initially distributed bipartite entangled states among the parties, using quantum teleportation~\cite{B5} or less costly protocol established in Reference~\cite{Y6}.
In this regard, even if multipartite entanglement is to be used for a multiparty task in distributed information processing, initially sharing bipartite entangled states is sufficient,
and hence, it would be natural to doubt whether multipartite entanglement is necessary for performing some tasks by LOCC\@.

Aiming at showing that multipartite entanglement is still indispensable for distributed quantum information processing,
Chapter~\ref{sec:multipartite} provides \textit{nontrivial} examples demonstrating the difference in capability between entangled resource states exhibiting multipartite entanglement and those consisting only of bipartite entangled states.
This difference arises when there exists a limitation on the size of each party's local quantum system, that is, the dimension of the Hilbert space representing the local quantum system.
The comparison between bipartite and multipartite entanglement as resources for distributed quantum information processing presented in Chapter~\ref{sec:multipartite} is motivated by technological limitations on the number of qubits which can be stored in one quantum device, and is different from the comparison in the context of quantum key distribution~\cite{E3,P5}, since the cost of LOCC is considered to be negligible in Chapter~\ref{sec:multipartite}.
The difference can also be shown in a trivial example of qubits as follows.
Consider three parties $A$, $B$, and $C$ sharing two Bell states, that is, two two-qubit maximally entangled states
\begin{equation}
  \Ket{\Phi_2^+}^{AB}\otimes\Ket{\Phi_2^+}^{BC},
\end{equation}
one of which is shared between $A$ and $B$, and the other of which is shared between $B$ and $C$.
These two Bell states as a whole are regarded as a state consisting of bipartite entangled states.
In this case, once these two Bell states are given to the parties, the parties can transform the two Bell states by LOCC into any three-qubit state shared among $A$, $B$, and $C$, such as the three-qubit Greenberger-Horne-Zeilinger (GHZ) state
\begin{equation}
  \Ket{\textup{GHZ}}\coloneq\frac{1}{\sqrt{2}}\left(\Ket{000}+\Ket{111}\right),
\end{equation}
and the three-qubit $W$ state
\begin{equation}
  \Ket{W}\coloneq\frac{1}{\sqrt{3}}\left(\Ket{100}+\Ket{010}+\Ket{001}\right),
\end{equation}
both of which can be regarded as states exhibiting multipartite entanglement.
However, if each party's local system size is limited to one qubit,
the parties cannot store any state consisting of a collection of bipartite entangled states to obtain $\Ket{\textup{GHZ}}$ and $\Ket{W}$ by LOCC, while the parties can still store one of these states exhibiting multipartite entanglement as a resource for performing some tasks by LOCC\@.

\begin{figure}[t!]
    \centering
    \includegraphics[width=4in]{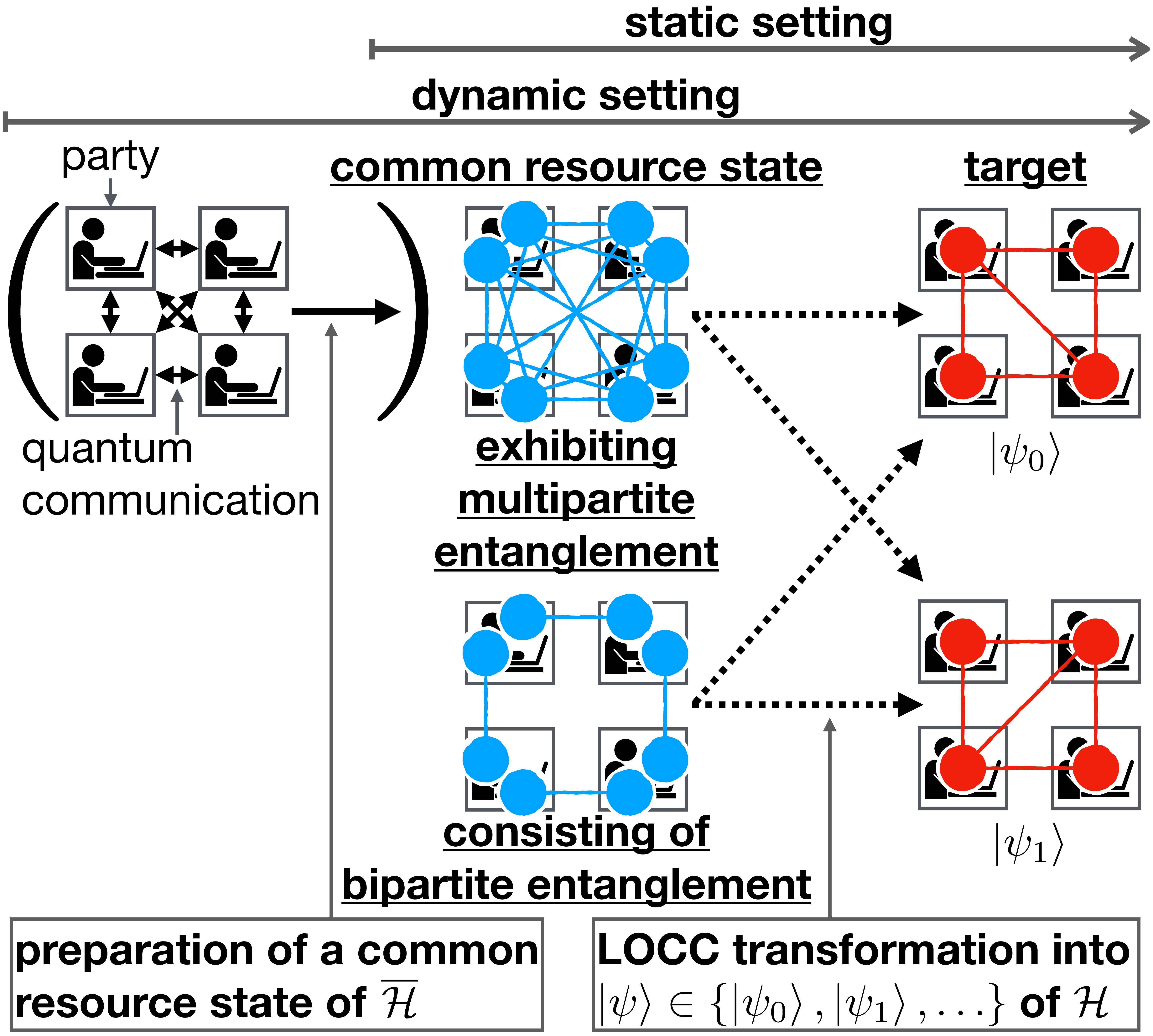}
    \caption[System-size-limited quantum state transformation.]{The tasks of system-size-limited quantum state transformation, where the parties transform a common resource state represented by blue circles by LOCC into an arbitrary state in a given target set $\left\{\Ket{\psi_0},\Ket{\psi_1},\ldots\right\}$ represented by red circles. To differentiate the capabilities of common resource states exhibiting multipartite entanglement at the top and those only consisting of bipartite entanglement at the bottom, where each connected pair of blue circles represents a bipartite entangled state, the static setting is considered where each party's local system size for storing the common resource state is limited. Also the dynamic setting is considered where the parties have to prepare a common resource state within these limitations by performing quantum communication, in addition to storing the common resource state. The difference in the capabilities arises in terms of achievability of this task.}
\label{fig:multipartite_intro}
\end{figure}

Apart from the above trivial example of qubits,
Chapter~\ref{sec:multipartite} aims to demonstrate the difference even in cases where the size of local systems of some parties is not limited to one qubit.
As illustrated in Figure~\ref{fig:multipartite_intro}, two settings of tasks aiming at preparing states in a target set from a common resource state~\cite{S18,G2} by LOCC are considered for differentiating capabilities of entangled states only consisting of a collection of bipartite entangled states and those exhibiting multipartite entanglement.
The tasks are called \textit{system-size-limited quantum state preparation},
where one of the two settings is called a \textit{static} setting, and the other is called a \textit{dynamic} setting.

In the static setting, each party's local system size is limited, and a common resource state for a given target set is stored within this limitation.
For a given target set of states of a multipartite system in general, there may not exist any common resource state in the multipartite system itself transformable by LOCC into all the states in the set.
In particular, given a multipartite system where each local dimension is $d$, almost no LOCC transformation among pure states of the system is possible~\cite{V1,S1,S2,M1,G1,S3}.
This fact implies that, in general, a common resource state for a set of multipartite states may be a state of a higher-dimensional system than that for the set itself.
If there is a limitation on each party's local system size, it may not be possible for the parties to store an entangled state of a higher-dimensional system serving as a common resource state.
Despite the efforts to understand properties of multipartite entanglement~\cite{E2,W3,B26}, general quantitative conditions of the smallest system size for common resource states have not yet been established.
Chapter~\ref{sec:multipartite} provides nontrivial examples where, within a given limitation on local system sizes, the preparation of a state in a given target set is \textit{not} achievable by any common resource state consisting of a collection of bipartite entangled states, but it is achievable by a common resource state exhibiting multipartite entanglement.
In contrast to previous studies on the LOCC convertibility between multipartite pure states of the \textit{same}-dimensional systems~\cite{V1,S1,S2,M1,G1,S3,S13,V6,T2,T4,T6}, this analysis requires LOCC transformations from a common resource state of a \textit{higher}-dimensional Hilbert space into a set of states of a \textit{lower}-dimensional Hilbert space.
The examples show the difference in the capabilities between these two types of common resource states, namely, those consisting of bipartite entanglement and those exhibiting multipartite entanglement.

As for the dynamic setting, in addition to considering a limitation on local system sizes for storing a common resource state, the parties by themselves prepare the common resource state within this limitation by performing quantum communication.
Some of the common resource states exhibiting multipartite entanglement analyzed in the static setting can be prepared within the limitation using quantum communication.
Hence, temporal uses of bipartite quantum communication resources are still sufficient for preparing such common resource states.
In contrast, Chapter~\ref{sec:multipartite} also shows other examples of states exhibiting multipartite entanglement that can be stored but cannot be prepared within a limitation on local system sizes, indicating that the common resource state used in the dynamic setting has an intermediate capability between those consisting of bipartite entanglement and those exhibiting multipartite entanglement in the static setting.

\chapter{\label{sec:preliminaries_2}Preliminaries to Part~\ref{part:2}}

This chapter provides preliminaries to Part~\ref{part:2}.
Section~\ref{sec:network} models networks in distributed quantum information processing as a collection of bipartite entangled states.
Notations for a class of networks having tree topology is summarized in Section~\ref{sec:tree} for later use.
These definitions are based on References~\cite{Y6,Y13}.
For this class of tree-topology networks, the results on constructing multipartite states investigated in Reference~\cite{Y6} are summarized in Section~\ref{sec:construction}.

\section{\label{sec:network}Quantum networks and entangled states consisting of bipartite entanglement}

A network of quantum communication channels among $N$ parties is represented by a graph~\cite{B22}.
Let
\begin{equation}
  G=(V(G),E(G))
\end{equation}
denote a simple undirected graph representing the restriction on quantum communication.
To perform arbitrary entanglement transformations over $N$ parties,  $G$ has to be a connected graph.
The arguments of $V(G)$ and $E(G)$ may be omitted to be simply written as
\begin{equation}
  G=(V,E),
\end{equation}
if obvious.
Each of the $N$ vertices
\begin{equation}
  v\in V=\{v_1\,,v_2\,,\ldots,v_N\}
\end{equation}
represents one of the $N$ parties, and each edge
\begin{equation}
  e=\{v_k\,,v_{k'}\}\in E
\end{equation}
represents a bidirectional noiseless quantum channel between $v_k$ and $v_{k'}$.
Quantum communication is only allowed between the parties directly connected by an edge.
Note that an edge $\{v_k\,,v_{k'}\}\in E$ is identified with $\{v_{k'}\,,v_{k}\}$.
Assume that the $N$ parties can freely perform LOCC\@.

When LOCC can be freely performed, quantum communication of a state of an $M_e$-dimensional system from a party $v_k$ to another party $v_{k'}$ connected by a channel $e = \left\{v_k\,,v_{k'}\right\}$ is achieved using quantum teleportation~\cite{B5} by LOCC assisted by a maximally entangled state shared between $v_k$ and $v_{k'}$
\begin{equation}
  \Ket{\Phi_{M_e}^+}^{e}\coloneq\frac{1}{\sqrt{M_e}}\sum_{l=0}^{M_e-1}\Ket{l}^{v_k}\otimes\Ket{l}^{v_{k'}}
\end{equation}
where $M_e$ is the Schmidt rank, and the superscript $e=\{v_k\,,v_{k'}\}$ represents a state shared between $v_k$ and $v_{k'}$.

In the LOCC framework, the tasks of performing a transformation of a multipartite entangled state shared among the $N$ parties under the restriction on quantum communication is equivalent to the tasks of performing the transformation by LOCC assisted by an initial resource state consisting of a set of bipartite maximally entangled states specified by a set of edges $E$.
The initial resource state for a given graph $G$ shared among the $N$ parties is represented by
\begin{align}
  \bigotimes_{e\in E}\Ket{\Phi_{M_e}^+}^e,
\end{align}
where $M_e$ is the Schmidt rank of the initial resource state specified by each edge $e$.

A more general class of this type of initial resource states can be those consisting of bipartite entanglement.
Consider a collection of bipartite entangled states distributed among the parties $v_1\,,\ldots,v_N$.
The distribution of the bipartite entangled states can also be represented by a graph $G=(V,E)$, where each vertex in the set $V=\left\{v_1\,,\ldots,v_N\right\}$ represents a party, and each edge $e=\left\{v_k\,,v_{k'}\right\}\in E$ a bipartite entangled state $\Ket{\phi_e}^e$ shared between two parties $v_k$ and $v_{k'}$.
An entangled state $\Ket{\phi}$ is called a state \textit{consisting of bipartite entanglement} if there exists a graph $G=(V,E)$ such that $\Ket\phi$ is locally unitarily equivalent to a state in the form
\begin{equation}
  \bigotimes_{e\in E}\Ket{\phi_e}^e.
\end{equation}
Note that this definition assumes \textit{pure} states consisting of bipartite entanglement, while generalization to \textit{mixed} states is straightforward.
If $\Ket{\phi}$ is fully entangled and is not a state consisting of bipartite entanglement, $\Ket{\phi}$ is called a state \textit{exhibiting multipartite entanglement}.

\section{\label{sec:tree}Tree-topology networks}

There are optimization problems on general networks that are hard to solve, such as the Hamiltonian cycle problem~\cite{K6} and the multicommodity flow problem~\cite{E6}.
Optimization of communication on networks is more involved, since the technique of network coding~\cite{A15} for reducing communication may be applied in such optimization of communication cost.
However, for a special class of networks, the optimization of communication cost can reduce to a simpler solvable case.
As for distributed quantum information processing, construction of low-noise quantum channels is challenging using current technology, and hence, it makes sense to consider networks connecting all the parties using the minimal number of quantum channels.
Such networks are represented by a class of graphs called \textit{trees} having the minimal number of edges connecting all the vertices, and are called \textit{tree-topology networks}.

\begin{figure}[t!]
    \centering
    \includegraphics[width=4in]{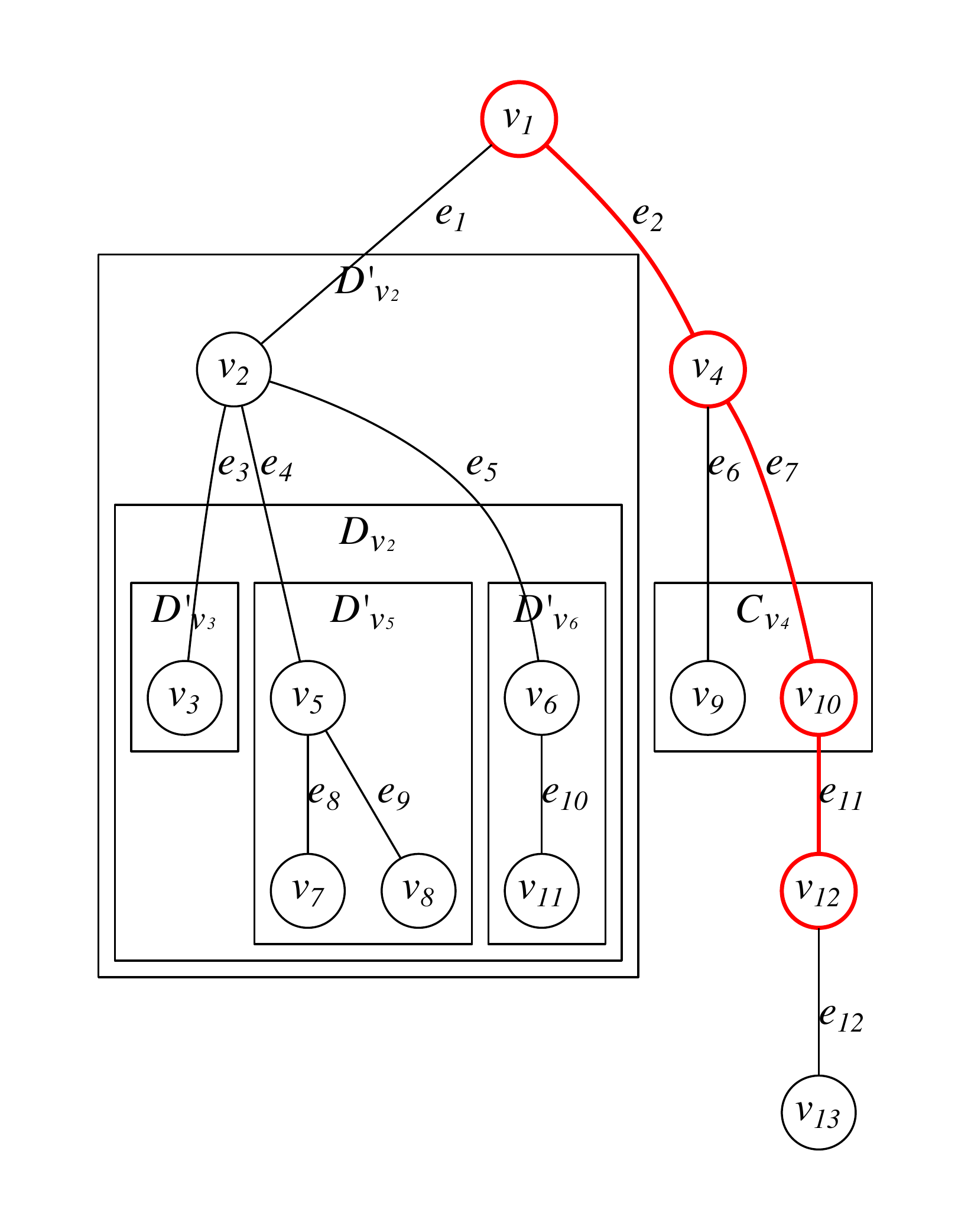}
    \caption[Notations for a tree]{Notations for a tree $T=(V,E)$. The vertex $v_1\in V$ is always designated as the root of the tree. The other vertices are labeled so that for any path connecting $v_1$ and another vertex, the closer to $v_1$ is any vertex $v_k$ on the path, the smaller is the label $k$, which is called an \textit{ascending labeling} of the vertices. For example, on the red bold path connecting the parties $v_1$ and $v_{12}$ in the figure, the vertices have to be labeled in ascending order $v_1\,,v_4\,,v_{10}\,,v_{12}$. For any $v_k\in V$, let $C_{v_k}\,$, $D_{v_k}\,$, and $D'_{v_k}$ denote the set of $v_k$'s children, the set of $v_k$'s descendants, and the set of $v_k$ itself and $v_k$'s descendants, respectively.}
\label{fig:tree_notation}
\end{figure}

Trees are connected graphs containing no cycle~\cite{B22}.
Trees with $N$ vertices have $N-1$ edges, which is the minimum to connect all the vertices.
Any connected graphs can be reduced to a tree spanning all the vertices by removing some of the edges.
Let
\begin{equation}
  T=(V,E)
\end{equation}
denote a tree.
Among the $N$ vertices of a tree $T=(V,E)$, a vertex can be designated as the root of the tree $T$, which is labeled $v_1\in V$ in the following.
In addition, the $N$ vertices $v_1\,,\ldots,v_N$ of $T$ are labeled so that, for any $v_k\neq v_1\,$, any vertex $v_{k'}$ on the path connecting $v_k$ and the root $v_1$ satisfies $k\geqq k'$; that is, the vertices are labeled in ascending order on such paths, as illustrated in Figure~\ref{fig:tree_notation}.
Call this type of labeling of the vertices an \textit{ascending labeling} of the vertices.
In the following, a tree is always regarded as a rooted tree with an ascending labeling.

A rooted tree has a recursive structure as illustrated in Figure~\ref{fig:tree_notation}.
For the root $v_1$ of any given tree $T=(V,E)$, any vertex $c$ adjacent to $v_1\,$, that is $\left\{v_1\,,c\right\}\in E$, is called a child of $v_1$.
Recursively, for any non-root vertex $v_k$ being a child of $v_{k^\prime}\,$, any vertex $c$ adjacent to $v_k$ except $v_{k^\prime}\,$, that is $\left\{v_k\,,c\right\}\in E\setminus\left\{\left\{v_{k^\prime}\,,v_k\right\}\right\}$, is called a child of $v_k$.
A vertex without any child is called a leaf.
For any non-root vertex $v_k\in V\setminus\left\{v_1\right\}$, let $p\left(v_k\right)\in V$ denote a vertex having $v_k$ as a child, which is called the parent of $v_k$.
For any vertex $v_k\in V$, $v_k$'s descendants are recursively defined as vertices being a child of $v_k$ or being a child of a descendant of $v_k$.
For any vertex $v_k\in V$, let $C_{v_k}$ denote the set of $v_k$'s children, $D_{v_k}$ the set of $v_k$'s descendants, and $D'_{v_k}$ the set of $v_k$ itself and $v_k$'s descendants.
Any edge $e\in E$ of the rooted tree can be written as $e=\{p\left(v_k\right),v_k\}$ for some $v_k\in V$.

For any $v\in V$, $D'_v$ can be decomposed by using these notations as
\begin{equation}
    D'_v = \{v\}\cup D_v =\{v\}\cup \bigcup_{c\in C_v} D'_c.
\end{equation}
The set of all vertices $V$ is represented by $V=D'_{v_1}$ for the root specified by $v_1$.  Using this decomposition, $V$ can be recursively decomposed according to the given rooted tree.

\section{\label{sec:construction}Construction of multipartite quantum states on networks}

Reference~\cite{Y6} investigates tasks of constructing multipartite quantum states shared among spatially separated parties connected by a given tree-topology network for quantum communication.
This task is called \textit{construction of a multipartite state} and defined as follows.
While Reference~\cite{Y6} analyzes two cases of exact and approximate constructions, this section summarizes the results on exact construction for later use.

\begin{definition}
    \textit{Exact construction of a multipartite state.}
    For a given graph $G=(V,E)$,
    exact construction of a given multipartite state
    \begin{equation}
      \Ket{\psi}\in\bigotimes_{v_k\in V}\mathcal{H}^{v_k},
    \end{equation}
    where $\mathcal{H}^{v_k}$ for each party represented as $v_k\in V$ is $v_k$'s system to be prepared in $\Ket{\psi}$,
    is a task of the $N$ parties $v_1\,,\ldots,v_N\in V$ preparing $\Ket{\psi}$ shared among the $N$ parties from scratch by performing an LOCC map $\mathcal{C}$ assisted by an initial resource state $\bigotimes_{e\in E}\Ket{\Phi_{M_e}^+}^e$, that is,
    \begin{equation}
        \mathcal{C}\left(\bigotimes_{e\in E}\Ket{\Phi_{M_e}^+}\Bra{\Phi_{M_e}^+}^e\right)=\Ket{\psi}\Bra{\psi}.
    \end{equation}
\end{definition}

Given any graph $G=(V,E)$ and any multipartite pure state
\begin{equation}
  \Ket{\psi}\in\bigotimes_{v_k\in V}\mathcal{H}^{v_k},
\end{equation}
consider the total amount of entanglement of initial resource states $\bigotimes_{e\in E}\Ket{\Phi_{M_e}^+}^e$ in terms of the entanglement entropy of each $\Ket{\Phi_{M_e}^+}^e$, that is,
\begin{equation}
  \sum_{e\in E}\log_2 M_e\,,
\end{equation}
and $\log_2 M_e$ for each $e\in E$ of $\bigotimes_{e\in E}\Ket{\Phi_{M_e}^+}^e$ for exact construction of $\Ket{\psi}$ for $G$ minimizing this total amount of entanglement
defines \textit{graph-associated entanglement cost} of $\Ket{\psi}$, quantitatively characterizing multipartite entanglement of $\Ket{\psi}$.
Reference~\cite{Y6} evaluates graph-associated entanglement cost of $\Ket{\psi}$ for an arbitrary given tree.
For any tree $T=(V,E)$, if an edge $e\in E$ is deleted, $T$ is divided into two disjoint trees whose vertices are represented by disjoint sets $V_e$ and $\overline{V}_e$ satisfying 
\begin{equation}
  V=V_e\cup \overline{V}_e.
\end{equation}
For any $\Ket{\psi}\in\bigotimes_{v_k\in V}\mathcal{H}^{v_k}$, let
\begin{equation}
  R_e\left(\Ket{\psi}\right)
\end{equation}
denote the Schmidt rank of $\Ket{\psi}$ with respect to the bipartition $\bigotimes_{v_k\in V_e}\mathcal{H}^{v_k}$ and $\bigotimes_{v_k\in \overline{V}_e}\mathcal{H}^{v_k}$ of $\mathcal{H}=\bigotimes_{v_k\in V}\mathcal{H}^{v_k}$.
Using this notation,
Reference~\cite{Y6} provides the necessary and sufficient condition for the initial resource state $\bigotimes_{e\in E}\Ket{\Phi_{M_e}^+}^e$ being transformable into $\Ket{\psi}$ by LOCC, as follows.
As for the proof~\cite{Y6}, the ``if'' part of this lemma is shown by construction of a protocol, and the ``only if'' part is shown from the LOCC monotonicity of the Schmidt rank.
Note that apart from this exact construction, Reference~\cite{Y6} also analyzes the task of approximately constructing multipartite states in the framework of the second-order asymptotic analysis~\cite{T9,D12}.

\begin{lemma}
\label{lem:graph_associated}
  \textit{Graph-associated entanglement cost in exact construction of multipartite states.}
  For any tree $T=(V,E)$ and any multipartite state
  \begin{equation}
    \Ket{\psi}\in\bigotimes_{v_k\in V}\mathcal{H}^{v_k},
  \end{equation}
  exact construction of $\Ket{\psi}$ for $T$ is achievable if and only if
  \begin{equation}
    \log_2 M_e\geqq R_e\left(\psi\right).
  \end{equation}
\end{lemma}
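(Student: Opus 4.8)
The plan is to prove the two directions separately, exploiting the defining structural property of a tree: deleting a single edge $e=\{p(v_k),v_k\}$ disconnects $T$ into exactly two components, whose vertex sets are $D'_{v_k}$ and its complement, and \emph{no other edge crosses this cut} because a tree contains no cycle. This ensures that the bipartition associated with $e$ is crossed by exactly one factor of the resource state, which is what makes the Schmidt rank across each cut tractable. First I would fix an arbitrary edge $e\in E$, identify $V_e$ and $\overline{V}_e$ with the vertex sets of the two components, and work with the coarse-grained bipartition $\bigotimes_{v\in V_e}\mathcal{H}^{v}$ versus $\bigotimes_{v\in\overline{V}_e}\mathcal{H}^{v}$.

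For the ``only if'' direction, I would observe that across this bipartition the initial resource state $\bigotimes_{e'\in E}\Ket{\Phi_{M_{e'}}^+}^{e'}$ has Schmidt rank exactly $M_e$: every factor $\Ket{\Phi_{M_{e'}}^+}^{e'}$ with $e'\neq e$ is shared between two parties lying in the \emph{same} component (again by acyclicity), hence contributes Schmidt rank $1$ across the cut, while the single factor $\Ket{\Phi_{M_e}^+}^{e}$ contributes $M_e$. Any $N$-party LOCC map $\mathcal{C}$ is, in particular, a bipartite LOCC map for the grouping $V_e\,|\,\overline{V}_e$, so by the LOCC monotonicity of the Schmidt rank~\cite{L2} the Schmidt rank cannot increase under $\mathcal{C}$. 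Since the output $\Ket{\psi}$ has Schmidt rank $R_e(\psi)$ across this cut, I would conclude $M_e\geqq R_e(\psi)$, and as $e$ was arbitrary this holds for every edge.

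For the ``if'' direction, assuming $M_e\geqq R_e(\psi)$ for all $e$, I would build the protocol recursively from the root downward. The root $v_1$ first prepares the \emph{entire} state $\Ket{\psi}$ inside its own local memory, treating the subsystems of $v_2,\dots,v_N$ as ancillas it holds. Then, processing vertices in ascending order, each vertex $v$ that currently holds the block of subsystems indexed by $D'_v$ transfers, for each child $c\in C_v$, the sub-block indexed by $D'_c$ to $c$ across the edge $\{v,c\}$. Because deleting $\{v,c\}$ isolates the subtree $D'_c$, the reduced state on $\bigotimes_{w\in D'_c}\mathcal{H}^{w}$ has rank equal to $R_{\{v,c\}}(\psi)$; hence exact state splitting (Theorem~\ref{thm:split}) lets $v$ send this block to $c$ coherently using a maximally entangled resource of Schmidt rank $R_{\{v,c\}}(\psi)\leqq M_{\{v,c\}}$, which is exactly what the edge provides. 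Iterating this until every leaf has received its block leaves each party $v_k$ holding precisely $\mathcal{H}^{v_k}$, with the global state equal to $\Ket{\psi}$.

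The main obstacle is verifying that these sequential splitting steps compose into a single global transformation producing $\Ket{\psi}$ \emph{coherently}, i.e.\ that the rank of the block transmitted across $\{v,c\}$ genuinely equals $R_{\{v,c\}}(\psi)$ at the moment of transmission and that splittings on distinct edges act on disjoint resource factors and therefore do not interfere. Both points again reduce to the acyclicity of $T$: the cut induced by each edge is crossed by a single resource factor and isolates a unique subtree, so the reduced state relevant to edge $\{v,c\}$ is independent of the order in which the other edges are processed. Once this bookkeeping is in place, Theorem~\ref{thm:split} supplies each step and the matching bounds from the two directions close the claim.
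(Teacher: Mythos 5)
Your proposal is correct and follows essentially the same route the paper attributes to its proof: the ``only if'' part via the LOCC monotonicity of the Schmidt rank applied to each edge cut (where acyclicity guarantees only the factor $\Ket{\Phi_{M_e}^+}^e$ crosses the cut), and the ``if'' part by a root-to-leaves protocol of sequential exact state splitting, exactly as in the proof of Theorem~\ref{thm:spreading}. Note only that you have implicitly read the bound as $M_e\geqq R_e(\psi)$, i.e.\ $\log_2 M_e\geqq\log_2 R_e(\psi)$, which is the intended (dimensionally consistent) form of the inequality in the lemma statement.
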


\chapter{\label{sec:distributed_encoding_decoding}Distributed encoding and decoding of quantum information over networks}

This chapter analyzes nonlocal transformations of multipartite entangled states shared among spatially separated parties, in particular, transformations for encoding and decoding quantum information in a shared multipartite quantum system.
Section~\ref{sec:spread_concentrate} defines tasks of spreading and concentrating quantum information on networks for achieving such encoding and decoding, respectively.
The former task of spreading is analyzed in Section~\ref{sec:encoding}, and the latter of concentrating is analyzed in Section~\ref{sec:decoding}.
Applications of these tasks are discussed in Section~\ref{sec:example}.

\section{\label{sec:spread_concentrate}Definition of spreading and concentrating quantum information}

State transformations for the encoding and decoding can be nonlocal transformations over multiple parties,
and local operations and classical communication (LOCC) by the parties may not be sufficient for performing such nonlocal transformations.
These encoding and decoding are achievable if the parties are allowed to communicate quantum information with each other using a network for quantum communication.
These tasks can also be regarded as tasks of spreading and concentrating quantum information according to a given isometry representing the encoding or decoding using quantum communication.

Given a network represented by any graph $G=(V,E)$ in general,
the parties aim to spread and concentrate quantum information according to a given isometry representing encoding and decoding, respectively.
A system $\mathcal{H}$ for logical states is located at one of the $N$ parties,
and the vertex labeled $v_1\in V$ is always assigned as the party where $\mathcal{H}$ is located.
Let $D$ denote the dimension of $\mathcal{H}$, that is,
\begin{equation}
  D\coloneq\dim\mathcal{H}.
\end{equation}
Write the computational basis of $\mathcal{H}$ as
\begin{equation}
  \{\Ket{l}\in\mathcal{H}:l=0,1,\ldots,D-1\}.
\end{equation}
In addition, the $N$ parties share a multipartite system $\tilde{\mathcal{H}}$ for physical states.
The system $\tilde{\mathcal{H}}$ is spanned by a set of $D$ orthonormal pure states
\begin{equation}
  \left\{\ket{\tilde{\psi}_l}^{v_1\cdots v_N}\in\tilde{\mathcal{H}}:l=0,1,\ldots,D-1\right\}.
\end{equation}
For each $v_k\in V$, let $\tilde{\mathcal{H}}^{v_k}$ denote a part of the shared multipartite system $\tilde{\mathcal{H}}$ located at the party $v_k$.
Note that $\dim\tilde{\mathcal{H}}^{v_k}$ is arbitrary as long as it holds that
\begin{equation}
  \dim\tilde{\mathcal{H}}=\dim\mathcal{H}=D,
\end{equation}
and hence, $\tilde{\mathcal{H}}$ is a \textit{subspace} of the Hilbert space consisting of these subsystems for the $N$ parties, that is,
\begin{equation}
  \tilde{\mathcal{H}}\subset\bigotimes_{v\in V}\tilde{\mathcal{H}}^{v}.
\end{equation}

Consider encoding and decoding as linear bijective maps between $\mathcal{B}\left(\mathcal{H}\right)$ and $\mathcal{B}\left(\tilde{\mathcal{H}}\right)$ mapping the basis states of $\mathcal{H}$ and $\tilde{\mathcal{H}}$ as
\begin{equation}
  \ket{l}\in\mathcal{H}\leftrightarrow\ket{\tilde{\psi}_l}\in\tilde{\mathcal{H}}
\end{equation}
for each $l\in\left\{0,\ldots,D-1\right\}$.
The encoding map is represented by an isometry $U$ from $\mathcal{H}$ to $\tilde{\mathcal{H}}$ satisfying
\begin{equation}
  \ket{\tilde{\psi}_l}=U\Ket{l}.
\end{equation}
Encoding refers to a transformation from $\rho\in\mathcal{D}\left(\mathcal{H}\right)$ into $U\rho U^\dag\in\mathcal{D}\left(\tilde{\mathcal{H}}\right)$, and decoding refers to the inverse transformation represented by $U^\dag$.

The formal definitions of the tasks of spreading and concentrating quantum information are given in terms of the LOCC framework as follows.
The tasks of spreading and concentrating quantum information are also illustrated in Figure~\ref{fig:intro}.
Note that the tasks are performed deterministically and exactly.
\begin{definition}
    \textit{Spreading and concentrating quantum information.}
    Spreading quantum information over a given graph $G=(V,E)$
    for a given isometry $U$
    is a task of the $N$ parties $v_1\,,\ldots,v_N\in V$ applying $U$ to an arbitrary \textit{unknown} input state $\rho\in\mathcal{D}\left(\mathcal{H}\right)$ of one party $v_1\in V$ to share $U\rho U^\dag\in\mathcal{D}\left(\tilde{\mathcal{H}}\right)$ among the $N$ parties by performing an LOCC map $\mathcal{S}$ assisted by an initial resource state $\bigotimes_{e\in E}\Ket{\Phi_{M_e}^+}^e$, that is,
    \begin{equation}
        \label{eq:encoding}
        \mathcal{S}\left(\rho\otimes\bigotimes_{e\in E}\Ket{\Phi_{M_e}^+}\Bra{\Phi_{M_e}^+}\right)=U\rho U^\dag.
    \end{equation}

    Concentrating quantum information over $G=(V,E)$
    and $U$
    is a task of the $N$ parties $v_1\,,\ldots,v_N\in V$ applying $U^\dag$ to a shared input state $U\rho U^\dag\in\mathcal{D}\left(\tilde{\mathcal{H}}\right)$ corresponding to an arbitrary \textit{unknown} state $\rho\in\mathcal{D}\left(\mathcal{H}\right)$ to recover $\rho$ at one party $v_1\in V$ by performing an LOCC map $\mathcal{C}$ assisted by $\bigotimes_{e\in E}\Ket{\Phi_{M_e}^+}^e$, that is,
    \begin{equation}
        \label{eq:decoding}
        \mathcal{C}\left(U\rho U^\dag\otimes\bigotimes_{e\in E}\Ket{\Phi_{M_e}^+}\Bra{\Phi_{M_e}^+}\right)=\rho.
    \end{equation}
\end{definition}

Minimum requirements for initial resource states assisting LOCC protocols achieving spreading and concentrating quantum information define entanglement cost.
In the same way as the case of analyzing the graph-associated entanglement cost in constructing multipartite states~\cite{Y6},
given any graph $G=(V,E)$,
the entanglement cost of consuming the bipartite maximally entangled state $\Ket{\Phi^+_{M_e}}^e$ for each $e\in E$ of the initial resource state $\bigotimes_{e\in E}\Ket{\Phi^+_{M_e}}$ is identified by the entanglement entropy of $\Ket{\Phi^+_{M_e}}^e$, that is,
\begin{equation}
  \log_2 M_e.
\end{equation}
If a sufficiently large amount of entanglement is available for each edge, there exist trivial protocols for achieving spreading and concentrating quantum information, simply using quantum teleportation~\cite{B5} so that the party $v_1$ can locally perform any given isometry on the unknown input state.
In contrast, the aim here is to reduce the total amount of entanglement
\begin{equation}
  \sum_{e\in E}\log_2 M_e
\end{equation}
required for spreading and concentrating quantum information, or equivalently, the total amount of quantum communication when LOCC is free.

\begin{definition}
\textit{Entanglement costs in spreading and concentrating quantum information.}
The \textit{entanglement cost in spreading quantum information} over a given graph $G=(V,E)$ for a given isometry $U$ is a family
\begin{equation}
  {\left(\log_2 M_e\right)}_{e\in E}
\end{equation}
identifying an initial resource state achieving spreading quantum information over $G$ for $U$ minimizing
\begin{equation}
  \sum_{e\in E}\log_2 M_e.
\end{equation}

The \textit{entanglement cost in concentrating quantum information} over $G$ for $U$ is a family
\begin{equation}
  {\left(\log_2 M_e\right)}_{e\in E}
\end{equation}
identifying an initial resource state achieving concentrating quantum information over $G$ for $U$ minimizing
\begin{equation}
  \sum_{e\in E}\log_2 M_e.
\end{equation}
\end{definition}

\begin{figure}[t!]
  \centering
  \includegraphics[width=4in]{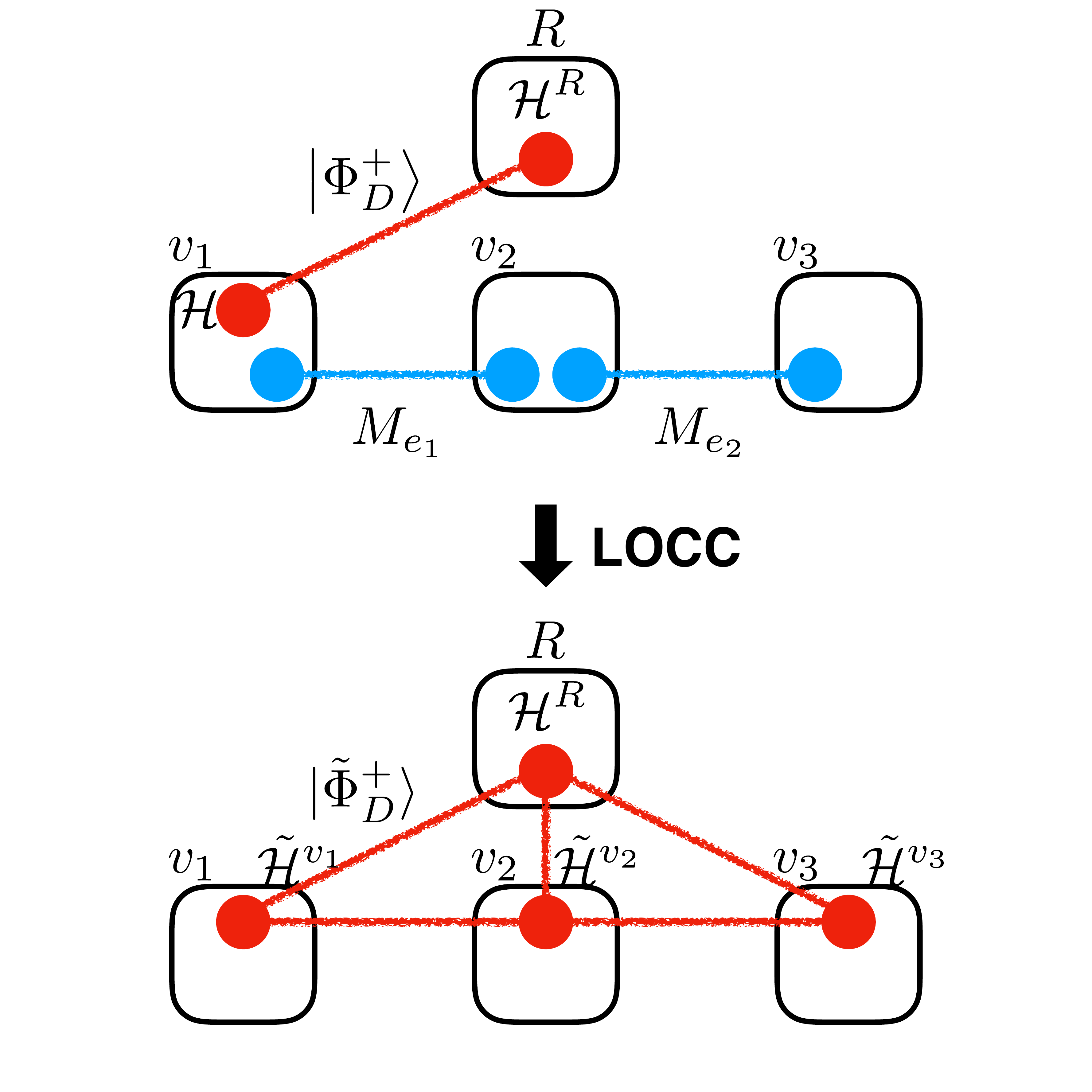}
  \caption[A state transformation task equivalent to spreading quantum information over a line-topology network.]{\label{fig:encoding}A state transformation task for three parties $v_1\,$, $v_2\,$, and $v_3$ equivalent to spreading quantum information over a line-topology network, where the system $\mathcal{H}$ for logical states is located at $v_1$. The initial state $\Ket{\Phi_D^+}$ and the final state $\ket{\tilde{\Phi_D^+}}$ are defined as Equation~\eqref{eq:data_maximally_entangled_state} and~\eqref{eq:code_maximally_entangled_state}, respectively.}
\end{figure}

\begin{figure}[t!]
  \centering
  \includegraphics[width=4in]{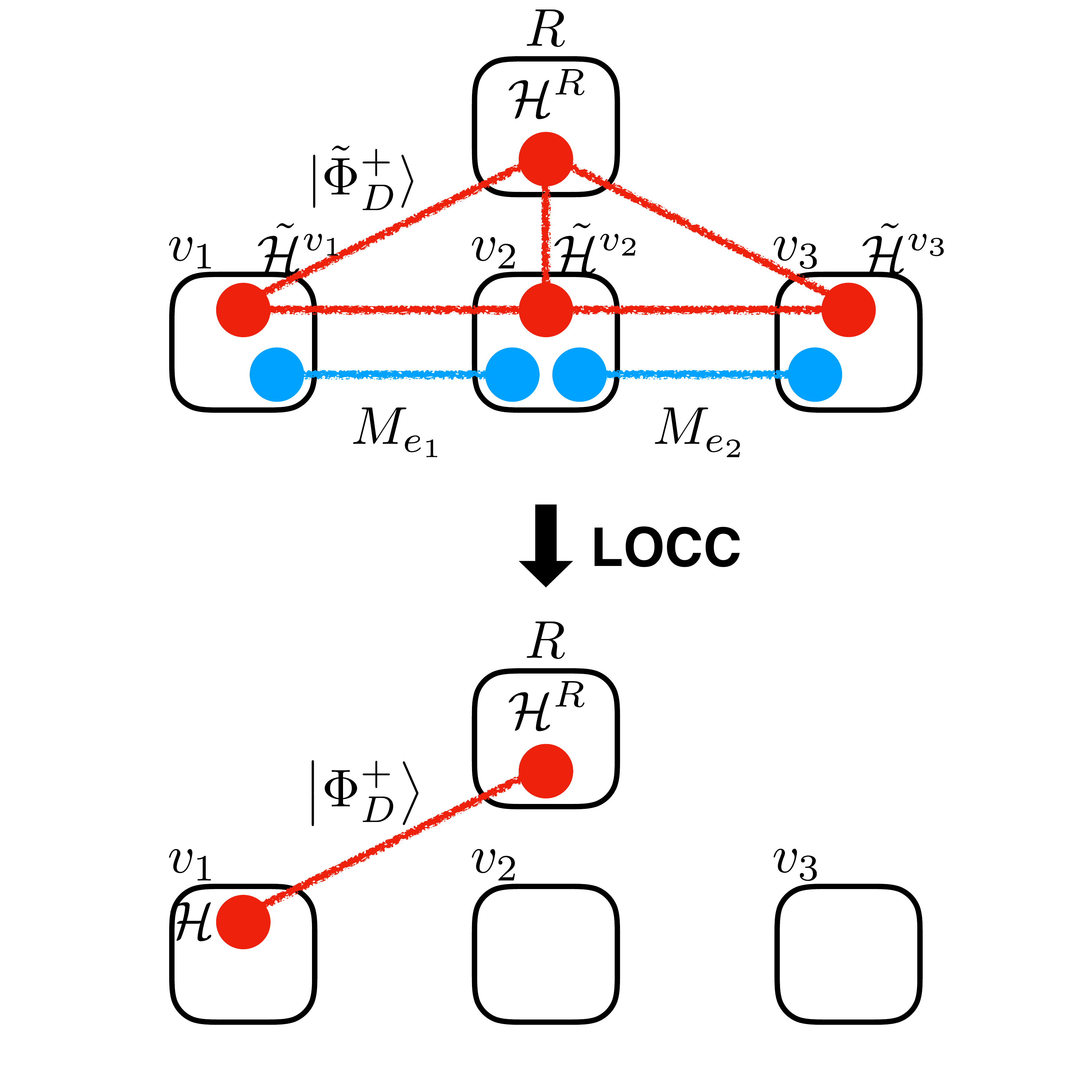}
  \caption[A state transformation task equivalent to concentrating quantum information over a line-topology network.]{\label{fig:decoding}A state transformation task equivalent to concentrating quantum information over the same network as Figure~\ref{fig:encoding}. The notations are the same as those in Figure~\ref{fig:encoding}.}
\end{figure}

To analyze the entanglement costs in spreading and concentrating quantum information, these tasks are reduced to a particular type of state transformations.
Given any graph $G=(V,E)$ and any isometry $U$,
the state transformation equivalent to spreading quantum information over $G$ for $U$ is illustrated in Figure~\ref{fig:encoding}, and the state transformation equivalent to concentrating in Figure~\ref{fig:decoding}.
To define these equivalent state transformations, consider a $D$-dimensional system $\mathcal{H}^R$ located at a party $R$ other than the $N$ parties $v_1\,,\ldots,v_N\in V$, where $\mathcal{H}^R$ is a reference system on which none of the $N$ parties can apply any operation.
Note that $D=\dim\mathcal{H}$.
Write a maximally entangled state with Schmidt rank $D$ shared between $R$ and $v_1$ as
\begin{equation}
    \label{eq:data_maximally_entangled_state}
    \Ket{\Phi^+_D}=\frac{1}{\sqrt{D}}\sum_{l=0}^{D-1}\Ket{l}\otimes\Ket{l}\in\mathcal{H}^R\otimes\mathcal{H}.
\end{equation}
Moreover, write a state obtained by performing $U$ on $\mathcal{H}$ for $\Ket{\Phi^+_D}$ as
\begin{equation}
    \label{eq:code_maximally_entangled_state}
        \ket{\tilde{\Phi}^+_D}\coloneq\left(\mathbb{1}^R\otimes U\right)\Ket{\Phi^+_D}
                              =\frac{1}{\sqrt{D}}\sum_{l=0}^{D-1}\Ket{l}\otimes\ket{\tilde{\psi}_l}\in\mathcal{H}^R\otimes\tilde{\mathcal{H}},
\end{equation}
where $\mathbb{1}^R$ is the identity operator on the system $\mathcal{H}^R$.
The equivalence between the two tasks is shown as follows, and the proof is given in Appendix~\ref{sec:equivalence_spread_concentrate},
which is a generalization of the technique of the relative state method~\cite{P3}.

\begin{proposition}
\label{lem:encoding_state_transformation}
    \textit{State transformations equivalent to spreading and concentrating quantum information.}
    Spreading quantum information over a given graph $G=(V,E)$ for a given isometry $U$ defined as Equation~\eqref{eq:encoding} is achievable if and only if there exists an LOCC map $\mathcal{S}$ by the $N$ parties assisted by the initial resource state $\bigotimes_{e\in E}\Ket{\Phi_{M_e}^+}^e$ such that
    \begin{equation}
        \label{eq:encoding_state_transformation}
            \id^R\otimes\mathcal{S}\left(\Ket{\Phi^+_D}\Bra{\Phi^+_D}\otimes\bigotimes_{e\in E}\Ket{\Phi_{M_e}^+}\Bra{\Phi_{M_e}^+}\right)
            =\ket{\tilde\Phi^+_D}\bra{\tilde\Phi^+_D},
    \end{equation}
    where $\id^R$ is the identity map on $\mathcal{H}^R$, and the states $\Ket{\Phi^+_D}$ and $\ket{\tilde\Phi^+_D}$ are defined as Equation~\eqref{eq:data_maximally_entangled_state} and~\eqref{eq:code_maximally_entangled_state}, respectively.

    Concentrating quantum information over $G=(V,E)$ for $U$ defined as Equation~\eqref{eq:decoding} is achievable if and only if there exists an LOCC map $\mathcal{C}$ by the $N$ parties assisted by $\bigotimes_{e\in E}\Ket{\Phi_{M_e}^+}^e$ such that
    \begin{equation}
        \label{eq:decoding_state_transformation}
            \id^R\otimes\mathcal{C}\left(\ket{\tilde\Phi^+_D}\bra{\tilde\Phi^+_D}\otimes\bigotimes_{e\in E}\Ket{\Phi_{M_e}^+}\Bra{\Phi_{M_e}^+}\right)
            =\Ket{\Phi^+_D}\Bra{\Phi^+_D},
    \end{equation}
    where the notations are the same as those in Equation~\eqref{eq:encoding_state_transformation}.
\end{proposition}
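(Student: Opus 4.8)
The plan is to prove both equivalences at the level of a single fixed LOCC map, showing that a given LOCC map $\mathcal{S}$ satisfies the spreading condition~\eqref{eq:encoding} for every $\rho$ if and only if the same $\mathcal{S}$ satisfies the state-transformation condition~\eqref{eq:encoding_state_transformation}, so that existence of one immediately yields existence of the other. The central object is the effective channel obtained by absorbing the fixed resource state, namely $\mathcal{N}(\cdot)\coloneq\mathcal{S}\left(\cdot\otimes\bigotimes_{e\in E}\Ket{\Phi_{M_e}^+}\Bra{\Phi_{M_e}^+}\right)$. Appending a fixed state and composing with an LOCC, hence CPTP, map are both CPTP operations, so $\mathcal{N}$ is a CPTP map with input system $\mathcal{H}$; its output lands in $\tilde{\mathcal{H}}$ because, under~\eqref{eq:encoding_state_transformation}, the output reduced state for the full-rank input $\mathbb{1}/D$ equals $\tr_R\ket{\tilde\Phi^+_D}\bra{\tilde\Phi^+_D}$, which is supported on $\tilde{\mathcal{H}}$, and positivity then forces the whole image of $\mathcal{N}$ into $\mathcal{B}(\tilde{\mathcal{H}})$. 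This reduction generalizes the relative state method used in Proposition~\ref{prp:max} and in Reference~\cite{P3}.

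For the spreading equivalence I would argue through the Choi operator~\eqref{eq:choi_operator}. Condition~\eqref{eq:encoding} states $\mathcal{N}(\cdot)=U(\cdot)U^\dag$ on all density operators, and since density operators span $\mathcal{B}(\mathcal{H})$ this extends by linearity to an identity of linear maps. Expanding $\Ket{\Phi^+_D}\Bra{\Phi^+_D}=\frac{1}{D}\sum_{l,l'}\Ket{l}\Bra{l'}^R\otimes\Ket{l}\Bra{l'}$ and applying $\id^R\otimes\mathcal{N}$ termwise then gives $\frac{1}{D}\sum_{l,l'}\Ket{l}\Bra{l'}^R\otimes\ket{\tilde{\psi}_l}\bra{\tilde{\psi}_{l'}}=\ket{\tilde\Phi^+_D}\bra{\tilde\Phi^+_D}$, which is exactly~\eqref{eq:encoding_state_transformation}; this is the ``only if'' direction. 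Conversely, condition~\eqref{eq:encoding_state_transformation} says that $D$ times its left-hand side is the Choi operator $J(\mathcal{N})$ of~\eqref{eq:choi_operator}, while $D\ket{\tilde\Phi^+_D}\bra{\tilde\Phi^+_D}=J\bigl(U(\cdot)U^\dag\bigr)$. The one-to-one correspondence between a CPTP map and its Choi operator, recalled in the preliminaries, forces $\mathcal{N}(\cdot)=U(\cdot)U^\dag$, so spreading holds for every $\rho$; this is the ``if'' direction.

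The concentrating equivalence will follow from the symmetric argument with the roles of $\mathcal{H}$ and $\tilde{\mathcal{H}}$, and of $\Ket{\Phi^+_D}$ and $\ket{\tilde\Phi^+_D}$, interchanged, using that $U$ is a unitary from $\mathcal{H}$ onto $\tilde{\mathcal{H}}$ since $\dim\mathcal{H}=\dim\tilde{\mathcal{H}}=D$; there the effective channel must equal $U^\dag(\cdot)U$ and its Choi operator is $D\Ket{\Phi^+_D}\Bra{\Phi^+_D}$, yielding~\eqref{eq:decoding} $\iff$~\eqref{eq:decoding_state_transformation}. I do not expect a genuine obstacle, since both directions collapse to the uniqueness of the Choi operator and the LOCC character of $\mathcal{S}$ is inherited automatically; the only points needing care are bookkeeping, namely confirming that $\mathcal{N}$ is a legitimate CPTP map into $\mathcal{B}(\tilde{\mathcal{H}})$ after the resource states are consumed, and that the inputs $U\rho U^\dag$ in the concentrating task~\eqref{eq:decoding} sweep out all density operators of $\tilde{\mathcal{H}}$ so that the spanning-and-linearity step remains valid.
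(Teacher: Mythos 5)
Your proposal is correct and follows essentially the same route as the paper's proof: the paper expands $\Ket{\Phi^+_D}\Bra{\Phi^+_D}$ in the operator basis ${\left\{\Ket{l}\Bra{l'}\right\}}_{l,l'}$ on $\mathcal{H}^R$ and uses linear independence together with the linearity of $\mathcal{S}$, which is exactly the uniqueness of the Choi operator that you invoke. The only cosmetic difference is that you package the argument through the effective channel $\mathcal{N}(\cdot)=\mathcal{S}\left(\cdot\otimes\bigotimes_{e\in E}\Ket{\Phi_{M_e}^+}\Bra{\Phi_{M_e}^+}\right)$ and its Choi operator rather than working term by term.
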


Calculating entanglement cost in spreading and concentrating quantum information for any network represented by an arbitrary graph is difficult due to optimization included in the definition of entanglement cost.
The following analysis is focused on a special class of graphs, \textit{trees}.
A network represented by a tree describes the situation where all parties are connected by the smallest number of channels, as discussed in Section~\ref{sec:tree}.

\section{\label{sec:encoding}Entanglement cost in spreading quantum information}

In this section, the optimal entanglement cost in spreading quantum information over any tree for any isometry is derived.
To evaluate the entanglement cost, the two-party protocol for exact state splitting shown in Theorem~\ref{thm:split} is generalized to multiple parties, so as to construct the optimal protocol for spreading quantum information over any tree-topology network connecting multiple parties.

The entanglement cost in spreading quantum information is evaluated using the following notations.
Given any tree $T=(V,E)$,
let $\tilde{\Phi}_{D,e}^{+}$ for each $e=\{p\left(v_k\right),v_k\}\in E$ denote the reduced state for $\ket{\tilde{\Phi}_D^+}$ on the system $\bigotimes_{v\in D'_{v_k}}\tilde{\mathcal{H}}^{v}$ shared among $v_k$ itself and the descendants of $v_k\,$, that is,
\begin{equation}
    \label{eq:encoding_reduced_state}
    \tilde{\Phi}_{D,e}^{+}\coloneq\tr_{R\overline{D'_{v_k}}}\ket{\tilde{\Phi}_D^+}\Bra{\tilde{\Phi}_D^+},
\end{equation}
where $\overline{D'_{v_k}}=V\setminus D'_{v_k}$ and $\tr_{R\overline{D'_{v_k}}}$ is the partial trace on $\mathcal{H}^R\otimes\bigotimes_{v\in\overline{D'_{v_k}}}\tilde{\mathcal{H}}^{v}$.
The following theorem shows an optimal protocol for the state transformation defined as Equation~\eqref{eq:decoding_state_transformation} in Proposition~\ref{lem:encoding_state_transformation} equivalent to spreading quantum information, and evaluates the optimal entanglement cost.

\begin{theorem}
\label{thm:spreading}
    \textit{Entanglement cost in spreading quantum information over trees.}
    Given any tree $T=(V,E)$ and any isometry $U$,
    spreading quantum information over $T$ for $U$ is achievable if and only if for each $e\in E$
    \begin{equation}
        \label{eq:encoding_cost}
        \log_2 M_e \geqq \log_2\rank\tilde{\Phi}_{D,e}^{+},
    \end{equation}
    where $\tilde{\Phi}_{D,e}^{+}$ is defined as Equation~\eqref{eq:encoding_reduced_state}.
\end{theorem}

\begin{proof}
    \textit{If part}:
    Given any tree $T = (V,E)$ with an ascending labeling and any isometry $U$,
    a protocol for the state transformation defined as Equation~\eqref{eq:encoding_state_transformation} in Proposition~\ref{lem:encoding_state_transformation} is constructed by applying exact state splitting in Theorem~\ref{thm:split} sequentially starting from the root party represented as $v_1\in V$,
    and the following proof also shows that this protocol achieves the equality in~\eqref{eq:encoding_cost} for each $e\in E$.
    In this protocol, the root party $v_1$ first locally applies the given isometry $U$ to
    \begin{equation}
      \Ket{\Phi_D^+}\in\mathcal{H}^R\otimes\mathcal{H}
    \end{equation}
    on $\mathcal{H}$, so as to obtain
    \begin{equation}
      \ket{\tilde{\Phi}_D^+}\in\mathcal{H}^R\otimes\tilde{\mathcal{H}},
    \end{equation}
    where $\tilde{\mathcal{H}}$ is located at $v_1$ at this moment.
    Then, the parties perform the following sub-protocol using the exact state splitting sequentially in order $v_1\,,v_2\,,\ldots,v_N$ to spread the state of $\tilde{\mathcal{H}}$.
    After all the parties performing the sub-protocol, spreading quantum information is achieved.

    The sub-protocol for each party $v_k\in V$ is shown as follows.
    At the beginning of $v_k$'s sub-protocol, it is assumed that the party $v_k$ holds the reduced state of $\ket{\tilde{\Phi}_D^+}$ on $\bigotimes_{v\in D'_{v_k}}\tilde{\mathcal{H}}^v$, that is, the system for the parties corresponding to $v_k$ itself and $v_k$'s descendants.
    Note that this assumption is satisfied because of an ascending labeling.
    If $v_k$ has no child, $v_k$'s sub-protocol terminates.
    Otherwise, for each child $c\in C_{v_k}\,$,
    $v_k$ performs the exact state splitting in Theorem~\ref{thm:split}, where $v_k$ and $c$ in the sub-protocol are regarded as $A$ and $B$ in Theorem~\ref{thm:split}, and the subsystem $\bigotimes_{v\in D'_c}\tilde{\mathcal{H}}^v$, the other subsystems of party $v_k\,$, and all the rest of the system of the parties other than $v_k$ in the sub-protocol are regarded as $\mathcal{H}^{A^\prime}$, $\mathcal{H}^A$, and $\mathcal{H}^{R}$ in Theorem~\ref{thm:split}, respectively.
    For each edge $e\in E$, Theorem~\ref{thm:split} shows that the exact state splitting in this sub-protocol achieves the equality in~\eqref{eq:encoding_cost}, where $e=\left\{v_k\,,c\right\}$ in the above case.

    \textit{Only if part}:
    The converse is derived from the LOCC monotonicity of the Schmidt rank in the state transformation defined as Equation~\eqref{eq:encoding_state_transformation} in Proposition~\ref{lem:encoding_state_transformation}.
    Consider an arbitrary edge $e=\left\{p(v_k),v_k\right\}\in E$ where $v_k\neq v_1$.
    The Schmidt rank of the initial state
    \begin{equation}
      \Ket{\Phi_D^+}^{Rv_1}\otimes\bigotimes_{e\in E}\Ket{\Phi_{M_e}^+}^{e}
    \end{equation}
    between the parties in $D'_{v_k}\,$, that is, $v_k$ itself and $v_k$'s descendants, and the other parties in $\left\{R\right\}\cup V\setminus D'_{v_k}$ is
    \begin{equation}
      M_e.
    \end{equation}
    After performing an LOCC map $\id^R\otimes\mathcal{S}$, the Schmidt rank of the final state
    \begin{equation}
      \ket{\tilde{\Phi}_D^+}^{Rv_1\cdots v_N}
    \end{equation}
    with respect to the same bipartition is
    \begin{equation}
      \rank\tilde{\Phi}_{D,e}^{+}.
    \end{equation}
    Since the Schmidt rank is monotonically nonincreasing under LOCC,
    it holds that
    \begin{equation}
      M_e\geqq\rank\tilde{\Phi}_{D,e}^{+}.
    \end{equation}
    Therefore,
    the conclusion
    \begin{equation}
      \log_2 M_e\geqq\log_2\rank\tilde{\Phi}_{D,e}^{+}
    \end{equation}
    for each $e\in E$ is obtained.
\end{proof}

\section{\label{sec:decoding}Entanglement cost in concentrating quantum information}

This section derives an upper bound of entanglement cost in concentrating quantum information over any tree for any isometry, and shows that the entanglement cost in concentrating quantum information is not larger, and can be strictly smaller, than that of spreading quantum information.
To evaluate the entanglement cost, the two-party protocol for exact state merging in the non-catalytic setting in Theorem~\ref{thm:merge_without_catalyst} is generalized to multiple parties, so as to construct a protocol for concentrating quantum information over any tree-topology network connecting multiple parties.

The entanglement cost in concentrating quantum information is evaluated using the following notations.
Given any tree $T=(V,E)$ and any isometry $U$,
the protocol achieves the state transformation defined as Equation~\eqref{eq:decoding_state_transformation} in Proposition~\ref{lem:encoding_state_transformation} equivalent to spreading quantum information.
Write the initial state shared among $R,v_1\,,\ldots,v_N$ as
\begin{equation}
  \Ket{\Phi^N}\coloneq\ket{\tilde{\Phi}_D^+}\in\mathcal{H}^R\otimes\tilde{\mathcal{H}},
\end{equation}
and the states during the protocol as a sequence
\begin{equation}
    \label{eq:sequence}
    \Ket{\Phi^N}\rightarrow\Ket{\Phi_{\boldsymbol{m}_{N-1}}^{N-1}}\rightarrow\cdots\rightarrow\Ket{\Phi_{\boldsymbol{m}_1}^1},
\end{equation}
where the subscript
$\boldsymbol{m}_k\coloneq\left({m}^{v_N},\ldots,{m}^{v_{k+1}}\right)$
denotes a tuple representing measurement outcomes obtained during the protocol and
\begin{equation}
  \label{eq:v_k}
  \Ket{\Phi_{\boldsymbol{m}_k}^k}\in\mathcal{H}^R\otimes\bigotimes_{v\in V_k}\tilde{\mathcal{H}}^{v},
  \quad V_k\coloneq\left\{v_1\,,\ldots,v_k\right\},
\end{equation}
for each $k\in\{1,\ldots,N-1\}$.
For any $\boldsymbol{m}_1\,$, the last state $\Ket{\Phi_{\boldsymbol{m}_1}^1}$ in sequence~\eqref{eq:sequence} is convertible into 
\begin{equation}
  \Ket{\Phi_D^+}\in\mathcal{H}^R\otimes\mathcal{H}
\end{equation}
by an isometry transformation by party $v_1\,$, and recurrence relation to determine sequence~\eqref{eq:sequence} is given in the proof of the following theorem (by Equation~\eqref{eq:recursive_state}).
The following theorem uses the Koashi-Imoto decomposition of $\Ket{\Phi_{\boldsymbol{m}_k}^k}$ shown in Lemma~\ref{lem:koashi_imoto_decomposition_tripartite}.
In this case, the Hilbert spaces are decomposed into
\begin{align}
  \tilde{\mathcal{H}}^{v_k}&=\bigoplus_{j=0}^{J_{\boldsymbol{m}_k}-1}\mathcal{H}_{\boldsymbol{m}_k}^{{\left(v_k\right)}_j^\textup{L}}\otimes\mathcal{H}_{\boldsymbol{m}_k}^{{\left(v_k\right)}_j^\textup{R}},\\
  \bigotimes_{v\in V_{k-1}}\tilde{\mathcal{H}}^{v}&=\bigoplus_{j=0}^{J_{\boldsymbol{m}_k}-1}\mathcal{H}_{\boldsymbol{m}_k}^{{\left(v_1\cdots v_{k-1}\right)}_j^\textup{L}}\otimes\mathcal{H}_{\boldsymbol{m}_k}^{{\left(v_1\cdots v_{k-1}\right)}_j^\textup{R}},
\end{align}
where $V_{k-1}$ is defined as Equation~\eqref{eq:v_k}.
The state is decomposed into
\begin{equation}
  \Ket{\Phi_{\boldsymbol{m}_k}^k}=\bigoplus_{j=0}^{J_{\boldsymbol{m}_k}-1}\sqrt{p_{\boldsymbol{m}_k}\left(j\right)}\Ket{\omega_{\boldsymbol{m}_k\,,j}}\otimes\Ket{\phi_{\boldsymbol{m}_k\,,j}},
\end{equation}
where $p_{\boldsymbol{m}_k}\left(j\right)$ is a probability distribution, and for each $j\in\{0,\ldots,J_{\boldsymbol{m}_k}-1\}$,
\begin{align}
  \Ket{\omega_{\boldsymbol{m}_k\,, j}}&\in\mathcal{H}_{\boldsymbol{m}_k}^{{\left(v_k\right)}_j^\textup{L}}\otimes\mathcal{H}_{\boldsymbol{m}_k}^{{\left(v_1\cdots v_{k-1}\right)}_j^\textup{L}},\\
  \Ket{\phi_{\boldsymbol{m}_k\,,j}}&\in\mathcal{H}^R\otimes\mathcal{H}_{\boldsymbol{m}_k}^{{\left(v_k\right)}_j^\textup{R}}\otimes\mathcal{H}_{\boldsymbol{m}_k}^{{\left(v_1\cdots v_{k-1}\right)}_j^\textup{R}}.
\end{align}
Also let $\lambda_{\boldsymbol{m}_k\,,0}^{{\left(v_k\right)}_j^\textup{L}}$ denote the largest eigenvalue of the reduced state of $\Ket{\omega_{\boldsymbol{m}_k\,,j}}$ on $\mathcal{H}_{\boldsymbol{m}_k}^{{\left(v_k\right)}_j^\textup{L}}$, that is,
\begin{equation}
  \tr_{{\left(v_1\cdots v_{k-1}\right)}_j^\textup{L}} \Ket{\omega_{\boldsymbol{m}_k\,,j}}\Bra{\omega_{\boldsymbol{m}_k\,,j}}.
\end{equation}

\begin{theorem}
\label{thm:concentrating}
    \textit{Entanglement cost in concentrating quantum information.}
    Given any tree $T = (V,E)$ and any isometry $U$,
    concentrating quantum information over $T$ for $U$ is achievable if there exists an ascending labeling of the vertices satisfying for each $e=\left\{p\left(v_k\right),v_k\right\}\in E$
    \begin{equation}
        \label{eq:decoding_cost_upper}
        \log_2 M_e \geqq \max_{\boldsymbol{m}_k\,,j}\left\{\log_2\left\lceil\lambda_{\boldsymbol{m}_k\,,0}^{{\left(v_k\right)}_j^\textup{L}}\dim\mathcal{H}_{\boldsymbol{m}_k}^{{\left(v_k\right)}_j^\textup{R}}\right\rceil\right\},
    \end{equation}
    where $\lceil{}\cdots{}\rceil$ is the ceiling function.
\end{theorem}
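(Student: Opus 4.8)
The plan is to prove Theorem~\ref{thm:concentrating} by construction, generalizing the single-copy protocol for exact state merging in the non-catalytic setting (Theorem~\ref{thm:merge_without_catalyst}) to the multi-party tree-topology setting, in the spirit of the ``only if'' direction strategy already used for spreading in Theorem~\ref{thm:spreading}. Using Proposition~\ref{lem:encoding_state_transformation}, it suffices to exhibit an LOCC map achieving the state transformation in Equation~\eqref{eq:decoding_state_transformation}, that is, converting $\ket{\tilde{\Phi}_D^+}$ back into $\Ket{\Phi_D^+}$ assisted by the resource states $\bigotimes_{e\in E}\Ket{\Phi_{M_e}^+}^e$. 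The idea is to concentrate quantum information by processing the vertices in the \emph{reverse} of the ascending labeling, that is, in order $v_N\,,v_{N-1}\,,\ldots,v_2\,,v_1$, so that at each step a leaf-most party merges its part of the shared state into its parent, shrinking the support from $\tilde{\mathcal{H}}$ toward $\mathcal{H}$ located at the root $v_1$.

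First I would set up the recursion defining the sequence~\eqref{eq:sequence}: starting from $\Ket{\Phi^N}\coloneq\ket{\tilde{\Phi}_D^+}$, at step $k$ I merge party $v_k$'s system $\tilde{\mathcal{H}}^{v_k}$ into the remaining parties $V_{k-1}$, producing $\Ket{\Phi_{\boldsymbol{m}_k}^k}$ from $\Ket{\Phi_{\boldsymbol{m}_{k+1}}^{k+1}}$ for each measurement-outcome tuple $\boldsymbol{m}_k$. Because the labeling is ascending, when $v_k$ is processed all of its descendants have already been merged into $v_k$, so the only nonlocal cut across the edge $e=\left\{p(v_k),v_k\right\}$ is a genuine two-party merging from $A=v_k$ to $B=p(v_k)$. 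I would apply Theorem~\ref{thm:merge_without_catalyst} to this bipartite cut, identifying $R$ in that theorem with the joint system $\mathcal{H}^R\otimes\bigotimes_{v\in V_{k-1}}\tilde{\mathcal{H}}^{v}$ held by all other parties, and reading off the Koashi-Imoto decomposition of $\Ket{\Phi_{\boldsymbol{m}_k}^k}$ with respect to the bipartition $v_k$ versus the rest. This yields the recurrence relation
\begin{equation}
  \label{eq:recursive_state}
  \Ket{\Phi_{\boldsymbol{m}_k}^k}\propto\left(\id\otimes\mathcal{M}_{v_k}\right)\Ket{\Phi_{\boldsymbol{m}_{k+1}}^{k+1}},
\end{equation}
where $\mathcal{M}_{v_k}$ is the one-shot non-catalytic merging map of Theorem~\ref{thm:merge_without_catalyst}, and the entanglement cost consumed on edge $e$ is exactly $\max_{\boldsymbol{m}_k\,,j}\{\log_2\lceil\lambda_{\boldsymbol{m}_k\,,0}^{{\left(v_k\right)}_j^\textup{L}}\dim\mathcal{H}_{\boldsymbol{m}_k}^{{\left(v_k\right)}_j^\textup{R}}\rceil\}$, matching the bound~\eqref{eq:decoding_cost_upper}. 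The key point, inherited from Theorem~\ref{thm:merge_without_catalyst} via coherent combination of the three subprocesses, is that merging is performed \emph{coherently} so that the arbitrary (unknown) superposition encoded against $\mathcal{H}^R$ is preserved; after the final step $k=1$, party $v_1$ holds $\Ket{\Phi_{\boldsymbol{m}_1}^1}$, convertible by a local isometry into $\Ket{\Phi_D^+}$, completing the transformation deterministically for every outcome tuple.

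The main obstacle I expect is bookkeeping the conditioning on accumulated measurement outcomes $\boldsymbol{m}_k$ and verifying that the protocol remains deterministic and exact for \emph{every} branch. Unlike the two-party case, here the Koashi-Imoto decomposition used at step $k$ depends on the random outcomes $\boldsymbol{m}_{k+1}\,,\ldots,\boldsymbol{m}_N$ of all previously merged parties, which is precisely why the bound~\eqref{eq:decoding_cost_upper} takes a maximum over $\boldsymbol{m}_k$ and $j$: the resource state $\Ket{\Phi_{M_e}^+}^e$ for edge $e$ must be large enough to succeed on the worst-case branch. I would need to check that for each $\boldsymbol{m}_k$ the merging at $v_k$ succeeds using the fixed resource of Schmidt rank $M_e$ satisfying~\eqref{eq:decoding_cost_upper}, which follows because Theorem~\ref{thm:merge_without_catalyst} guarantees success whenever $\log_2 M_e$ is at least the per-outcome cost, and the max ensures this uniformly. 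I would also confirm that the ``if'' direction is all that is claimed (the theorem states only sufficiency, unlike the ``if and only if'' for spreading), so no converse via Schmidt-rank monotonicity is required here — the asymmetry reflecting that merging, using the Koashi-Imoto structure, can genuinely beat the rank bound, consistent with Remark~\ref{remark:merge} and the examples of Section~\ref{sec:examples}.
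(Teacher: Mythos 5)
Your overall strategy is the paper's: reduce to the state transformation of Proposition~\ref{lem:encoding_state_transformation}, process the non-root parties in reverse ascending order $v_N\,,\ldots,v_2$ with sequential applications of the non-catalytic exact state merging of Theorem~\ref{thm:merge_without_catalyst}, take the worst case over outcome tuples $\boldsymbol{m}_k$ and blocks $j$, and note that only sufficiency is claimed. However, there is a genuine gap in how you set up each merging step. You identify the step as ``a genuine two-party merging from $A=v_k$ to $B=p(v_k)$'' while simultaneously putting $\bigotimes_{v\in V_{k-1}}\tilde{\mathcal{H}}^{v}$ into the reference $R$; these two identifications are mutually inconsistent, and neither is the one that yields the claimed bound. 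The cost formula~\eqref{eq:decoding_cost_upper} is written in terms of the Koashi-Imoto decomposition of $\Ket{\Phi_{\boldsymbol{m}_k}^k}$ with $A=\tilde{\mathcal{H}}^{v_k}$, $B=\bigotimes_{m=1}^{k-1}\tilde{\mathcal{H}}^{v_m}$ (the \emph{joint} system of all not-yet-processed parties, as the superscript ${\left(v_1\cdots v_{k-1}\right)}_j^\textup{L}$ indicates) and $R=\mathcal{H}^R$ only. If you instead take $B=p(v_k)$ and push the other remaining parties into $R$, the steerable set $S^{A|R}$ grows, the decomposition of $\tilde{\mathcal{H}}^{v_k}$ becomes finer in the quantum direction, and the resulting per-edge cost is in general strictly larger than the right-hand side of~\eqref{eq:decoding_cost_upper}, so the construction would not prove the theorem.

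Once you adopt the correct identification, a second issue appears that your recurrence $\Ket{\Phi_{\boldsymbol{m}_k}^k}\propto\left(\id\otimes\mathcal{M}_{v_k}\right)\Ket{\Phi_{\boldsymbol{m}_{k+1}}^{k+1}}$ with the \emph{full} merging map glosses over: the receiver's correction isometry $U_{m^{v_k}}$ from Theorem~\ref{thm:merge_without_catalyst} now acts on the distributed system $\bigotimes_{m=1}^{k-1}\tilde{\mathcal{H}}^{v_m}$ and is in general nonlocal among those parties, so it cannot be implemented at step $k$ by LOCC. The paper's protocol therefore performs only the sender's local measurement $\Bra{m^{v_k}}$ at each step (absorbing $p(v_k)$'s half of $\Ket{\Phi_{M_e}^+}^e$ into $\tilde{\mathcal{H}}^{p(v_k)}$), defines the sequence~\eqref{eq:sequence} by the measurement-only recurrence, and defers \emph{all} corrections to a single isometry $U_{\boldsymbol{m}_1}^{v_1}=U^\dag U_{m^{v_N}}\cdots U_{m^{v_2}}$ applied by $v_1$ at the very end, when the entire state is local to $v_1$. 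Your closing assertion that $\Ket{\Phi_{\boldsymbol{m}_1}^1}$ is ``convertible by a local isometry into $\Ket{\Phi_D^+}$'' is exactly the place where this deferral must be made explicit; without it the protocol you describe is not an LOCC protocol.
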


\begin{proof}
    Given any tree $T = (V,E)$ with an ascending labeling of the vertices and any isometry $U$,
    the proof is given by construction of a protocol for the state transformation defined as Equation~\eqref{eq:decoding_state_transformation} in Proposition~\ref{lem:encoding_state_transformation} achieving the equality in~\eqref{eq:decoding_cost_upper} for each $e\in E$.
    In the protocol, the parties other than the root $v_1$ sequentially perform a sub-protocol using exact state merging in the non-catalytic setting shown in Theorem~\ref{thm:merge_without_catalyst}, where each of the parties $v_N\,,\ldots,v_2$ in this order is regarded as the sender $A$ in these sequential applications of the exact state merging.
    After all of these parties performing the sub-protocol, the root party $v_1$ performs an isometry to obtain the state $\Ket{\Phi_D^+}$, which achieves concentrating quantum information.
    In the following, the sub-protocol for the non-root parties $v_N\,,\ldots,v_2$ and the isometry for the root party $v_1$ are described.

    For any $v_k\in \left\{v_N\,,\ldots,v_2\right\}$, the sub-protocol for party $v_k$ is as follows.
    The state $\Ket{\Phi^{N}}=\Ket{\tilde{\Phi_D^+}}$ may be written as $\Ket{\Phi_{\boldsymbol{m}_{N}}^{N}}$ for brevity.
    At the beginning of $v_k$'s sub-protocol, it is assumed that the party $v_k$ has the reduced state on $\tilde{\mathcal{H}}^{v_k}$ of
    \begin{equation}
       \label{eq:assumption_merge}
       \Ket{\Phi_{\boldsymbol{m}_{k}}^{k}}\in\mathcal{H}^R\otimes\tilde{\mathcal{H}}^{v_k}\otimes\bigotimes_{m=1}^{k-1}\tilde{\mathcal{H}}^{v_m}.
    \end{equation}
    Based on the classical information $m^{v_N},\ldots,m^{v_{k+1}}$ of measurement outcomes sent from other parties by classical communication, the party $v_k$ calculates the measurement basis ${\left\{\Ket{m^{v_k}}\right\}}_{m^{v_k}}$ defined as
    \begin{equation}
      \Ket{m^{v_k}}\coloneq U^A\Ket{m_1\,,m_2\,,m_3},
    \end{equation}
    where $m^{v_k}$ on the left-hand side is a label corresponding to the tuple of three labels on the right-hand side $m_1\,$, $m_2\,$, and $m_3\,$, and $U^A\Ket{m_1\,,m_2\,,m_3}$ on the right-hand side is that in Equation~\eqref{eq:merge_without_catalyst} in the proof of Theorem~\ref{thm:merge_without_catalyst} for the exact state merging of $\Ket{\Phi_{\boldsymbol{m}_{k}}^{k}}$ in the non-catalytic setting,
    in which the systems $\mathcal{H}^R$, $\tilde{\mathcal{H}}^{v_k}$, and $\bigotimes_{m=1}^{k-1}\tilde{\mathcal{H}}^{v_m}$ are regarded as $\mathcal{H}^{R}$, $\mathcal{H}^A$, and $\mathcal{H}^B$ in Equation~\eqref{eq:merge_without_catalyst}, respectively.
    The party $v_k$ performs this measurement,
    and the states in the sequence~\eqref{eq:sequence} are recursively described as
    \begin{equation}
        \label{eq:recursive_state}
        \Ket{\Phi_{\boldsymbol{m}_{k-1}}^{k-1}}=\frac{\left(\mathbb{1}\otimes\Bra{m^{v_k}}\right)\left(\Ket{\Phi_{\boldsymbol{m}_{k}}^{k}}\otimes\Ket{\Phi_{M_{e}}^+}^{e}\right)}{\left\|\left(\mathbb{1}\otimes\Bra{m^{v_k}}\right)\left(\Ket{\Phi_{\boldsymbol{m}_{k}}^{k}}\otimes\Ket{\Phi_{M_{e}}^+}^{e}\right)\right\|},
    \end{equation}
    where $\mathbb{1}$ is the identity operator on the system of the parties other than $v_k\,$, $\Ket{\Phi_{M_{e}}^+}^{e}$ with $e=\left\{p\left(v_k\right),v_k\right\}$ is the resource state shared between $v_k$ and $v_k$'s parent $p\left(v_k\right)$, and the system of party $p\left(v_k\right)$ for the resource state $\Ket{\Phi_{M_{e}}^+}^{e}$ on the right-hand side is regarded on the left-hand side as part of $\tilde{\mathcal{H}}^{p\left(v_k\right)}$ of the party $p\left(v_k\right)$.
    After this measurement, the party $v_k$ sends the measurement outcome $m^{v_k}$ to all the parties by classical communication, where the post-measurement state is represented by $\Ket{\Phi_{\boldsymbol{m}_{k-1}}^{k-1}}$.
    Note that the assumption~\eqref{eq:assumption_merge} is satisfied for the next party $v_{k-1}$ performing the sub-protocol, that is,
    \begin{equation}
       \Ket{\Phi_{\boldsymbol{m}_{k-1}}^{k-1}}\in\mathcal{H}^R\otimes\tilde{\mathcal{H}}^{v_{k-1}}\otimes\bigotimes_{m=1}^{k-2}\tilde{\mathcal{H}}^{v_m},
    \end{equation}
    because of an ascending order of the vertices.
    For each edge $e=\left\{p\left(v_k\right),v_k\right\}\in E$, Theorem~\ref{thm:merge_without_catalyst} shows that the exact state merging in this sub-protocol achieves the equality in~\eqref{eq:decoding_cost_upper}.

    As for the root party $v_1\,$, an isometry $U_{\boldsymbol{m}_1}^{v_1}$ for obtaining the state $\Ket{\Phi_D^+}$ is shown as follows.
    After the parties $v_N\,,\ldots,v_2$ performing the above sub-protocols, the shared state reduces to
    $\Ket{\Phi_{\boldsymbol{m}_{1}}^{1}}\in\mathcal{H}^R\otimes\tilde{\mathcal{H}}^{v_1}$.
    For each $v_k\in\left\{v_N\,,\ldots,v_2\right\}$,
    define an isometry
    \begin{equation}
      U_{m^{v_k}}\coloneq\left({U^{B'}}^\dag\otimes{U^B}^\dag\right) U_{m_1\,,m_2\,,m_3}U^B,
    \end{equation}
    where $m^{v_k}$ on the left-hand side is a label corresponding to the tuple of three labels on the right-hand side $m_1\,$, $m_2\,$, and $m_3\,$, and $\left({U^{B'}}^\dag\otimes{U^B}^\dag\right) U_{m_1\,,m_2\,,m_3}U^B$ on the right-hand side is that in Equation~\eqref{eq:merge_without_catalyst} in the proof of Theorem~\ref{thm:merge_without_catalyst}.
    Each $U_{m^{v_k}}$ recovers the state $\Ket{\Phi_{\boldsymbol{m}_{k}}^{k}}$ from the post-measurement state $\Ket{\Phi_{\boldsymbol{m}_{k-1}}^{k-1}}$ corresponding to $\Bra{m^{v_k}}$, that is,
    \begin{equation}
      \Ket{\Phi_{\boldsymbol{m}_{k}}^{k}}=U_{m^{v_k}}\Ket{\Phi_{\boldsymbol{m}_{k-1}}^{k-1}}.
    \end{equation}
    Repeating the above yields
    \begin{equation}
        \ket{\tilde{\Phi}_D^+}=\Ket{\Phi^N}=U_{m^{v_N}}\cdots U_{m^{v_2}}\Ket{\Phi_{\boldsymbol{m}_{1}}^{1}}.
    \end{equation}
    Consequently, the party $v_1$ obtains for any $\boldsymbol{m}_1$
    \begin{align}
        \Ket{\Phi_D^+}&=U_{\boldsymbol{m}_1}^{v_1}\Ket{\Phi_{\boldsymbol{m}_{1}}^{1}},\\
        U_{\boldsymbol{m}_1}^{v_1}&\coloneq U^\dag U_{m^{v_N}}\cdots U_{m^{v_2}}.
    \end{align}
    Note that it may not be possible for the parties $v_N\,,\ldots,v_2$ to locally perform $U_{m^{v_N}},\ldots,U_{m^{v_2}}$ during the sub-protocol, since these isometries can be nonlocal.
\end{proof}

The following theorem shows that the entanglement cost in concentrating quantum information is not larger than that of spreading quantum information.
Moreover, the former can be strictly smaller than the latter, as demonstrated in Applications~\ref{ex:distributed_source_compression} and~\ref{ex:locc_decoding} in the next section.
Note that this difference in entanglement cost arises from the difference between quantum state merging and splitting discussed in Remark~\ref{remark:merge}.

\begin{theorem}
    \textit{Comparison of entanglement cost between spreading and concentrating quantum information.}
    Given any tree $T=(V,E)$ with any ascending labeling and any isometry $U$, it holds that
    \begin{equation}
      \max_{\boldsymbol{m}_k\,,j}\left\{\log_2\left\lceil\lambda_{\boldsymbol{m}_k\,,0}^{{\left(v_k\right)}_j^\textup{L}}\dim\mathcal{H}_{\boldsymbol{m}_k}^{{\left(v_k\right)}_j^\textup{R}}\right\rceil\right\}\leqq\log_2\rank\tilde{\Phi}_{D,e}^{+}
    \end{equation}
    where the notations are the same as those in Theorems~\ref{thm:spreading} and~\ref{thm:concentrating}.
\end{theorem}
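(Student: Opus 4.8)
The plan is to compare, edge by edge, the two quantities controlling entanglement cost: the spreading cost $\log_2\rank\tilde{\Phi}_{D,e}^{+}$ from Theorem~\ref{thm:spreading}, and the concentrating cost $\max_{\boldsymbol{m}_k,j}\{\log_2\lceil\lambda_{\boldsymbol{m}_k,0}^{(v_k)_j^\textup{L}}\dim\mathcal{H}_{\boldsymbol{m}_k}^{(v_k)_j^\textup{R}}\rceil\}$ from Theorem~\ref{thm:concentrating}. Both are defined relative to a fixed edge $e=\{p(v_k),v_k\}$ and the same bipartition of the tree (the descendant set $D'_{v_k}$ versus the rest). The natural strategy is to reduce the multipartite inequality to the two-party comparison already recorded in Remark~\ref{remark:merge}, namely $\max_j\{\log_2\lceil\lambda_0^{a_j^\textup{L}}\dim\mathcal{H}^{a_j^\textup{R}}\rceil\}\leqq\log_2\rank\psi^A$, applied at each step of the sequential merging protocol.

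**The key steps.** First I would fix an edge $e=\{p(v_k),v_k\}$ and recall that in the concentrating protocol of Theorem~\ref{thm:concentrating}, the exact state merging performed by party $v_k$ treats $v_k$ as the sender $A$, with $\mathcal{H}^{v_k}$ playing the role of $\mathcal{H}^A$ and the merged state being some $\Ket{\Phi_{\boldsymbol{m}_k}^k}$. Remark~\ref{remark:merge} then gives, for each measurement branch $\boldsymbol{m}_k$, the per-branch bound
\begin{equation}
  \max_{j}\left\{\log_2\left\lceil\lambda_{\boldsymbol{m}_k,0}^{(v_k)_j^\textup{L}}\dim\mathcal{H}_{\boldsymbol{m}_k}^{(v_k)_j^\textup{R}}\right\rceil\right\}\leqq\log_2\rank\left(\Phi_{\boldsymbol{m}_k}^k\right)^{v_k},
\end{equation}
where $(\Phi_{\boldsymbol{m}_k}^k)^{v_k}$ is the reduced state of $\Ket{\Phi_{\boldsymbol{m}_k}^k}$ on $\tilde{\mathcal{H}}^{v_k}$. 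The second step is to take the maximum over $\boldsymbol{m}_k$ on the left, matching the definition of the concentrating cost, and to control the right-hand side uniformly over branches. The third and crucial step is to bound $\rank(\Phi_{\boldsymbol{m}_k}^k)^{v_k}$ by the spreading quantity $\rank\tilde{\Phi}_{D,e}^{+}$: I would argue that each $\Ket{\Phi_{\boldsymbol{m}_k}^k}$ arises from the initial state $\ket{\tilde{\Phi}_D^+}$ by LOCC measurements performed by the parties $v_N,\ldots,v_{k+1}$ (all lying outside $D'_{v_k}$ by the ascending labeling), so that the Schmidt rank across the $D'_{v_k}$-versus-rest cut cannot increase. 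Since $\rank\tilde{\Phi}_{D,e}^{+}$ is exactly the Schmidt rank of $\ket{\tilde{\Phi}_D^+}$ across that cut and $\rank(\Phi_{\boldsymbol{m}_k}^k)^{v_k}$ is dominated by the Schmidt rank of $\Ket{\Phi_{\boldsymbol{m}_k}^k}$ across the same cut, LOCC monotonicity of the Schmidt rank (Reference~\cite{L2}, used already in Theorem~\ref{thm:split}) yields $\rank(\Phi_{\boldsymbol{m}_k}^k)^{v_k}\leqq\rank\tilde{\Phi}_{D,e}^{+}$. Chaining the three inequalities gives the claim.

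**The main obstacle.** The delicate point is the third step: matching the reduced system $\tilde{\mathcal{H}}^{v_k}$ that appears in the merging bound with the full descendant cut $D'_{v_k}$ that defines $\tilde{\Phi}_{D,e}^{+}$. In the sequential protocol, by the time $v_k$ acts, all of $v_k$'s descendants have \emph{already} merged their shares into $v_k$, so $\Ket{\Phi_{\boldsymbol{m}_k}^k}$ holds on $\tilde{\mathcal{H}}^{v_k}$ the concentrated state of the entire subtree $D'_{v_k}$; I must verify that the ascending labeling guarantees this ordering and that the intervening measurements were all performed by vertices strictly outside $D'_{v_k}$, so the relevant bipartition is preserved throughout. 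I would spend most of the care here, confirming that the cut $D'_{v_k}$ versus $\{R\}\cup(V\setminus D'_{v_k})$ is stable under every measurement preceding $v_k$'s action, and that the factor $\frac{1}{K}$-type normalizations do not inflate the rank. The remaining inequalities are immediate consequences of Remark~\ref{remark:merge} and the bound $\lambda_0\leqq 1$, so no further computation is needed beyond invoking the established two-party results.
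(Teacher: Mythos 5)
Your proposal is correct and follows essentially the same route as the paper's proof: LOCC monotonicity of the Schmidt rank across the cut between $D'_{v_k}$ and $\{R\}\cup\left(V\setminus D'_{v_k}\right)$, chained with the Koashi-Imoto bounds $\rank{\left(\Phi_{\boldsymbol{m}_k}^k\right)}^{v_k}\geqq\dim\mathcal{H}_{\boldsymbol{m}_k}^{{\left(v_k\right)}_j^\textup{R}}\geqq\left\lceil\lambda_{\boldsymbol{m}_k\,,0}^{{\left(v_k\right)}_j^\textup{L}}\dim\mathcal{H}_{\boldsymbol{m}_k}^{{\left(v_k\right)}_j^\textup{R}}\right\rceil$, which you package by citing Remark~\ref{remark:merge} where the paper re-derives the same two inequalities inline. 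One parenthetical claim is inaccurate: the parties $v_N\,,\ldots,v_{k+1}$ acting before $v_k$ are \emph{not} all outside $D'_{v_k}$ (the descendants of $v_k$ have larger labels under the ascending labeling and lie inside it), but this does not affect the argument, since Schmidt-rank monotonicity under LOCC holds regardless of which side of the bipartition performs the local operations; the remaining bookkeeping you flag is handled in the paper by noting that the unconsumed resource state on $e$ contributes the same factor $M_e$ to the Schmidt rank of both the initial and the intermediate state, so it cancels.
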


\begin{proof}
  The proof uses the LOCC monotonicity of the Schmidt rank in the state transformation defined as Equation~\eqref{eq:decoding_state_transformation} in Proposition~\ref{lem:encoding_state_transformation}, and properties of the Koashi-Imoto decomposition.
  Regard the given tree $T=(V,E)$ as the rooted tree with its root $v_1\,$, and consider an arbitrary edge $e=\left\{p(v_k),v_k\right\}\in E$ where $v_k\neq v_1$.
  The Schmidt rank of the initial state
  \begin{equation}
    \ket{\tilde{\Phi}_D^+}^{Rv_1\cdots v_N}\otimes\bigotimes_{e\in E}\Ket{\Phi_{M_e}^+}^{e}
  \end{equation}
  between the parties in $D'_{v_k}$ and the other parties in $\{R\}\cup V\setminus D'_{v_k}$ is
  \begin{equation}
    M_e\rank\tilde{\Phi}_{D,e}^{+}.
  \end{equation}
  After the parties $v_N\,,\ldots,v_{k-1}$ performing the above sub-protocols, which is an LOCC map, the state reduces to
  \begin{equation}
    \Ket{\Phi_{\boldsymbol{m}_k}^k}\otimes\bigotimes_{e\in E_k}\Ket{\Phi_{M_e}^+}^{e},
  \end{equation}
  where $\Ket{\Phi_{\boldsymbol{m}_k}^k}$ is defined as Equation~\eqref{eq:v_k} and
  $E_k\coloneq\left\{\left\{p\left(v_2\right),v_2\right\},\ldots,\left\{p\left(v_k\right),v_k\right\}\right\}$.
  The Schmidt rank of $\Ket{\Phi_{\boldsymbol{m}_k}^k}\otimes\bigotimes_{e\in E_k}\Ket{\Phi_{M_e}^+}^{e}$ with respect to the same bipartition of the parties as the above is
  \begin{equation}
    M_e \rank{\left(\Phi_{\boldsymbol{m}_k}^k\right)}^{v_k},
  \end{equation}
  where ${\left(\Phi_{\boldsymbol{m}_k}^k\right)}^{v_k}$ denotes the reduced state of the system $\tilde{\mathcal{H}}^{v_k}$ for the state $\Ket{\Phi_{\boldsymbol{m}_k}^k}$.
  Since the Schmidt rank is monotonically nonincreasing under LOCC,
  it holds that
  \begin{equation}
    M_e\rank\tilde{\Phi}_{D,e}^{+}\geqq M_e\rank{\left(\Phi_{\boldsymbol{m}_k}^k\right)}^{v_k}.
  \end{equation}
  By construction of the Koashi-Imoto decomposition, it holds that
  \begin{equation}
    \rank{\left(\Phi_{\boldsymbol{m}_k}^k\right)}^{v_k}\geqq\dim\mathcal{H}_{\boldsymbol{m}_k}^{{\left(v_k\right)}_j^R}.
  \end{equation}
  for any ${\boldsymbol{m}_k}$ and $j$.
  Since $\lambda_{\boldsymbol{m}_k\,,0}^{{\left(v_k\right)}_j^L}\leqq 1$,
  it is obtained that
  \begin{equation}
    \dim\mathcal{H}_{\boldsymbol{m}_k}^{{\left(v_k\right)}_j^R}\geqq\left\lceil\lambda_{\boldsymbol{m}_k\,,0}^{{\left(v_k\right)}_j^L}\dim\mathcal{H}_{\boldsymbol{m}_k}^{{\left(v_k\right)}_j^R}\right\rceil.
  \end{equation}
  Thus, for any ${\boldsymbol{m}_k}$ and $j$, it holds that
  \begin{equation}
    \log_2\rank\tilde{\Phi}_{D,e}^{+}\geqq\log_2\left\lceil\lambda_{\boldsymbol{m}_k\,,0}^{{\left(v_k\right)}_j^L}\dim\mathcal{H}_{\boldsymbol{m}_k}^{{\left(v_k\right)}_j^R}\right\rceil.
  \end{equation}
  Therefore, the conclusion
  \begin{equation}
    \max_{\boldsymbol{m}_k\,,j}\left\{\log_2\left\lceil\lambda_{\boldsymbol{m}_k\,,0}^{{\left(v_k\right)}_j^L}\dim\mathcal{H}_{\boldsymbol{m}_k}^{{\left(v_k\right)}_j^R}\right\rceil\right\}\leqq\log_2\rank\tilde{\Phi}_{D,e}^{+}
  \end{equation}
  for each $e=\left\{p\left(v_k\right),v_k\right\}\in E$ is obtained.
\end{proof}

\section{\label{sec:example}Applications}
Applications of the protocols for spreading and concentrating quantum information are provided in this section.
In the following, $\otimes$ may be omitted if obvious.

\begin{implication}
\label{ex:distributed_source_compression}
\textit{Application to one-shot distributed source compression for arbitrarily small-dimensional systems.}
When applied to a star-topology tree, such as
\begin{equation}
    \label{eq:star}
    \begin{split}
      T&=(V,E),\\
      V&=\left\{v_k:k=1,2,3\right\},\\
      E&=\left\{e_1=\left\{v_1\,,v_2\right\},e_2=\left\{v_1\,,v_3\right\}\right\},
    \end{split}
\end{equation}
the protocol for concentrating quantum information shown in Theorem~\ref{thm:concentrating} can be regarded as a protocol for one-shot zero-error distributed source compression~\cite{D8,D9,A8}.
Although the protocol achieves transformations between $\Ket{\Phi_D^+}$ and $\ket{\tilde{\Phi}_D^+}$, that is, maximally entangled states between $R$ and the others, it is straightforward to prove that the protocol also work for any pure state shared among the parties $R, v_1\,,\ldots,v_N\,$, which is proven for two parties in Proposition~\ref{prp:max}, and the same argument also applies to more than two parties.
Note that the protocol for concentrating quantum information is applicable to arbitrarily small-dimensional systems as well as achieving zero error, while the existing protocols for the one-shot distributed source compression~\cite{D8,D9,A8} are inefficient for small- and intermediate-scale states and cannot avoid nonzero approximation error, similarly to the case of $N=2$ discussed in Remark~\ref{remark:usefulness}.

For the network defined as Equation~\eqref{eq:star} and an isometry mapping the basis states as
\begin{equation}
  \begin{split}
    \Ket{0}&\leftrightarrow\Ket{0}^{v_1}\Ket{0}^{v_2}\Ket{0}^{v_3},\\
    \Ket{1}&\leftrightarrow\frac{1}{\sqrt{2}}\left(\Ket{0}^{v_1}+\Ket{1}^{v_1}\right)\Ket{1}^{v_2}\Ket{1}^{v_3},
  \end{split}
\end{equation}
where the three-qubit states on the right-hand sides are orthogonal to each other due to the orthogonality of $\Ket{0}$ and $\Ket{1}$,
Theorem~\ref{thm:spreading} yields the entanglement cost in spreading quantum information
\begin{equation}
  \begin{split}
    \log_2 M_{e_1}&=1,\\
    \log_2 M_{e_2}&=1,
  \end{split}
\end{equation}
and Theorem~\ref{thm:concentrating} yields a protocol for concentrating quantum information achieving
\begin{equation}
  \begin{split}
    \log_2 M_{e_1}&=1,\\
    \log_2 M_{e_2}&=0\neq 1.
  \end{split}
\end{equation}
In concentrating quantum information, the states in sequence~\eqref{eq:sequence} are calculated as
\begin{align}
   \label{eq:app1}
   &\Ket{\Phi^{3}}=\ket{\tilde{\Phi}_D^+}=\frac{1}{\sqrt{2}}\Ket{0}^R\Ket{0}^{{\left(v_3\right)}_0^R}\Ket{00}^{{\left(v_1 v_2\right)}_0^R}\oplus\left(\pm\frac{1}{\sqrt{2}}\Ket{1}^R\Ket{1}^{{\left(v_3\right)}_1^R}\Ket{+1}^{{\left(v_1 v_2\right)}_1^R}\right)\\
   \label{eq:app2}
   &\xrightarrow{\text{Measurement in }\left\{\Ket{\pm}^{v_3}\right\}}
      \Ket{\Phi_{\left(\Ket{\pm}^{v_3}\right)}^{2}}=\frac{1}{\sqrt{2}}\Ket{0}^R\Ket{0}^{{\left(v_2\right)}_0^R}\Ket{0}^{{\left(v_1\right)}_0^R}\pm\frac{1}{\sqrt{2}}\Ket{1}^R\Ket{1}^{{\left(v_2\right)}_0^R}\Ket{+}^{{\left(v_1\right)}_0^R}
\end{align}
where the right-hand sides of Equations~\eqref{eq:app1} and~\eqref{eq:app2} shows the Koashi-Imoto decomposition of the state for each step in the sequence~\eqref{eq:sequence},
and the final state shared between $R$ and $v_1$ is obtained by transferring $v_2$'s one-qubit state by quantum teleportation from $v_2$ to $v_1\,$, which requires $\log_2 M_{e_1}=1$.
The difference in the resource requirements for concentrating quantum information between the edges $e_1$ and $e_2$ arises because of the difference between the Koashi-Imoto decomposition of the state $\Ket{\Phi_{\boldsymbol{m}_{2}}^{2}}$ after the party $v_3$ performing the exact state merging and the Koashi-Imoto decomposition of the state $\Ket{\Phi^{3}}=\ket{\tilde{\Phi}_D^+}$ before.

By contrast, if the labeling of the parties $v_2$ and $v_3$ are interchanged, the tree $T$ changes to
\begin{equation}
    \begin{split}
      &T'=(V',E),\\
      &V'=\left\{v'_1=v_1\,, v'_2=v_3\,, v'_3=v_2\right\},\\
      &E=\left\{e_1=\left\{v'_1\,,v'_3\right\}=\left\{v_1\,,v_2\right\},e_2=\left\{v'_1\,,v'_2\right\}=\left\{v_1\,,v_3\right\}\right\},
    \end{split}
\end{equation}
and the protocol for concentrating quantum information applied to this tree $T'$ achieves
\begin{equation}
  \begin{split}
    \log_2 M_{e_1}&=0\neq 1,\\
    \log_2 M_{e_2}&=1.
  \end{split}
\end{equation}

This example implies that the entanglement cost in concentrating quantum information for each edge of a graph may be affected by the labeling of the vertices, that is, the order of sequential applications of exact state merging.
In this case, to obtain the entanglement cost, it is necessary to calculate the sequence~\eqref{eq:sequence} of the states during the protocol by recursively using Equation~\eqref{eq:recursive_state}.
\end{implication}

\begin{implication}
\label{ex:locc_decoding}
\textit{Application to LOCC-assisted decoding in quantum secret sharing.}
Similarly to the protocol for concentrating quantum information, Reference~\cite{G8} proposes schemes of quantum secret sharing and a protocol for decoding shared secret of quantum information, in which the parties collaboratively perform LOCC to reduce total quantum communication required for the decoding.
While the protocol in Reference~\cite{G8} works for a particular class of quantum codes,
the protocols shown in Theorems~\ref{thm:spreading} and~\ref{thm:concentrating} are applicable to any encoding and decoding in addition to this particular class.
For example, a different scheme of quantum secret sharing from those considered in Reference~\cite{G8} can be obtained from the five-qubit code~\cite{C,G7}, which maps the basis states as
\begin{equation}
  \begin{split}
    \Ket{0}\leftrightarrow\frac{1}{4}(&\Ket{00000} + \Ket{11000} + \Ket{01100} + \Ket{00110}\\
    +&\Ket{00011}+\Ket{10001}-\Ket{10100}-\Ket{01010}\\
    -&\Ket{00101}-\Ket{10010}-\Ket{01001}-\Ket{11110}\\
    -&\Ket{01111}-\Ket{10111}-\Ket{11011}-\Ket{11101}),\\
    \Ket{1}\leftrightarrow\frac{1}{4}(&\Ket{11111} + \Ket{00111} + \Ket{10011} + \Ket{11001}\\
    +&\Ket{11100} + \Ket{01110}-\Ket{01011}-\Ket{10101}\\
    -&\Ket{11010}-\Ket{01101}-\Ket{10110}-\Ket{00001}\\
    -&\Ket{10000}-\Ket{01000}-\Ket{00100}-\Ket{00010}),
  \end{split}
\end{equation}
where each qubit on the right-hand sides belongs to each of the parties $v_1\,,\ldots,v_5$.
For this isometry and a line-topology tree
\begin{equation}
    \begin{split}
      T&=(V,E),\\
      V&=\left\{v_k:k=1,\ldots,N\right\},\\
      E&=\left\{e_k=\left\{v_k\,,v_{k+1}\right\}:k=1,\ldots,N-1\right\},
    \end{split}
\end{equation}
where $N=5$,
Theorem~\ref{thm:spreading} yields the entanglement cost in spreading quantum information
\begin{equation}
    \begin{split}
      \log_2 M_{e_1}&=2,\\
      \log_2 M_{e_2}&=3,\\
      \log_2 M_{e_3}&=2,\\
      \log_2 M_{e_4}&=1,
    \end{split}
\end{equation}
and Theorem~\ref{thm:concentrating} yields a protocol for concentrating quantum information achieving
\begin{equation}
  \label{eq:ex3}
  \begin{split}
    \log_2 M_{e_1}&=0,\\
    \log_2 M_{e_2}&=0,\\
    \log_2 M_{e_3}&=0,\\
    \log_2 M_{e_4}&=0.
  \end{split}
\end{equation}
In concentrating quantum information, the states in sequence~\eqref{eq:sequence} are calculated as
\begin{align*}
    &\Ket{\Phi^{5}}=\ket{\tilde{\Phi}_D^+}\propto\Ket{+}^R\Ket{+}^{{\left(v_5\right)}_0^R}{\left(\Ket{0000}^{{\left(v_1 v_2 v_3 v_4\right)}_0^R}+\cdots\right)}\oplus\Ket{-}^R\Ket{-}^{{\left(v_5\right)}_1^R}{\left(\Ket{0000}^{{\left(v_1 v_2 v_3 v_4\right)}_1^R}+\cdots\right)}\\
    &\downarrow{\text{Measurement in }\left\{\Ket{0}^{v_5},\Ket{1}^{v_5}\right\}}\\
    &\Ket{\Phi_{\left(\Ket{0}^{v_5}\right)}^{4}}\\
    &=\frac{1}{4}\left[\Ket{0}^{R}\left(\Ket{0000}^{v_1 v_2 v_3 v_4} + \Ket{1100} + \Ket{0110} + \Ket{0011} -\Ket{1010}-\Ket{0101}-\Ket{1001}-\Ket{1111}\right)\right.\\
    &\quad\left.+\Ket{1}^R\left(\Ket{1110}^{v_1 v_2 v_3 v_4} + \Ket{0111}-\Ket{1101}-\Ket{1011} -\Ket{1000}-\Ket{0100}-\Ket{0010}-\Ket{0001}\right)\right]\\
    &\propto\Ket{+}^R\Ket{+}^{{\left(v_4\right)}_0^R}\left(\Ket{000}^{{\left(v_1 v_2 v_3\right)}_0^R}+\cdots\right)\oplus\Ket{-}^R\Ket{-}^{{\left(v_4\right)}_0^R}\left(\Ket{000}^{{\left(v_1 v_2 v_3\right)}_0^R}+\cdots\right)\\
    &\downarrow{\text{Measurement in }\left\{\Ket{0}^{v_4},\Ket{1}^{v_4}\right\}}\\
    &\Ket{\Phi_{\left(\Ket{0}^{v_5},\Ket{0}^{v_4}\right)}^{3}}\\
    &=\frac{1}{2\sqrt{2}}\left[\Ket{0}^R\left(\Ket{000}^{v_1 v_2 v_3} + \Ket{110} + \Ket{011} - \Ket{101}\right) +\Ket{1}^R\left(\Ket{111}^{v_1 v_2 v_3} - \Ket{100}-\Ket{010}-\Ket{001}\right)\right]\\
    &\propto\Ket{+}^R\Ket{+}^{{\left(v_3\right)}_0^R}\left(\Ket{00}^{{\left(v_1 v_2\right)}_0^R}+\cdots\right)\oplus\Ket{-}^R\Ket{-}^{{\left(v_3\right)}_0^R}\left(\Ket{00}^{{\left(v_1 v_2\right)}_0^R}+\cdots\right)\\
    &\downarrow{\text{Measurement in }\left\{\Ket{0}^{v_3},\Ket{1}^{v_3}\right\}}\\
    &\Ket{\Phi_{\left(\Ket{0}^{v_5},\Ket{0}^{v_4},\Ket{0}^{v_3}\right)}^{2}}\\
    &=\frac{1}{2}\left[\Ket{0}^R\left(\Ket{00}^{v_1 v_2}+\Ket{11}\right)-\Ket{1}^R\left(\Ket{01}^{v_1 v_2}+\Ket{10}\right)\right]\\
    &\propto\Ket{-}^R\Ket{+}^{{\left(v_2\right)}_0^R}\Ket{+}^{{\left(v_1\right)}_0^R}\oplus\Ket{+}^R\Ket{-}^{{\left(v_2\right)}_1^R}\Ket{-}^{{\left(v_1\right)}_1^R},\\
    &\downarrow{\text{Measurement in }\left\{\Ket{0}^{v_2},\Ket{1}^{v_2}\right\}}\\
    &\Ket{\Phi_{\left(\Ket{0}^{v_5},\Ket{0}^{v_4},\Ket{0}^{v_3},\Ket{0}^{v_2}\right)}^{1}}=\frac{1}{\sqrt{2}}\left(\Ket{0}^{R}\Ket{0}^{v_1}-\Ket{1}^R\Ket{1}^{v_1}\right)\\
    &\downarrow{\text{Local isometry by }v_1}\\
    &\frac{1}{\sqrt{2}}\left(\Ket{0}^{R}\Ket{0}^{v_1}+\Ket{1}^R\Ket{1}^{v_1}\right)
\end{align*}
where the Koashi-Imoto decomposition of the state for each step in the sequence~\eqref{eq:sequence} is shown after $\propto$ for the above states,
and while only the sequence of states for the measurement outcomes corresponding to $\Ket{0}$'s is shown in the above, those corresponding to other outcomes can be calculated in the same way.
Equation~\eqref{eq:ex3} shows that the five-qubit code can be decoded only by LOCC, \textit{i.e.,} without quantum communication.
Note that, if the protocol in Theorem~\ref{thm:concentrating} is applied to quantum secret sharing, some subsets of the parties may extract partial knowledge about the shared secret of quantum information during the protocol while this is the same situation as the existing protocol in Reference~\cite{G8}.
\end{implication}

\chapter{\label{sec:multipartite}When does multipartite entanglement outperform bipartite entanglement?}

This chapter aims at differentiating capabilities of multipartite entanglement and bipartite entanglement.
To achieve this goal, Section~\ref{sec:def} introduces the tasks of system-size-limited quantum state preparation in the static and dynamic settings.
The static setting is analyzed in Section~\ref{sec:analysis}, and the dynamic setting is analyzed in Section~\ref{sec:analysis2}.

\section{\label{sec:def}Definition of system-size-limited quantum state preparation}

This section defines the tasks of system-size-limited quantum state preparation, where difference between states exhibiting multipartite entanglement and state consisting only bipartite entanglement arises in achievability of these tasks.
These tasks are also illustrated in Figure~\ref{fig:multipartite_intro}.

Consider a scenario where a multipartite system is distributed among spatially separated parties $v_1\,,\ldots,v_N\,$, and the local system size of each party is limited.
Given a target set $S$ of multipartite states of this distributed system,
the system-size-limited quantum state preparation for $S$  is a task of the parties transforming a shared common resource state stored within the limitation of local system sizes into an arbitrary state $\Ket{\psi}\in S$ by local operations and classical communication, where use of auxiliary systems is also limited within the limitation.

The limitation on local system sizes are formulated as follows.
Assume that each party $v_k\in\left\{v_1\,,\ldots,v_N\right\}$ has a quantum system represented by a Hilbert space $\overline{\mathcal{H}}^{v_k}$, whose dimension is
\begin{equation}
    D^{\left(v_k\right)}\coloneq\dim\overline{\mathcal{H}}^{v_k}.
\end{equation}
The total system shared by the parties is denoted by
\begin{equation}
  \overline{\mathcal{H}}\coloneq \bigotimes_{k=1}^{N}\overline{\mathcal{H}}^{v_k}.
\end{equation}
The configuration of system sizes for the parties is represented as a tuple
\begin{equation}
    \boldsymbol{D}=\left(D^{\left(v_1\right)},\ldots,D^{\left(v_N\right)}\right).
\end{equation}

The parties store a common resource state within this configuration $\boldsymbol{D}$ of a given system $\overline{\mathcal{H}}$.
This common resource state is to be transformed by LOCC into a state in a given target set, so that the state in the transformed form can be used for some given task.
In general, a common resource state for a set of multipartite states may be of a higher-dimensional system than the system for the set itself.
Thus, states in the target set obtained from the common resource state by LOCC is of a subspace $\mathcal{H}$ of $\overline{\mathcal{H}}$ where each party $v_k$ has a subsystem $\mathcal{H}^{v_k}$ of $\mathcal{H}$, that is,
\begin{align}
  \mathcal{H}&\coloneq\bigotimes_{k=1}^N\mathcal{H}^{v_k},\\
  \mathcal{H}^{v_k}&\subset\overline{\mathcal{H}}^{v_k},\quad\forall v_k.
\end{align}
The target set $S$ of states to be obtained from the common resource state is given from $\mathcal{H}$.

Each party $v_k$ may perform any unitary operations and any measurement on the system $\overline{\mathcal{H}}^{v_k}$,
but $v_k$ is \textit{not} allowed to add auxiliary systems increasing the dimension of $\overline{\mathcal{H}}^{v_k}$.
Measurements can be represented by quantum instruments,
and while there exists a class of measurements called indirect measurements, which may require an auxiliary working quantum system in their implementation,
the protocols investigated in this chapter require only projective measurements, which can be considered to be implementable without such an auxiliary system.
For the completeness of the definition,  $v_k$ may be allowed to implement an indirect measurement using a projective measurement and one auxiliary working qubit in addition to the system $\overline{\mathcal{H}}^{v_k}$ itself, where the auxiliary working qubit has to be traced out after each measurement.
Note that the use of only one auxiliary working qubit is sufficient for implementing any indirect measurement~\cite{A13}.
The parties can freely perform classical information processing and classical communication, which can be performed without using a quantum system.
Given a configuration of system sizes $\boldsymbol{D}$,
assume in both the static setting and the dynamic setting that the parties can perform local operations on a limited-size quantum system in the above sense and classical communication,
and this restricted LOCC is called \textit{LOCC within the configuration $\boldsymbol{D}$}.

To compare multipartite and bipartite entanglement for the common resource states, two settings of system-size-limited quantum state preparation are defined in the following, one of which is called the \textit{static} setting, and the other the \textit{dynamic} setting.
The task of system-size-limited quantum state preparation is the static setting is defined as follows.
\begin{definition}
  \textit{System-size-limited quantum state preparation in the static setting}
  The system-size-limited quantum state preparation in the static setting for a configuration $\boldsymbol{D}$ of system sizes and a target set $S$ is a task of $N$ parties achieving the following:
  \begin{enumerate}
    \item The system $\mathcal{H}$ shared by the parties is initialized as a common resource state $\Ket\phi\in\overline{\mathcal{H}}$ for $S$;
    \item A particular target state $\Ket{\psi}\in S$ is chosen from the target set $S$, and all the parameters of $\Ket{\psi}$ for its classical description are given to all the parties. Then, the parties perform LOCC within the configuration $\boldsymbol{D}$ to transform the common resource state $\Ket\phi$ into the chosen target state $\Ket\psi$ in the target set.
  \end{enumerate}
\end{definition}
Section~\ref{sec:analysis} analyzes properties of the common resource state $\Ket\phi$ for achieving a system-size-limited quantum state preparation, that is, whether the task is achievable when the common resource state $\Ket\phi$ is a state exhibiting multipartite entanglement or consisting only of bipartite entanglement.

In the dynamic setting, in addition to LOCC within a given configuration $\boldsymbol{D}$,
allow any two parties $v_k$ and $v_{k^\prime}$ to perform quantum communication.
Each quantum communication from a party $v_k$ to another $v_{k^\prime}$ is called one \textit{round} of quantum communication.
A protocol  may include multiple rounds of quantum communication, and these multiple rounds are performed sequentially.
When $v_k$ sends a state of a $D$-dimensional system to $v_{k^\prime}$  by quantum communication, it is required that $v_k$ initially stores the state to be sent in a $D$-dimensional subsystem of $\overline{\mathcal{H}}^{v_k}$, and $v_{k^\prime}$ initializes a $D$-dimensional subsystem of $\overline{\mathcal{H}}^{v_{k^\prime}}$ as a fixed state $\Ket{0}$, so that $v_{k^\prime}$ receives the state using this subsystem.
After each quantum communication, the $D$-dimensional subsystem of $\overline{\mathcal{H}}^{v_k}$ is initialized as a fixed state $\Ket{0}$, so that $v_k$ can reuse this subsystem.
Note that quantum communication between the parties is not allowed in the static setting and is allowed only in the dynamic setting.

The task of system-size-limited quantum state preparation in the dynamic setting is defined as follows.
\begin{definition}
  \textit{System-size-limited quantum state preparation in the dynamic setting}
  The system-size-limited quantum state preparation in the dynamic setting for a configuration $\boldsymbol{D}$ of system sizes and a target set $S$ is a task of $N$ parties achieving the following:
  \begin{enumerate}
    \item The party prepare a common resource state $\Ket\phi\in\overline{\mathcal{H}}$ for $S$ by quantum communication in addition to LOCC within the configuration $\boldsymbol{D}$;
    \item A particular target state $\Ket{\psi}\in S$ is chosen from the target set $S$, and all the parameters of $\Ket{\psi}$ for its classical description are given to all the parties. Then, the parties perform LOCC within the configuration $\boldsymbol{D}$ to transform the common resource state $\Ket\phi$ into the chosen target state $\Ket\psi$ in the target set.
  \end{enumerate}
\end{definition}
In this dynamic setting, the common resource state $\Ket\phi$ is deterministically prepared by finite rounds of quantum communication, and $\Ket\phi$ may be a state exhibiting multipartite entanglement.
Note that common resource states in the dynamic setting are expected to have an intermediate capability between common resource states consisting only of bipartite entanglement and common resource states exhibiting multipartite entanglement in the static setting, since common resource states in the dynamic setting may exhibit multipartite entanglement but are prepared by only temporal uses of bipartite quantum communication resources.

\section{\label{sec:analysis}System-size-limited quantum state preparation in the static setting}

This section analyzes system-size-limited quantum state preparation in the static setting.
It is shown in this section that there exist examples of system-size-limited quantum state preparation in the static setting which is achievable by a common resource state exhibiting multipartite entanglement but not by any common resource state consisting only of bipartite entanglement.

To show such a nontrivial example, consider eight parties $v_1\,,\ldots,v_8$.
The configuration of the parties' system sizes
\begin{equation}
  \boldsymbol{D}_0=\left(D_0^{\left(v_1\right)},\ldots, D_0^{\left(v_8\right)}\right)
\end{equation}
are
\begin{equation}
    \label{eq:d}
    \begin{split}
        D_0^{\left(v_k\right)}&=\dim\overline{\mathcal{H}}^{v_k} = 4,\; \dim\mathcal{H}^{v_k} = 2, \;\forall v_k\in\{v_1\,,\ldots,v_7\};\\
        D_0^{\left(v_8\right)}&=\dim\overline{\mathcal{H}}^{v_8} =\dim\mathcal{H}^{v_8} = 2.
    \end{split}
\end{equation}
For each $v_k\in\left\{v_1\,,\ldots,v_7\right\}$,
consider the four-dimensional system $\overline{\mathcal{H}}^{v_k}$ to consist of two qubits, where one for the target set is denoted by $\mathcal{H}^{v_k}$, and the other auxiliary qubit for common resource states is denoted by $\mathcal{H}_\textup{a}^{v_k}$.
As for $v_8\,$, $\overline{\mathcal{H}}^{v_8}$ is identical to $\mathcal{H}^{v_8}$.
In the following, the systems may be written as
\begin{align}
  \overline{\mathcal{H}}^{v_k}&=\mathcal{H}^{v_k}\otimes\mathcal{H}_\textup{a}^{v_k},\quad \forall v_k\in\left\{v_1\,,\ldots,v_7\right\},\\
  \overline{\mathcal{H}}^{v_8}&=\mathcal{H}^{v_8}.
\end{align}

\begin{figure}[t!]
    \centering
    \includegraphics[width=4in]{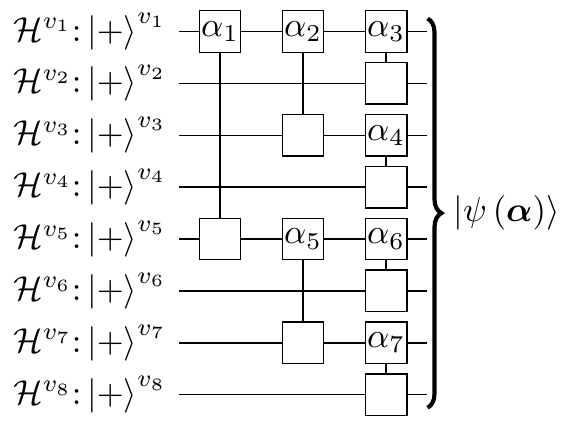}
    \caption[A quantum circuit generating all the states in a target set for a system-size-limited quantum state preparation.]{A quantum circuit generating all the states in the target set $S_0\coloneq{\left\{\Ket{\psi\left(\boldsymbol{\alpha}\right)}\right\}}_{\boldsymbol{\alpha}}$ for the system-size-limited quantum state preparation in Theorems~\ref{thm:multipartite} and~\ref{thm:bipartite}, where $\boldsymbol{\alpha}=\left(\alpha_1\,,\ldots,\alpha_7\right)$ is a tuple of the parameters representing states in $S_0$. The wires of the circuit starting from the input $\Ket{+}^{v_1},\ldots,\Ket{+}^{v_8}$ represent qubits $\mathcal{H}^{v_1},\ldots,\mathcal{H}^{v_8}$ held by the parties $v_1\,,\ldots,v_8\,$, respectively. The circuit consists of seven two-qubit gates $\exp\left(\textup{i}\alpha_k Z\otimes Z\right)$ parameterized by $\alpha_k\in\left\{\alpha_1\,,\ldots,\alpha_7\right\}$.}
\label{fig:target_set}
\end{figure}

\begin{figure}[t!]
    \centering
    \includegraphics[width=4in]{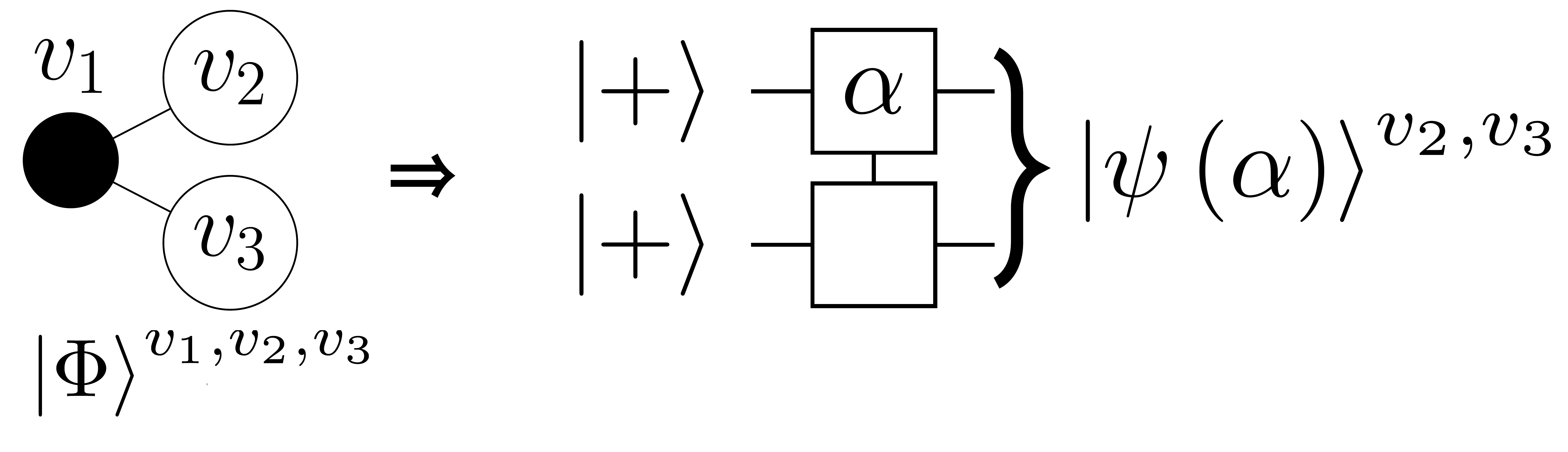}
    \caption[A simple example of a graph representing a graph state and a quantum circuit representing a class of states that can be deterministically prepared using this graph state.]{A simple example of a graph representing a graph state and a quantum circuit representing a class of states parameterized by $\alpha$ that can be deterministically prepared using this graph state. Given a graph state $\Ket{\Phi}^{v_1\,,v_2\,,v_3}$ as illustrated on the left, by performing the unitary $\exp\left(\textup{i}\alpha X^{v_1}\right)$ parameterized by $\alpha$ and a measurement in the $Z$ basis $\left\{\Ket{0},\Ket{1}\right\}$ on the qubit represented by the black vertex $v_1\,$, followed by local unitary corrections on the white vertices $v_2$ and $v_3$ conditioned by the measurement outcome, a two-qubit state $\Ket{\psi\left(\alpha\right)}$ defined as Equation~\eqref{eq:psi_alpha} represented by $v_2$ and $v_3$ can be deterministically. The state $\Ket{\psi\left(\alpha\right)}$ can also be represented as the output of the quantum circuit on the right, where a two-qubit gate $\exp\left(\textup{i}\alpha Z^{v_2}\otimes Z^{v_3}\right)$ parameterized by $\alpha$ is applied to $\Ket{+}^{v_2}\otimes\Ket{+}^{v_3}$.}
\label{fig:correspondence}
\end{figure}

\begin{figure}[t!]
    \centering
    \includegraphics[width=4in]{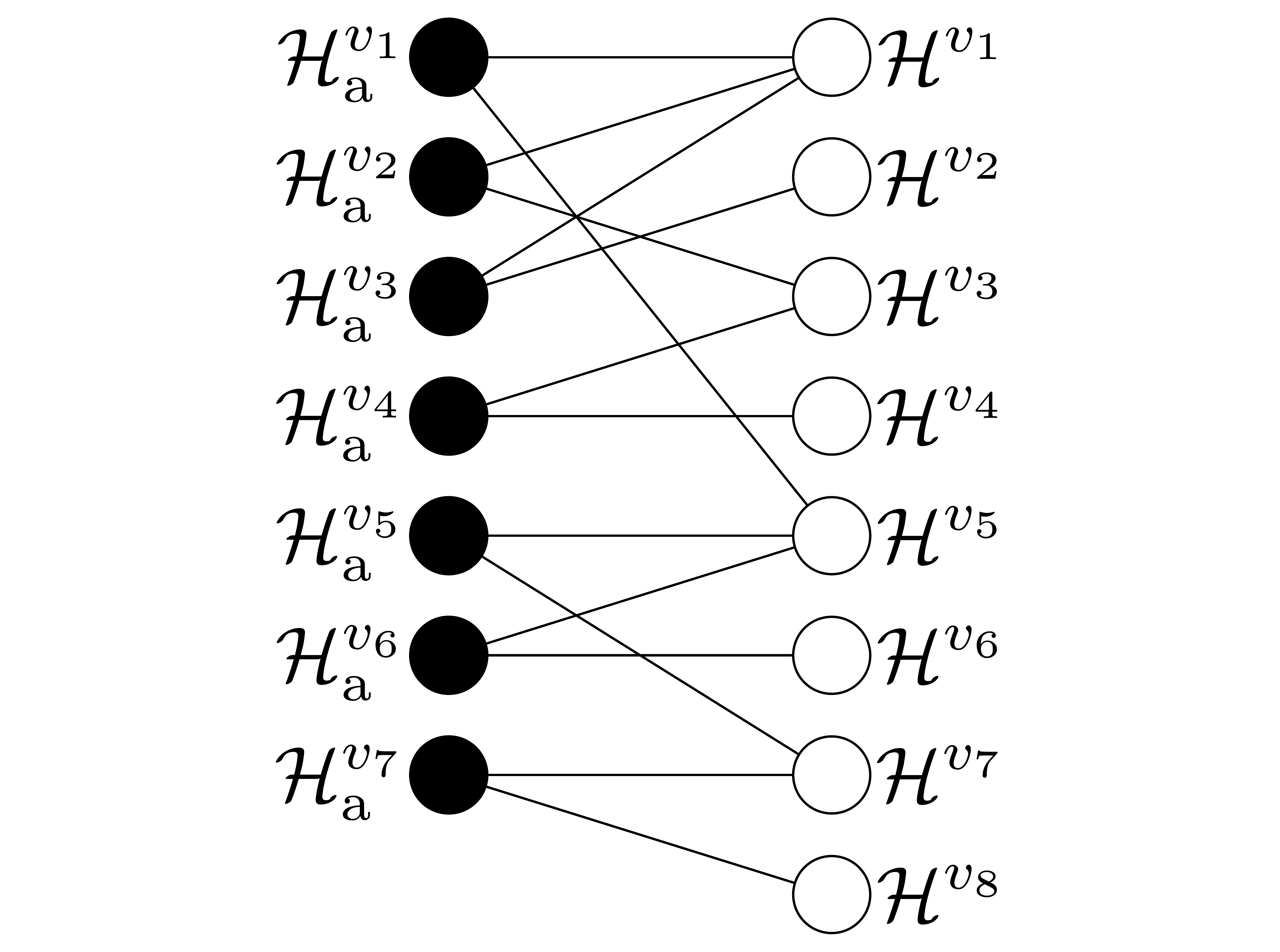}
    \caption[A graph representing a fifteen-qubit graph state used as a common resource state exhibiting multipartite entanglement.]{A graph representing a fifteen-qubit graph state $\Ket{\Phi_\textup{res}}$ used as a common resource state exhibiting multipartite entanglement in Theorem~\ref{thm:multipartite}.
    Each of the parties $v_k\in\left\{v_1\,,\ldots,v_7\right\}$ has two qubits $\mathcal{H}^{v_k}\otimes\mathcal{H}_\textup{a}^{v_k}$ while party $v_8$ has one qubit $\mathcal{H}^{v_8}$.
    Eight of the fifteen qubits $\mathcal{H}^{v_1},\ldots,\mathcal{H}^{v_8}$ represented by white vertices are qubits of which any state $\Ket{\psi\left(\boldsymbol{\alpha}\right)}$ in the target set $S_0$ is prepared.
    The other seven $\mathcal{H}_\textup{a}^{v_1},\ldots,\mathcal{H}_\textup{a}^{v_7}$ represented by black vertices are auxiliary qubits for a common resource state.
    To obtain $\Ket{\psi\left(\boldsymbol{\alpha}\right)}\in S_0$ parameterized by a tuple of parameters $\boldsymbol\alpha=\left(\alpha_1\,,\ldots\alpha_7\right)$,
    each party $v_k\in\left\{v_1\,,\ldots,v_7\right\}$ performs the following protocol in order.
    First, a unitary $\exp\left(\textup{i}\alpha_k X\right)$ parameterized by the parameter $\alpha_k$ is performed on the qubit $\mathcal{H}_\textup{a}^{v_k}$.
     Then, the qubit $\mathcal{H}_\textup{a}^{v_k}$ is measured in the $Z$ basis $\left\{\Ket{0},\Ket{1}\right\}$, and depending on the measurement outcome, local unitary corrections are applied to the qubits other than $\mathcal{H}_\textup{a}^{v_k}$.
     This protocol can deterministically transform $\Ket{\Phi_\textup{res}}$ into $\Ket{\psi\left(\boldsymbol{\alpha}\right)}\in S_0$ for any $\boldsymbol\alpha$.}
\label{fig:tree}
\end{figure}

Define a target set $S_0$ on
\begin{equation}
  \mathcal{H}=\bigotimes_{k=1}^{N}\mathcal{H}^{v_k}
\end{equation}
as the set of all the possible output states of a quantum circuit illustrated in Figure~\ref{fig:target_set}.
The circuit illustrated in Figure~\ref{fig:target_set} consists of seven two-qubit unitary gates
\begin{equation}
  \exp\left(\textup{i}\alpha_k Z\otimes Z\right)
\end{equation}
parameterized by $\alpha_i\in\left\{\alpha_1\,,\ldots,\alpha_7\right\}$, where $0\leqq\alpha_i < 2\pi$ for each $\alpha_k$.
Let
\begin{equation}
    \boldsymbol\alpha\coloneq\left(\alpha_1\,,\ldots,\alpha_7\right).
\end{equation}
denote the tuple of the seven parameters.
The input to the circuit is an eight-qubit product state
\begin{equation}
  \Ket{+}^{\otimes 8}\in\mathcal{H},
\end{equation}
where
\begin{equation}
  \Ket{+}\coloneq\frac{1}{\sqrt{2}}\left(\Ket{0}+\Ket{1}\right).
\end{equation}
The target set $S_0$ consists of the eight-qubit output states of the circuit parameterized by $\boldsymbol\alpha$ for representing the gates in the circuit, that is,
\begin{equation}
    \label{eq:s_0}
    S_0\coloneq\left\{\Ket{\psi\left(\boldsymbol\alpha\right)}\in\mathcal{H}:\boldsymbol\alpha=\left(\alpha_1\,,\ldots,\alpha_7\right)\right\},
\end{equation}
where each qubit is placed at one of the parties, as illustrated in Figure~\ref{fig:target_set}.

For example, consider the parameters
\begin{equation}
    \boldsymbol\alpha_0\coloneq\left(0,0,0,0,0,0,0\right),
\end{equation}
and the state
\begin{equation}
  \Ket{\psi\left(\boldsymbol\alpha_0\right)}=\Ket{+}^{\otimes 8}\in S_0
\end{equation}
is a product state, since each gate in the circuit reduces to the identity map.
In contrast, consider the parameters
\begin{equation}
    \boldsymbol\alpha_{\frac{\pi}{4}}\coloneq\left(\frac{\pi}{4},\frac{\pi}{4},\frac{\pi}{4},\frac{\pi}{4},\frac{\pi}{4},\frac{\pi}{4},\frac{\pi}{4}\right),
\end{equation}
and the state
\begin{equation}
  \Ket{\psi\left(\boldsymbol\alpha_\frac{\pi}{4}\right)}\in S_0
\end{equation}
is a fully entangled state, since each gate $\exp\left(\textup{i}\frac{\pi}{4} Z\otimes Z\right)$ entangles $\Ket{+}\otimes\Ket{+}$.

The configuration $\boldsymbol{D}_0$ and the target set $S_0$ defined above
yield the following two theorems on the system-size-limited quantum state preparation.
Theorem~\ref{thm:multipartite} shows achievability of the system-size-limited quantum state preparation using a common resource state exhibiting multipartite entanglement.
In contrast, Theorem~\ref{thm:bipartite} is a no-go theorem on the same system-size-limited quantum state preparation for any common resource state consisting only of bipartite entanglement.
These theorems suggest difference in achievability of system-size-limited quantum state preparation between multipartite and bipartite entanglement.

\begin{theorem}
\label{thm:multipartite}
    \textit{Multipartite entanglement for a system-size-limited quantum state preparation in the static setting.}
    The system-size-limited quantum state preparation in the static setting for the configuration $\boldsymbol{D}_0$ defined as Equation~\eqref{eq:d} and the target set $S_0$ defined as Equation~\eqref{eq:s_0}
    is achievable using a common resource state exhibiting multipartite entanglement.
\end{theorem}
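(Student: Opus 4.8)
The plan is to prove achievability by an explicit measurement-based construction, taking $\Ket{\Phi_\textup{res}}$ to be the fifteen-qubit graph state drawn in Figure~\ref{fig:tree}. Its graph is the caterpillar tree whose eight white vertices $\mathcal{H}^{v_1},\ldots,\mathcal{H}^{v_8}$ carry the target qubits and whose seven black vertices $\mathcal{H}_\textup{a}^{v_1},\ldots,\mathcal{H}_\textup{a}^{v_7}$ are the auxiliary qubits, with each black vertex $\mathcal{H}_\textup{a}^{v_k}$ adjacent to exactly the two white vertices $\mathcal{H}^{v_k}$ and $\mathcal{H}^{v_{k+1}}$ on which the $k$-th gate $\exp(\textup{i}\alpha_k Z\otimes Z)$ of the circuit in Figure~\ref{fig:target_set} acts. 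Since $7\times 2 = 14$ edges join $15$ vertices in a connected graph, this is a tree, and every party's share fits inside the allowed configuration $\boldsymbol{D}_0$ of Equation~\eqref{eq:d}: each $v_k$ with $1\leqq k\leqq 7$ holds the two qubits $\mathcal{H}^{v_k}\otimes\mathcal{H}_\textup{a}^{v_k}$, while $v_8$ holds the single qubit $\mathcal{H}^{v_8}$.

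First I would establish the single-gate gadget of Figure~\ref{fig:correspondence} by direct computation: writing the three-qubit graph state with central qubit $c$ neighbouring $a,b$ as $\tfrac{1}{\sqrt 2}(\Ket{0}^c\Ket{++}^{ab}+\Ket{1}^c\Ket{--}^{ab})$, applying $\exp(\textup{i}\alpha X)$ to $c$, and projecting $c$ in the $Z$ basis, the two post-measurement branches are $\cos\alpha\Ket{++}^{ab}+\textup{i}\sin\alpha\Ket{--}^{ab}$ for outcome $0$ and $\textup{i}\sin\alpha\Ket{++}^{ab}+\cos\alpha\Ket{--}^{ab}$ for outcome $1$. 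The first branch equals $\exp(\textup{i}\alpha Z\otimes Z)\Ket{+}^a\Ket{+}^b$, and the second is carried to it by the local Pauli correction $Z^a\otimes Z^b$; hence the gadget deterministically realises the gate with a correction that is a local $Z$ on each of the two white qubits.

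Next I would compose the seven gadgets. The decisive structural fact is that the resource graph is bipartite between white and black vertices, so no two central (black) qubits are adjacent; in the stabiliser formalism the byproduct of measuring $\mathcal{H}_\textup{a}^{v_k}$ is supported only on its white neighbours $\mathcal{H}^{v_k},\mathcal{H}^{v_{k+1}}$, which are never measured. Consequently the seven measurements are mutually independent and \emph{non-adaptive}: every party $v_k$ with $1\leqq k\leqq 7$ may apply $\exp(\textup{i}\alpha_k X)$ to $\mathcal{H}_\textup{a}^{v_k}$, measure it in the $Z$ basis, and broadcast its outcome $m_k$, all in a single round. Because every $\exp(\textup{i}\alpha_k Z\otimes Z)$ and every correction $Z$ are diagonal and hence mutually commuting, the accumulated byproduct for outcomes $(m_1,\ldots,m_7)$ is the product $\prod_{k:m_k=1} Z^{v_k}Z^{v_{k+1}}$, a local unitary distributed among the parties and applied through LOCC within $\boldsymbol{D}_0$. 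Undoing it leaves exactly the circuit output $\Ket{\psi(\boldsymbol\alpha)}$ of Equation~\eqref{eq:s_0} on the white qubits, so the task is achieved for every $\boldsymbol\alpha$ and every measurement record; note that $v_8$ performs no measurement and only receives a correction on its single qubit, respecting its one-qubit bound.

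It remains to certify that $\Ket{\Phi_\textup{res}}$ genuinely \emph{exhibits} multipartite entanglement with respect to the eight-party split, where $v_1,\ldots,v_7$ each own two qubits. Full entanglement across every party-bipartition is immediate because the graph is connected, so any cut of the eight parties is crossed by at least one edge. The delicate point, and the main obstacle, is ruling out local-unitary equivalence to a tensor product of bipartite states among the eight parties, which is not settled by the qubit-level graph structure alone, since each party's system is larger than a qubit. The cleanest route I would take is to invoke the companion no-go Theorem~\ref{thm:bipartite}: since $\Ket{\Phi_\textup{res}}$ solves the system-size-limited preparation for $(\boldsymbol{D}_0,S_0)$ while no common resource state consisting only of bipartite entanglement can, $\Ket{\Phi_\textup{res}}$ cannot be such a state, and being fully entangled it therefore exhibits multipartite entanglement. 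A self-contained alternative is to exhibit three parties whose reduced state is genuinely tripartite entangled, a feature no bipartite-only state can reproduce.
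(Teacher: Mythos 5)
Your construction is essentially the paper's own proof: the same fifteen-qubit graph state $\Ket{\Phi_\textup{res}}$ of Figure~\ref{fig:tree}, the same single-gate gadget of Figure~\ref{fig:correspondence} following Reference~\cite{S18}, and the same measure-and-correct protocol on the auxiliary qubits $\mathcal{H}_\textup{a}^{v_1},\ldots,\mathcal{H}_\textup{a}^{v_7}$ within the configuration $\boldsymbol{D}_0$. Your two additions — observing that the $Z$-type byproducts make the seven measurements non-adaptive, and certifying via Theorem~\ref{thm:bipartite} that $\Ket{\Phi_\textup{res}}$ genuinely exhibits multipartite entanglement under the eight-party partition (a point the paper leaves implicit) — are both correct and strengthen the presentation without changing the approach.
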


\begin{theorem}
\label{thm:bipartite}
    \textit{Bipartite entanglement for a system-size-limited quantum state preparation in the static setting.}
    The system-size-limited quantum state preparation in the static setting for the configuration $\boldsymbol{D}_0$ defined as Equation~\eqref{eq:d} and the target set $S_0$ defined as Equation~\eqref{eq:s_0}
    is \textit{not} achievable using any common resource state consisting of bipartite entanglement.
\end{theorem}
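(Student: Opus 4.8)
The plan is to prove a no-go theorem by deriving a contradiction from the assumption that some common resource state $\Ket{\phi}$ consisting only of bipartite entanglement can prepare every state in $S_0$ within the configuration $\boldsymbol{D}_0$. The central leverage comes from the local dimension bound: each of $v_1,\ldots,v_7$ has a total of $\dim\overline{\mathcal{H}}^{v_k}=4$ (two qubits), so the resource state $\Ket{\phi}$ and all intermediate states live in a fifteen-qubit system at most. Since $\Ket{\phi}$ consists only of bipartite entanglement, by the definition in Section~\ref{sec:network} it is locally unitarily equivalent to $\bigotimes_{e\in E}\Ket{\phi_e}^e$ for some graph $G=(V,E)$, and each edge $e$ carries a bipartite entangled pair whose Schmidt rank across that edge is bounded by the smaller of the two local dimensions. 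The first step is therefore to quantify, for \emph{any} bipartition of the eight target parties, how much entanglement $\Ket{\phi}$ can supply across that cut, and compare it against the entanglement that the target states $\Ket{\psi(\boldsymbol\alpha)}$ demand across the same cut.

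The key technical step is to find a bipartition $(T,\overline{T})$ of $\{v_1,\ldots,v_8\}$ such that (i) some state $\Ket{\psi(\boldsymbol\alpha)}\in S_0$ has large Schmidt rank (or large entanglement entropy) across $(T,\overline{T})$, forcing $\Ket{\phi}$ to carry at least that much entanglement across the same cut by the LOCC monotonicity of the Schmidt rank (Lemma~\ref{lem:pure_convertibility} and the monotonicity cited in Section~\ref{sec:entanglement}); yet (ii) the bipartite-entanglement structure of $\Ket{\phi}$, combined with the dimension constraints $D_0^{(v_k)}=4$, cannot simultaneously satisfy the entanglement demands across \emph{all} such cuts. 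The circuit in Figure~\ref{fig:target_set} has a line/path structure of $ZZ$ couplings, so by choosing all $\alpha_k=\pi/4$ the state $\Ket{\psi(\boldsymbol\alpha_{\pi/4})}$ is fully entangled across every connected bipartition. I would compute the Schmidt ranks of this worst-case state across each of the seven ``consecutive'' cuts that separate $\{v_1,\ldots,v_m\}$ from the rest, and show that a collection of bipartite pairs sharing the fixed two-qubit budget at each vertex $v_k$ cannot route enough Schmidt rank across all cuts at once. The counting argument resembles a flow/cut bound on a graph: each internal vertex can forward only two qubits' worth of entanglement, so a bipartite-only resource behaves like edges in a network whose cut capacities are capped, whereas the target states require capacities exceeding what any such network achieves.

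Concretely, I would argue as follows. Because $\Ket{\phi}$ is a tensor product of bipartite pairs over a graph $G$, for any cut $(T,\overline{T})$ the Schmidt rank of $\Ket{\phi}$ across the cut factorizes as the product of the Schmidt ranks of the individual pairs crossing that cut, and the log of this equals a sum of per-edge contributions. Each edge contributes at most $\log_2 2 = 1$ ebit per qubit it occupies at a bounded-dimension vertex; since each of $v_1,\ldots,v_7$ holds only two qubits total (one of which, $\mathcal{H}^{v_k}$, must ultimately end up carrying the target-state qubit), the number of ebits a bipartite-only $\Ket{\phi}$ can deposit across a given vertex is tightly constrained. Then I would exhibit a specific target state whose entanglement pattern across \emph{several} simultaneous cuts exceeds the simultaneous capacity of any bipartite decomposition under $\boldsymbol{D}_0$. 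The contradiction arises because a single multipartite graph state (as in Figures~\ref{fig:correspondence} and~\ref{fig:tree}) can store correlations that straddle many cuts using shared auxiliary qubits, whereas a bipartite collection must commit each auxiliary qubit to one particular edge.

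The main obstacle I anticipate is step (ii): ruling out \emph{every} graph $G$ and \emph{every} choice of bipartite pairs, not merely the obvious ones, while respecting that LOCC within the configuration permits arbitrary local unitaries and measurements (including one auxiliary working qubit per indirect measurement). The subtlety is that LOCC can redistribute entanglement and even use classical communication to correlate local operations, so a naive cut-counting bound on the initial resource need not directly bound what is reachable. The careful part will be to invoke the right monotone—the Schmidt rank across a fixed bipartition is LOCC-monotone and insensitive to classical communication—and to show that a bipartite-only $\Ket{\phi}$ fails the reachability condition for at least one target state simultaneously across the constraining cuts, so that no LOCC map within $\boldsymbol{D}_0$ can compensate. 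I expect the heart of the argument to be a pigeonhole/dimension-counting lemma showing that the fixed two-qubit-per-party budget forces any bipartite resource to be deficient across a cut where the target demands more, which is precisely where multipartite entanglement, by sharing auxiliary qubits among several correlations, succeeds.
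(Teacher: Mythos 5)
Your overall strategy---pick the fully entangled target state $\Ket{\psi(\boldsymbol\alpha_{\pi/4})}$, use LOCC monotonicity of the Schmidt rank across fixed bipartitions, and show that the per-party dimension budget prevents any bipartite-only resource from supplying enough Schmidt rank across some cut---is the same skeleton as the paper's proof. However, you have left the decisive step unproven, and you explicitly flag it yourself as ``the main obstacle I anticipate'': ruling out \emph{every} graph $G$ and \emph{every} choice of bipartite pairs. The paper closes this gap with a structural reduction that your proposal does not contain. The mechanism is to start from $v_8$, whose local dimension is $2$: since the target is fully entangled, $v_8$ must hold one qubit of exactly one bipartite pair, shared with some $u_7\in\{v_1,\ldots,v_7\}$. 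That forces $\dim\mathcal{H}^{u_7}_{\{u_7,v_8\}}=2$, leaving $u_7$ exactly one more qubit, which can host at most one further bipartite pair with some $u_6$, and so on. Iterating pins down the resource to be precisely seven two-qubit entangled states arranged on a \emph{line-topology tree} $u_1\text{--}u_2\text{--}\cdots\text{--}u_7\text{--}v_8$ with $(u_1,\ldots,u_7)$ a permutation of $(v_1,\ldots,v_7)$. Without this reduction, your ``flow/cut capacity'' heuristic has to handle arbitrary graphs with multiple edges crossing a cut, where the relevant bound is a product of per-edge Schmidt ranks and Lemma~\ref{lem:graph_associated} (which is stated only for trees) does not directly apply; your proposal gives no precise lemma that does this work.

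The second, smaller gap is in the verification step. You propose to check the seven ``consecutive'' cuts separating $\{v_1,\ldots,v_m\}$ from the rest, i.e.\ cuts in the circuit's natural ordering. But once the resource is known to be a line of Bell pairs in an \emph{arbitrary} order $(u_1,\ldots,u_7,v_8)$, the cuts you must test are the edge cuts of that line, which depend on the permutation. The paper therefore computes $R_e\bigl(\Ket{\psi(\boldsymbol\alpha_{\pi/4})}\bigr)$ for every edge of all $7!=5040$ line orderings (after rescaling the state to have integer entries so the rank computation is exact) and finds that each ordering has some edge with $R_e>2$, whereas a line of Bell pairs can supply Schmidt rank at most $2$ across each edge by Lemma~\ref{lem:graph_associated}. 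Checking only the identity ordering would not suffice. So the two things you need to add are: (i) the iterative dimension argument propagating from the single-qubit party $v_8$ that forces the line topology, and (ii) the exhaustive check over all permutations rather than one distinguished family of cuts.
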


Note that while shallower quantum circuits having a similar structure to the circuit in Figure~\ref{fig:target_set} are not sufficient for proving the difference between multipartite and bipartite entanglement, the example in Theorems~\ref{thm:multipartite} and~\ref{thm:bipartite} is not necessarily the simplest, and other target sets of states having similar properties also exist.
In particular, the following theorem shows another example, where the target set $S_1$ is of $2m$-qudit states, the size of each qudit is $D\geqq 2$, and each state in $S_1$ has the maximal Schmidt rank with respect to any bipartition between $m$ qudits and the other $m$ qudits.
Random weighted graph states or random pure states fulfill this condition, for which the reduced states have almost maximum entropy for any bipartition~\cite{C20,P7,H15}.
For any resource state consisting of bipartite entanglement to obtain states in $S_1$ by LOCC, or even by stochastic LOCC, there has to be at least one party for which the local quantum system size for storing this resource state needs to be almost quadratically larger than $D$, that is, greater than or equal to $D^{2-\frac{1}{m}}$.
Also note that for some special configurations of local system sizes, these differences between multipartite and bipartite entanglement do not arise, especially in cases of~\cite{R8}
\begin{equation}
  \dim\overline{\mathcal{H}}^{v_1}\geqq\prod_{k=2}^{N}\dim\overline{\mathcal{H}}^{v_k}.
\end{equation}

\begin{theorem}
\label{prp:max_rank}
    \textit{Requirement for resource states consisting of bipartite entanglement for preparing a multipartite entangled state having maximal Schmidt ranks.}
    Consider a $2m$-qudit state $\Ket{\psi}\in\mathcal{H}\coloneq{\left(\mathbb{C}^d\right)}^{\otimes 2m}$ of local system size $D$ which has the maximal Schmidt rank with respect to bipartite cuts between any $m$ qudits and the other $m$ qudits; that is, for any such bipartite cut, the Schmidt rank is $D^m$.
    If $2m$ parties $v_1\,,\ldots,v_{2m}$ prepare $\Ket{\psi}$ by LOCC from any resource state only consisting of bipartite entanglement,
    then there has to exist at least one party $v\in V\coloneq\left\{v_1\,,\ldots,v_{2m}\right\}$ for which the local system size $\dim\overline{\mathcal{H}}^{v}$ for storing this resource state is almost quadratically larger, that is,
    \begin{equation}
        \label{eq:full_schmidt_rank}
        \max_{v_k\in V} \left\{\dim\overline{\mathcal{H}}^{v_k}\right\} \geqq D^{2- \frac{1}{m}}.
    \end{equation}
\end{theorem}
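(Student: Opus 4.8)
Theorem \ref{prp:max_rank} asserts a lower bound on the largest local system size needed by any bipartite-entanglement-only resource state that can be converted by LOCC into a $2m$-qudit state of maximal Schmidt rank across every balanced cut. Let me sketch my plan.

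The plan is to reason via the LOCC monotonicity of the Schmidt rank (Lemma~\ref{lem:graph_associated} and the general principle used throughout the excerpt) applied across \emph{all} balanced bipartitions simultaneously. A resource state $\Ket{\phi}$ consisting of bipartite entanglement is, up to local unitaries, of the form $\bigotimes_{e\in E}\Ket{\phi_e}^e$ for some graph $G=(V,E)$ on the vertex set $V=\{v_1,\ldots,v_{2m}\}$. For any bipartition $(S,\bar S)$ of $V$ with $|S|=|\bar S|=m$, the Schmidt rank of $\Ket{\phi}$ across that cut is the product of the Schmidt ranks of those $\Ket{\phi_e}$ whose edge $e$ straddles the cut (edges internal to $S$ or $\bar S$ contribute factor $1$). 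Since LOCC cannot increase Schmidt rank, achievability of $\Ket{\psi}$ forces this straddling product to be at least $D^m$ for \emph{every} balanced cut. First I would record this as the central inequality: for each balanced $(S,\bar S)$,
\begin{equation}
  \prod_{e\in E:\, e\text{ crosses }(S,\bar S)} R(\Ket{\phi_e}) \geqq D^m,
\end{equation}
where $R(\Ket{\phi_e})$ denotes the Schmidt rank of the edge state.

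Next I would convert the cut constraints into a statement about a single vertex's local dimension. The local system size at a vertex $v$ storing $\Ket{\phi}$ is at least the product of the Schmidt ranks of all edges incident to $v$, i.e. $\dim\overline{\mathcal{H}}^{v}\geqq\prod_{e\ni v} R(\Ket{\phi_e})$. The combinatorial heart is then: given a graph with edge-weights $w_e:=\log_D R(\Ket{\phi_e})\geqq 0$ such that every balanced cut has crossing-weight at least $m$, show that some vertex has incident-weight at least $2-\tfrac1m$. I would argue by a counting/averaging argument: sum the crossing-weight lower bound over all $\binom{2m}{m}$ balanced cuts (or a suitable symmetric average), and compare with the sum of incident weights. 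Each edge $e$ crosses a fraction $\rho_m$ of the balanced cuts, where $\rho_m=\binom{2m-2}{m-1}/\binom{2m}{m}$ is the probability that a uniformly random balanced cut separates its two endpoints; averaging gives $\rho_m\sum_e w_e \geqq m$, hence $\sum_e w_e\geqq m/\rho_m$. Since $\sum_v(\text{incident-weight of }v)=2\sum_e w_e$, the maximum incident-weight is at least $\tfrac1{2m}\cdot 2\sum_e w_e=\tfrac{1}{m}\sum_e w_e\geqq \tfrac1{\rho_m}$, and I would then verify that $1/\rho_m=(2m-1)/(m-1)$ rearranges to yield the exponent $2-\tfrac1m$ after tightening with the integrality/structure of the cut constraints.

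The main obstacle I anticipate is that the crude averaging above yields $\tfrac1\rho_m=\tfrac{2m-1}{m-1}$, which for large $m$ tends to $2$ but is not literally $2-\tfrac1m$; so the plain averaging bound is slightly off from the target exponent and I will need a sharper argument. The fix I would pursue is to exploit that the cut constraints are \emph{simultaneous} rather than only on-average: I would look at a maximum-weight vertex $v^\ast$ and pair balanced cuts that isolate the high-weight edges at $v^\ast$ against cuts that do not, or use an LP-duality / flow argument showing the worst case is attained by a concrete configuration (e.g. a star or a carefully balanced tree), and then compute the exponent exactly for that extremal configuration to land on $2-\tfrac1m$. I would finish by exhibiting that this extremal bound is achieved by the matching bipartite-entanglement construction, confirming the bound $\max_{v_k\in V}\{\dim\overline{\mathcal{H}}^{v_k}\}\geqq D^{2-\frac1m}$ is tight, and I would separately note that no such obstruction arises in the degenerate configuration where one party already holds enough dimension, i.e.\ when $\dim\overline{\mathcal{H}}^{v_1}\geqq\prod_{k=2}^N\dim\overline{\mathcal{H}}^{v_k}$.
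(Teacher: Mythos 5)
Your approach is the same as the paper's: the paper multiplies the cut inequality $\prod_{e\in C}M_e\geqq D^m$ over all $\binom{2m}{m}/2$ balanced cuts, notes that each edge appears $\binom{2m-2}{m-1}$ times in the product, deduces $\prod_{e\in E}M_e\geqq D^{2m-1}$, and then bounds $\max_{v}\dim\overline{\mathcal{H}}^{v}$ by the geometric mean $\bigl(\prod_{v}\dim\overline{\mathcal{H}}^{v}\bigr)^{1/2m}\geqq\bigl(\prod_{e}M_e^2\bigr)^{1/2m}$. Your log-domain averaging over cuts followed by averaging incident weights over vertices is the identical argument in additive notation.

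The one substantive problem is that the ``obstacle'' you anticipate is a phantom created by an arithmetic slip. The probability that a uniformly random balanced cut separates the two endpoints of an edge is
\begin{equation}
  \rho_m=\frac{\binom{2m-2}{m-1}}{\frac{1}{2}\binom{2m}{m}}=\frac{2\binom{2m-2}{m-1}}{\binom{2m}{m}}=\frac{m}{2m-1},
\end{equation}
not $\binom{2m-2}{m-1}/\binom{2m}{m}$, and in any case $1/\rho_m$ is not $(2m-1)/(m-1)$. With the correct value, your own chain of inequalities closes exactly: $\sum_e w_e\geqq m/\rho_m=2m-1$, and the maximum incident weight is at least $\frac{1}{m}\sum_e w_e\geqq\frac{2m-1}{m}=2-\frac{1}{m}$. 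So the plain averaging already yields the stated exponent, and the LP-duality/extremal-configuration detour you propose as a ``fix'' is unnecessary (and, as written, too vague to count as a proof). Redo the count of cuts crossed by a fixed edge and your argument is complete and matches the paper's. Your closing remarks on near-tightness via the symmetric distribution of maximally entangled states and on the degenerate configuration agree with the paper's comments following the theorem.
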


Note that the lower bound of local system sizes in Inequality~\eqref{eq:full_schmidt_rank} is almost sufficient for fulfilling the necessary condition~\eqref{eq:schmidt_rank_condition} on the Schmidt ranks in the proof of Theorem~\ref{prp:max_rank} by storing a symmetric distribution of maximally entangled states shared between all pairs of the parties.
In this case, since each party shares maximally entangled states with the other $2m-1$ parties, the maximally entangled state corresponding to each $e\in E$ satisfies
\begin{equation}
  M_e=\left\lceil D^\frac{1}{m}\right\rceil,
\end{equation}
and the local system size for each $v\in V$ is
\begin{equation}
  {\left\lceil D^\frac{1}{m}\right\rceil}^{2m-1},
\end{equation}
where $\lceil{}\cdots{}\rceil$ is the ceiling function.

The proofs of Theorems~\ref{thm:multipartite},~\ref{thm:bipartite}, and~\ref{prp:max_rank} are as follows.

\begin{proof}[Proof of Theorem~\ref{thm:multipartite}]
  The proof is by construction of a common resource state exhibiting multipartite entanglement for the target set $S_0$.
  As a common resource state exhibiting multipartite entanglement, a class of graph states proposed in Reference~\cite{S18} can be used.
  A graph state~\cite{H12,H13} is a multi-qubit entangled state characterized by a graph $G=(V,E)$.
  Note that while graphs in this thesis also represent distribution of bipartite entanglement, a graph state is a different concept, which is a state exhibiting multipartite entanglement obtained for a graph $G=(V,E)$ as follows:
  first, for each vertex $v_k\in V$, a qubit labeled as $v_k$ is initialized as
  \begin{equation}
    \Ket{+}^{v_k}\coloneq\frac{1}{\sqrt{2}}\left(\Ket{0}^{v_k}+\Ket{1}^{v_k}\right),
  \end{equation}
  and then, for each edge $e=\left\{v_k\,,v_{k^\prime}\right\}\in E$, the controlled-$Z$ gate
  \begin{equation}
    \label{eq:cz}
      CZ^{v_k\,,v_{k^\prime}}
      \coloneq{\left(\Ket{00}\Bra{00}+\Ket{01}\Bra{01}+\Ket{10}\Bra{10}-\Ket{11}\Bra{11}\right)}^{v_k\,,v_{k^\prime}}
  \end{equation}
  is applied to two qubits labeled as $v_k$ and $v_{k^\prime}$.
  Reference~\cite{S18} proposes an LOCC protocol for preparing any pure state of an arbitrary number of qubits by performing sequential projective measurements and local unitary corrections on a particular type of graph states.
  To see how this protocol works, consider the three-vertex graph shown in Figure~\ref{fig:correspondence} as a simple example.
  The graph state $\Ket{\Phi}^{v_1\,,v_2\,,v_3}$ represented by this graph is invariant under a local unitary transformation $X^{v_1}\otimes Z^{v_2}\otimes Z^{v_3}$, that is,
  \begin{equation}
    X^{v_1}\otimes Z^{v_2}\otimes Z^{v_3}\Ket{\Phi}^{v_1\,,v_2\,,v_3}=\Ket{\Phi}^{v_1\,,v_2\,,v_3},
  \end{equation}
  where $X$ and $Z$ are the Pauli operators on a qubit.
  Thus, if the unitary operator $\exp\left(\textup{i}\alpha X^{v_1}\right)$ parameterized by $\alpha$ is performed on qubit $v_1\,$, the action is equivalent to
  \begin{equation}
    \begin{split}
      \exp\left(\textup{i}\alpha X^{v_1}\right)\otimes\mathbb{1}^{v_2}\otimes\mathbb{1}^{v_3}\Ket{\Phi}^{v_1\,,v_2\,,v_3}
      &=\mathbb{1}^{v_1}\otimes\exp\left(\textup{i}\alpha Z^{v_2}\otimes Z^{v_3}\right) \Ket{\Phi}^{v_1\,,v_2\,,v_3},
    \end{split}
  \end{equation}
  which can be shown using the Taylor series of the exponential function.
  Then, it is straightforward to verify that if $\exp\left(\textup{i}\alpha X^{v_1}\right)$ and a measurement in $Z$ basis $\left\{\Ket{0},\Ket{1}\right\}$ are performed on the qubit $v_1\,$, the post-measurement state of two qubits $v_2$ and $v_3$ can be deterministically transformed by local unitary corrections $\mathbb{1}^{v_2}\otimes\mathbb{1}^{v_3}$ or $Z^{v_2}\otimes Z^{v_3}$ conditioned by the measurement outcome $\Ket{0}$ or $\Ket{1}$, respectively, into
  \begin{equation}
    \label{eq:psi_alpha}
    \Ket{\psi\left(\alpha\right)}^{v_2\,,v_3}\coloneq\exp\left(\textup{i}\alpha Z^{v_2}\otimes Z^{v_3}\right)\left(\Ket{+}^{v_2}\otimes\Ket{+}^{v_3}\right).
  \end{equation}
  In the same way, it is shown in Reference~\cite{S18} that any quantum circuit consisting of one-qubit Clifford gates and multi-qubit gates
  \begin{equation}
    \exp\left(\textup{i}\alpha Z\otimes Z\otimes\cdots\otimes Z\right)
  \end{equation}
  parameterized by $\alpha$ can be implemented by performing sequential projective measurements and local unitary corrections on a particular graph state corresponding to the quantum circuit.
  In addition, it is shown that any pure state of an arbitrary number of qubits is locally unitarily equivalent to a pure state generated by a quantum circuit consisting of these types of gates.

  As for the common resource state for the target set $S_0\,$, the fifteen-qubit graph state $\Ket{\Phi_\textup{res}}$ illustrated in Figure~\ref{fig:tree} held by the parties $v_1\,,\ldots,v_8$ can be used.
  In the same way as explained above, there exists a protocol for transforming the graph state $\Ket{\Phi_\textup{res}}$ in Figure~\ref{fig:tree} into any state $\Ket{\psi\left(\boldsymbol{\alpha}\right)}\in S_0$.
  In this protocol, each of parties $v_k\in\left\{v_1\,,\ldots,v_7\right\}$ performs a unitary $\exp\left(\textup{i}\alpha_k X^{v_k}\right)$ parameterized by $\alpha_k$ and a measurement in the $Z$ basis $\left\{\Ket{0},\Ket{1}\right\}$ on the auxiliary qubit represented by $\mathcal{H}_\textup{a}^{v_k}$, followed by local unitary corrections on qubits other than $\mathcal{H}_\textup{a}^{v_k}$ conditioned by the measurement outcome.
  After the parties performing this protocol, the parties obtain $\Ket{\psi\left(\boldsymbol\alpha\right)}\in S_0$ deterministically for any parameters $\boldsymbol\alpha=\left(\alpha_1\,,\ldots,\alpha_7\right)$.
\end{proof}

\begin{proof}[Proof of Theorem~\ref{thm:bipartite}]
  In this proof, a necessary condition of a resource state consisting of bipartite entanglement for preparing a state $\Ket{\psi\left(\boldsymbol\alpha_{\frac{\pi}{4}}\right)}\in S_0$ by LOCC within the configuration $\boldsymbol D_0$ is first derived, and then, it is shown that any resource state consisting of bipartite entanglement for preparing $\Ket{\psi\left(\boldsymbol\alpha_{\frac{\pi}{4}}\right)}$ cannot satisfy this necessary condition.

    A necessary condition for preparing the state $\Ket{\psi\left(\boldsymbol\alpha_{\frac{\pi}{4}}\right)}\in S_0$ from a resource state consisting of bipartite entanglement by LOCC within the configuration $\boldsymbol D_0$ is derived as follows.
    Observe that the state $\Ket{\psi\left(\boldsymbol\alpha_{\frac{\pi}{4}}\right)}$ is fully entangled, that is, entangled with respect to any bipartition of the eight qubits.
    To prepare a fully entangled state, the resource state at party $v_8$ has to be entangled with some other parties.
    Since
    \begin{equation}
      \dim \overline{\mathcal{H}}^{v_8}=2,
    \end{equation}
    the party $v_8$ can store only one qubit of a bipartite resource state entangled with another party, which is labeled
    \begin{equation}
      u_7\in\{v_1\,,\ldots,v_7\}.
    \end{equation}
    The quantum system $\overline{\mathcal{H}}^{u_7}$ at $u_7$ is decomposed into
    \begin{equation}
      \overline{\mathcal{H}}^{u_7}=\mathcal{H}^{u_7}_{\{u_7\,,v_8\}}\otimes\mathcal{H}^{u_7}_\textup{r},
    \end{equation}
    where $\mathcal{H}^{u_7}_{\{u_7\,,v_8\}}$ is a system of more than one dimension for the bipartite entangled resource state shared with $v_8\,$, and $\mathcal{H}^{u_7}_\textup{r}$ the remaining quantum system.
    It is necessary that
    \begin{equation}
        \label{eq:dim}
        \begin{split}
            &\dim\mathcal{H}^{u_7}_{\{u_7\,,v_8\}}=2,\\
            &\dim\mathcal{H}^{u_7}_\textup{r}=2,
        \end{split}
    \end{equation}
    which can be shown by contradiction as follows.
    Assume that
    \begin{equation}
      \dim\mathcal{H}^{u_7}_{\{u_7\,,v_8\}}>2.
    \end{equation}
    Then, it is necessary that
    \begin{equation}
      \dim\mathcal{H}^{u_7}_\textup{r}<2,
    \end{equation}
    and the resource state shared between the parties $u_7$ and $v_8$ cannot be entangled with any of the other parties.
    This contradicts the assumption that a fully entangled state can be prepared, and Equation~\eqref{eq:dim} is shown.
    Since it holds that
    \begin{equation}
      \dim\mathcal{H}^{u_7}_\textup{r}=2,
    \end{equation}
    the party $u_7$ can store another single qubit of a bipartite resource state entangled with a party other than $v_8\,$, which is labeled
    \begin{equation}
      u_6\in\{v_1\,,\ldots,v_7\}\setminus\{u_7\}.
    \end{equation}
    By iterating this argument, any resource state consisting of bipartite entanglement for preparing a fully entangled state by LOCC within the configuration $\boldsymbol{D}_0$ is required to be seven two-qubit entangled states shared between $u_1$--$u_2\,$, $\ldots$, $u_6$--$u_7\,$, and $u_7$--$v_8\,$, respectively, where
    \begin{equation}
        \label{eq:perm}
        (u_1\,,\ldots,u_7)~\text{is a permutation of}~(v_1\,,\ldots,v_7).
    \end{equation}
    Note that although $u_1$ uses only one qubit in this case, the remaining system of $u_1\,$, which is two dimensional, cannot be used for sharing an entangled state with the other parties, since there is no dimension left in the quantum systems of the other parties.
    Therefore, the distribution of the two-qubit entangled states is represented by a line-topology graph, as illustrated in Figure~\ref{fig:permutation}.
    Note that this line-topology graph is a tree.
    Since the target set $S_0$ includes a fully entangled state $\Ket{\psi\left(\boldsymbol\alpha_\frac{\pi}{4}\right)}$,
    it is necessary that any common resource state consisting of bipartite entanglement for $S_0$ within the configuration $\boldsymbol D_0$ is a state consisting of seven two-qubit entangled states represented by the line-topology tree as shown in Figure~\ref{fig:permutation}.

\begin{figure}[t!]
    \centering
    \includegraphics[width=4in]{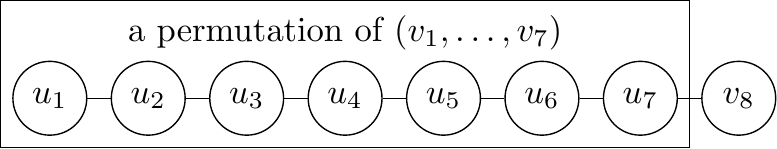}
    \caption[A line-topology tree representing a resource state consisting of bipartite entanglement to prepare a fully entangled state within a configuration of local system sizes.]{A line-topology tree representing a resource state consisting of bipartite entanglement to prepare a fully entangled state within the configuration $\boldsymbol{D}_0$ of local system sizes. Since the target set $S_0$ includes a fully entangled state $\Ket{\psi\left(\boldsymbol\alpha_\frac{\pi}{4}\right)}$, the common resource states consisting of bipartite entanglement for $S_0$ have to be represented by the line-topology tree in the figure, which leads to a contradiction with the condition given in Lemma~\ref{lem:graph_associated} as shown in the main text.}
\label{fig:permutation}
\end{figure}

    It is shown that the state $\Ket{\psi\left(\boldsymbol\alpha_\frac{\pi}{4}\right)}$ cannot be prepared from any such resource state as follows.
    Since any two-qubit entangled state can be obtained by LOCC from a Bell state
    \begin{equation}
      \frac{1}{\sqrt{2}}\left(\Ket{0}\otimes\Ket{0}+\Ket{1}\otimes\Ket{1}\right),
    \end{equation}
    it suffices to consider resource states consisting of seven Bell states represented by the line-topology tree.
    Thus, the condition on the Schmidt ranks given in Lemma~\ref{lem:graph_associated} implies that the state $\Ket{\psi\left(\boldsymbol\alpha_\frac{\pi}{4}\right)}$ can be prepared from resource states consisting of seven Bell states represented by a line-topology tree if and only if for any edge $e$ of the line-topology tree
    \begin{equation}
        R_e\left(\Ket{\psi\left(\boldsymbol\alpha_\frac{\pi}{4}\right)}\right)\leqq 2,
    \end{equation}
    where the notations are the same as those in Lemma~\ref{lem:graph_associated}.
    In other words, the Schmidt rank of $\Ket{\psi\left(\boldsymbol\alpha_\frac{\pi}{4}\right)}$ with respect to each edge of the line-topology tree needs to be smaller than or equal to two.
    However, the explicit calculation of $R_e\left(\Ket{\psi\left(\boldsymbol\alpha_\frac{\pi}{4}\right)}\right)$ for all the edges $e$ of all the $7!=5040$ different trees obtained from the permutations of $v_1\,,\ldots,v_7$ in Equation~\eqref{eq:perm} shows that, for any of the permutations, there exists an edge $e$ such that
    \begin{equation}
        \label{eq:r_e}
        R_e\left(\Ket{\psi\left(\boldsymbol\alpha_\frac{\pi}{4}\right)}\right)>2.
    \end{equation}
    The Schmidt rank $R_e\left(\Ket{\psi\left(\boldsymbol\alpha_\frac{\pi}{4}\right)}\right)$ in Inequality~\eqref{eq:r_e} can be exactly calculated with the help of a computer program.
    Although computers cannot calculate irrational numbers exactly, the Schmidt rank $R_e\left(\Ket{\psi\left(\boldsymbol\alpha_\frac{\pi}{4}\right)}\right)$ of a vector $\Ket{\psi\left(\boldsymbol\alpha_\frac{\pi}{4}\right)}$ with irrational elements can be reduced to that of a vector only with integer elements.
    To remove irrational coefficients for normalization of the state $\Ket{+}$ and the gates $\exp\left(\textup{i}\frac{\pi}{4}Z\otimes Z\right)$,
    substitute $\Ket{+}$ and $\exp\left(\textup{i}\frac{\pi}{4}Z\otimes Z\right)$ in the circuit in Figure~\ref{fig:target_set} with $\sqrt{2}\Ket{+}$ and $\sqrt{2}\exp\left(\textup{i}\frac{\pi}{4}Z\otimes Z\right)$, respectively.
    The resulting vector
    \begin{equation}
      \Ket{\tilde\psi\left(\boldsymbol\alpha_\frac{\pi}{4}\right)}\coloneq 2^{\frac{15}{2}}\Ket{\psi\left(\boldsymbol\alpha_\frac{\pi}{4}\right)}
    \end{equation}
    has the same Schmidt ranks as $\Ket{\psi\left(\boldsymbol\alpha_\frac{\pi}{4}\right)}$ for any bipartition, and all the elements of $\Ket{\tilde\psi\left(\boldsymbol\alpha_\frac{\pi}{4}\right)}$ are complex numbers whose real and imaginary parts are both integers by construction.
    Therefore, Schmidt ranks of $\Ket{\psi\left(\boldsymbol\alpha_\frac{\pi}{4}\right)}$ can be exactly obtained by calculating those of $\Ket{\tilde\psi\left(\boldsymbol\alpha_\frac{\pi}{4}\right)}$ by computer.

    The calculation of $R_e\left(\Ket{\psi\left(\boldsymbol\alpha_\frac{\pi}{4}\right)}\right)$ implies that the state $\Ket{\psi\left(\boldsymbol\alpha_\frac{\pi}{4}\right)}$ cannot be prepared from any resource state consisting of the seven Bell states.
    Due to this calculation, it is concluded that there exists no common resource state consisting of bipartite entanglement for the target set $S_0$ within the configuration $\boldsymbol D_0$.
\end{proof}

\begin{proof}[Proof of Theorem~\ref{prp:max_rank}]
    Since any bipartite state can be obtained from a maximally entangled state, it suffices to evaluate $\dim\overline{\mathcal{H}}^{v_k}$ for storing a resource state consisting of bipartite maximally entangled states distributed according to the complete graph $K=\left(V,E\right)$, that is, the fully connected graph for the $2m$ parties.
    Let $M_e\in\left\{1,2,\ldots\right\}$ denote the Schmidt rank of the maximally entangled state for each edge $e\in E$.

    First, a lower bound of the total system size for storing $\bigotimes_{e\in E}\Ket{\Phi_{M_e}^+}^e$, that is, $\prod_{e\in E}{\left(M_e\right)}^2$, is derived.
    Consider an edge cut $C$~\cite{B22} of $K$ between any $m$ vertices and the other $m$ vertices.
    Since the Schmidt rank is monotonically nonincreasing under LOCC,
    it is necessary that, for any $C$,
    \begin{equation}
        \label{eq:schmidt_rank_condition}
        \prod_{e\in C}M_e\geqq D^m.
    \end{equation}
    Considering Inequality~\eqref{eq:schmidt_rank_condition} for all the
    \begin{equation}
      {{2m\choose m}}/2
    \end{equation}
    possible choices of $C$ between any $m$ vertices and the other $m$ vertices and taking the products of the right- and left-hand sides of these inequalities yield
    \begin{equation}
      \prod_{C} \prod_{e\in C} M_e\geqq D^{m{\frac{{2m\choose m}}{2}}}.
    \end{equation}
    Since $M_e$ for each $e \in  E$ appears ${2m-2\choose m-1}$ times in the product on the left-hand side, the last inequality can be written as
    \begin{equation}
        \prod_{C} \prod_{e\in C} M_e =\prod_{e\in E}{\left(M_e\right)}^{{2m-2}\choose{m-1}}\geqq D^{m\frac{{2m\choose m}}{2}}.
    \end{equation}
    Therefore, a lower bound of the total system size is
    \begin{equation}
        \prod_{e\in E}{\left(M_e\right)}^2\geqq D^{2\left(2m-1\right)}.
    \end{equation}

    Since the total system size for storing $\bigotimes_{e\in E}\Ket{\Phi_{M_e}^+}^e$ is written as
    \begin{equation}
        \dim\overline{\mathcal{H}}=\prod_{v_k\in V}\dim\overline{\mathcal{H}}^{v_k},
    \end{equation}
    it holds that
    \begin{equation}
        \prod_{v_k\in V}\dim\overline{\mathcal{H}}^{v_k}\geqq\prod_{e\in E}{\left(M_e\right)}^2\geqq D^{2\left(2m-1\right)}.
    \end{equation}
    Therefore, it is obtained that
    \begin{equation}
        \max_{v_k\in V} \left\{\dim\overline{\mathcal{H}}^{v_k}\right\}\geqq{\left(\prod_{v_k\in V}\dim\overline{\mathcal{H}}^{v_k}\right)}^{\frac{1}{2m}}\geqq D^{2-\frac{1}{m}},
    \end{equation}
    which yields the conclusion.
\end{proof}

\section{\label{sec:analysis2}System-size-limited quantum state preparation in the dynamic setting}
This section analyzes the difference in system-size-limited quantum state preparation appearing in the dynamic setting.
Before analyzing multipartite cases, a simpler bipartite case is discussed to clarify the difference between the static setting and the dynamic setting.
Consider two parties $v_1$ and $v_2\,$, where each party has two qubits; that is, the configuration $\left(D^{\left(v_1\right)},D^{\left(v_2\right)}\right)$ is given by
\begin{align}
    D^{\left(v_1\right)}&=\dim\overline{\mathcal{H}}^{v_1} = 4,\\
    D^{\left(v_2\right)}&=\dim\overline{\mathcal{H}}^{v_1} = 4.
\end{align}
In this case, these two parties can store an entangled resource state with Schmidt rank four in the static setting.
However, in the dynamic setting, the parties can prepare an entangled resource state with Schmidt rank at most two, which is shown as follows.
Consider any shared state $\Ket{\phi}^{v_1\,,v_2}$ after the last round of quantum communication for preparing $\Ket{\phi}^{v_1\,,v_2}$, where it is assumed that the direction of the quantum communication in the last round is from $v_1$ to $v_2$ without loss of generality.
Since the quantum communication sends out at least one qubit from $v_1\,$, the rank of $v_1$'s reduced state for $\Ket{\phi}^{v_1\,,v_2}$ is at most two; that is, the Schmidt rank of $\Ket{\phi}^{v_1\,,v_2}$ is at most two.
Since the Schmidt rank is monotonically nonincreasing by LOCC, $v_1$ and $v_2$ after the last round of quantum communication cannot prepare an entangled resource state with Schmidt rank more than two, which yields the conclusion.

Although this two-party example is trivial, nontrivial cases of more than two parties are shown as follows.
Theorem~\ref{prp:bipartite_dynamic} shows that the common resource states available in the dynamic setting can still have more capability than any common resource state consisting of bipartite entanglement in the static setting, as well as the common resource states exhibiting multipartite entanglement in the static setting.
In contrast, Theorem~\ref{prp:multipartite_dynamic} shows the existence of common resource states which cannot be prepared in the dynamic setting by the parties within a limitation of local system sizes while the common resource states can still be stored within the limitation in the static setting.
This implies that the common resource states in the dynamic setting have in this case less capability than a common resource state exhibiting multipartite entanglement in the static setting.

\begin{figure}[t!]
    \centering
    \includegraphics[width=5.0in]{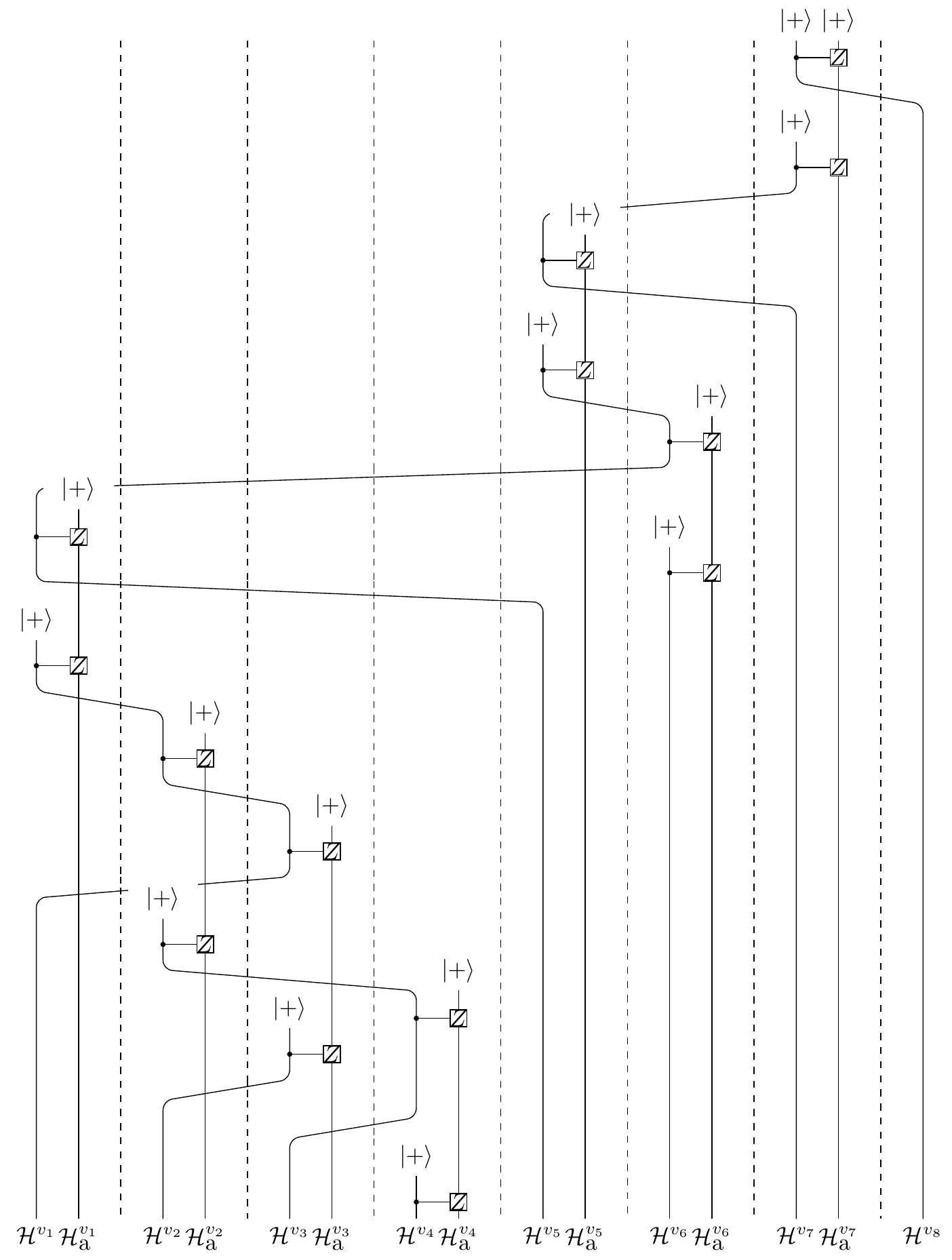}
    \caption[A quantum circuit representing a protocol for preparing a common resource state for a target set by quantum communication in addition to LOCC within a configuration of local system sizes.]{A quantum circuit representing a protocol for preparing the common resource state $\Ket{\Phi_\textup{res}}$ for the target set $S_0$ by quantum communication in addition to LOCC within the configuration $\boldsymbol{D}_0$ of local system sizes defined as Equation~\eqref{eq:d}. Each of the parties $v_k\in\left\{v_1\,,\ldots v_7\right\}$ can perform local operations on at most two qubits $\mathcal{H}^{v_k}\otimes\mathcal{H}_\textup{a}^{v_k}$ while the party $v_8$ can perform local operations on one qubit $\mathcal{H}^{v_8}$. The dashed lines represent the separation of the parties. Each wire of the circuit corresponds to a qubit corresponding to the Hilbert space on the right, and the circuit consists of the controlled-$Z$ gates $CZ$ defined as Equation~\eqref{eq:cz} and quantum communication represented by crossings of the wires.}
\label{fig:multipartite}
\end{figure}

\begin{theorem}
\label{prp:bipartite_dynamic}
    \textit{A common resource state in the dynamic setting having more capability than any common resource state consisting of bipartite entanglement.}
    The state $\Ket{\Phi_\textup{res}}$ in the proof of Theorem~\ref{thm:multipartite} and in Figure~\ref{fig:tree} can be used as a common resource state for achieving the system-size-limited quantum state preparation in the dynamic setting for the configuration $\boldsymbol{D}_0$ defined as Equation~\eqref{eq:d} and the target set $S_0$ defined as Equation~\eqref{eq:s_0}, while the system-size-limited quantum state preparation in the static setting for $\boldsymbol{D}_0$ and $S_0$ cannot be achieved by any common resource state consisting of bipartite entanglement due to Theorem~\ref{thm:bipartite}.
\end{theorem}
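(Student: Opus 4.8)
The plan is to split the statement into its two halves and handle them asymmetrically. The \emph{static} half---that no common resource state consisting only of bipartite entanglement can achieve the task for the configuration $\boldsymbol{D}_0$ of Equation~\eqref{eq:d} and the target set $S_0$ of Equation~\eqref{eq:s_0}---is already supplied by Theorem~\ref{thm:bipartite}, so I would simply invoke it. The entire new content therefore lies in the \emph{dynamic} half: exhibiting a protocol that (i) prepares the fifteen-qubit graph state $\Ket{\Phi_\textup{res}}$ of Figure~\ref{fig:tree} by quantum communication together with LOCC within the configuration $\boldsymbol{D}_0$, and then (ii) transforms $\Ket{\Phi_\textup{res}}$ into an arbitrary $\Ket{\psi(\boldsymbol\alpha)}\in S_0$. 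Step (ii) needs no fresh argument: once $\Ket{\Phi_\textup{res}}$ is in hand, the measurement-and-correction protocol constructed in the proof of Theorem~\ref{thm:multipartite} already realizes the transformation using only local projective measurements and local unitary corrections, all of which stay within $\boldsymbol{D}_0$. Hence the proof reduces to the dynamic preparation of $\Ket{\Phi_\textup{res}}$.

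For the preparation I would argue by explicit construction following the circuit of Figure~\ref{fig:multipartite}. Each party first initializes all of its qubits in $\Ket{+}$, after which the graph state is built up by applying a controlled-$Z$ gate $CZ$, as in Equation~\eqref{eq:cz}, across every edge of the underlying graph. The two structural facts I would exploit are that these $CZ$ gates are diagonal in the computational basis and hence mutually commute, so they may be applied in any convenient order, and that the underlying graph is a tree. Edges joining the two qubits held by a single party $v_k\in\{v_1,\ldots,v_7\}$---the white qubit $\mathcal{H}^{v_k}$ and the black auxiliary qubit $\mathcal{H}_\textup{a}^{v_k}$---are applied purely locally. For an edge between qubits at two different parties, I would use one round of quantum communication to transiently colocate its two endpoints at a single party, apply $CZ$ locally, and return the transmitted qubit, relying on the rule that after each transmission the sender's subsystem is freed and reinitialized so it can be reused.

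The step I expect to be the main obstacle is verifying that the local system-size bound of $\boldsymbol{D}_0$ holds at \emph{every} intermediate instant, not merely at the start and end; this is exactly what separates the dynamic setting from the static one. The delicate case is party $v_8$, whose total space $\overline{\mathcal{H}}^{v_8}$ is a single qubit and so has no spare slot to host an incoming qubit or to hold a second one during a $CZ$. I would therefore realize every edge incident to $\mathcal{H}^{v_8}$ by sending $v_8$'s qubit \emph{out} to the neighbouring party, performing $CZ$ there, and returning it, rather than bringing the partner qubit to $v_8$; the schedule must guarantee that the receiving party has a free slot at that moment, which is arranged by ordering the colocations (and temporarily offloading a party's own qubit when necessary) so that each of $v_1,\ldots,v_7$ holds at most two qubits and $v_8$ at most one throughout. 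Carrying out this edge-by-edge bookkeeping---checking that transient occupancies never exceed $\dim\overline{\mathcal{H}}^{v_k}=4$ for $k\le 7$ and $\dim\overline{\mathcal{H}}^{v_8}=2$---is the routine but essential part, after which commutativity of the $CZ$ gates guarantees that the output is exactly $\Ket{\Phi_\textup{res}}$. Combining the successful dynamic preparation and transformation with the static impossibility of Theorem~\ref{thm:bipartite} then yields the claimed separation in capability.
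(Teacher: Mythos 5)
Your overall architecture coincides with the paper's: the static impossibility is just Theorem~\ref{thm:bipartite}, your step (ii) is the LOCC protocol already built in the proof of Theorem~\ref{thm:multipartite}, and everything reduces to preparing $\Ket{\Phi_\textup{res}}$ by quantum communication within $\boldsymbol{D}_0$. The gap is in how you propose to do that preparation. If every party first initializes all of its qubits in $\Ket{+}$, the total occupancy already equals the total capacity ($7\times 2+1=15$ qubits), and in the dynamic setting a receiver must initialize a fresh subsystem in $\Ket{0}$ to accept an incoming qubit; hence no party can receive anything and the very first colocation is impossible. Your fallback of ``temporarily offloading a party's own qubit'' is circular, because the party you would offload to is also full. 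The only escape is to discard a qubit that has not yet been entangled and recreate it later, i.e.\ to initialize lazily---but then the round-trip scheme imposes genuine precedence constraints: a visitor can be hosted at $v_k$ only while $v_k$'s \emph{other} qubit is still unentangled (an entangled qubit cannot be reset to make room), so every remote edge hosted at $v_k$ on one of its qubits must be completed before the other qubit's first entangling operation, and consequently a party can host visits on at most one of its two qubits. Whether these constraints are simultaneously satisfiable for the specific fifteen-vertex tree of Figure~\ref{fig:tree} is precisely the content of the claim, and calling it ``routine bookkeeping'' without exhibiting the schedule leaves the crux unproven. You also treat $v_8$ as the only delicate party, when in fact every party sits at capacity under your initialization.

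The paper sidesteps all of this by never returning a transmitted qubit (Figure~\ref{fig:multipartite}): a party prepares qubits in $\Ket{+}$, entangles them with a local $CZ$, ships one qubit of the entangled state to the party where it will permanently reside (that party still has a genuinely free slot because it has not yet populated its own registers), reinitializes the freed slot as a new $\Ket{+}$ qubit, entangles again, and so on. Each qubit is communicated at most once, always into a slot that is free by construction, so the capacity bound holds at every instant without any scheduling argument. Replacing your per-edge round trips with this one-way streaming construction---or, alternatively, actually exhibiting a round-trip schedule satisfying the precedence constraints above for the tree of Figure~\ref{fig:tree}---would close the proof.
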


\begin{theorem}
\label{prp:multipartite_dynamic}
    \textit{Common resource states exhibiting multipartite entanglement which cannot be prepared in the dynamic setting.}
    Consider four parties $v_1\,$, $v_2\,$, $v_3\,$, and $v_4$.
    Given a configuration $\boldsymbol{D}_1=\left(D_1^{\left(v_1\right)},D_1^{\left(v_2\right)},D_1^{\left(v_3\right)},D_1^{\left(v_4\right)}\right)$, where
    \begin{align}
        D_1^{\left(v_1\right)}&=\dim\overline{\mathcal{H}}^{v_1} = 4,\\
        D_1^{\left(v_k\right)}&=\dim\overline{\mathcal{H}}^{v_k} = 2, \; \forall v_k\in\left\{v_2\,,v_3\,,v_4\right\},
    \end{align}
    any fully entangled common resource state $\Ket{\phi}\in\overline{\mathcal{H}}$ whose Schmidt rank with respect to the bipartition between $v_1$ and $v_2 v_3 v_4$ is more than two cannot be prepared in the dynamic setting, although there exists such a common resource state which can be stored in the static setting.
\end{theorem}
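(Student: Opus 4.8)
The plan is to establish the two halves of the statement separately: an easy existence part in the static setting, and the substantive impossibility part in the dynamic setting. For the static part, I would simply exhibit one fully entangled five-qubit state storable within $\boldsymbol{D}_1$ whose Schmidt rank across $v_1|v_2v_3v_4$ exceeds two; since $\dim\overline{\mathcal{H}}^{v_1}=4$ and the three remaining parties are single qubits, any state with two qubits at $v_1$ fits the configuration exactly. A convenient choice is $\Ket{\Phi_2^+}^{ae}\otimes\Ket{W}^{bcd}$, where $a,b$ are $v_1$'s two qubits, $c,d,e$ belong to $v_2,v_3,v_4$, and $\Ket{W}$ is the three-qubit $W$ state; a direct computation of the reduced state on $\{a,b\}$ gives rank four, so the Schmidt rank across $v_1|v_2v_3v_4$ is four, and one checks that the state is entangled across every bipartition.

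For the impossibility part, I would suppose toward a contradiction that some fully entangled $\Ket{\phi}$ with Schmidt rank at least three across the cut $C:v_1|v_2v_3v_4$ is prepared dynamically within $\boldsymbol{D}_1$. Because $v_1$ is the only party on its side of $C$, every quantum communication involving $v_1$ crosses $C$, whereas communications among $v_2,v_3,v_4$ and all LOCC within the configuration are local to the two sides of $C$ (up to classical messages) and hence cannot raise the Schmidt rank across $C$. First I would argue that at least one communication round involves $v_1$: otherwise $v_1$ would remain a product system, contradicting full entanglement. Let round $r$ be the \emph{last} such round. If $r$ sends a qubit from $v_1$ to some $v_j$, then the sending slot at $v_1$ is reset to $\Ket{0}$, leaving $v_1$ with at most one entangling qubit, so the Schmidt rank across $C$ is at most two immediately after $r$; since nothing afterward can raise it, the final Schmidt rank is at most two, contradicting the assumption.

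The crux is the remaining case, where round $r$ sends a qubit from some $v_j$ to $v_1$. Here the reset rule forces the sender's slot at $v_j$ to become the product state $\Ket{0}$ right after $r$. The key observation I would use is that after round $r$ \emph{no} operation can create entanglement involving any qubit held on the side $v_2v_3v_4$: each of $v_2,v_3,v_4$ holds only a single qubit and therefore cannot apply any entangling two-qubit gate, $v_1$ is the only party with room for two qubits but performs no further communication (so no qubit from the $v_2v_3v_4$ side can reach it), and classical communication together with qubit-relabeling transfers among $v_2,v_3,v_4$ neither create entanglement nor alter the joint state of the transferred qubit. Consequently the qubit reset to $\Ket{0}$ at $r$ stays unentangled from \emph{all} other qubits throughout the rest of the protocol; whichever of $v_2,v_3,v_4$ physically holds it in the final state is then in a pure product state, so that party is unentangled from the rest. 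This contradicts full entanglement and completes the argument.

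The main obstacle is precisely this last case: unlike the $v_1\to v_j$ direction, a transfer into $v_1$ can momentarily push the Schmidt rank across $C$ up to four, so the Schmidt-rank monotone alone does not suffice. The decisive structural input is the single-qubit capacity of $v_2,v_3,v_4$ together with the fact that $v_1$ goes silent after round $r$; I expect the care in the writeup to go into justifying rigorously that interparty quantum communication on the $v_2v_3v_4$ side merely relocates qubits while leaving their reduced states intact, and that a reset qubit cannot be re-entangled without a two-qubit interaction that the configuration $\boldsymbol{D}_1$ forbids.
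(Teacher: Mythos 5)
Your proof is correct and follows essentially the same route as the paper: both rest on the reset rule for quantum communication plus the LOCC monotonicity of the Schmidt rank, with a case analysis on the direction of a final communication round. The only difference is that the paper anchors the split at the very last round of quantum communication overall (so the $v_k\to v_{k'}$ and $v_k\to v_1$ cases immediately leave a single-qubit party in a product state, with only LOCC remaining), which makes your extra bookkeeping about relocating the reset qubit among $v_2,v_3,v_4$ unnecessary; your explicit static example is a harmless addition the paper omits.
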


Note that under the limitation in Theorem~\ref{prp:multipartite_dynamic}, the parties can prepare any state whose Schmidt rank with respect to the bipartition between $v_1$ and $v_2 v_3 v_4$ is not more than two.
This is because $v_1$'s reduced state can be represented by one qubit in this case, and hence, the parties can perform quantum communication to bring arbitrary two qubits to $v_1$ to perform any two-qubit gates.
As for another remark, while it is assumed in the definition of the dynamic setting that quantum communication is performed sequentially, one can also consider simultaneous quantum communication between two parties, which is considered as a swap operation between the two.
However, this simultaneous quantum communication yields a trivial result since the parties under the limitation in Theorem~\ref{prp:multipartite_dynamic} can prepare any state $\Ket{\Phi}\in\overline{\mathcal{H}}$ using swap operations for letting $v_1$ perform arbitrary two-qubit gates.

\begin{proof}[Proof of Theorem~\ref{prp:bipartite_dynamic}]
    It is shown that the common resource state $\Ket{\Phi_\textup{res}}$ in the proof of Theorem~\ref{thm:multipartite} and in Figure~\ref{fig:tree} can be prepared by the parties using quantum communication in addition to LOCC within the configuration $\boldsymbol{D}_0$.
    The protocol for preparing $\Ket{\Phi_\textup{res}}$ is represented by a quantum circuit illustrated in Figure~\ref{fig:multipartite}.
    In this circuit, the parties repeatedly perform $CZ$ gates defined as Equation~\eqref{eq:cz} to entangle qubits initialized as $\Ket{+}$, distribute one qubit of the entangled state by quantum communication, and perform a $CZ$ gate again to entangle the remaining part of the entangled state with another qubit initialized as $\Ket{+}$.
    After this protocol, the state $\Ket{\Phi_\textup{res}}$ is shared among the parties $v_1\,,\ldots,v_8$.
\end{proof}

\begin{proof}[Proof of Theorem~\ref{prp:multipartite_dynamic}]
The proof is given in a similar way to the example given at the beginning of this section.
Consider any fully entangled state $\Ket{\phi}^{v_1\,,v_2\,,v_3\,,v_4}$ shared among $v_1\,$, $v_2\,$, $v_3\,$, and $v_4$ after the last round of quantum communication for preparing $\Ket{\phi}^{v_1\,,v_2\,,v_3\,,v_4}$.
The direction of the quantum communication in the last round is either of the following three possibilities:
\begin{enumerate}
    \item from $v_1$ to $v_k$ where $k\in\left\{2,3,4\right\}$;
    \item from $v_k$ to $v_{k^\prime}$ where $k,k^\prime\in\left\{2,3,4\right\}$ and $k\neq k^\prime$;
    \item from $v_k$ to $v_1$ where $k\in\left\{2,3,4\right\}$.
\end{enumerate}
Since $\Ket{\phi}^{v_1\,,v_2\,,v_3\,,v_4}$ is fully entangled, the latter two possibilities 2 and 3, which lead to a product state between $v_k$ and the others, are excluded.
Regarding possibility 1, after sending at least one qubit from $v_1$ to $v_k\,$, the rank of $v_1$'s reduced state for $\Ket{\phi}^{v_1\,,v_2\,,v_3\,,v_4}$ is at most two; that is, the Schmidt rank of $\Ket{\phi}^{v_1\,,v_2\,,v_3\,,v_4}$ with respect to the bipartition between $v_1$ and $v_2 v_3 v_4$ is at most two.
Since the Schmidt rank is monotonically nonincreasing by LOCC, the parties after the last round of quantum communication cannot prepare any common resource state whose Schmidt rank with respect to the bipartition between $v_1$ and $v_2 v_3 v_4$ is more than two, which yields the conclusion.
\end{proof}

\part{\label{part:conclusion}Conclusion and outlook}

\chapter{\label{sec:summary_1}Conclusion of Part~\ref{part:1}}

Part~\ref{part:1} analyzed entanglement cost, or equivalently, quantum communication cost under LOCC, required for one-shot quantum state merging, aimed at investigating properties of transferring quantum information between two parties on small and intermediate scales.
The following two results are obtained in this part.
Being complementary to existing protocols achieving nearly optimal one-shot state merging on a large scale,
these results open the way to another direction for future research on small and intermediate scales.

\section*{Quantum state merging for arbitrarily small-dimensional systems}

Chapter~\ref{sec:merge} constructed protocols for one-shot state merging under one-way LOCC, which work for any state of an arbitrarily small-dimensional system and satisfy arbitrarily high fidelity requirements.
The protocols retain the essential feature of state merging; that is, entanglement cost can be reduced by exploiting a structure of a given state.
This feature arises because the Koashi-Imoto decomposition of the given state shows the classical part, the quantum part, and the redundant part of the state, and entanglement can be gained from the redundant part by entanglement distillation, while the classical part can be merged at zero entanglement cost by a measurement followed by classical communication of the measurement outcome.
In these protocols, it is crucial to coherently combine different subprocesses, namely, entanglement distillation from the redundant part and quantum teleportation of the quantum part, using controlled measurements and controlled isometries.

In addition to achievability bounds for an arbitrarily small-dimensional system derived from the protocols for exact state merging, improved converse bounds of entanglement cost in exact state merging are shown and this bound is proven to be optimal when a purification of the state to be merged is a three-qubit state.
These results on exact state merging can also be extended to its approximate versions by means of smoothing~\cite{R2,T5,T11}, while exact state merging suffices in cases relevant to distributed quantum information processing, such as the cases of code states for quantum error correcting codes~\cite{G,D,T10,B}.

These results yield protocols for one-shot quantum state merging applicable even to small- and intermediate-scale states, and further research will be needed to establish general strategies for state merging achieving both small-scale applicability and asymptotic optimality.

\section*{One-shot quantum state merging under one-way and two-way communication}

Chapter~\ref{sec:two_way} proved that the minimal entanglement cost in state merging under \textit{one-way} LOCC and that under \textit{two-way} LOCC can be different in a one-shot scenario, while they have shown to coincide in the asymptotic scenario.
The analysis in Chapter~\ref{sec:two_way} employs interconnection between state merging and local state discrimination, to demonstrate a provable separation between one-way LOCC and two-way LOCC in state merging, whose \textit{asymptotically non-surviving} property is different from the known separations in Table~\ref{table:compare}.
Based on this interconnection, state merging and local state discrimination can also be interpreted as distributed decoding of nonlocally encoded information.

These results suggest that in state merging from $A$ to $B$ under a one-shot regime,
preprocessing of quantum side information at $B$ and backward classical communication from $B$ to $A$ may increase usability of the quantum side information for reducing entanglement cost of protocols, while further research will be needed to construct general protocols for one-shot state merging using two-way communication.
Even if construction of an optimal two-way protocol for one-shot quantum state merging should be challenging,
the difference between one-way and two-way LOCC in entanglement cost in state merging may also appear in the framework of second-order asymptotic analysis~\cite{T9}.
In particular,
entanglement cost of non-catalytic approximate state merging of $\Ket{\psi}^{RAB}$ within $\epsilon$ may be in the form of
\begin{equation}
  \min\left\{\log_2 K\right\}=n{H\left(A|B\right)}_\psi+\sqrt{n}C\left(\epsilon\right)+\mathcal{O}\left(\log n\right),
\end{equation}
where the minimum is taken over all the protocols achieving approximate state merging of ${\left(\Ket{\psi}^{RAB}\right)}^{\otimes n}$ within $\epsilon$, and $C\left(\epsilon\right)$ is a function of $\epsilon$ for the coefficient of the second term.
Whereas whether one-way or two-way LOCC is allowed in quantum state merging does not affect the first coefficient ${H\left(A|B\right)}_\psi\,$, it is left as an open question whether $C\left(\epsilon\right)$ is affected or not.

\chapter{\label{sec:summary_2}Conclusion of Part~\ref{part:2}}

Part~\ref{part:2} analyzed properties of multipartite entanglement in distributed quantum information processing.
The following two results are obtained in this part.
These results facilitate operational understanding and efficient use of multipartite entanglement in the context of distributed quantum information processing.

\section*{Distributed encoding and decoding of quantum information over networks}

Chapter~\ref{sec:distributed_encoding_decoding} quantitatively characterized nonlocal properties of multipartite quantum transformations for encoding and decoding quantum information in a multipartite system in terms of the entanglement cost.
For any tree-topology network connecting spatially separated parties $v_1\,,\ldots,v_N\,$,
the entanglement costs required for performing an isometry $U:\mathcal{H}\to\bigotimes_{k=1}^{N}\tilde{\mathcal{H}}^{v_k}$ representing encoding and the inverse $U^\dag:\bigotimes_{k=1}^{N}\tilde{\mathcal{H}}^{v_k}\to\mathcal{H}$ representing decoding are evaluated, where the system $\mathcal{H}$ for logical states is located at one of the parties and each subsystem $\tilde{\mathcal{H}}^{v_k}$ for physical states is located at each party $v_k$.
Regarding the encoding, a protocol for spreading quantum information is constructed, and this protocol is proven to achieve the optimal entanglement cost.
As for the decoding, the protocol for concentrating quantum information is also constructed and this protocol can reduce the entanglement cost compared to that of spreading quantum information.
Hence, while $U$ and $U^\dag$ are inverse of each other, a bound is derived for quantitatively differentiating nonlocal properties of $U$ for encoding and $U^\dag$ for decoding in terms of entanglement cost.
Applications of these protocols to multiparty tasks are also demonstrated, such as one-shot distributed source compression~\cite{D8,D9,A8} and LOCC-assisted decoding in quantum secret sharing~\cite{G8}.

The concept of encoding and the decoding represented by isometries has pivotal roles not only in quantum information science,
and further investigation of applications within and beyond quantum information science is left for future works.

\section*{When does multipartite entanglement outperform bipartite entanglement?}

Chapter~\ref{sec:multipartite} introduced and analyzed the task of system-size-limited quantum state preparation for comparing multipartite and bipartite entanglement from the viewpoint of local quantum system sizes of the parties in the distributed settings.
Introducing limitations on the size of the local system of each party,
Chapter~\ref{sec:multipartite} analyzes the capabilities of common resource states exhibiting multipartite entanglement for a given target set of quantum states and those consisting of bipartite entanglement.

By showing nontrivial examples, the capabilities of these common resource states are differentiated in terms of achievability of the system-size-limited quantum state preparations for the same target set in the static setting where a common resource state has to be stored within a given limitation of local system sizes.
In addition to this static setting, the dynamic setting is considered where the parties may use a common resource state exhibiting multipartite entanglement, but this common resource state has to be prepared by temporal uses of bipartite quantum communication resources within the limitation of local system sizes.
As for the dynamics setting, examples shown in Chapter~\ref{sec:multipartite} imply that common resource states in the dynamic setting have an intermediate capability between the common resource states exhibiting multipartite entanglement and those consisting of bipartite entanglement.

These results provide examples indicating that multipartite entanglement outperforms bipartite entanglement when limitations on the local system sizes matter in both the static setting and the dynamic setting.
Further research will be needed to establish more general connections between the system sizes for common resource states and properties differentiating multipartite and bipartite entanglement.

\chapter{Concluding remarks and outlook}

This thesis has established a paradigm for investigating multipartite entanglement based on distributed quantum information processing over networks, progressing beyond applications of resource-theoretic analyses based on the state convertibility introducing the partial order of entanglement in the LOCC framework.
In particular, properties of multipartite entanglement are characterized in terms of entanglement costs and local quantum system sizes required for distributed quantum information processing.
For such characterization, this thesis has constructed and analyzed the protocols for one-shot quantum state merging and splitting applicable to arbitrarily small-dimensional quantum systems, which can be used as fundamental building blocks of further theoretical analyses and experimental implementations of distributed quantum information processing.

Besides applying the framework of distributed quantum information processing over networks established in this thesis to further investigations of multipartite entanglement,
the obtained results in this thesis are also related to the following open questions in broader research fields.

\section*{One-shot quantum information theory on small and intermediate scales}
Chapter~\ref{sec:merge} discusses the cases where asymptotically optimal protocols in one-shot quantum information theory are not necessarily efficient on small and intermediate scales relevant to distributed quantum information processing. As for one-shot quantum state merging, answers to the following questions are beneficial to constructing more efficient protocols on the small and intermediate scales than those obtained in Chapter~\ref{sec:merge}: Under what condition does the obtained protocols for one-shot quantum state merging based on the Koashi-Imoto decomposition become optimal, and more generally, how is the minimal cost of one-shot state merging on the small and intermediate scales characterized?

\section*{Entanglement catalysis}
As discussed in Chapter~\ref{sec:two_way}, there are tasks of which catalytic use of entanglement affects achievability, and further investigations of this entanglement catalysis lead to advantageous distributed quantum information processing and better understanding of quantum entanglement.
How is it possible in the catalytic setting to prove the asymptotically non-surviving separation between one-way LOCC and two-way LOCC in one-shot quantum state merging shown in Chapter~\ref{sec:two_way}, or conversely, does this separation disappear by allowing catalytic use of entanglement?

\section*{Characterization of properties of multipartite entanglement beyond quantification}
While entanglement costs in spreading and concentrating quantum information are defined for a general network, it is crucial to consider tree-topology networks in the evaluation of the entanglement costs in Chapter~\ref{sec:distributed_encoding_decoding}. If entanglement costs can be evaluated for a more general class of networks than tree-topology networks, difference in the entanglement costs arising from topologies of the given networks may provide a characterization of multipartite entanglement based on not only quantities but also the network topologies. Is there another class of networks than tree-topology networks over which such evaluation of entanglement costs is possible?

\section*{Causality and entanglement in communication tasks}
Chapter~\ref{sec:multipartite} shows cases where tasks using multipartite entanglement cannot be achieved using bipartite entanglement. Combination of bipartite entanglement with classical communication between two parties achieves quantum communication by means of quantum teleportation, where entanglement can be regarded as a spatial resource shared between spatially separated parties, and classical communication can be regarded as a temporal resource introducing a temporal causal order into distributed quantum information processing. As for multipartite cases, no analogous correspondence between spatial resources of states exhibiting multipartite entanglement and temporal resources for achieving communication tasks is known in general.  Is there a situation where a resource state exhibiting multipartite entanglement can also be interpreted as a resource for a multipartite version of some quantum communication task, and if such a situation exists, what serves as a multipartite temporal resource introducing causality among multiple parties, corresponding to the multipartite spatial resource, \textit{i.e.}, multipartite entanglement?

\appendix

\part*{Appendix}

\chapter{\label{sec:koashi_imoto}How to obtain Koashi-Imoto decomposition}

This appendix demonstrates how to obtain the Koashi-Imoto decomposition of a given tripartite pure state $\Ket{\psi}^{RAB}$ shown in Lemma~\ref{lem:koashi_imoto_decomposition_tripartite}. The Koashi-Imoto decomposition of $\Ket{\psi}^{RAB}$ follows from that of the corresponding set $S_\psi^{A|R}$ defined as Equation~\eqref{eq:psi_lambda}, as discussed in Section~\ref{sec:decomposition}. This appendix summarizes an algorithm shown in Reference~\cite{K3} for obtaining the Koashi-Imoto decomposition of any given set of states and provide an example of how to obtain the Koashi-Imoto decomposition of a given tripartite pure state using this algorithm.

The algorithm shown in Reference~\cite{K3} works by iteratively refining decompositions of the Hilbert space $\mathcal{H}^A$ in the form of
\begin{equation}
  \label{eq:decomposition_form}
  \mathcal{H}^A=\bigoplus_{j=0}^{J-1}\mathcal{H}^{a_j^\textup{L}}\otimes\mathcal{H}^{a_j^\textup{R}}.
\end{equation}
For a decomposition in this form, let $\Pi^{a_j^\textup{L}}$ and $\Pi^{a_j^\textup{R}}$ denote the projectors onto $\mathcal{H}^{a_j^\textup{L}}$ and $\mathcal{H}^{a_j^\textup{R}}$, respectively.
The degree of refinement is evaluated by an index $r$ defined for the decomposition in the form of Equation~\eqref{eq:decomposition_form} as
\begin{equation}
  r\coloneq\frac{1}{2}\left(\sum_{J=0}^{J-1}\dim\mathcal{H}^{a_j^\textup{R}}\right)\left(\sum_{J=0}^{J-1}\dim\mathcal{H}^{a_j^\textup{R}}+1\right)-J+1.
\end{equation}
The algorithm begins with initially regarding $\mathcal{H}^A$ as
\begin{equation}
  \mathcal{H}^A=\mathcal{H}^{a_0^\textup{L}},
\end{equation}
where $J=1$, the index is initially
\begin{equation}
  r=1,
\end{equation}
and $\mathcal{H}^{a_0^\textup{R}}$ does not explicitly appear since
\begin{equation}
  \dim\mathcal{H}^{a_0^\textup{L}}=\dim\mathcal{H}^A,\quad\dim\mathcal{H}^{a_0^\textup{R}}=1.
\end{equation}
Then, the refinement can be performed by two types of procedures, which are referred to as the \textit{$\textup{L}$-decomposing procedure} and the \textit{$\textup{R}$-combining procedure}.
According to the given set of states, the $\textup{L}$-decomposing procedure decomposes a Hilbert space $\mathcal{H}^{a_{j_0}^\textup{L}}$ in an intermediate decomposition in the form of Equation~\eqref{eq:decomposition_form} into two subspaces, and the $\textup{R}$-combining procedure combines two different Hilbert spaces $\mathcal{H}^{a_{j_0}^\textup{R}}$ and $\mathcal{H}^{a_{j_1}^\textup{R}}$ in an intermediate decomposition in the form of Equation~\eqref{eq:decomposition_form} into one, as discussed later.
Each procedure increases the index $r$ representing the degree of refinement of the decomposition,
and the algorithm repeatedly applies either of the two procedures, until a decomposition maximizing $r$ is obtained.
Since $r$ is an integer bounded by
\begin{equation}
  1\leqq r\leqq\frac{1}{2}\left(\dim\mathcal{H}^A\right)\left(\dim\mathcal{H}^A+1\right),
\end{equation}
the algorithm terminates after applying these procedures
\begin{equation}
  O\left({\left(\dim\mathcal{H}^A\right)}^2\right)
\end{equation}
times in total.
The decomposition maximizing $r$ is uniquely determined and is said to be maximal in Reference~\cite{K3}, satisfying the conditions shown in Lemma~\ref{lem:koashi_imoto_decomposition_set}.
For obtaining the Koashi-Imoto decomposition of a given bipartite state $\psi^{RA}$,
whether the decomposition in the form of Equation~\eqref{eq:decomposition_form} is maximal can also be checked by calculating operators on $\mathcal{H}^\textup{R}\otimes\mathcal{H}^{a_j^\textup{L}}\otimes\mathcal{H}^{a_j^\textup{R}}$ for all $j$
\begin{equation}
  \label{eq:product_operator}
    \psi^{R a_j^\textup{L} a_j^\textup{R}}
    \coloneq\left(\mathbb{1}^{R}\otimes\Pi^{a_j^\textup{L}}\otimes\Pi^{a_j^\textup{R}}\right)\psi^{RA}\left(\mathbb{1}^{R}\otimes\Pi^{a_j^\textup{L}}\otimes\Pi^{a_j^\textup{R}}\right),
\end{equation}
and if the decomposition is maximal, each of these operators is a tensor product of operators of $\mathcal{H}^\textup{R}\otimes\mathcal{H}^{a_j^\textup{R}}$ and $\mathcal{H}^{a_j^\textup{L}}$.

In the following, how to perform the $\textup{L}$-decomposing procedure and the $\textup{R}$-combining procedure is discussed in the case of the Koashi-Imoto decomposition of $S_\psi^{A|R}\coloneq\left\{\psi^A\left(\Lambda^{R}\right):\Lambda^{R}\geqq 0\right\}$ defined as Equation~\eqref{eq:psi_lambda}.

\textit{The $\textup{L}$-decomposing procedure}: (See also Lemma~3 in Reference~\cite{K3}.)
Given an intermediate decomposition in the form of Equation~\eqref{eq:decomposition_form},
the $\textup{L}$-decomposing procedure aims to decompose a Hilbert space $\mathcal{H}^{a_{j_0}^\textup{L}}$ in this given decomposition into two subspaces,
so that the decomposition is refined as
\begin{equation}
  \mathcal{H}^{a_{j_0}^\textup{L}}\otimes\mathcal{H}^{a_{j_0}^\textup{R}}=\left(\mathcal{H}_{+}^{a_{j_0}^\textup{L}}\otimes\mathcal{H}^{a_{j_0}^\textup{R}}\right)\oplus\left(\mathcal{H}_{-}^{a_{j_0}^\textup{L}}\otimes\mathcal{H}^{\tilde{a}_{j_0}^\textup{R}}\right),
\end{equation}
where the right-hand side represents subspaces in a refined decomposition satisfying
\begin{equation}
    \mathcal{H}^{a_{j_0}^\textup{L}}=\mathcal{H}_{+}^{a_{j_0}^\textup{L}}\oplus\mathcal{H}_{-}^{a_{j_0}^\textup{L}}.
\end{equation}
For the Koashi-Imoto decomposition of $S_\psi^{A|R}$,
this refinement is achieved in the following way.
\begin{enumerate}[{Step $\textup{L}$}-1:]
  \item Find $j_0\in\left\{0,\ldots,J-1\right\}$, $\Ket{a}\in\mathcal{H}^{a_{j_0}^\textup{R}}$, $\Ket{b}\in\mathcal{H}^{a_{j_0}^\textup{R}}$, and $\Lambda^{R}\geqq 0$ such that for any $c\geqq 0$
    \begin{equation}
      \rho\neq c\rho^\prime,
    \end{equation}
    where $\rho$ and $\rho^\prime$ are operators on $\mathcal{H}^{a_{j_0}^\textup{L}}$ defined as
    \begin{align}
        \rho&\coloneq\left(\Pi^{a_{j_0}^\textup{L}}\otimes\Bra{a}^{a_{j_0}^\textup{R}}\right)\psi^A\left(\Lambda^{R}\right)\left(\Pi^{a_{j_0}^\textup{L}}\otimes\Ket{a}^{a_{j_0}^\textup{R}}\right),\\
        \rho^\prime&\coloneq\left(\Pi^{a_{j_0}^\textup{L}}\otimes\Bra{b}^{a_{j_0}^\textup{R}}\right)\psi^A\left(\mathbb{1}^{R}\right)\left(\Pi^{a_{j_0}^\textup{L}}\otimes\Ket{b}^{a_{j_0}^\textup{R}}\right).
    \end{align}
  \item Calculate the spectral decomposition of an operator on $\mathcal{H}^{a_{j_0}^\textup{L}}$
    \begin{equation}
      \frac{\rho}{\tr\rho}-\frac{\rho^\prime}{\tr\rho^\prime}=\sum_l \lambda_l\Ket{l}\Bra{l}.
    \end{equation}
    Using the subspaces spanned by eigenvectors of this operator corresponding to the positive eigenvalues and the non-positive eigenvalues, decompose $\mathcal{H}^{a_{j_0}^\textup{L}}$ into
    \begin{equation}
      \mathcal{H}^{a_{j_0}^\textup{L}}=\mathcal{H}_{+}^{a_{j_0}^\textup{L}}\oplus\mathcal{H}_{-}^{a_{j_0}^\textup{L}},
    \end{equation}
    where the subspaces on the right-hand side are defined as
    \begin{align}
      \mathcal{H}_{+}^{a_{j_0}^\textup{L}}&\coloneq \spn \left\{\Ket{l}\in\mathcal{H}^{a_{j_0}^\textup{L}}:\lambda_l>0\right\},\\
      \mathcal{H}_{-}^{a_{j_0}^\textup{L}}&\coloneq\spn\left\{\Ket{l}\in\mathcal{H}^{a_{j_0}^\textup{L}}:\lambda_l\leqq 0 \right\}.
    \end{align}
    Note that $\mathcal{H}_{+}^{a_{j_0}^\textup{L}}$ and $\mathcal{H}_{-}^{a_{j_0}^\textup{L}}$ are nonzero subspaces.
  \item Define a refined decomposition as
    \begin{align}
        \mathcal{H}^A&=\bigoplus_{j=0}^{\tilde{J}-1}\mathcal{H}^{\tilde{a}_j^\textup{L}}\otimes\mathcal{H}^{\tilde{a}_j^\textup{R}},\\
        \tilde{J}&\coloneq J+1,\\
        \mathcal{H}^{\tilde{a}_j^\textup{L}}&\coloneq\begin{cases}
          \mathcal{H}^{a_j^\textup{L}}&\textup{if } 0\leqq j\leqq j_0-1,\\
          \mathcal{H}^{a_{j-1}^\textup{L}}&\textup{if } j_0\leqq j\leqq J-2,\\
          \mathcal{H}_{+}^{a_{j_0}^\textup{L}}&\textup{if } j=J-1,\\
          \mathcal{H}_{-}^{a_{j_0}^\textup{L}}&\textup{if } j=J,\\
        \end{cases}\\
        \mathcal{H}^{\tilde{a}_j^\textup{R}}&\coloneq\begin{cases}
          \mathcal{H}^{a_j^\textup{R}}&\textup{if } 0\leqq j\leqq j_0-1,\\
          \mathcal{H}^{a_{j-1}^\textup{R}}&\textup{if } j_0\leqq j\leqq J-2,\\
          \mathcal{H}^{a_{j_0}^\textup{R}}&\textup{if } j=J-1,\, J.
        \end{cases}
    \end{align}
\end{enumerate}

\textit{The $\textup{R}$-combining procedure}: (See also Lemma~4 in Reference~\cite{K3}.)
Given an intermediate decomposition in the form of Equation~\eqref{eq:decomposition_form},
the $\textup{R}$-combining procedure aims to combine two different Hilbert spaces $\mathcal{H}^{a_{j_0}^\textup{R}}$ and $\mathcal{H}^{a_{j_1}^\textup{R}}$ in this given decomposition into one,
so that the decomposition is refined as
\begin{equation}
  \begin{split}
    &\left(\mathcal{H}^{a_{j_0}^\textup{L}}\otimes\mathcal{H}^{a_{j_0}^\textup{R}}\right)\oplus\left(\mathcal{H}^{a_{j_1}^\textup{L}}\otimes\mathcal{H}^{a_{j_1}^\textup{R}}\right)\\
    &=\left(\mathcal{H}^{a_{j_0\cap j_1}^\textup{L}}\otimes\left(\mathcal{H}^{a_{j_0}^\textup{R}}\oplus\mathcal{H}^{a_{j_1}^\textup{R}}\right)\right)
    \oplus\left(\mathcal{H}_\perp^{a_{j_0}^\textup{L}}\otimes\mathcal{H}^{a_{j_0}^\textup{R}}\right)\oplus\left(\mathcal{H}_\perp^{a_{j_1}^\textup{L}}\otimes\mathcal{H}^{a_{j_1}^\textup{R}}\right),
  \end{split}
\end{equation}
where the right-hand side represents subspaces in a refined decomposition satisfying
\begin{align}
  \mathcal{H}^{a_{j_0}^\textup{L}}&=\mathcal{H}^{a_{j_0\cap j_1}^\textup{L}}\oplus\mathcal{H}_\perp^{a_{j_0}^\textup{L}},\\
  \mathcal{H}^{a_{j_1}^\textup{L}}&=\mathcal{H}^{a_{j_0\cap j_1}^\textup{L}}\oplus\mathcal{H}_\perp^{a_{j_1}^\textup{L}}.\\
\end{align}
For the Koashi-Imoto decomposition of $S_\psi^{A|R}$,
this refinement is achieved in the following way.
\begin{enumerate}[{Step $\textup{R}$}-1:]
  \item Find $j_0\in\left\{0,\ldots,J-1\right\}$, $j_1\in\left\{0,\ldots,J-1\right\}$, $\Ket{a}\in\mathcal{H}^{a_{j_0}^\textup{R}}$, $\Ket{b}\in\mathcal{H}^{a_{j_1}^\textup{R}}$, and $\Lambda^{R}\geqq 0$ such that $j_0 < j_1$ and
    \begin{equation}
      \begin{split}
        &\supp\left(\left(\Pi^{a_{j_0}^\textup{L}}\otimes\Bra{a}^{a_{j_0}^\textup{R}}\right)\psi^A\left(\Lambda^{R}\right)\left(\Pi^{a_{j_0}^\textup{L}}\otimes\Ket{a}^{a_{j_0}^\textup{R}}\right)\right)
        =\mathcal{H}^{a_{j_0}^\textup{L}},\\
        &\supp\left(\left(\Pi^{a_{j_1}^\textup{L}}\otimes\Bra{b}^{a_{j_1}^\textup{R}}\right)\psi^A\left(\Lambda^{R}\right)\left(\Pi^{a_{j_1}^\textup{L}}\otimes\Ket{b}^{a_{j_1}^\textup{R}}\right)\right)
        =\mathcal{H}^{a_{j_1}^\textup{L}},\\
        &\sigma\neq\boldsymbol{0},
      \end{split}
    \end{equation}
    where $\supp(\cdots)$ represents the support, $\boldsymbol{0}$ is the zero operator, and $\sigma$ is an operator from $\mathcal{H}^{a_{j_0}^\textup{L}}$ to $\mathcal{H}^{a_{j_1}^\textup{L}}$ defined as
    \begin{equation}
      \sigma\coloneq\left(\Pi^{a_{j_1}^\textup{L}}\otimes\Bra{b}^{a_{j_1}^\textup{R}}\right)\psi^A\left(\Lambda^{R}\right)\left(\Pi^{a_{j_0}^\textup{L}}\otimes\Ket{a}^{a_{j_0}^\textup{R}}\right).
    \end{equation}
  \item Calculate the singular value decomposition of $\sigma$
    \begin{equation}
      \sigma=\sum_{l=0}^{R-1} \sigma_l\Ket{l}^{a_{j_1}^\textup{L}}\Bra{l}^{a_{j_0}^\textup{L}},
    \end{equation}
    where $R\coloneq\rank\sigma$, and $\sigma_0\,,\ldots,\sigma_{R-1}$ are the positive singular values.
    Using the subspace spanned by the singular vectors $\left\{\Ket{0},\ldots,\Ket{R-1}\right\}$ of $\sigma$ corresponding to the positive singular values, decompose $\mathcal{H}^{a_{j_0}^\textup{L}}$ and $\mathcal{H}^{a_{j_1}^\textup{L}}$ into
    \begin{align}
      \mathcal{H}^{a_{j_0}^\textup{L}}&=\mathcal{H}^{a_{j_0\cap j_1}^\textup{L}}\oplus{\mathcal{H}_\perp^{a_{j_0}^\textup{L}}},\\
      \mathcal{H}^{a_{j_1}^\textup{L}}&=\mathcal{H}^{a_{j_0\cap j_1}^\textup{L}}\oplus{\mathcal{H}_\perp^{a_{j_1}^\textup{L}}},
    \end{align}
    where the subspaces on the right-hand side are defined as
    \begin{align}
      \mathcal{H}^{a_{j_0\cap j_1}^\textup{L}}&\coloneq\spn\left\{\Ket{0},\ldots,\Ket{R-1}\right\},\\
      \mathcal{H}_\perp^{a_{j_0}^\textup{L}}&\coloneq\supp\left(\Pi^{a_{j_0}^\textup{L}}-\sum_{l=0}^{R-1}\Ket{l}\Bra{l}^{a_{j_0}^\textup{L}}\right),\\
      \mathcal{H}_\perp^{a_{j_1}^\textup{L}}&\coloneq\supp\left(\Pi^{a_{j_1}^\textup{L}}-\sum_{l=0}^{R-1}\Ket{l}\Bra{l}^{a_{j_1}^\textup{L}}\right).
    \end{align}
    Note that $\mathcal{H}_\perp^{a_{j_0}^\textup{L}}$ and $\mathcal{H}_\perp^{a_{j_0}^\textup{L}}$ may be zero, and define flags indicating whether $\mathcal{H}_\perp^{a_{j_0}^\textup{L}}$ and $\mathcal{H}_\perp^{a_{j_0}^\textup{L}}$ are zero as
    \begin{align}
      s_{j_0}&\coloneq\begin{cases}
        0&\textup{if } \mathcal{H}_\perp^{a_{j_0}^\textup{L}}=\left\{\boldsymbol{0}\right\},\\
        1&\textup{otherwise},
      \end{cases}\\
      s_{j_1}&\coloneq\begin{cases}
        0&\textup{if } \mathcal{H}_\perp^{a_{j_1}^\textup{L}}=\left\{\boldsymbol{0}\right\},\\
        1&\textup{otherwise}.
      \end{cases}
    \end{align}
  \item Define a refined decomposition as
    \begin{align}
        \mathcal{H}^A&=\bigoplus_{j=0}^{\tilde{J}-1}\mathcal{H}^{\tilde{a}_j^\textup{L}}\otimes\mathcal{H}^{\tilde{a}_j^\textup{R}},\\
        \tilde{J}&\coloneq J-1+s_{j_0}+s_{j_1}\,,\\
        \mathcal{H}^{\tilde{a}_j^\textup{L}}&\coloneq\begin{cases}
          \mathcal{H}^{a_j^\textup{L}}&\textup{if } 0\leqq j\leqq j_0-1,\\
          \mathcal{H}^{a_{j+1}^\textup{L}}&\textup{if } j_0\leqq j\leqq j_1-2,\\
          \mathcal{H}^{a_{j+2}^\textup{L}}&\textup{if } j_1-1\leqq j\leqq J-3,\\
          \mathcal{H}^{a_{j_0\cap j_1}^\textup{L}}&\textup{if } j=J-2,\\
          \mathcal{H}_\perp^{a_{j_0}^\textup{L}}&\textup{if } j = J-2+s_{j_0}\\
                                       &\textup{and }s_{j_0}=1,\\
          \mathcal{H}_\perp^{a_{j_1}^\textup{L}}&\textup{if } j = J-2+s_{j_0}+s_{j_1}\\
                                       &\textup{and }s_{j_1}=1,\\
        \end{cases}\\
        \mathcal{H}^{\tilde{a}_j^\textup{R}}&\coloneq\begin{cases}
          \mathcal{H}^{a_j^\textup{R}}&\textup{if } 0\leqq j\leqq j_0-1,\\
          \mathcal{H}^{a_{j+1}^\textup{R}}&\textup{if } j_0\leqq j\leqq j_1-2,\\
          \mathcal{H}^{a_{j+2}^\textup{R}}&\textup{if } j_1-1\leqq j\leqq J-3,\\
          \mathcal{H}^{a_{j_0}^\textup{R}}\oplus\mathcal{H}^{a_{j_1}^\textup{R}}&\textup{if } j=J-2,\\
          \mathcal{H}^{a_{j_0}^\textup{R}}&\textup{if } j = J-2+s_{j_0}\\
                                 &\textup{and } s_{j_0}=1,\\
          \mathcal{H}^{a_{j_1}^\textup{R}}&\textup{if } j = J-2+s_{j_0}+s_{j_1}\\
                                 &\textup{and } s_{j_1}=1.
        \end{cases}
    \end{align}
\end{enumerate}

The following example demonstrates how to obtain the Koashi-Imoto decomposition of a tripartite pure state using the above algorithm.

\begin{example}
  \textit{Koashi-Imoto decomposition of a tripartite pure state.}
Consider a tripartite pure state
\begin{equation}
  \begin{split}
    &\Ket{\psi}^{RAB}\\
    &\coloneq\frac{1}{2\sqrt{2}}{\left(\Ket{0}^{R}\otimes\Ket{0}^{A_1}+\Ket{1}^{R}\otimes\Ket{1}^{A_1}\right)}
    \otimes{\left(\Ket{0}^{A_2}\otimes\Ket{0}^{B}+\Ket{1}^{A_2}\otimes\Ket{1}^{B}\right)}\\
    &\quad +\frac{1}{\sqrt{2}}\Ket{2}^{R}\otimes\Ket{2}^{A_1}\otimes\Ket{0}^{A_2}\otimes\Ket{2}^{B},
  \end{split}
\end{equation}
where $\mathcal{H}^\textup{R}$ is of $3$ dimension, $\mathcal{H}^A=\mathcal{H}^{A_1}\otimes\mathcal{H}^{A_2}$ of $3\times 2 = 6$ dimension, and $\mathcal{H}^{B}$ of $3$ dimension.
The Koashi-Imoto decomposition can be algorithmically obtained as follows,
where the order of subspaces in intermediate decompositions is sorted for readability.
\begin{enumerate}[{Step} 1:]
  \item Initially, regard $\mathcal{H}^A$ as
    \begin{equation}
      \label{eq:1}
      \mathcal{H}^A=\mathcal{H}^{a_0^\textup{L}}.
    \end{equation}
  \item Apply the $\textup{L}$-decomposing procedure to the intermediate decomposition given by Equation~\eqref{eq:1}, where $j_0=0$, $\Ket{a}=1$, $\Ket{b}=1$, and $\Lambda^{R}=\Ket{0}\Bra{0}$, and $\mathcal{H}^A$ is decomposed into
    \begin{equation}
      \label{eq:2}
      \mathcal{H}^A=\mathcal{H}^{a_0^\textup{L}}\oplus\mathcal{H}^{a_1^\textup{L}},
    \end{equation}
    where $\dim\mathcal{H}^{a_0^\textup{R}}=\dim\mathcal{H}^{a_1^\textup{R}}=1$ and
    \begin{align}
      \mathcal{H}^{a_0^\textup{L}}=\spn\Big\{&\Ket{0}^{A_1}\otimes\Ket{0}^{A_2},\Ket{0}^{A_1}\otimes\Ket{1}^{A_2}\Big\},\\
      \mathcal{H}^{a_1^\textup{L}}=\spn\Big\{&\Ket{1}^{A_1}\otimes\Ket{0}^{A_2},\Ket{1}^{A_1}\otimes\Ket{1}^{A_2},
      \Ket{2}^{A_1}\otimes\Ket{0}^{A_2},\Ket{2}^{A_1}\otimes\Ket{1}^{A_2}\Big\},\\
    \end{align}
  \item Apply the $\textup{L}$-decomposing procedure to the intermediate decomposition given by Equation~\eqref{eq:2}, where $j_0=1$, $\Ket{a}=1$, $\Ket{b}=1$, and $\Lambda^{R}=\Ket{1}\Bra{1}$, and $\mathcal{H}^A$ is decomposed into
    \begin{equation}
      \label{eq:3}
      \mathcal{H}^A=\mathcal{H}^{a_0^\textup{L}}\oplus\mathcal{H}^{a_1^\textup{L}}\oplus\mathcal{H}^{a_2^\textup{L}},
    \end{equation}
    where $\dim\mathcal{H}^{a_0^\textup{R}}=\dim\mathcal{H}^{a_1^\textup{R}}=\dim\mathcal{H}^{a_2^\textup{R}}=1$ and
    \begin{align}
      \mathcal{H}^{a_0^\textup{L}}=\spn\Big\{&\Ket{0}^{A_1}\otimes\Ket{0}^{A_2},\Ket{0}^{A_1}\otimes\Ket{1}^{A_2}\Big\},\\
      \mathcal{H}^{a_1^\textup{L}}=\spn\Big\{&\Ket{1}^{A_1}\otimes\Ket{0}^{A_2},\Ket{1}^{A_1}\otimes\Ket{1}^{A_2}\Big\},\\
      \mathcal{H}^{a_2^\textup{L}}=\spn\Big\{&\Ket{2}^{A_1}\otimes\Ket{0}^{A_2},\Ket{2}^{A_1}\otimes\Ket{1}^{A_2}\Big\}.
    \end{align}
  \item Apply the $\textup{R}$-combining procedure to the intermediate decomposition given by Equation~\eqref{eq:3}, where $j_0=0$, $j_1=1$, $\Ket{a}=1$, $\Ket{b}=1$, and $\Lambda^{R}=\Ket{0}\Bra{0}+\Ket{0}\Bra{1}+\Ket{1}\Bra{0}+\Ket{1}\Bra{1}$, and $\mathcal{H}^A$ is decomposed into
    \begin{equation}
      \label{eq:4}
      \mathcal{H}^A=\left(\mathcal{H}^{a_0^\textup{L}}\otimes\mathcal{H}^{a_0^\textup{R}}\right)\oplus\mathcal{H}^{a_1^\textup{L}},
    \end{equation}
    where $\dim\mathcal{H}^{a_1^\textup{R}}=1$ and
    \begin{align}
      \mathcal{H}^{a_0^\textup{L}}=\spn\Big\{&\Ket{0}^{A_2},\Ket{1}^{A_2}\Big\},\\
      \mathcal{H}^{a_0^\textup{R}}=\spn\Big\{&\Ket{0}^{A_1},\Ket{1}^{A_1}\Big\},\\
      \mathcal{H}^{a_1^\textup{L}}=\spn\Big\{&\Ket{2}^{A_1}\otimes\Ket{0}^{A_2},\Ket{2}^{A_1}\otimes\Ket{1}^{A_2}\Big\}.
    \end{align}
  \item Terminate the algorithm, since for each $j$, the operator $\psi^{R a_j^\textup{L} a_j^\textup{R}}$ defined as Equation~\eqref{eq:product_operator} is a tensor product of operators of $\mathcal{H}^\textup{R}\otimes\mathcal{H}^{a_j^\textup{R}}$ and $\mathcal{H}^{a_j^\textup{L}}$, and hence, the decomposition in Equation~\eqref{eq:4} is maximal. In this case, $\Ket{\psi}^{RAB}$ is decomposed into
    \begin{equation}
      \begin{split}
        &\Ket{\psi}^{RAB}\\
        &=\frac{1}{2\sqrt{2}}{\left(\Ket{0}^{R}\otimes\Ket{0}^{a_0^\textup{R}}+\Ket{1}^{R}\otimes\Ket{1}^{a_0^\textup{R}}\right)}
        \otimes{\left(\Ket{0}^{a_0^\textup{L}}\otimes\Ket{0}^{b_0^\textup{L}}+\Ket{1}^{a_0^\textup{L}}\otimes\Ket{1}^{b_0^\textup{L}}\right)}\\
        &\quad \oplus\frac{1}{\sqrt{2}}\left(\Ket{2}^{R}\otimes{\left(\Ket{2}\otimes\Ket{0}\right)}^{a_1^\textup{L}}\otimes\Ket{2}^{b_1^\textup{L}}\right),
      \end{split}
    \end{equation}
\end{enumerate}
\end{example}

\chapter{\label{sec:equivalence}Tasks equivalent to exact state merging}

This appendix provides the proof of Proposition~\ref{prp:max} on the tasks equivalent to exact state merging,
in the sense that the tasks shown in Proposition~\ref{prp:max} are achievable at the same entanglement cost using the same protocol.

\begin{proof}[Proof of Proposition~\ref{prp:max}]
  The equivalence in the catalytic setting is shown in the following, while the statement in the non-catalytic setting follows from the same argument setting $\log_2 L=0$.
  It is shown that each of Statements~1--3 holds if and only if
  \begin{equation}
    \label{eq:m}
    \mathcal{M}\left({\Ket{\psi_l}\Bra{\psi_{l'}}}^{AB}\otimes{\Phi^+_K}^{\overline{A}\overline{B}}\right)={\Ket{\psi_l}\Bra{\psi_{l'}}}^{B'B}\otimes{\Phi^+_L}^{\overline{A}\overline{B}}
  \end{equation}
  holds for any $l$ and $l'$.

  \textit{Statement~1 $\Leftrightarrow$ \textup{Equation}~\eqref{eq:m}}:
  Assume Statement~1; that is, an LOCC map $\mathcal{M}$ by $A$ and $B$ achieves the following exact state merging of $\Ket{\psi}^{RAB}$
  \begin{equation}
    \id^R\otimes\mathcal{M}\left({\psi}^{RAB}\otimes{\Phi^+_K}^{\overline{A}\overline{B}}\right)={\psi}^{RB'B}\otimes{\Phi^+_L}^{\overline{A}\overline{B}}.
  \end{equation}
  The left-hand side and the right-hand side are written as
  \begin{equation}
    \begin{split}
      \id^R\otimes\mathcal{M}\left({\psi}^{RAB}\otimes{\Phi^+_K}^{\overline{A}\overline{B}}\right)
      &=\sum_{l,l'}\frac{1}{\sqrt{\lambda_l\lambda_{l'}}}{\Ket{l}\Bra{l'}}^R\otimes\mathcal{M}\left({\Ket{\psi_l}\Bra{\psi_{l'}}}^{AB}\otimes{\Phi^+_K}^{\overline{A}\overline{B}}\right),\\
      {\psi}^{RB'B}\otimes{\Phi^+_L}^{\overline{A}\overline{B}}
      &=\sum_{l,l'}\frac{1}{\sqrt{\lambda_l\lambda_{l'}}}{\Ket{l}\Bra{l'}}^R\otimes{\Ket{\psi_l}\Bra{\psi_{l'}}}^{B'B}\otimes{\Phi^+_L}^{\overline{A}\overline{B}}.
    \end{split}
  \end{equation}
  Due to the linear independence, Equation~\eqref{eq:m} holds for any $l$ and $l'$.
  The converse follows from the linearity of $\mathcal{M}$.

  \textit{Statement~2 $\Leftrightarrow$ \textup{Equation}~\eqref{eq:m}}: This equivalence can be shown in the same way as the equivalence between Statement~1 and Equation~\eqref{eq:m}, by substituting $\psi$ with ${\Phi}_D^+\left(\psi\right)$.

  \textit{Statement~3 $\Leftrightarrow$ \textup{Equation}~\eqref{eq:m}}: Assume Statement~3. For each $l$,
  \begin{equation}
    \mathcal{M}\left({\Ket{\psi_l}\Bra{\psi_{l}}}^{AB}\otimes{\Phi^+_K}^{\overline{A}\overline{B}}\right)={\Ket{\psi_l}\Bra{\psi_{l}}}^{B'B}\otimes{\Phi^+_L}^{\overline{A}\overline{B}}
  \end{equation}
  holds as a special case of Statement~3.
  For any different $l$ and $l^\prime$,
  consider two cases of choosing $\psi_{\boldsymbol{\alpha}}^{AB}\in S_\psi^{AB}$ as
  \begin{equation}
      \frac{1}{2}\Ket{\psi_l}\Bra{\psi_l}+\frac{1}{2}\Ket{\psi_l}\Bra{\psi_{l^\prime}}+\frac{1}{2}\Ket{\psi_{l^\prime}}\Bra{\psi_l}+\frac{1}{2}\Ket{\psi_{l^\prime}}\Bra{\psi_{l^\prime}}
  \end{equation}
  and
  \begin{equation}
      \frac{1}{2}\Ket{\psi_l}\Bra{\psi_l}+\frac{\textup{i}}{2}\Ket{\psi_l}\Bra{\psi_{l^\prime}}-\frac{\textup{i}}{2}\Ket{\psi_{l^\prime}}\Bra{\psi_l}+\frac{1}{2}\Ket{\psi_{l^\prime}}\Bra{\psi_{l^\prime}}.
  \end{equation}
  Applying Statement~3 to these two states and using the linearity of $\mathcal{M}$ yield
  \begin{align}
    \mathcal{M}\left({\Ket{\psi_l}\Bra{\psi_{l^\prime}}}^{AB}\otimes{\Phi^+_K}^{\overline{A}\overline{B}}\right)&={\Ket{\psi_l}\Bra{\psi_{l^\prime}}}^{B'B}\otimes{\Phi^+_L}^{\overline{A}\overline{B}},\\
    \mathcal{M}\left({\Ket{\psi_{l^\prime}}\Bra{\psi_{l}}}^{AB}\otimes{\Phi^+_K}^{\overline{A}\overline{B}}\right)&={\Ket{\psi_{l^\prime}}\Bra{\psi_{l}}}^{B'B}\otimes{\Phi^+_L}^{\overline{A}\overline{B}}.
  \end{align}
  Therefore, Equation~\eqref{eq:m} holds for any $l$ and $l^\prime$.
  The converse follows from the linearity of $\mathcal{M}$.
\end{proof}

\chapter{\label{sec:monotonic}Monotonic property of conditional quantum entropy}

This appendix provides the proof of Proposition~\ref{prp:monotonic} on the monotonically nondecreasing property of conditional quantum entropy ${H\left(A|B\right)}_\psi$ under $B$'s preprocessing and backward classical communication from $B$ to $A$.

\begin{proof}[Proof of Proposition~\ref{prp:monotonic}]
  This proof shows that $B$'s preprocessing ${\left\{M_j^{B}\right\}}_j$ does not decrease the conditional quantum entropy on average, and backward classical communication and $A$'s isometry $U_j^A$ do not change the conditional quantum entropy.
  Performing ${\left\{U_j^A\otimes M_j^B\right\}}_j$ is equivalent to sequentially performing the following steps.
  First, the measurement ${\left\{M_j^B\right\}}_j$ can be regarded as $B$'s local channel
  transforming $\psi^{RAB}$ into
  \begin{equation}
    {\psi^\prime}^{XRAB}\coloneq\sum_j p\left(j\right)\Ket{j}\Bra{j}^{X}\otimes\frac{M_j^B \psi^{RAB} {M_j^B}^\dag}{p\left(j\right)},
  \end{equation}
  where $\mathcal{H}^{X}$ is $B$'s system for storing the measurement outcome.
  Next, the backward classical communication transforms ${\psi^\prime}^{XRAB}$ into
  \begin{equation}
    \begin{split}
      &{\psi^{\prime\prime}}^{X^\prime XRAB}\\
      &\coloneq\sum_j p\left(j\right)\Ket{j}\Bra{j}^{X^\prime}\otimes\Ket{j}\Bra{j}^{X}\otimes\frac{M_j^B \psi^{RAB} {M_j^B}^\dag}{p\left(j\right)},
    \end{split}
  \end{equation}
  where $\mathcal{H}^{X^\prime}$ is $A$'s system for storing the measurement outcome.
  Finally, the isometry $U_j^A$ transforms ${\psi^{\prime\prime}}^{X^\prime XRAB}$ into
  \begin{equation}
    {\psi^{\prime\prime\prime}}^{X^\prime XRAB}\coloneq\sum_j p\left(j\right)\Ket{j}\Bra{j}^{X^\prime}\otimes\Ket{j}\Bra{j}^{X}\otimes\psi_j^{RAB}.
  \end{equation}

  The conditional quantum entropy for each of these steps is evaluated as follows.
  Regarding the measurement ${\left\{M_j^B\right\}}_j\,$, the data processing inequality yields
  \begin{equation}
    {H\left(A|B\right)}_\psi\leqq{H\left(A|XB\right)}_{\psi^\prime}.
  \end{equation}
  As for the backward classical communication, it holds that
  \begin{equation}
    {H\left(A|XB\right)}_{\psi^\prime}={H\left(A|XB\right)}_{{\psi^{\prime\prime}}}={H\left(X^\prime A|XB\right)}_{{\psi^{\prime\prime}}}.
  \end{equation}
  Since the isometry $U_j^A$ for each $j$ can be performed using a controlled isometry independent of $j$
  \begin{equation}
    \sum_j\Ket{j}\Bra{j}^{X^\prime}\otimes U_j^A,
  \end{equation}
  it holds that
  \begin{equation}
    {H\left(X^\prime A|XB\right)}_{{\psi^{\prime\prime}}}={H\left(X^\prime A|XB\right)}_{\psi^{\prime\prime\prime}}.
  \end{equation}
  Therefore, it is obtained that
  \begin{equation}
    \begin{split}
      {H\left(A|B\right)}_\psi&\leqq{H\left(X^\prime A|XB\right)}_{\psi^{\prime\prime\prime}}\\
                              &={H\left(A|XB\right)}_{\psi^{\prime\prime\prime}}\\
                              &=\sum_j p\left(j\right){H\left(A|B\right)}_{\psi_j}\,,
    \end{split}
  \end{equation}
  which yields the conclusion.
\end{proof}

\chapter{\label{sec:equivalence_spread_concentrate}Tasks equivalent to spreading and concentrating quantum information}

This appendix provides the proof of Proposition~\ref{lem:encoding_state_transformation} on state transformations equivalent to spreading and concentrating quantum information over networks,
in the sense that the tasks shown in Proposition~\ref{lem:encoding_state_transformation} are achievable at the same entanglement cost using the same protocol.

\begin{proof}[Proof of Proposition~\ref{lem:encoding_state_transformation}]
    The statement on spreading quantum information is proven in the following, while the statement on concentrating quantum information also follows from the same argument by substituting $\rho$, $U\rho U^\dag$, $\Ket{l}\Bra{l'}$, $\ket{\tilde{\psi}_l}\bra{\tilde{\psi}_{l'}}$, and $\mathcal{S}$ in the following with $U\rho U^\dag$, $\rho$, $\ket{\tilde{\psi}_l}\bra{\tilde{\psi}_{l'}}$, $\Ket{l}\Bra{l'}$, and $\mathcal{C}$, respectively.

    \textit{If part}:
    If there exists an LOCC map $\mathcal{S}$ defined as Equation~\eqref{eq:encoding} for any input state $\rho$,
    Equation~\eqref{eq:encoding_state_transformation} holds as a special case of Equation~\eqref{eq:encoding} in which the input state $\rho$ is a completely mixed state.

    \textit{Only if part}:
    Assume that there exists an LOCC map $\mathcal{S}$ defined as Equation~\eqref{eq:encoding_state_transformation}.
    Due to the linearity of the map $\mathcal{S}$, Equation~\eqref{eq:encoding_state_transformation} yields
    \begin{equation}
      \frac{1}{D}\sum_{l,l'=0}^{D-1}\Ket{l}\Bra{l'}\otimes\mathcal{S}\left(\Ket{l}\Bra{l'}\otimes\bigotimes_{e\in E}\Ket{\Phi_{M_e}^+}\Bra{\Phi_{M_e}^+}\right)
      =\frac{1}{D}\sum_{l,l'=0}^{D-1}\Ket{l}\Bra{l'}\otimes\ket{\tilde{\psi}_l}\bra{\tilde{\psi}_{l'}}.
    \end{equation}
    Since the set ${\left\{\Ket{l}\Bra{l'}\right\}}_{l,l'}$ of operators on the system $\mathcal{H}^R$ is linearly independent, it holds that
    \begin{equation}
        \mathcal{S}\left(\Ket{l}\Bra{l'}\otimes\bigotimes_{e\in E}\Ket{\Phi_{M_e}^+}\Bra{\Phi_{M_e}^+}\right) = \ket{\tilde{\psi}_l}\bra{\tilde{\psi}_{l'}},
    \end{equation}
    for each $l,l'\in\left\{0,\ldots,D-1\right\}$.
    Therefore, writing any operators $\rho\in\mathcal{D}\left(\mathcal{H}\right)$ and $U\rho U^\dag\in\mathcal{D}\left(\tilde{\mathcal{H}}\right)$ as
    \begin{equation}
        \rho=\sum_{l,l'=0}^{D-1}c_{l,l'}\Ket{l}\Bra{l'},\quad U\rho U^\dag=\sum_{l,l'=0}^{D-1}c_{l,l'}\ket{\tilde{\psi}_l}\bra{\tilde{\psi}_{l'}}
    \end{equation}
    yields Equation~\eqref{eq:encoding}
    \begin{equation}
      \begin{split}
        &\mathcal{S}\left(\rho\otimes\bigotimes_{e\in E}\Ket{\Phi_{M_e}^+}\Bra{\Phi_{M_e}^+}\right)\\
        &=\sum_{l,l'=0}^{D-1}c_{l,l'}\mathcal{S}\left(\Ket{l}\Bra{l'}\otimes\bigotimes_{e\in E}\Ket{\Phi_{M_e}^+}\Bra{\Phi_{M_e}^+}\right)\\
        &=\sum_{l,l'=0}^{D-1}c_{l,l'}\ket{\tilde{\psi}_l}\bra{\tilde{\psi}_{l'}}\\
        &=U\rho U^\dag.
      \end{split}
    \end{equation}
\end{proof}

\chapter{\label{sec:one_shot_entropies}Min- and max-entropies}

This appendix summarizes entropic functions used in analyses of one-shot quantum state merging, such as min- and max-entropies~\cite{R2,T5,T11}.

Given any quantum state $\psi^{AB}\in\mathcal{D}\left(\mathcal{H}^A\otimes\mathcal{H}^B\right)$,
the conditional min-entropy $H_{\min}$ and the conditional max-entropy $H_{\max}$ of $A$ conditioned by $B$ are defined as
\begin{align}
  &{H_{\min}\left(A|B\right)}_\psi\coloneq\max_{\sigma^B\in\mathcal{D}\left(\mathcal{H}^B\right)}\sup\left\{\lambda\in\mathbb{R}:\psi^{AB}\leqq\frac{\mathbb{1}^A\otimes\sigma^B}{2^\lambda}\right\},\\
  &{H_{\max}\left(A|B\right)}_\psi\coloneq\max_{\sigma^B\in\mathcal{D}\left(\mathcal{H}^B\right)}\log_2 {\left\|\sqrt{\psi^{AB}}\sqrt{\mathbb{1}^A\otimes\sigma^B}\right\|}_1^2.
\end{align}
These entropies are defined so that the duality is satisfied; that is, for any pure state $\Ket{\psi}^{RAB}$, it holds that
\begin{equation}
  -{H_{\min}\left(A|R\right)}_\psi={H_{\max}\left(A|B\right)}_\psi.
\end{equation}
The definition of min- and max-entropy of $A$ is obtained by considering $\dim\mathcal{H}^B=1$ in the above definition of the conditional min- and max-entropies, that is,
\begin{align}
  &{H_{\min}\left(A\right)}_\psi\coloneq\sup\left\{\lambda\in\mathbb{R}:\psi^{A}\leqq\frac{\mathbb{1}^A}{2^\lambda}\right\}=\log_2\frac{1}{\lambda_0},\\
  &{H_{\max}\left(A\right)}_\psi\coloneq\log_2{\left\|\sqrt{\psi^{A}}\right\|}_1^2=2\log_2\tr\sqrt{\psi^A},
\end{align}
where $\lambda_0$ is the largest eigenvalue of $\psi^A$.

The smoothed versions of these entropies are defined using optimization over states that are sufficiently close to the given state, and this technique is called smoothing.
In the following, the set of sub-normalized operators on a Hilbert space $\mathcal{H}^A$ is denoted by
\begin{equation}
  \mathcal{D}_{\leqq}\left(\mathcal{H}^A\right)\coloneq\left\{\psi^A\in\mathcal{B}\left(\mathcal{H}^A\right):\psi^A\geqq 0, \tr\psi^A\leqq 1\right\}.
\end{equation}
Given any state $\psi^{AB}\in\mathcal{D}\left(\mathcal{H}^A\otimes\mathcal{H}^B\right)$ and any error threshold $\epsilon\in\left[0,\tr\sqrt{\psi^{AB}}\right]$ for smoothing,
define the $\epsilon$-ball of states around $\psi^{AB}$ as
\begin{equation}
  \mathcal{B}^\epsilon\left(\psi^{AB}\right)\coloneq\left\{\sigma^{AB}\in\mathcal{D}_{\leqq}\left(\mathcal{H}^A\otimes\mathcal{H}^B\right):P\left(\psi^{AB},\sigma^{AB}\right)\leqq\epsilon\right\},
\end{equation}
where $P\left(\psi^{AB},\sigma^{AB}\right)$ is the purified distance between sub-normalized states defined as Equation~\eqref{eq:purified_distance_subnormalized}.
The $\epsilon$-smooth conditional min-entropy $H_{\min}^\epsilon$ and the $\epsilon$-smooth conditional max-entropy $H_{\max}^\epsilon$ of $A$ conditioned by $B$ are defined as
\begin{align}
  &{H_{\min}^\epsilon\left(A|B\right)}_\psi\coloneq\max_{\tilde{\psi}^{AB}\in\mathcal{B}^\epsilon\left(\psi^{AB}\right)}{H_{\min}\left(A|B\right)}_{\tilde{\psi}}\,,\\
  &{H_{\max}^\epsilon\left(A|B\right)}_\psi\coloneq\min_{\tilde{\psi}^{AB}\in\mathcal{B}^\epsilon\left(\psi^{AB}\right)}{H_{\max}\left(A|B\right)}_{\tilde{\psi}}.
\end{align}
Note that the optimal states in the smoothing of these definitions are not necessarily normalized.
The definition of the $\epsilon$-smooth min- and max-entropy of $A$ is also obtained by considering $\dim\mathcal{H}^B=1$ in the above definition of the $\epsilon$-smooth conditional min- and max-entropies.
These smoothed entropies converge to the non-smoothed ones as $\epsilon\to 0$.

\bibliography{citation_bibtex}

\end{document}